\documentclass[a4paper,11pt]{article}
\usepackage{amssymb}
\usepackage{amsmath}
\usepackage{amsthm}

\usepackage{graphicx,color}
\usepackage{subfigure}

\textheight=8.5truein
\textwidth=6.5truein
\overfullrule=0pt
\parskip=2pt
\parindent=12pt
\headheight=0.5in
\headsep=0in
\topmargin=0in
\evensidemargin=0in
\oddsidemargin=0in

\newcommand{\ket}[1]{\left | #1 \right \rangle}
\newcommand{\bra}[1]{\left \langle #1 \right |}

\def\openone{\leavevmode\hbox{\small1\kern-3.8pt\normalsize1}}
\def\tr{{\rm Tr}}

\def\ca{{\cal A}}
\def\cv{{\cal V}}
\def\cw{{\cal W}}

\def\cs{{\cal S}}

\def\cx{{\cal X}}

\newcommand{\be}{\begin{equation}}
\newcommand{\ee}{\end{equation}}
\newcommand{\bea}{\begin{eqnarray}}
\newcommand{\eea}{\end{eqnarray}}
\newcommand{\beann}{\begin{eqnarray*}}
\newcommand{\eeann}{\end{eqnarray*}}

\def\AA{\mathbb{A}}
\def\BB{\mathbb{B}}
\def\RR{\mathbb{R}}
\def\ZZ{\mathbb{Z}}

\def\NN{\mathbb{N}}
\def\CC{\mathbb{C}}
\def\HH{\mathbb{H}}
\def\KK{\mathbb{K}}
\def\DD{\mathbb{D}}
\def\LL{\mathbb{L}}
\def\TT{\mathbb{T}}

\def\RR{\mathbb{R}}
\def\ZZ{\mathbb{Z}}

\def\NN{\mathbb{N}}
\def\CC{\mathbb{C}}
\def\HH{\mathbb{H}}
\def\KK{\mathbb{K}}
\def\DD{\mathbb{D}}
\def\LL{\mathbb{L}}
\def\UU{\mathbb{U}}

\newtheorem{theorem}{Theorem}
\newtheorem{lemma}{Lemma}
\newtheorem{proposition}{Proposition}
\newtheorem{corollary}{Corollary}

\newtheorem{remark}{Remark}
\newtheorem{definition}{Definition}
\newtheorem{example}{Example}

\newcommand{\Tr}{{\rm Tr}}

\title{Quantum spin chain dissipative mean-field dynamics}

\date{\null} 

\author{F. Benatti$^{1,2}$, F. Carollo$^{3}$, R. Floreanini$^2$, H. Narnhofer$^4$\\
\small ${}^1$Dipartimento di Fisica, Universit\`a di Trieste, Trieste, 34151 Italy\\
\small ${}^2$Istituto Nazionale di Fisica Nucleare, Sezione di Trieste, 34151 Trieste, Italy\\
\small ${}^3$School of Physics \& Astronomy, Faculty of Science, University of Nottingham, UK\\
\small ${}^4$Faculty of Physics, University of Wien, A-1091, Vienna, Austria
}

\begin{document}
\maketitle

\begin{abstract}
We study the emergent dynamics resulting from the infinite volume limit of the mean-field dissipative dynamics of quantum spin chains with clustering, but not time-invariant states. We focus upon three algebras of spin operators: the commutative algebra of mean-field operators, the quasi-local algebra of microscopic, local operators and the
collective algebra of fluctuation operators. In the infinite volume limit, mean-field
operators behave as time-dependent, commuting scalar macroscopic averages while quasi-local operators, despite the dissipative underlying dynamics, evolve unitarily in a typical non-Markovian fashion. Instead, the algebra of collective
fluctuations, which is of bosonic type with time-dependent canonical commutation relations, undergoes a time-evolution that retains the dissipative character of the underlying microscopic dynamics and exhibits non-linear features. These latter disappear by extending the time-evolution to a larger algebra where it is represented by a continuous one-parameter semigroup  of completely positive maps. 
The corresponding generator is not of Lindblad form and displays mixed
quantum-classical features, thus indicating that peculiar hybrid systems may naturally emerge at the level of quantum fluctuations in many-body quantum systems endowed with non time-invariant states.
\end{abstract}

\section{Introduction}
\label{intro}

In many physical situations concerning many-body quantum systems with $N$ microscopic components, the relevant observables are not those referring to single constituents, rather the collective ones consisting of suitably scaled sums of microscopic operators. Among them, one usually considers  macroscopic averages that scale as the inverse of $N$ and thus lose all quantum properties in the large $N$ limit thereby providing a description of the emerging commutative, henceforth classical, collective features of many body quantum systems.

Another class of relevant collective observables are the so-called quantum fluctuations: they account for the variations of microscopic quantities around their averages computed with respect to a chosen reference state. 
In analogy with classical fluctuations, they scale with the inverse square root of $N$ so that, unlike macroscopic observables, they can retain quantum features in the large $N$ limit \cite{BOOK,GVVNCCL,GVVDOF}. Indeed, whenever the reference microscopic state presents no long-range correlations, the fluctuations behave as bosonic operators; furthermore, from the microscopic state there emerges a Gaussian state over the corresponding bosonic Canonical Commutation Relation (CCR) algebra.
These collective observables describe a mesoscopic physical scale in between the purely quantum behaviour of microscopic observables and the purely classical one of commuting macroscopic observables \cite{HT}. 

The dynamical structure of quantum fluctuations has been intensively studied both in the unitary \cite{BOOK,GVVDOF,HT,NARN} and in the dissipative case \cite{VerDis,BCF,BCF2}; yet, in all these examples, only time-invariant reference states have been investigated, leading to macroscopic averages not evolving in time. Here, we relax this assumption and consider the possibility, often met in actual experiments, of a non-trivial dynamics of macroscopic averages. We shall do this by focusing on dissipative, Lindbald chain dynamics of mean-field type. The model studied in the following is very general and applies to a large variety of many-body systems consisting of $N$ microscopic finite-dimensional systems
weakly interacting with their environment. We will study the large $N$ limit of such a dissipative time-evolution $1)$ at the macroscopic level of mean-field observables, $2)$
at the microscopic scale of quasi-local observables, that is for arbitrarily large, but finite, number of chain sites, and $3)$ at the mesoscopic level of quantum fluctuations. 
These three scenarios look quite different and lead to features that, to the best of our knowledge, in particular for the cases $2)$ and $3)$, are novel in the field of many-body quantum systems.
\begin{enumerate}
\item
\textbf{Macroscopic observables}: these are described by the large $N$-limit of mean-field observables which yields commuting scalar quantities that evolve in time according to classical macroscopic equations of motion. 
\item
\textbf{Quasi-local observables}: the emerging dynamics is generated by a hamiltonian despite the microscopic dynamics being dissipative for each finite N. Moreover, and more interestingly, whenever macroscopic averages are not constant, such a unitary dynamics is non-Markovian, since it is implemented by a time non-local generator that always depends on the initial time. This latter is an interesting example of a unitary time-evolution manifesting memory effects.
\item
\textbf{Quantum fluctuations}: the emerging dynamics consists of a one-parameter Gaussian family of non-linear maps. In order to make them compatible with the physical requests of  linearity and complete positivity, these maps must be extended to a larger algebra, containing also classical degrees of freedom associated with the macrosocpic averages. 
The extended description gives raise to a dynamical hybrid system, containing both classical and quantum degrees of freedom, whose time-evolution corresponds to a semigroup of completely positive maps.
Unlike in hybrid systems so far studied~\cite{Ciccotti}-\cite{Buric}, the connection between classical and quantum degrees of freedom follows from the time-dependence of the mesoscopic commutation relations. Indeed, the commutator of two fluctuation operators is a time-evolving macroscopic average.
As a consequence, the generator of the dynamics on the larger algebra contains both classical, quantum and mixed classical-quantum contributions. In particular, the dynamical maps are completely positive, even if the purely quantum contribution to the generator need not in general be characterized by a positive semi-definite Kossakowski matrix. This is the first instance where this counter-intuitive fact is reported; notice, however, that in such a hybrid context, Lindblad's theorem does not apply.
\end{enumerate}

The structure of the manuscript is as follows: in Section~\ref{sec1} we introduce mean-field and fluctuation operators for quantum spin chains and define the mesoscopic limit.
In Section~\ref{sec2}, we introduce the dynamics generated by a Hamiltonian free term plus a mean field interaction and made dissipative by Lindblad type contributions of mean-field type.  In Section~\ref{subsec3.1}, we discuss the dynamics of macroscopic quantities and in Section~\ref{subsec3.2} the large $N$ limit of the time-evolution of quasi-local operators. 
In Section~\ref{ExpFun} we study the emerging mesoscopic dynamics of quantum fluctuations,
discussing first the symplectic structure in Section~\ref{reminv}, then the 
time-evolution and its non-linearity in Section~\ref{subsec4.2}.
In Section~\ref{subsec4.3} we focus upon the extension of the non-linear maps to a semi-group of completely positive Gaussian maps on a larger algebra and on the hybrid character of its generator.
Finally, Section~\ref{sec5} contains the proofs of all results presented in the previous sections.

\section{Quantum spin chains: macroscopic ad mesoscopic descriptions}
\label{sec1}

In this section, we discuss the macroscopic, respectively mesoscopic description of the collective behaviour of quantum spin chains given by classical mean-field observables, that scale with the inverse of the number of sites, $N$, respectively by quantum fluctuations that scale as the inverse square-root of $N$.

A quantum spin chain is a one-dimensional bi-infinite lattice, whose sites are indexed by  integers $j\in\ZZ$,
all supporting the same $d$-dimensional matrix algebra $\ca^{(j)}=M_d(\CC)$.
Its algebraic description~\cite{Bratteli,LiBingRen} is by means of the \textit{quasi-local} $C^*$ algebra $\ca$ obtained as the inductive limit of strictly local 
subalgebras $\ca_{[q,p]}=\bigotimes_{j=p}^q\ca^{(j)}$ supported by finite intervals 
$[q,p]$, with $q\leq p$ in $\ZZ$. Namely, one considers the 
algebraic union $\bigcup_{q\leq p}\ca_{[q,p]}$ and its completion with respect to the norm inherited by the local algebras.
Any operator $x\in M_d(\CC)$ at site $j$ can be embedded into $\ca$ as:
\be
x^{(j)}=\mathbf{1}_{j-1]}\otimes x\otimes\mathbf{1}_{[j+1}\ ,
\label{embed}
\ee
where $\mathbf{1}_{j-1]}$ is the tensor product of identity matrices at each site from $-\infty$ to $j-1$, while $\mathbf{1}_{[j+1}$ is the tensor product of identity matrices from site $j+1$ to $+\infty$.
Quantum spin chains are naturally endowed with the translation automorphism 
$\tau:\ca\mapsto\ca$ such that $\tau(x^{(j)})=x^{(j+1)}$.

Generic states $\omega$ on the quantum spin chain are described by positive, normalised linear expectation functionals $\ca\ni a\mapsto\omega(a)$ that assign mean values to  all operators in $\ca$.
In the following, we shall consider translation-invariant states such that
\be
\label{transinv}
\omega(a)=\omega\big(\tau(a)\big)\qquad\forall a\in \mathcal{A}\ .
\ee
At each site $j\in\ZZ$, these states are thus locally represented by a same density matrix $\rho\in M_d(\CC)$:
$\omega(x^{(j)})=\omega(x)=\tr(\rho\,x)$, $x\in M_d(\CC)$.
Furthermore, we shall focus upon
translation-invariant states $\omega$ that are also spatially $L_1$-\textit{clustering}~\cite{BOOK}. These are states that, for all single site operators $x,y$, satisfy
\be
\sum_{\ell\in\mathbb{Z}}\left|\omega\big(x^{(0)}y^{(\ell)}\big)\,-\,\omega(x)\omega(y)\right|<\infty\, ,
\label{stclus}
\ee
and then the weaker \textit{clustering} condition
\be
\label{clustates}
\lim_{n\to\pm\infty}\omega\Big(a^\dag\tau^n(b)c\Big)=\omega(a^\dag\,c)\,\omega(b)\quad \forall a,b,c\in\ca\ .
\ee 

\begin{remark}
{\rm 
The cluster condition \eqref{clustates} is often met in ground states or in thermal states associated to short-range Hamiltonians far from critical behaviours, such as phase transitions: it corresponds to the physical expectation that, in absence of long-range correlations, the farther spatially apart are observables, the closer they become to being statistically independent. On the other hand, the stronger clustering condition \eqref{stclus} is sufficient to ensure that fluctuations of physical observables display a Gaussian character which is again a property physically expected in systems far from phase transitions: such a condition is not strictly necessary for a system to have Gaussian fluctuations; however, it is often assumed for mathematical convenience \cite{BOOK}.}
\qed
\end{remark}

\subsection{Macroscopic scale: mean-field observables}

In an infinite quantum spin chain, the operators belonging to strictly local subalgebras contribute to the microscopic description of the system. In order to pass to a description based on collective observables supported by infinitely many lattice sites, a proper scaling must be chosen. Most often, mean-field observables are considered; these are constructed as averages of $N$ copies of a same single site observables $x$, from site $j=0$ to site $N-1$:
\be
X^{(N)}=\frac{1}{N}\sum_{k=0}^{N-1}x^{(k)}\ ,\qquad x\in M_d(\CC)\ .
\label{macro}
\ee
In the following, operators scaling as $X^{(N)}$ will be referred to as \textit{mean-field} operators; capital letters, like $X^{(N)}$, will refer to averages over specific number of lattice sites, while small case letters, like $x^{(k)}$, to operators at specific lattice sites.
  
Given any state $\omega$ on $\ca$, the Gelfand-Naimark-Segal (GNS) construction~\cite{BOOK4} provides a representation $\pi_\omega:\ca\mapsto \pi_\omega(\ca)$ of $\ca$ on a Hilbert space $\HH_\omega$ with a cyclic vector $\vert\omega\rangle$ such that the linear span of vectors of the form $\vert\Psi_a\rangle=\pi_\omega(a)\vert\omega\rangle$ is dense in $\HH_\omega$ and
$$
\omega(b^\dag\,a\,c)=\langle \Psi_b\vert\pi_\omega(a)\vert\Psi_c\rangle\ ,\qquad a,b,c\in\ca\ .
$$
As shown in Appendix A, given $x,y\in M_d(\CC)$, clustering yields that macroscopic averages $X^{(N)}$ and products of macroscopic averages $X^{(N)}Y^{(N)}$ tend weakly to scalar quantities:
\be
w-\lim_{N\to\infty} X^{(N)} = \omega(x)\, {\bf 1}\ ,\quad 
w-\lim_{N\to\infty}X^{(N)}\,Y^{(N)}\,=\,\omega(x)\,\omega(y)\, {\bf 1}\ ,
\label{macro1bis}
\ee 
in the sense that
$\displaystyle
\lim_{N\to\infty}\omega(a^\dag\, X^{(N)}Y^{(N)}\,b)=\omega(a^\dag b)\,\omega(x)\,\omega(y)$ for all $a,b\in\ca$.\\
Moreover, in the same Appendix it is shown that the $L_1$-clustering condition~\eqref{stclus} provides the following scaling:
\be
\label{macro1}
\left|\omega\big( X^{(N)}\,Y^{(N)}\big)-\,\omega(x)\,\omega(y)\right|=O\left(\frac{1}{N}\right)\, .
\ee
It thus follows that the weak-limits of mean-field observables are scalar quantities giving rise to a commutative (von Neumann) algebra. 
\medskip

Mean-field observables thus describe macroscopic, classical degrees of freedom emerging from the large $N$ limit of the microscopic quantum spin chain with no fingerprints left of its quantumness.
As outlined in the Introduction, we are instead interested in studying collective observables extending over the whole spin chain that may still keep some degree of quantum behaviour; for that a less rapid scaling than $1/N$ is necessary.

\subsection{Mesoscopic scale: quantum fluctuations}
\label{qfsec}

In order to disclose quantum behaviours of collective observables, one needs to look at fluctuations around average values. Indeed, fluctuations are commonly associated to an intermediate level of description in between the microscopic and the macroscopic ones, where one can hope to unveil truly mesoscopic phenomena exhibiting mixed classical-quantum features. In this section we shall review some of the known results about quantum fluctuation operators \cite{BOOK,GVVNCCL,GVVDOF}, introducing also the notation and relevant concepts useful to derive the results presented in the following sections.

Collective, microscopic operators of the form
\be
F^{(N)}(x)=\frac{1}{\sqrt{N}}\sum_{k=0}^{N-1}\left(x^{(k)}-\omega(x)\right)
\label{FL}
\ee
are quantum analogues of fluctuations in classical probability theory: we shall refer to them as
``local quantum fluctuations''.
Their large $N$ limit with respect to clustering states $\omega$ has been thoroughly investigated 
in~\cite{GVVNCCL,BOOK} yielding a non-commutative central limit theorem and  an associated  quantum  fluctuation algebra which turns out to be a Weyl algebra of bosonic degrees of freedom. 

The scaling $1/\sqrt{N}$ does not guarantee convergence in the weak-operator topology.
Nevertheless, consider $x,y\in M_d(\CC)$ such that $\left[x\,,\,y\right]=z$. Since $[x^{(j)}\,,\, y^{(\ell)}]=\delta_{j\ell}\,z^{(j)}$, with respect to a clustering state $\omega$, one has, following the same strategy used in Appendix A,
$$
\lim_{N\to\infty} \omega\left(a^\dag\left[F^{(N)}(x)\,,\,F^{(N)}(y)\right]b\right)=\lim_{N\to\infty} 
\frac{1}{N}\sum_{j=0}^{N-1}\omega\left(a^\dag z^{(j)}\, b\right)=\omega(a^\dag b)\,\omega(z) ,
$$
This means that commutators of local quantum fluctuations behave as mean-field quantities thus being, in the weak-topology, scalar multiples of the identity $\omega(z)\,\mathbf{1}$.
This latter fact clearly indicates that, in the large $N$ limit, a non-commutative structure emerges analogous to the algebra of quantum position and momentum operators.
To proceed to a formal proof of the convergence of the set of these operators to a bosonic algebra, it is convenient to work with unitary exponentials of the form  ${\rm e}^{iF^{(N)}(x)}$; in the large $N$ limit, these are expected to satisfy Weyl-like commutation relations~\cite{BOOK}.
\medskip

\begin{remark}
\label{remmes}
{\rm Because of the scaling $1/\sqrt{N}$, quantum fluctuations  provide a description level in 
between the microscopic (strictly local)  and the macroscopic (mean-field) ones. We will refer to it as to a \textit{mesoscopic} level whereby collective operators keep track of the microscopic non-commutative level they emerge from.}
\qed
\end{remark}
\medskip

In order to construct a quantum fluctuation algebra, one starts by selecting a set of $p$ linearly independent single-site microscopic observables 
$\chi=\{x_j\}_{j=1}^p$, $x_j\in M_d(\CC)$, $x_j=x_j^\dag$, and considers their local elementary fluctuations 
\begin{equation}
\label{FNj}
F^{(N)}_j:=F^{(N)}(x_j)=\frac{1}{\sqrt{N}}\sum_{k=0}^{N-1}\left(x^{(k)}-\omega(x)\right)
\ .
\end{equation}
Because of the assumption \eqref{stclus} on the state $\omega$, one has that the limits 
\be
\label{cormat}
\mathcal{C}^{(\omega)}_{ij}:=\lim_{N\to\infty}\omega\big(F^{(N)}_i\,F_j^{(N)}\big)
\ee
are well-defined and represent the entries of a positive $p\times p$ correlation matrix $\mathcal{C}^{(\omega)}$; moreover, one chooses the elements of $\chi$ in such a way that the characteristic functions $\omega\big(e^{itF_j^{(N)}}\big)$ converge to Gaussian functions of $t$ with zero mean and covariance $\Sigma_{jj}^{(\omega)}$, given by 
\be
\Sigma^{(\omega)}_{jj}=\frac{1}{2}\,\lim_{N\to\infty}\omega\left(\left\{F_j^{(N)}\ ,\ F_j^{(N)}\right\}\right)\ .
\ee 
This can be conveniently summarized by introducing the concept of {\sl normal quantum fluctuations} systems.
\medskip
 
\begin{definition}
\label{2}
A finite set of self-adjoint operators $\chi=\{x_j\}_{j=1}^p$ is said to have ``normal multivariate quantum fluctuations'' 
with respect to a clustering state $\omega$ if the latter obeys the $L_1$-clustering condition: 
$$
\sum_{\ell\in\mathbb{Z}}\Big|\omega(x^{(0)}_ix_j^{(\ell)})-\omega(x_i)\omega(x_j)\Big|<+\infty\ ,\qquad\forall x_i,x_j\in\chi\ ,
$$
and further satisfies
\be
\lim_{N\to\infty}\omega\big(\left(F_j^{(N)}\right)^2\big)= \Sigma^{(\omega)}_{jj}\ ,\quad
\lim_{N\to\infty}\omega(e^{itF_j^{(N)}})={\rm e}^{-\frac{t^2}{2}\Sigma^{(\omega)}_{jj}}\qquad 
\forall x_j\in\chi,\ \forall\, t\in\mathbb{R}\ .	
\label{Gauss2}
\ee
\end{definition}
\medskip

Given a set $\chi$ as in the above {\sl Definition \ref{2}}, by considering all possible real linear combinations of the set elements, one introduces the real-linear span 
\be
\label{spaceChi}
\mathcal{X}=\Big\{x_{\vec{r}}=\sum_{i=1}^p r_i\, x_i,\ x_i\in\chi,\ \vec{r}=\{r_i\}_{i=1}^p\in\mathbb{R}^p\Big\}\ .
\ee
The latter set can be endowed with two real bilinear maps: the first is positive and symmetric, 
\be
\label{BiFo}
(x_{\vec{r}_1},x_{\vec{r}_2})\to \vec{r}_1\cdot\left(\Sigma^{(\omega)}\,\vec{r}_2\right)=\sum_{i,j=1}^p\,r_{1i}\, r_{2j}\,\Sigma^{(\omega)}_{ij}\ ,\qquad\forall\ \vec{r}_{1,2}\in\RR^p\ ,
\ee
with 
\be
\label{covmat}
\Sigma^{(\omega)}_{ij}=\frac{1}{2}\,\lim_{N\to\infty}\omega\left(\left\{F_i^{(N)}\,,\,F_j^{(N)}\right\}\right)\ .
\ee 
The second one is, instead, anti-symmetric 
\be
\label{sympform1}
(\vec{r}_1,\vec{r}_2)\to\vec{r}_1\cdot\left(\sigma^{(\omega)}\vec{ r}_2\right)=\sum_{i,j=1}^p\,r_{1i}\,r_{2j}\,\sigma^{(\omega)}_{ij}\ ,\qquad\forall \vec{r}_{1,2}\in\RR^p\ ,
\ee 
and defined by the real symplectic matrix
$\sigma^{(\omega)}$
 with entries
\be
\label{sympform}
\sigma^{(\omega)}_{ij}:=-i\lim_{N\to\infty}\omega\left(\left[F_i^{(N)}\,,\,F_j^{(N)}\right]\right)=-\sigma^{(\omega)}_{ji}\ .
\ee
Notice that the $p\times p$ matrices introduced so far are related by the following equality
\be
\label{corcovsym}
\mathcal{C}^{(\omega)}=\Sigma^{(\omega)}\,+\,\frac{i}{2}\sigma^{(\omega)}\ .
\ee

For sake of compactness, because of the linearity of the map 
that associates an operator $x$ with its local quantum fluctuation $F^{(N)}(x)$,
the following notations will be used:
\bea
\label{qfa1}
\vec{r}\cdot\vec{F}^{(N)}&:=&\sum_{j=1}^p\,r_j\,F_j^{(N)}=F^{(N)}(x_{\vec{r}})\qquad \forall\,x_{\vec{r}}\in \mathcal{X}\ ,\\
\label{qfa2}
W^{(N)}(\vec{r})&:=&{\rm e}^{i\vec{r}\cdot\vec{F}^{(N)}}={\rm e}^{iF^{(N)}(x_{\vec{r}})}\ ,
\eea
where $\vec{F}^{(N)}$ is the operator-valued vector with components $F_j^{(N)}$, $j=1,2,\ldots,p$.

Given the symplectic matrix $\sigma^{(\omega)}=[\sigma_{ij}^{(\omega)}]$, one constructs the abstract \emph{Weyl} algebra 
$\mathcal{W}(\chi,\sigma^{(\omega)})$, linearly generated by
the Weyl operators $W(\vec{r})$, $\vec{r}\in\mathbb{R}^p$, obeying the relations:
\be
W^\dag(\vec{r})=W(-\vec{r})\ ,\quad
W(\vec{r}_1)\,W(\vec{r}_2)=W(\vec{r}_1+\vec{r}_2)\,{\rm e}^{-\frac{i}{2}\vec{r}_1\cdot\left(\sigma^{(\omega)}\vec{r}_2\right)}\ .
\label{Weyl}
\ee
The following theorem specifies in which sense, in the large $N$ limit, the local exponentials $W^{(N)}(\vec{r})$ yield Weyl operators $W(\vec{r})$ \cite{BOOK}.
\medskip

\begin{theorem}
\label{th1}
Any set $\chi$ with normal fluctuations with respect to a clustering state $\omega$ admits a regular, Gaussian state 
$\Omega$ on $\mathcal{W}(\chi,\sigma^{(\omega)})$ such that,
 for all $\vec{r}_j\in\mathbb{R}^p$, $j=1,2,\ldots,n$,
\be
\label{meslimth}
\lim_{N\to\infty}\omega\left(W^{(N)}(\vec{r}_1)\,W^{(N)}(\vec{r}_2)\,\cdots W^{(N)}(\vec{r}_n)\right)=
\Omega\left(W(\vec{r}_1)\,W(\vec{r}_2)\,\cdots W(\vec{r}_n)\right)\ ,
\ee
where the $W(\vec{r}_j)$ satisfy \eqref{Weyl} and
\be
\Omega\big(W(\vec{r})\big)=
\exp\Bigg(-\frac{\vec{r}\cdot\big(\Sigma^{(\omega)}\,\vec{r}\big)}{2}\Bigg)\ ,\qquad \forall\,\vec{r}\in \RR^p\ .
\label{quasistate}
\ee
\end{theorem}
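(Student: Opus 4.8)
The plan is to obtain the statement from three ingredients: existence of the limit functional $\Omega$ as a regular Gaussian (quasi-free) state, the one-operator convergence (base case), and an asymptotic Weyl relation that reassembles the $n$-fold product by induction. I would first settle existence. The candidate two-point data is the matrix $\mathcal{C}^{(\omega)}=\Sigma^{(\omega)}+\tfrac{i}{2}\sigma^{(\omega)}$ of \eqref{corcovsym}, and each entry $\mathcal{C}^{(\omega)}_{ij}=\lim_N\omega(F_i^{(N)}F_j^{(N)})$ is a limit of the positive matrices $[\omega(F_i^{(N)}F_j^{(N)})]$, positivity following from $\omega(G^\dag G)\ge 0$ for $G=\sum_j\alpha_j F_j^{(N)}$ and the self-adjointness of the $F_j^{(N)}$. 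Hence $\mathcal{C}^{(\omega)}\ge 0$, i.e. $\Sigma^{(\omega)}+\tfrac{i}{2}\sigma^{(\omega)}\ge 0$, which is exactly the condition guaranteeing that the quasi-free functional $\Omega(W(\vec r))=\exp(-\tfrac12\,\vec r\cdot\Sigma^{(\omega)}\vec r)$ extends to a state on $\mathcal{W}(\chi,\sigma^{(\omega)})$; regularity and Gaussianity are read off the exponential form. The base case $n=1$ is then immediate: for $\vec r$ along a generator it is the normality hypothesis \eqref{Gauss2}, and for general $\vec r$ it holds because $x_{\vec r}=\sum_i r_i x_i$ is again a single-site self-adjoint observable with $F^{(N)}(x_{\vec r})=\vec r\cdot\vec F^{(N)}$ and limiting variance $\vec r\cdot\Sigma^{(\omega)}\vec r$, this being the directional content of multivariate normality.

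The core is the asymptotic Weyl relation: with $c=e^{-\frac{i}{2}\vec r_1\cdot\sigma^{(\omega)}\vec r_2}$ and $C^{(N)}=W^{(N)}(\vec r_3)\cdots W^{(N)}(\vec r_n)$,
\[
\lim_{N\to\infty}\Big[\omega\big(W^{(N)}(\vec r_1)W^{(N)}(\vec r_2)\,C^{(N)}\big)-c\,\omega\big(W^{(N)}(\vec r_1+\vec r_2)\,C^{(N)}\big)\Big]=0.
\]
I would prove this by a Duhamel interpolation. Writing $A=iF^{(N)}(x_{\vec r_1})$, $B=iF^{(N)}(x_{\vec r_2})$ and $\lambda_\infty=-i(\vec r_1\cdot\sigma^{(\omega)}\vec r_2)\mathbf 1$, the functions $m(t)=e^{tA}e^{tB}$ and $n(t)=e^{t(A+B)}e^{\frac{t^2}{2}\lambda_\infty}$ both start at $\mathbf 1$ and solve $\dot m=K_m(t)m$, $\dot n=K_n(t)n$ with $K_m(t)=A+B+t[A,B]+\tfrac{t^2}{2}[A,[A,B]]+\cdots$ and $K_n(t)=A+B+t\lambda_\infty$, the derivative of $n$ being clean since $A+B$ generates a one-parameter group. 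The single commutator is the mean-field operator $[A,B]=-\tfrac1N\sum_k z^{(k)}$ with $z=[x_{\vec r_1},x_{\vec r_2}]$, whose weak limit is the scalar $-\omega(z)\mathbf 1=\lambda_\infty$ (note $\omega(z)=i\,\vec r_1\cdot\sigma^{(\omega)}\vec r_2$, so $e^{\lambda_\infty/2}=c$), while every higher nested commutator carries an extra $1/\sqrt N$ in norm. Thus $K_m(t)-K_n(t)=t\big(-\tfrac1N\sum_k z^{(k)}+\omega(z)\mathbf 1\big)+O(1/\sqrt N)$, and at $t=1$ one has $m(1)=W^{(N)}(\vec r_1)W^{(N)}(\vec r_2)$, $n(1)=c\,W^{(N)}(\vec r_1+\vec r_2)$, so Duhamel gives $m(1)-n(1)=\int_0^1 M^{(N)}(s)\,(K_m(s)-K_n(s))\,n(s)\,ds$ with $M^{(N)}(s)$ a product of Weyl unitaries of norm one.

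The hard part is the leading term: $\tfrac1N\sum_k z^{(k)}$ converges to the scalar $\omega(z)$ only weakly, yet it sits sandwiched between the $N$-dependent unitaries $M^{(N)}(s)$ on the left and $n(s)C^{(N)}$ on the right, so its weak limit cannot simply be inserted. What is genuinely required — and where the $L_1$-clustering \eqref{stclus} is indispensable rather than the mere clustering \eqref{clustates} — is that a mean-field average decorrelates from any background of fluctuation exponentials, namely
\[
\lim_{N\to\infty}\omega\Big(P^{(N)}\,\big(\tfrac1N\textstyle\sum_k z^{(k)}-\omega(z)\mathbf 1\big)\,Q^{(N)}\Big)=0
\]
for products $P^{(N)},Q^{(N)}$ of Weyl operators; this is of the same type as \eqref{macro1bis} but made robust under such backgrounds, and it holds because the mean-field scaling $1/N$ beats the $1/\sqrt N$ of the fluctuations while the residual single-site correlations are summable by \eqref{stclus}. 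Granting this estimate, dominated convergence in the Duhamel integral yields the asymptotic Weyl relation. I would then close by induction on $n$: the relation merges the first two factors, and the induction hypothesis applied to $\omega(W^{(N)}(\vec r_1+\vec r_2)W^{(N)}(\vec r_3)\cdots)$ produces $c\,\Omega(W(\vec r_1+\vec r_2)W(\vec r_3)\cdots)$, which equals $\Omega(W(\vec r_1)\cdots W(\vec r_n))$ by the Weyl relations \eqref{Weyl} in $\mathcal{W}(\chi,\sigma^{(\omega)})$.
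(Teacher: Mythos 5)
Your proposal cannot be matched against an in-paper argument, because the paper does not prove Theorem~\ref{th1}: it is quoted from the fluctuation literature (the reference [1] of Verbeure and the Goderis--Verbeure--Vets papers), and Section~\ref{sec5} only proves Theorems~\ref{limdyn}--\ref{extgenth} and Proposition~\ref{ccrt}. Judged on its own merits, your route is essentially the standard one from that literature, and its mechanics are sound. The Duhamel/BCH merging step is correct: since nested commutators of fluctuations are site-diagonal, $\mathbb{K}^n_A[B]$ carries a factor $N^{-(n-1)/2}$ in norm, so the interpolation series converges uniformly in $N$ and only the single-commutator term survives. Your key decorrelation lemma is also provable with tools already in the paper: commute the centred mean-field operator $Z^{(N)}=\frac1N\sum_k z^{(k)}-\omega(z)$ through the Weyl background at an $O(1/\sqrt N)$ norm cost (this is exactly the mechanism of Lemma~\ref{lemtool2}), then apply Cauchy--Schwarz and the variance estimate of Appendix~A, which gives $\omega\big(Z^{(N)}(Z^{(N)})^\dag\big)\to 0$. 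One correction there: for this step the weak clustering \eqref{clustates} already suffices via \eqref{macro1bis}; the $L_1$-condition \eqref{stclus} is what guarantees existence of the limits defining $\Sigma^{(\omega)}$, $\sigma^{(\omega)}$ and the $O(1/N)$ rate \eqref{macro1}, not the decorrelation itself. The existence part (positivity of $\mathcal{C}^{(\omega)}=\Sigma^{(\omega)}+\tfrac{i}{2}\sigma^{(\omega)}$ as a limit of positive matrices, hence a Gaussian state on $\mathcal{W}(\chi,\sigma^{(\omega)})$) is also fine and is the same positivity criterion the paper invokes in the proof of Theorem~\ref{CPos}.

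The one genuine weak point is your base case. Your induction terminates at $\omega\big(W^{(N)}(\vec{r}_1+\cdots+\vec{r}_n)\big)$, i.e.\ at the characteristic function of the fluctuation of an \emph{arbitrary} element $x_{\vec{r}}$ of the span $\mathcal{X}$ in \eqref{spaceChi}, so the Gaussian limit must be available for every $\vec{r}\in\RR^p$. Definition~\ref{2}, read literally, grants \eqref{Gauss2} only for the generators $x_j\in\chi$, and normality along finitely many directions does \emph{not} imply normality of linear combinations -- this is the same failure as ``Gaussian marginals do not imply jointly Gaussian'' in classical probability, and counterexamples with clustering but non-Gaussian joint fluctuations exist. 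So the phrase ``this being the directional content of multivariate normality'' is an interpretation of the hypothesis, not a derivation from it. The gap is closed by reading Definition~\ref{2} as requiring \eqref{Gauss2} for every element of $\mathcal{X}$ -- which is the definition actually used in the cited literature, and is how the paper itself applies Theorem~\ref{th1} in Section~\ref{ExpFun}, where $\lim_N\omega(W^{(N)}(\vec{r}))=\Omega(W(\vec{r}))$ is invoked for all $\vec{r}\in\RR^{d^2}$ -- but you should state this reading explicitly, since under the literal hypothesis your proof (and indeed the theorem) does not go through.
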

\medskip

The regular and Gaussian character of $\Omega$ follows from \eqref{Gauss2}. In particular,
its regularity guarantees that one can write 
\be
\label{reg}
W(\vec{r})={\rm e}^{i\,\vec{r}\cdot\vec{F}}\ ,\qquad \vec{r}\cdot\vec{F}=\sum_{i=1}^p\,r_i\,F_i=F(x_{\vec{r}})\ ,
\ee 
where $\vec{F}$ is the operator-valued $p$-dimensional vector with components $F_i$ that are collective field operators satisfying canonical commutation relations 
$\left[F_i\,,\,F_j\right]=i\,\sigma_{ij}^{(\omega)}$, or, more generically,
\be
\left[F(x_{\vec{r}_1})\,,\,F(x_{\vec{r}_2})\right]=\left[\vec{r}_1\cdot\vec{F}\,,\,\vec{r}_2\cdot\vec{F}\right]=i\,\vec{r}_1\cdot\left(\sigma^{(\omega)}\vec{r}_2\right)\ .
\label{COMSIGMA}
\ee
We shall refer to the Weyl algebra $\mathcal{W}(\chi,\sigma^{(\omega)})$ generated by the strong-closure (in the GNS representation based on $\Omega$) of the linear span of Weyl operators as the {\sl quantum fluctuation algebra}.

\subsection{Mesoscopic limit}
\label{subsecmeso}

Later on, we shall focus on the effective dynamics of quantum fluctuations, emerging from the large $N$ limit of a family of microscopic dynamical maps $\{\Phi^{(N)}\}_{N\in\mathbb{N}}$ defined on the strictly local subalgebras $\ca_{[0,N-1]}$.
To formally state our main results, we introduce what we shall refer to as  \textit{mesoscopic limit}.
\medskip

\begin{definition}[\textbf{Mesoscopic limit}]\quad
Given a discrete family of operators $\{X^{(N)}\}_{N\in\mathbb{N}}$, in the quasi-local algebra $\ca$, we shall say that they posses the mesoscopic limit 
$$
\mathcal{W}(\chi,\sigma^{(\omega)})\ni X:=m-\lim_{N\to\infty} X^{(N)}\ ,
$$
if and only if 
\be
\label{meslim1a}
\Omega_{\vec{r}_1\vec{r}_2}\left(X\right)=\lim_{N\to\infty}\omega_{\vec{r}_1\vec{r}_2}\left(X^{(N)}\right)\ ,
\qquad\forall\,\vec{r}_{1,2}\in\RR^p\ ,
\ee
where 
\be\label{meslim1aa}
\omega_{\vec{r}_1\vec{r}_2}(X^{(N)}):=\omega\left(W^{(N)}(\vec{r}_1)\,X^{(N)}\,W^{(N)}(\vec{r}_2)\right)\ ,\
\Omega_{\vec{r}_1\vec{r}_2}(X):=\Omega\left(W(\vec{r}_1)\,X\,W(\vec{r}_2)\right)\ .
\ee
Further, given a sequence of completely positive, unital maps 
$\Phi^{(N)}:\ca_{[0,N-1]}\mapsto\ca_{[0,N-1]}$, one defines the mesoscopic limit, $\Phi:=m-\lim_{N\to\infty}\Phi^{(N)}$ on the Weyl algebra $\mathcal{W}(\chi,\sigma^{(\omega)})$ by 
\be
\label{meslim}
\lim_{N\to\infty}\omega_{\vec{r}_1\vec{r}_2}\left(\Phi^{(N)}\left[W^{(N)}(\vec{r})\right]\right)=\Omega_{\vec{r}_1\vec{r}_2}\left(\Phi\left[W(\vec{r})\right]\right)\ ,\qquad\forall\,\vec{r}_{1,2}\,,\,\vec{r}\in\RR^p\ .
\ee
\label{meslimdef}
\end{definition}

\begin{remark}
\label{remGNS}
{\rm Notice that the right hand side of \eqref{meslim} is the matrix element of $\pi_\Omega\left(\Phi\left[W(\vec{r})\right]\right)$ with respect to two vector states 
$\pi_\Omega(W(\vec{r}_{1,2}))\vert\Omega\rangle$ in the $GNS$-representation of the Weyl algebra generated by the operators $W(r)$ based on $\Omega$.
Since these vectors are dense in the $GNS$-Hilbert space, the mesoscopic map $\Phi$ is defined by the matrix elements of its action on Weyl operators that arise from local quantum fluctuations.   \qed}
\end{remark}
\medskip

According to the above definition and to \eqref{meslimth}, one can then say that the Weyl operators $W(x_j)$ are the mesoscopic limits of the local exponentials $W^{(N)}(x_j)$ and, by taking derivatives with respect to the parameters $r_j$, that the operators $F_j$ are the mesocopic limits of the local quantum fluctuations $F_j^{(N)}$.

\section{Mean-field dissipative dynamics}
\label{sec2}

Typically, a mean-field unitary spin-dynamics emerges in the large $N$ limit from a quadratic interaction hamiltonian scaling as $1/N$ as for the case of the BCS model in the quasi-spin description~\cite{TW}.

In this framework, operators $x\in\ca_{[0,N-1]}$ pertaining to the lattice sites $k=0,1,\ldots,N-1$, evolve in time according to a group of automorphisms of $\ca_{[0,N-1]}$, $\UU^{(N)}_t\,:\,x\mapsto x_t:=\UU^{(N)}_t[x]$, generated by
$$
\partial_tx_t=i\,\Big[h^{(N)}\,+\,H^{(N)}\,,\,x_t\Big]\ ,
$$ 
with a linear and bi-linear terms, the last one scaling as $1/N$:
\be
\label{2Ham}
h^{(N)}=\sum_{\mu=1}^{d^2}\epsilon_\mu\,\sum_{k=0}^{N-1}v^{(k)}_\mu\ ,\qquad
H^{(N)}=\sum_{\mu,\nu=1}^{d^2}h_{\mu\nu}\frac{1}{N}\sum_{k=0}^{N-1} v_\mu^{(k)}\,\sum_{\ell=0}^{N-1}v_\nu^{(\ell)}\ .
\ee
In the expressions above, the single-site operators $v_\mu=v_\mu^\dag$, $\mu=1,2,\ldots,d^2$, are chosen to constitute an hermitian orthonormal basis for the single-site algebra $M_d(\CC)$:
\be
\label{ONB}
\Tr(v_\mu\,v_\nu)=\delta_{\mu\nu}\ ,\qquad 
a=\sum_{\mu=1}^{d^2}\Tr(a\,v_\mu)\,v_\mu\ ,\qquad\forall\, a\in M_{d^2}(\CC)\ ,
\ee
and the coefficients $\epsilon_\mu$, $h_{\mu\nu}$ are chosen such that
\be
\label{Hcoeff}
\epsilon_\mu=\epsilon_\mu^*\ ,\qquad h_{\mu\nu}=h_{\nu\mu}^*\ .
\ee
In the following, we will perturb the hamiltonian generator of the microscopic dynamics with a Lindblad type contribution~\cite{BOOK6} scaling as $1/N$. We shall then study the time-evolution that emerges at the level of collective quantities from a dissipative microscopic master equation  $\partial_t x_t=\LL^{(N)}[x_t]$, with generator
\begin{eqnarray}
\label{mflind1a0}
\LL^{(N)}[x_t]&=&i\,\Big[h^{(N)}\,+\,H^{(N)}\,,\,x_t\Big]\,+\,\DD^{(N)}[x_t]\ ,\\
\label{mflind1a00}
\DD^{(N)}[x_t]&=&\sum_{\mu,\nu=1}^{d^2}\frac{C_{\mu\nu}}{2}\left(\Big[V^{(N)}_\mu\,,\,x_t\Big]\,V^{(N)}_\nu\,+\,
V^{(N)}_\mu\,\Big[x_t\,,\,V^{(N)}_\nu\Big]\right)\ ,\\
\label{mflind1c}
V^{(N)}_\mu&=&\frac{1}{\sqrt{N}}\sum_{k=0}^{N-1}v_\mu^{(k)}\ .
\end{eqnarray}
Notice that the mean-field scaling of $\LL^{(N)}$ is that of the commutator with $H^{(N)}$ and is due to the scaling $\displaystyle\frac{1}{\sqrt{N}}$ of the operators $V^{(N)}_\mu$.

In the above expression, the coefficients $C_{\mu\nu}$ are chosen to form  a positive semi-definite matrix $C=[C_{\mu\nu}]$, known as Kossakowski matrix. Such a property of $C$ ensures that the solution to $\partial_tx_t=\LL^{(N)}[x_t]$ is a one-parameter semigroup of completely positive, unital maps $\gamma^{(N)}_t=\exp(t\LL^{(N)})$ mapping $\ca_{[0,N-1]}$~\cite{BOOK6,Lindblad}:
\be
\label{semig}
\gamma^{(N)}_t[\mathbf{1}]=\mathbf{1}\ ,\qquad  
\gamma_t^{(N)}\circ\gamma_s^{(N)}=\gamma_{t+s}^{(N)}\ ,\qquad\forall\ s,t\geq 0\ .
\ee
\medskip

\begin{remark}\hfill
\label{remgen}
{\rm
\begin{enumerate}
\item
While the purely hamiltonian mean-field dynamics studied in \cite{BOOK, NARN, HT}
preserve the norm, the maps $\gamma^{(N)}_t$ are contractions, namely $\displaystyle \left\|\gamma^{(N)}_t[X]\right\|\leq\|X\|$ for all $X\in\mathcal{A}$. Furthermore,~\cite{Lindblad}
\be
\LL^{(N)}[x\,y]-\LL^{(N)}[x]\,y\,-\,x\,\LL^{(N)}[y]\,=\,\sum_{\mu,\nu=1}^{d^2}C_{\mu\nu}[V^{(N)}_\mu\,,\,x][y\,,\,V^{(N)}_\nu]\ ,
\label{Lind0}
\ee
for the hamiltonian contributions cancel and only $\DD^{(N)}$ contributes.
\item
A generator as in \eqref{mflind1a0} can be obtained by considering $N$ $d$-level systems  interacting with their environment via a hamiltonian of the form
\begin{equation}
\label{systbathcoup}
H_T=h^{(N)}\otimes{\bf 1}+{\bf 1}\otimes H_E+\sum_\alpha V^{(N)}_\alpha\otimes B_\alpha\ ,
\end{equation}
where $H^{(N)},H_E$ represent the hamiltonians of system and environment considered alone,  while the coupling hamiltonian consists of the operators $V^{(N)}_\alpha$ in \eqref{mflind1c} (which thus scale with $1/\sqrt{N}$) and environment operators $B_\alpha=B_\alpha^\dag$.
Notice that the scaling $1/\sqrt{N}$ of the interaction hamiltonian is the same as in the Dicke model for light-matter interaction \cite{HL,sew,sew2} and is the only one that, in the large $N$ limit with respect to clustering states, can lead to a meaningful dynamics with generator as 
in~\eqref{mflind1a0}.
In the weak-coupling limit \cite{Breuer}, when memory effects can be neglected, one retrieves an effective evolution of the $N$-body system alone, implemented by Lindblad generators of the specific type \eqref{mflind1a0}. The contribution $\DD^{(N)}$ describes dissipative and noisy effects due to the system-environment collective coupling in equation \eqref{systbathcoup}, while the hamiltonian $H^{(N)}$ in~\eqref{2Ham}
is an environment induced Lamb shift.\qed
\end{enumerate}
}
\end{remark}

We decompose the coefficients of the mean-field hamiltonian in~\eqref{2Ham} as $h_{\mu\nu}=h^{(re)}_{\mu\nu}+i h^{(im)}_{\mu\nu}$, with the real and imaginary parts  satisfying the relations 
\be
\label{aidd8}
h^{(re)}_{\mu\nu}=h^{(re)}_{\nu\mu}\ ,\qquad h^{(im)}_{\mu\nu}=-h^{(im)}_{\nu\mu}\ .
\ee 
Then, using~\eqref{mflind1c}, the mean-field hamiltonian contribution can be written as
\be
\label{HMF}
i\Big[H^{(N)}\,,\,x_t\Big]=i\sum_{\mu,\nu=1}^{d^2}h^{(re)}_{\mu\nu}\left\{\Big[V^{(N)}_\mu\,,\,x_t\Big]\,,\,V^{(N)}_\nu\right\}\,-\,\sum_{\mu,\nu=1}^{d^2}h^{(im)}_{\mu\nu}\left[\Big[V^{(N)}_\mu\,,\,x_t\Big]\,,\,V^{(N)}_\nu\right]\ .
\ee
In the above expression, $\{x\,,\,y\}=x\,y\,+\,y\,x$ denotes anti-commutator. At the same time, by decomposing the Kossakowski matrix $C=[C_{\mu\nu}]$ in its self-adjoint symmetric and anti-symmetric components as
\be
\label{KosAB}
A:=\frac{C+C^{tr}}{2}=[A_{\mu\nu}]\ ,\qquad B:=\frac{C-C^{tr}}{2}=[B_{\mu\nu}]\ ,
\ee
where $C^{tr}$ denotes transposition, one recasts $\DD^{(N)}[x_t]$ in \eqref{mflind1a0} as $\DD^{(N)}=\AA^{(N)}\,+\,\BB^{(N)}$, with
\beann
&&
\AA^{(N)}[x_t]=\sum_{\mu,\nu=1}^{d^2}\frac{A_{\mu\nu}}{2}\left[\Big[V^{(N)}_\mu\,,\,x_t\Big]\,,\,V^{(N)}_\nu\right]\ ,\quad A_{\mu\nu}=A_{\nu\mu}=A^*_{\mu\nu}\\
&&
\BB^{(N)}[x_t]=\sum_{\mu,\nu=1}^{d^2}\frac{B_{\mu\nu}}{2}\left\{\Big[V^{(N)}_\mu\,,\,x_t\Big]\,,\,V^{(N)}_\nu\right\}\ ,\quad B_{\mu\nu}=-B_{\nu\mu}=-B^*_{\mu\nu}\ .
\eeann
Thus, using~\eqref{HMF} and the above expressions, the generator in~\eqref{mflind1a0} deomposes as a mean-field dissipator-like term plus a free hamiltonian term:
\bea
\label{mflind1a}
\LL^{(N)}[x_t]&=&\Big(\HH^{(N)}\,+\,\widetilde{\DD}^{(N)}\Big)[x_t]\ ,\qquad \widetilde{\DD}^{(N)}:=\widetilde{\AA}^{(N)}+\widetilde{\BB}^{(N)}\\
\label{submap0}
\HH^{(N)}[x_t]&:=&
i\sum_{\mu=1}^{d^2}h_\mu\sum_{k=0}^{N-1}\,\Big[v^{(k)}_\mu\,,\,x_t\Big]=i\sqrt{N}\sum_{\mu=1}^{d^2}\epsilon_\mu\,\Big[V^{(N)}_\mu\,,\,x_t\Big]\\
\label{submap1}
\widetilde{\AA}^{(N)}&:=&\frac{1}{2}\sum_{\mu,\nu=1}^{d^2}\widetilde{A}_{\mu\nu}\,\left[\Big[V^{(N)}_\mu\,,\,x_t\Big]\,,\,V^{(N)}_\nu\right]\ ,\quad
\widetilde{A}_{\mu\nu}:=A_{\mu\nu}\,-\,2h^{(im)}_{\mu\nu}=\widetilde{A}_{\mu\nu}^*
\\
\widetilde{\BB}^{(N)}&:=&\frac{1}{2}\sum_{\mu,\nu=1}^{d^2}\widetilde{B}_{\mu\nu}\,\left\{\Big[V^{(N)}_\mu\,,\,x_t\Big]\,,\,V^{(N)}_\nu\right\}\ ,\quad
\widetilde{B}_{\mu\nu}:=B_{\mu\nu}\,+\,2i\,h^{(re)}_{\mu\nu}=-\widetilde{B}^*_{\mu\nu}\ .
\label{submap2}
\eea
The various coefficients are conveniently regrouped into the following $d^2\times d^2$ matrices 
\bea
\label{hmat1}
&&
h^{(re)}:=[h^{(re)}_{\mu\nu}]=(h^{(re)})^{tr}\ ,\ h^{(im)}=[h^{(im)}_{\mu\nu}]=-(h^{(im)})^{tr}\ ,\\
\label{hmat2}
&&
h:=[h_{\mu\nu}]=h^{(re)}\,+\,i\,h^{(im)}=h^\dag\ ,\qquad
\widetilde{A}=A-2h^{(im)}\ ,\qquad \widetilde{B}=B+2\,i\,h^{(re)}\ ,
\eea
where $\widetilde{A}$ is real, but unlike $A$ in~\eqref{KosAB}, non symmetric, and 
$\widetilde{B}$ is purely imaginary, but, unlike $B$ in~\eqref{KosAB}, not anti-symmetric.

\subsection{Mean-field dissipative dynamics on the quasi-local algebra}
\label{subsec3.1}

In this section we shall deal with the large $N$ limit of the microscopic dissipative dynamics $\gamma^{(N)}_t$ on the quasi-local algebra $\ca$ generated by $\LL^{(N)}$ in~\eqref{mflind1a}-\eqref{submap2}; namely, we shall investigate the behaviour when $N\to\infty$ of $\gamma^{(N)}_t[x]$, where $x\in\ca$ is either strictly local, that is different from the identity matrix, over an arbitrary, but fixed number of sites, or can be approximated in norm by strictly local operators.

\begin{definition}
\label{defsupp}
An operator $O\in\ca$ is \emph{strictly local} if there exists an interval $[a,b]$ of lattice sites, such that $[O\,,\,x^{(k)}]=0$  
$\forall x\in M_d(\CC)$ and $k\notin[a,b]$. 
The smallest such interval is the support $\mathcal{S}(O)$ of $O\in\ca$
whose cardinality will be denoted by $\ell(O)$.
\end{definition}
\medskip

We shall consider microscopic states $\omega$ that are translation invariant and clustering, but not necessarily invariant under the large $N$ limit of the microscopic dynamics; namely such that, in general, on strictly local $x\in\mathcal{A}$,
$$
\lim_{N\to\infty}\omega\left(\gamma^{(N)}_t[x]\right)\neq\omega(x)\ .
$$
Thus, we shall consider the case of macroscopic averages associated with mean-field operators that may also change in time. 
The existence of the following macroscopic averages is first guaranteed for all $t\in[0,R]$ with $R$ defined by the norm-convergence radius of the exponential series 
$$
\gamma^{(N)}_t=\sum_{k=0}^\infty\frac{t^k}{k!}\left(\LL^{(N)}\right)^k
$$   
on local and mean-field operators by Corollary~\ref{exg} in Section~\ref{Applemmacors}, and then extended to all finite times $t\geq0$ by Proposition~\ref{extdyn}.
\medskip

\begin{definition}
\label{defmacro}
The time-dependent macroscopic averages of the commutator of single-site operators, $v_\mu$ and $[v_\mu\,,\,v_\nu]\in M_d(\CC)$, with respect to the microscopic state at any finite time $t\geq 0$ will be denoted by:
\bea
\label{macr1a}
\omega_\alpha(t)&:=&\lim_{N\to\infty}\omega^{(N)}_\alpha(t)\ ,\quad
\omega^{(N)}_\alpha(t):=\omega\left(\gamma^{(N)}_t\left[\frac{1}{N}\sum_{k=0}^{N-1}v_\alpha^{(k)}\right]\right)\ ,\\
\label{macr1b}
\omega_{\alpha\beta}(t)&:=&\lim_{N\to\infty}\omega^{(N)}_{\alpha\beta}(t)\ ,\quad
\omega^{(N)}_{\alpha\beta}(t):=\omega\left(\gamma^{(N)}_t\left[\frac{1}{N}\sum_{k=0}^{N-1}[v^{(k)}_\alpha\,,\,v^{(k)}_\beta]\right]\right)\ .
\eea
\end{definition}
\medskip

Using the relations \eqref{ONB}, one writes
$$
\left[v^{(k)}_\alpha\,,\,v_\beta^{(k)}\right]=\sum_{\gamma=1}^{d^2}\Tr\left(\left[v^{(k)}_\alpha\,,\,v_\beta^{(k)}\right]\,v^{(k)}_\gamma\right)\,v_\gamma^{(k)}\ ,
$$ 
and, since the trace does not depend on the site index, one may set 
\be
\label{aid0}
J^{\gamma}_{\alpha\beta}:=\Tr\left(\left[v^{(k)}_\alpha\,,\,v^{(k)}_\beta\right]\,v^{(k)}_\gamma\right)=J^{\beta}_{\gamma\alpha}=-J^{\gamma}_{\beta\alpha}=-(J^\gamma_{\alpha\beta})^*\ ,\quad \forall k\ , 
\ee
and derive
\be
\label{aid1}
\left[v^{(k)}_\alpha\,,\,v_\beta^{(k)}\right]=\sum_{\gamma=1}^{d^2}\,J^\gamma_{\alpha\beta}\,v^{(k)}_\gamma\ ,\quad
\omega_{\alpha\beta}(t)=\sum_{\gamma=1}^{d^2}\,J^\gamma_{\alpha\beta}\,\omega_\gamma(t)\ .
\ee
Propositions~\ref{mfdyn} and Corollary~\ref{quasilocal} in Section~\ref{Applemmacors} show that the macroscopic averages satisfy the following equations of motion for all times $t\geq 0$:
\bea
\nonumber
\frac{{\rm d}}{{\rm d}t}\omega_\alpha(t)&=&\sum_{\mu=1}^{d^2}\Big(\sum_{\nu=1}^{d^2}\widetilde{B}_{\mu\nu}\,\omega_\nu(t)\,+\,i\,\epsilon_\mu\Big)\,\omega_{\mu\alpha}(t)\\
\label{macrodyn}
&=&\sum_{\mu,\gamma=1}^{d^2}\Big(\sum_{\nu=1}^{d^2}\widetilde{B}_{\mu\nu}\,\omega_\nu(t)\,+\,i\,\epsilon_\mu\Big)\,J^\gamma_{\mu\alpha}\,\,\omega_\gamma(t)\ ,\quad\forall \alpha=1,2,\ldots d^2\ .
\eea
Denoting by $\vec{\omega}_t$ the vector with components $\omega_\alpha(t)$ and using \eqref{aid1}, it proves convenient for later use, in particular for the derivation of the dissipative fluctuation dynamics in Theorem \ref{expfluct}, to recast the equations of motion in the following compact, matrix-like form
\be
\label{d0}
\frac{{\rm d}}{{\rm d}t}\vec{\omega}_t=D(\vec{\omega}_t)\,\vec{\omega}_t\ ,\qquad
D(\vec{\omega}_t)=\Big[D_{\alpha\beta}(\vec{\omega}_t)\Big]=\widetilde{D}(\vec{\omega}_t)\,+\,\,i\,\mathcal{E}\ ,
\ee
where $\widetilde{D}(\vec{\omega}_t)$ and $\mathcal{E}$ have entries
\be
\label{D}
\widetilde{D}_{\alpha\beta}(\vec{\omega}_t)=\sum_{\mu,\nu=1}^{d^2}\,J^\mu_{\alpha\beta}\,\widetilde{B}_{\mu\nu}\,\omega_\nu(t)\ ,\qquad
\mathcal{E}_{\alpha\beta}=\sum_{\mu=1}^{d^2}\,\epsilon_\mu\,J^\mu_{\alpha\beta}\ ,
\ee
and $\widetilde{D}(\vec{\omega}_t)$ depends implicitly on time through the time-evolution: $\vec{\omega}\mapsto\vec{\omega}_t$.
\medskip

Notice that all the scalar quantities multiplying $\omega_\gamma(t)$ change sign under conjugation, whence 
the matrix $D(\vec{\omega}_t)$ is real and 
\be
\label{remD}
J^\mu_{\alpha\beta}=-J^\mu_{\beta\alpha}\quad\hbox{implies}\quad D^\dag(\vec{\omega}_t)=-D(\vec{\omega}_t)\ .
\ee

The non-linear equations \eqref{d0} with initial condition $\vec{\omega}_0=\vec{\omega}$ are formally solved by the matricial expression
\be
\label{formalsol}
\vec{\omega}_t=M_t(\vec{\omega})\,\vec{\omega}\ ,\quad M_t(\vec{\omega}):=\mathbb{T}{\rm e}^{\int_0^t{\rm d}s\,D(\vec{\omega}_s)}
\ee
where $\mathbb{T}$ denotes time-ordering and the dependence of the $d^2\times d^2$ matrix $M_t(\vec{\omega})$ on the time-evolution $\vec{\omega}\mapsto\vec{\omega}_t$ embodies the non-linearity of the dynamics.
However, this is just a formal writing, that will prove to be useful later on: the time-evolution of the macroscopic averages can be found only by directly solving the system of equations \eqref{d0}.

Despite the time-ordering, since there is no explicit time-dependence in the equations \eqref{d0}, the time-evolution of the macroscopic averages composes as a semigroup, 
\be
\label{avtime}
\vec{\omega}\mapsto\vec{\omega}_s\mapsto(\vec{\omega}_s)_t=\vec{\omega}_{s+t}\qquad \forall\ s,t\geq 0\ .
\ee
Moreover, because of the anti-symmetry of $D(\vec{\omega}_t)$ and of the fact that the macroscopic averages are real, the quantity $K(t):=\sum_{\alpha=1}^{d^2}\omega^2_\alpha(t)$ is a constant of the motion
$$
\frac{{\rm d}K(t)}{{\rm d}t}=\sum_{\alpha,\beta=1}^{d^2}\Big(D_{\alpha\beta}(\vec{\omega}_t)\,+\,D_{\beta\alpha}(\vec{\omega}_t)\Big)\omega_\alpha(t)\,\omega_\beta(t)=0\ .
$$
\medskip

\begin{remark}
\label{remavtime}
{\rm The components $\omega_\mu(t)=\lim_{N\to\infty}\omega^{(N)}_t(v_\mu)$ of the vector $\vec{\omega}_t$ lie within the parallelepiped $[-1,1]^{\times d^2}$: this follows since the orthonormal matrices $v_\mu$ are such that $\tr(v_\mu^2)=1$ and thus
$\|v_\mu\|\leq 1$, whence
\be
\label{constr1}
\sum_{\mu=1}^{d^2}(\omega(v_\mu))^2\leq d^2\ .
\ee 
Furthermore, the positivity of the state $\omega$ yields the positivity of the $d^2\times d^2$ matrix of coefficients $\omega(v_\mu\,v_\nu)$:
$$
\sum_{\mu,\nu=1}^{d^2}\lambda_\mu^*\lambda_\nu\,\omega(v_\mu\,v_\nu)=\omega(V^\dag V)\geq 0\ ,\qquad \forall\
V:=\sum_{\mu=1}^{d^2}\lambda_\mu\, v_\mu\in M_d(\CC)\ .
$$   
By expanding the matrix product $v_\mu\, v_\nu$ with respect to the $\{v_\mu\}_{\mu=1}^{d^2}$ orthonormal basis, $v_\mu\,v_\nu=\sum_{\alpha=1}^{d^2} V^\alpha_{\mu\nu}\,v_\alpha$, one derives the positivity constraint
\be
\label{constr2}
\sum_{\alpha=1}^{d^2}V^\alpha\,\omega(v_\alpha)\geq 0\ ,
\ee
where the $d^2\times d^2$ matrix $V^\alpha=\left[V^\alpha_{\mu\nu}\right]$ is fixed by the chosen basis.
Thus, the vectors $\vec{\omega}$ of macroscopic averages $\{\omega(v_\mu)\}_{\mu=1}^{d^2}$ belong to the subset $\cs\subset[-1,1]^{\times d^2}$ satisfying the constraints \eqref{constr1} and \eqref{constr2}. In conclusion,
the macroscopic dynamics generated by the non-linear, time-independent equations of motions \eqref{macrodyn} forms a semigroup and maps $\cs$ into itself.}
\qed
\end{remark}

\subsection{Macroscopic dynamics of local observables}
\label{subsec3.2}

With the time-evolution of macroscopic averages at disposal, we are now able to derive the large $N$ limit of the dynamics of quasi-local operators $x\in\ca$.
\medskip

\begin{theorem}
Let the quasi-local algebra $\ca$ be equipped with a translation-invariant, spatially $L_1$-clustering state $\omega$. In the large $N$ limit, the local dissipative generators $\LL^{(N)}$ in \eqref{mflind1a} define on $\ca$ a one-parameter family of automorphisms that depend on the state $\omega$ and are such that, for any finite $t\geq 0$, 
\be
\label{auto1}
\lim_{N\to\infty}\omega\left(a\,\gamma^{(N)}_t[O]\,b\right)=\omega\left(a\,\alpha_t[O]\,b\right)\ ,
\ee
for all $a,b\in\ca$ and $O\in\ca$. If $O$ has finite support $\cs(O)\subset[0,S-1]$, then
\be
\label{auto2}  
\alpha_t[O]=\big(U^{(S)}_t\big)^\dag\,O\,U^{(S)}_t\ ,\quad
U^{(S)}_t=\mathbb{T}{\rm e}^{-i\int_0^t{\rm d}s\, H^{(S)}_s}\ ,
\ee
with explicitly time-dependent hamiltonian
\be
\label{auto3}
H^{(S)}_t=-i\,\sum_{k=0}^{S-1}\,\sum_{\mu=1}^{d^2}\Big(\sum_{\nu=1}^{d^2}\widetilde{B}_{\mu\nu}\,\omega_\nu(t)\,+\,i\,\epsilon_\mu\Big)\,v_\mu^{(k)} \ ,
\ee
where 
\be
\label{auto2b}
\mathbb{T}{\rm e}^{-i\int_0^t{\rm d}s\, H^{(S)}_s}=
\mathbf{1}+\sum_{k=1}^\infty
(-i)^k\int_0^t{\rm d}s_1\cdots\int_0^{s_{k-1}}{\rm d}s_k\,H^{(S)}_{s_1}\cdots H^{(S)}_{s_k}\ .
\ee
\label{limdyn}
\end{theorem}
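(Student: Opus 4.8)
The plan is to reduce \eqref{auto1} to strictly local observables, compute the effective action of the generator $\LL^{(N)}$ on such operators in the large $N$ limit, and then identify the resulting weak limit with the time-ordered flow generated by $H^{(S)}_t$, using the convergence of the mean-field averages already secured by the corollaries quoted above. Since the $\gamma^{(N)}_t$ are unital contractions (Remark~\ref{remgen}) and $\alpha_t$ will turn out to be an automorphism, both families are uniformly bounded, so by norm density of strictly local operators in $\ca$ and an $\varepsilon/3$ argument it suffices to prove \eqref{auto1} for $O$ strictly local with $\cs(O)\subseteq[0,S-1]$ in the sense of Definition~\ref{defsupp}. I would first restrict to $t\in[0,R]$, where Corollary~\ref{exg} provides the norm-convergent expansion $\gamma^{(N)}_t[O]=\sum_k\frac{t^k}{k!}\big(\LL^{(N)}\big)^k[O]$, uniformly in $N$, and guarantees that the matrix elements $\omega\big(a\,\gamma^{(N)}_t[O]\,b\big)$ converge; I call the limit $\omega\big(a\,\alpha_t[O]\,b\big)$.

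The heart of the argument is the scaling of the three pieces of $\LL^{(N)}=\HH^{(N)}+\widetilde\AA^{(N)}+\widetilde\BB^{(N)}$ in \eqref{submap0}--\eqref{submap2} on a strictly local $x$. Writing $[V^{(N)}_\mu,x]=\frac{1}{\sqrt N}\,c_\mu$ with $c_\mu:=\sum_{k=0}^{S-1}[v^{(k)}_\mu,x]$ strictly local, one finds three distinct behaviours: the free term $\HH^{(N)}[x]=i\sum_\mu\epsilon_\mu\,c_\mu$ is $N$-independent and local; the symmetric dissipative term $\widetilde\AA^{(N)}[x]=\frac{1}{2\sqrt N}\sum_{\mu\nu}\widetilde A_{\mu\nu}\,[c_\mu,V^{(N)}_\nu]=O(1/N)$ vanishes, because the double commutator removes two powers of $\sqrt N$; and the anti-symmetric term, upon writing $\{c_\mu,V^{(N)}_\nu\}=2V^{(N)}_\nu c_\mu-[V^{(N)}_\nu,c_\mu]$ with $[V^{(N)}_\nu,c_\mu]=O(1/\sqrt N)$, reduces to $\widetilde\BB^{(N)}[x]=\sum_{\mu\nu}\widetilde B_{\mu\nu}\big(\frac1N\sum_{\ell=0}^{N-1} v^{(\ell)}_\nu\big)\,c_\mu+O(1/N)$, i.e. a mean-field average multiplying a local commutator. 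Thus, to leading order, $\LL^{(N)}[x]$ becomes $\sum_\mu\big(\sum_\nu\widetilde B_{\mu\nu}\,(\frac1N\sum_\ell v^{(\ell)}_\nu)+i\epsilon_\mu\big)\sum_{k=0}^{S-1}[v^{(k)}_\mu,x]$.

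To turn this into the claimed flow I would differentiate $\alpha_t[O]=\lim_N\gamma^{(N)}_t[O]$, the interchange of limit and derivative being licensed by the uniform convergence from Corollary~\ref{exg}, and use $\frac{d}{dt}\gamma^{(N)}_t[O]=\gamma^{(N)}_t\big[\LL^{(N)}[O]\big]$. The decisive input is the factorization established in Corollary~\ref{quasilocal}: a mean-field average co-evolving with a strictly local operator decouples into the \emph{time-dependent} macroscopic average of Definition~\ref{defmacro} times the evolved local factor, $\lim_N\omega\big(a\,\gamma^{(N)}_t[(\frac1N\sum_\ell v^{(\ell)}_\nu)\,c]\,b\big)=\omega_\nu(t)\,\omega\big(a\,\alpha_t[c]\,b\big)$, where the $\omega_\nu(t)$ solve \eqref{macrodyn} by Proposition~\ref{mfdyn}. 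Combining this with the leading-order form of the generator yields the closed, time-local equation $\frac{d}{dt}\alpha_t[O]=\alpha_t\big[i[H^{(S)}_t,O]\big]$ with $H^{(S)}_t$ exactly as in \eqref{auto3}. A direct check shows $H^{(S)}_t=(H^{(S)}_t)^\dag$: since $\widetilde B$ is purely imaginary and $\omega_\nu(t),\epsilon_\mu$ are real, the coefficient of each self-adjoint $v^{(k)}_\mu$ is real. Hence the time-ordered unitary $U^{(S)}_t$ of \eqref{auto2} satisfies $\frac{d}{dt}\big((U^{(S)}_t)^\dag O\,U^{(S)}_t\big)=(U^{(S)}_t)^\dag\,i[H^{(S)}_t,O]\,U^{(S)}_t$, so $\alpha_t[O]=(U^{(S)}_t)^\dag O\,U^{(S)}_t$ solves the equation and defines a genuine automorphism; as $H^{(S)}_t$ is supported on $[0,S-1]$, the support of $O$ is preserved.

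It remains to pass from $[0,R]$ to all finite $t\ge0$: since the macroscopic averages exist for all $t\ge0$ (Proposition~\ref{extdyn}) and compose as a semigroup \eqref{avtime}, one partitions $[0,t]$ into subintervals shorter than $R$ and composes the corresponding automorphisms, the self-consistent time dependence of $H^{(S)}_s$ ensuring matching at the junctions and delivering \eqref{auto1}--\eqref{auto2b} in general. I expect the main obstacle to lie precisely in the factorization invoked above: one must prove that the mean-field operator generated by $\widetilde\BB^{(N)}$ detaches from the quasi-local factor and contributes the \emph{evolved} average $\omega_\nu(t)$ rather than the frozen $\omega(v_\nu)$. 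This rests on the clustering property \eqref{clustates}, on the uniform control of the norm series and of the growth of supports underlying Corollary~\ref{exg}, and on the fact that products of several mean-field averages factorize into products of single macroscopic averages---whose nontrivial $t$-dependence is reconstructed, order by order in $t$, through the solution of the macroscopic equations \eqref{macrodyn}.
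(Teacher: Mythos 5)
Your argument for $t\in[0,R)$ is essentially sound but follows a genuinely different route from the paper's. You define $\alpha_t[O]$ as the limit of the norm-convergent series of Corollary~\ref{exg}, differentiate term by term, and identify the limit with the unitary flow through a closed, finite-dimensional linear ODE on $\ca_{[0,S-1]}$ (uniqueness of its solutions is what finally gives \eqref{auto2}); the paper never differentiates the limit, but instead compares $\gamma^{(N)}_t[O]$ directly with the candidate $O_t=(U^{(S)}_t)^\dag O\,U^{(S)}_t$ via the interpolation $\gamma^{(N)}_t[O]-O_t=\int_0^t{\rm d}s\,\frac{{\rm d}}{{\rm d}s}\big(\gamma^{(N)}_s[O_{t-s}]\big)$ and estimates the integrand. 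Both routes hinge on the same crux, which you correctly isolate: the mean-field factor produced by $\widetilde{\BB}^{(N)}$ must detach and contribute the \emph{evolved} average $\omega_\nu(t)$. However, this is not ``established in Corollary~\ref{quasilocal}'' --- that corollary only extends the convergence of the dynamics to mean-field observables themselves. What the detachment actually requires is Corollary~\ref{spt} (asymptotic factorization of $\gamma^{(N)}_t$ on products of the form \eqref{specop}) together with a Cauchy--Schwarz/Kadison second-moment estimate showing that $\omega\big(a\,\gamma^{(N)}_s\big[(Z^{(N)}_\nu(s))^2\big]\,a^\dag\big)\to0$ for the centered quantity $Z^{(N)}_\nu(s)=\frac{1}{N}\sum_{k}\big(v^{(k)}_\nu-\omega_\nu(s)\big)$; weak convergence of the mean-field factor alone does not allow you to pass to the limit inside the product, because the other factor $\gamma^{(N)}_t[c]$ is itself $N$-dependent.

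The genuine gap is the extension to all finite $t\geq0$. Your plan --- partition $[0,t]$ into subintervals shorter than $R$ and compose --- does not go through as stated. Microscopically $\gamma^{(N)}_{t_1+t_2}=\gamma^{(N)}_{t_1}\circ\gamma^{(N)}_{t_2}$, so gluing forces you to control $\omega\big(a\,\gamma^{(N)}_{t_1}\big[\gamma^{(N)}_{t_2}[O]\big]\,b\big)$, where the inner operator $\gamma^{(N)}_{t_2}[O]$ is $N$-dependent and converges to $\alpha_{t_2}[O]$ only weakly; weak convergence is not preserved when such an operator is fed into the $N$-dependent maps $\gamma^{(N)}_{t_1}$, nor when the state $\omega$ is replaced by the evolved state $\omega\circ\gamma^{(N)}_{t_1}$. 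Making the gluing rigorous would require proving the $t<R$ statement for the entire class of products \eqref{specop}, with limits expressed through the averages $\omega_\nu(t_1)$, and then resumming and matching Taylor coefficients against the composed non-autonomous flow generated by $H^{(S)}_s$ along the shifted trajectory --- in effect re-proving the theorem with evolved initial data. Note also that the bounds of Lemma~\ref{LhN} grow factorially, so the norm series genuinely has radius $R$ and cannot be pushed further by estimates alone. This is exactly why the paper extends the result by analyticity rather than by composition: Proposition~\ref{extdyn} (Vitali--Porter) shows that the uniformly bounded analytic functions $f_N(z)=\omega\big(a\,{\rm e}^{z\LL^{(N)}}[O]\,b\big)$ converge on compact subsets of a strip containing the whole positive half-line, and the limit coincides with the analytic candidate built from $U^{(S)}_t$ on $[0,R)$, hence everywhere. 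You cite Proposition~\ref{extdyn}, but only for the macroscopic averages; you should invoke it for the full matrix elements and drop the gluing step.
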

\bigskip

The proof of the above theorem is given in Section~\ref{Applemmacors}.
Using~\eqref{submap2}, the hamiltonian reads
\be
\label{auto3a}
H^{(S)}_t=\sum_{k=0}^{S-1}\,\sum_{\mu=1}^{d^2}\Big(\Big(\sum_{\nu=1}^{d^2}-i\,\,B_{\mu\nu}\,+\,2\,h^{(re)}_{\mu\nu}\Big)\,\omega_\nu(t)\,+\,\epsilon_\mu\Big)\,v_\mu^{(k)}\ .
\ee
where $B^*_{\mu\nu}=-B_{\mu\nu}$, $(h^{(re)}_{\mu\nu})^*=h^{(re)}_{\mu\nu}$ and $\epsilon^*_\mu=\epsilon_\mu$ guarantee that
$H^{(S)}_t$ is hermitean.
Notice that, in the large $N$ limit, the microscopic dissipative term $\DD^{(N)}$ only contributes with a correction to the free hamiltonian terms in~\eqref{2Ham} so that the dissipative time-evolution of local observables becomes automorphic.
\medskip

Consider the dynamics of single site observables by choosing in~\eqref{auto2} $O$ equal to one of the orthonormal matrices at site $\ell$, $v_\mu^{(\ell)}$. Then,
\bea
\nonumber
\frac{{\rm d}}{{\rm d}t}\alpha_t[v^{(\ell)}_\gamma]&=&i\,\big(U^{(\ell)}_t\big)^\dag\,\left[H^{(\ell)}_t\,,\,v^{(\ell)}_\gamma\right]\,U^{(\ell)}_t\\
\nonumber
&=&\sum_{\mu=1}^{d^2}\,\Big(\sum_{\nu=1}^{d^2}\Big(B_{\mu\nu}\,+\,2\,i\,h^{(re)}_{\mu\nu}\Big)\,\omega_\nu(t)\,+\,i\,\epsilon_\mu\Big)\,\big(U^{(\ell)}_t\big)^\dag\,\left[v^{(\ell)}_\mu\,,\,v^{(\ell)}_\gamma\right]\,U^{(\ell)}_t\\
\nonumber
&=&
\sum_{\mu,\beta=1}^{d^2}\,\Big(\sum_{\nu=1}^{d^2}\widetilde{B}_{\mu\nu}\,\omega_\nu(t)\,+\,i\,\epsilon_\mu\Big)\,J^\beta_{\mu\gamma}\,
\big(U^{(\ell)}_t\big)^\dag\,v^{(\ell)}_\beta\,\,U^{(\ell)}_t
=\sum_{\beta=1}^{d^2}D_{\gamma\beta}(\vec{\omega}_t)\,\alpha_t[v^{(\ell)}_\beta]\ ,
\label{macdyn2}
\eea
where use has been made of the relations \eqref{ONB} and of the matrix elements \eqref{D}.
Notice that the expectations $\omega\big(\alpha_t\big[v^{(\ell)}_\gamma\big]\big)$ satisfy the same equations \eqref{macrodyn} satisfied by the macroscopic observables $\omega_\gamma(t)$; since these quantities coincide at $t=0$, one has
\be
\label{macreq}
\omega\left(\alpha_t\left[v^{(\ell)}_{\gamma}\right]\right)=\omega_\gamma(t)\ ,\qquad \forall\ \gamma=1,2,\ldots,d^2\ ,\ \forall\ t\geq 0\ .
\ee

\begin{remark}{}\hfill
\label{NonMarkovRem}
{\rm
\begin{enumerate}
\item
The convergence of the mean-field dissipative dynamics $\gamma^{(N)}_t$ to the automorphism $\alpha_t$ of $\ca$ occurs in the weak-operator topology associated with the GNS-representation of $\ca$ based on the state $\omega$.
\item
The 
automorphisms $\alpha_t$ have been derived for positive times, only. This means that, though the inverted automorphisms $\alpha_{-t}$ surely exist, they cannot however arise from the underlying non-invertible microscopic dynamics.
\item
The one-parameter family  $\{\alpha_t\}_{t\geq 0}$ fails to obey the forward-in-time composition law as in \eqref{semig} which is typical of time-independent generators, nor the one corresponding to two-parameter semi-groups, 
$$
\alpha_{t,t_0}=\alpha_{t,s}\circ\alpha_{s,t_0}\ ,\qquad 0\leq t_0\leq s\leq t\ .
$$
which arises from time-ordered integration of generators that depend explicitly on the running time $t$, but not on the initial time $t_0$.
Indeed, if the microscopic dynamics starts at $t_0\geq0$, then the semigroup properties ensure that, at time $t\geq t_0$, any quasi-local initial condition      $O\in\mathcal{A}$ has evolved into $\gamma^{(N)}_{t-t_0}[O]$. Then, adapting Theorem~\ref{limdyn} to a generic initial time $t_0\geq 0$, similarly to \eqref{auto1}, the large $N$ limit yields a one-parameter family of automorphisms $\alpha_{t-t_0}$, $t\geq t_0$, such that
\be
\label{nonM1}
\lim_{N\to\infty}\omega\left(a\,\gamma^{(N)}_{t-t_0}[O]\,b\right)=\omega\left(a\,\alpha_{t-t_0}[O]\,b\right)\ ,
\ee
for all $a,b\in\ca$ and $O$ quasi-local. If the support of $O$ is, for sake of simplicity, $[0,S-1]$, then 
\be
\label{nonM2}  
\alpha_{t-t_0}(O)=\big(U^{(S)}_{t-t_0}\big)^\dag\,O\,U^{(S)}_{t-t_0}\ ,\quad
U^{(S)}_{t-t_0}=\mathbb{T}{\rm e}^{-i\int_0^{t-t_0}{\rm d}s\, H^{(S)}_s}\ .
\ee
Therefore, the time-derivative yields a generator:
\bea
\label{NMGa}
\frac{{\rm d}}{{\rm d}t}\alpha_{t-t_0}(O)&=&\mathcal{K}_{t-t_0}^\omega\left[\alpha_{t-t_0}[O]\right] ,\hspace{2cm}\\
\label{NMGb}
\mathcal{K}_{t-t_0}^\omega\left[O\right]&=&\sum_{\mu,\nu=1}^{d^2}B_{\mu\nu}\,\omega_\nu(t-t_0)\,\sum_{k=0}^{S-1}\Big[\alpha_{t-t_0}\left[v_\mu^{(k)}\right]\,,\,O\Big]\ ,
\eea
which depends on both the running and initial times.
\item
By setting $t_0=0$ in~\eqref{NMGa}, one sees that the one-parameter family 
$\{\alpha_t\}_{t\geq 0}$ is generated by a time-local master equation. However, since in general, that is for $t_0\geq 0$, the generator $\mathcal{K}_{t-t_0}$ depends on both the running time $t$ and the initial time $t_0$, the family of automorphisms is non-Markovian in the sense of~\cite{ChrKos}.
On the other hand, if one uses lack of CP-divisibility as a criterion 
of non-Markovianity~\cite{Rivas}, then $\{\alpha_t\}_{t\geq 0}$ is Markovian. Indeed, being the dynamics unitary, there always exists a completely positive intertwining map $\beta_{t,\tau}$, $t\geq\tau\geq 0$, such that $\alpha_t=\beta_{t,\tau}\circ\alpha_{\tau}$: for any $O\in\ca$ with $\cs(O)\subset[0,S-1]$, one can write
$\alpha_t[O]=\beta_{t,\tau}\circ\alpha_{\tau}[O]$, where
$$
\beta_{t,\tau}[O]=\big(U^{(S)}_t\big)^\dag\,U^{(S)}_\tau\,O\,\big(U^{(S)}_\tau\big)^\dag\,U^{(S)}_t\ .
$$
\item
If the $\lim_{N\to\infty}\omega\circ\gamma^{(N)}_t$ provides a time-invariant state on the quasi-local algebra $\mathcal{A}$, then one recovers the one-parameter semigroup features of~\eqref{semig} (see also~\cite{BCFN}).\qed
\end{enumerate}}
\end{remark}

\begin{example}
\label{example}
{\rm We shall consider a qubit spin chain consisting of a lattice whose sites $j\in\NN$ support the algebra $M_2(\mathbb{C})$. As a Hilbert-Schmidt orthogonal matrix basis $\{v_\mu\}_{\mu=1}^4$,
we choose the spin operators $s_1,s_2,s_3,\mathbf{1}$, normalized in a such way that
\be
\label{scb}
[s_\mu\,,\,s_\nu]=i\,\varepsilon_{\mu\nu\gamma}\,s_\gamma\ ,\qquad \mu\,,\,\nu\,,\,\gamma=1,2,3\ .
\ee
Then, with $\displaystyle S^{(N)}_\mu=\frac{1}{\sqrt{N}}\sum_{k=0}^{N-1}s_\mu^{(k)}$,
we study the following dissipative generator, with Kossakowski matrix 
$\displaystyle C=\begin{pmatrix}
1&-i&0\\
i&1&0\\
0&0&0
\end{pmatrix}$,
\bea
\nonumber
\LL^{(N)}[x]&=&\sum_{\alpha,\beta=1}^{2}\frac{C_{\alpha\beta}}{2}\Big(\left[S^{(N)}_\alpha\,,\,x\right]\,S^{(N)}_\beta\,+\,S^{(N)}_\alpha\,\left[x\,,\,S^{(N)}_\beta\right]\Big)=
\\
\label{scl2}
&=&S^{(N)}_+\,x\,S^{(N)}_-\,-\,\frac{1}{2}\Big\{S^{(N)}_+\,S^{(N)}_-\,,\,x\Big\}\ ,\qquad S^{(N)}_\pm=S^{(N)}_1\pm i\,S^{(N)}_2\ . 
\eea
Therefore, with respect to~\eqref{hmat1},~\eqref{hmat2}, $h=0$ and $\mathcal{E}=0$, so that $\widetilde{A}$ and $\widetilde{B}$ coincide with the symmetric and anti-symmetric components of $C$,
\be
\label{koss}
A=\begin{pmatrix}
1&0&0\\
0&1&0\\
0&0&0
\end{pmatrix}\ ,\quad B=\begin{pmatrix}
0&-i&0\\
i&0&0\\
0&0&0
\end{pmatrix}\ .
\ee
With respect to a translation-invariant clustering state $\omega$, the only non-trivial macroscopic averages $\omega_\mu(t)$ given by \eqref{macr1a} are 
$\omega_{1,2,3}(t)$ while $\omega_4(t)=1$ for all $t\geq 0$.
Since $\|s_\mu\|\leq 1/2$, we will then consider the vector $\vec{\omega}_t=(\omega_1(t),\omega_2(t),\omega_3(t))$ with components belonging to $[-1/2\,,\,1/2]$. 
Furthermore, from \eqref{aid1} and \eqref{scb} one computes
\be
\label{omega}
\omega_{\mu\nu}(t)\,=\,i\,\epsilon_{\mu\nu\gamma}\,\omega_\gamma(t)\ ,\qquad \mu,\nu,\gamma=1,2,3\ ,
\ee
whence \eqref{macrodyn} and $B$ in \eqref{koss} yield the following system of differential equations: 
\be
\label{macrodynex}
\frac{{\rm d}}{{\rm d}t}\omega_1(t)=\omega_1(t)\,\omega_3(t)\ ,\quad
\frac{{\rm d}}{{\rm d}t}\omega_2(t)=\omega_2(t)\,\omega_3(t)\ ,\quad
\frac{{\rm d}}{{\rm d}t}\omega_3(t)=-\omega^2_1(t)\,-\,\omega^2_2(t)
\ ,
\ee
corresponding to the following matrix $D(\vec{\omega}_t)$ in \eqref{d0}:
\be
\label{Dt}
D(\vec{\omega}_t)=
\begin{pmatrix}
0&0&\omega_1(t)\cr
0&0&\omega_2(t)\cr
-\omega_1(t)&-\omega_2(t)&0
\end{pmatrix}\ .
\ee
Then, the norm 
\be
\label{constant}
\|\vec{\omega}_t\|=\sqrt{\sum_{\mu=1}^3|\omega_\mu(t)|^2}=\xi
\ee
is a constant of the motion; thus the third equation can readily be solved, yielding
\be
\label{sol1}
\omega_3(t)=-\xi\tanh\left(\xi(t+b)\right)\ ,
\ee
where the constant $b$ is chosen to implement the initial condition  
$\omega_3:=\omega_3(0)=-\xi\tanh\left(\xi b\right)$.
Knowing $\omega_3(t)$, one also obtains
\be
\label{sol2}
\omega_1(t)=\,\frac{\cosh(b\,\xi)}{\cosh(\xi(t+b))}\,\omega_1\ ,\quad
\omega_2(t)=\,\frac{\cosh(b\,\xi)}{\cos(\xi(t+b))}\,\omega_2\ ,
\ee
where $\omega_{1,2}:=\omega_{1,2}(0)$.
According to~\eqref{auto3}, equipped with these quantities, the Hamiltonian for the first
$S$ chain-sites  reads
\begin{equation}
\label{auxq2}
H^{(S)}_t=\sum_{k=0}^{S-1}\left(\omega_1(t)s_2^{(k)}-\omega_2(t)s_1^{(k)}\right)\ .
\end{equation}
Since $[H^{(S)}_{t_1}\,,\,H^{(S)}_{t_2}]=0$, for all $t_{1,2}\in\RR$,  the unitary operators implementing the automorphic dynamics from $t=0$ to $t>0$ (see \eqref{auto2}) are given by:
\bea
\label{auxq3}
U^{(S)}_t&=&{\rm e}^{-i\int_0^t{\rm d}u\, H^{(S)}_u}=\prod_{k=0}^{S-1}\exp\Big(-i\alpha(t)\left(\omega_1\,s_2^{(k)}-\omega_2\,s_1^{(k)}\right)\Big)\ ,\\
\nonumber
\alpha(t)&:=&\cosh(b\,\xi)\,\int_0^t \, \frac{{\rm d}u}{\cosh(\xi(u+b))}\\
\label{auxq3c}
&=&
\cosh(b\,\xi)\,\Big(\arctan\Big({\rm e}^{-\xi(t+b)}\Big)-\arctan\Big({\rm e}^{-\xi b}\Big)\Big)\ .
\eea 
The mean-field dynamics is thus specified by time-evolution of single-site spin operators: 
\bea
\label{auxq3a}
\hskip-1.7cm
&&
(U^{(S)}_t)^\dagger\,\begin{pmatrix}s_1\cr s_2\cr s_3\end{pmatrix}\,U^{(S)}_t=M_t(\vec{\omega})\,\begin{pmatrix}s_1\cr s_2\cr s_3\end{pmatrix}\ ,\\
\label{auxq3b}
\hskip-1.7cm
&&
M_t(\vec{\omega})=\frac{1}{\xi_{12}}\begin{pmatrix}
\omega_1^2\cos\left(\alpha(t)\xi_{12}\right)+\omega_2^2&\omega_1\omega_2\left(\cos\left(\alpha(t)\xi_{12}\right)-1\right)&\xi_{12}\,\omega_1\sin\left(\alpha(t)\xi_{12}\right)\\
\omega_1\omega_2\left(\cos\left(\alpha(t)\xi_{12}\right)-1\right)&\omega_2^2\cos\left(\alpha(t)\xi_{12}\right)+\omega_1^2&\xi_{12}\,\omega_2\sin\left(\alpha(t)\xi_{12}\right)\\
-\xi_{12}\,\omega_1\sin\left(\alpha(t)\xi_{12}\right)&-\xi_{12}\,\omega_2\sin\left(\alpha(t)\xi_{12}\right)&\xi_{12}^2\cos\left(\alpha(t)\xi_{12}\right)
\end{pmatrix} ,
\eea
where we have set $\xi_{12}=\sqrt{\omega_1^2+\omega_2^2}$.
}
\end{example}

\section{Mean-field dynamics of quantum fluctuations}
\label{ExpFun}

In the previous section, we studied the large $N$ limit of the dissipative dynamics generated by~\eqref{mflind1a0} on (quasi) local spin operators.
In this section we shall instead investigate the time-evolution of fluctuation operators scaling themselves with the inverse square-root of $N$. 

As a set $\cx$ of relevant one-site observables (see~\eqref{spaceChi}), we choose the  orthonormal basis of hermitian matrices $\{v_\mu\}_{\mu=1}^{d^2}$ appearing in $\LL^{(N)}$. Accordingly, we shall focus upon the vector  $\vec{F}^{(N)}$ of local fluctuations 
$$
F^{(N)}_\mu:=\frac{1}{\sqrt{N}}\sum_{k=0}^{N-1}\Big(v_\mu^{(k)}-\omega(v_\mu)\Big)\ ,
$$
and upon the local exponential operators in \eqref{qfa2},
$$
W^{(N)}(\vec{r})={\rm e}^{i\,\vec{r}\cdot\vec{F}^{(N)}}\ , \qquad \vec{r}\in\RR^{d^2}\ .
$$
As seen in Section \ref{qfsec}, if the  matrices $\{v_\mu\}_{\mu=1}^{d^2}$ give rise to normal fluctuations with respect to the translation-invariant, clustering state 
$\omega$, then 
$$
\lim_{N\to\infty}\omega\left(W^{(N)}(\vec{r})\right)=\Omega\left(W(\vec{r})\right)\ .
$$
In the above expression, $W(r)$ are operators with Weyl commutation relations and $\Omega$ is a  
Gaussian state on the Weyl algebra $\cw(\chi,\sigma^{(\omega)})$ arising from the strong-closure of their linear span with respect to the $GNS$-representation based on $\Omega$.

As already remarked in the previous section, the microscopic state $\omega$ need not be time-invariant, $\omega^{(N)}_t:=\omega\circ\gamma^{(N)}_t\neq \omega$, where $\gamma^{(N)}_t$ in~\eqref{semig} acts trivially outside $\ca_{[0,N-1]}$. Then, since fluctuations account for deviations of observables from their mean values that now depend on time, it is necessary to change the time-independent formulation of local quantum fluctuations given in \eqref{FL} into a time-dependent one, 
\be
F^{(N)}_\mu(t)=\frac{1}{\sqrt{N}}\sum_{k=0}^{N-1}\Big(v_\mu^{(k)}\,-\,\omega^{(N)}_t(v^{(k)}_\mu)\Big)\ ,
\label{fluct}
\ee
the time-dependence occurring through the mean-values. Then, the commutator of two such local fluctuations,
\be
\label{tcomm}
\left[F^{(N)}_\mu(t)\,,\,F^{(N)}_\nu(t)\right]=\frac{1}{N}\sum_{k=0}^{N-1}\left[v_\mu^{(k)}\,,\,v_\nu^{(k)}\right]=:T^{(N)}_{\mu\nu}
\ee
is a time-independent mean-field operator. However, the entries of the symplectic matrix in~\eqref{sympform}, 
\be
\label{tsympform1}
\sigma^{(\omega)}_{\mu\nu}(t):=
-i\lim_{N\to\infty}\omega^{(N)}_{\mu\nu}(t)=-i\omega_{\mu\nu}(t)\ ,\quad \omega^{(N)}_{\mu\nu}(t):=\omega^{(N)}_t\left(T^{(N)}_{\mu\nu}\right)\ ,
\ee
will in general explicitly depend on time. Notice that the last two equalities follow from \eqref{macr1b}, while from \eqref{aid1} one derives
\be
\label{tsympform2}
\sigma^{(\omega)}_{\mu\nu}(t)=-i\sum_{\alpha=1}^{d^2}\,J^\alpha_{\mu\nu}\,\omega_\alpha(t)\ .
\ee
As they depend on the initial vector $\vec{\omega}$ of mean-field observables, that is of macroscopic averages, and on the time-evolution  of
$\vec{\omega}$ into $\vec{\omega}_t$, for later convenience, we shall denote by $\sigma(\vec{\omega}_t)$ the symplectic matrix with components $\sigma^{(\omega)}_{\mu\nu}(t)$ and by $\sigma(\vec{\omega})$ the symplectic matrix at time $t=0$ with components
\be
\label{sympt0}
\sigma_{\mu\nu}(\vec{\omega})=-i\lim_{N\to\infty}\frac{1}{N}\sum_{k=0}^{N-1}
\omega\Big(\left[v_\mu^{(k)}\,,\,v_\nu^{((k)}\right]\Big)=-i\omega\Big(\left[v_\mu\,,\,v_\nu\right]\Big)=-i{\rm Tr}\Big(\rho\,\left[v_\mu\,,\,v_\nu\right]\Big)\ ,
\ee
where we have used the assumed translation-invariance of the state $\omega$. 
\medskip

\begin{remark}
{\rm 
Using the matrix $M_t(\vec{\omega})$ in \eqref{formalsol}, one gets 
\be
\label{sympdyn}
\sigma(\vec{\omega}_t)\,=\,M_t(\vec{\omega})\,\sigma(\vec{\omega})\,M^{tr}_t(\vec{\omega})\ .
\ee
Such a relation follows from \eqref{aid1} and \eqref{tsympform2} that yield
$$
\sigma^{(\omega)}_{\mu\nu}(t)=-i\sum_{\alpha=1}^{d^2}\,J^\alpha_{\mu\nu}\,\omega\left(\alpha_t\left[v_\alpha\right]\right)=-i\omega\left(\Big[\alpha_t\left[v_\mu\right]\,,\,\alpha_t\left[v_\mu\right]\Big]\right)\ .
$$
Then, taking the time-derivative of both sides of the above equality and using \eqref{macdyn2}, 
\be
\label{eqsigmadot}
\frac{{\rm d}}{{\rm d}t}\sigma(\vec{\omega}_t)=\Big[D(\vec{\omega}_t)\,,\,\sigma(\vec{\omega}_t)\Big]\ .
\ee
Note that the map $\sigma(\vec{\omega})\mapsto\sigma(\vec{\omega}_t)$ is non-linear since $D(\vec{\omega}_t)$ depends on $\sigma(\vec{\omega}_t)$.
\qed
}
\end{remark}

Let $\vec{F}^{(N)}$ be the operator-valued vector with components 
$$
F^{(N)}_\mu=\frac{1}{\sqrt{N}}\sum_{k=0}^{N-1}\Big(v_\mu^{(k)}\,-\,\omega^{(N)}(v^{(k)}_\mu)\Big)\ ,\ \mu=1,2,\ldots,d^2\ ,
$$
at $t=0$.  Given the local exponential operators
\be
\label{tNWeyl}
W^{(N)}(\vec{r})={\rm e}^{i\,\vec{r}\cdot\vec{F}^{(N)}}\ ,\quad \vec{r}\cdot\vec{F}^{(N)}=\sum_{\mu=1}^{d^2}r_\mu\,F^{(N)}_\mu\ ,
\ee
with respect to a translation invariant, clustering state $\omega$, in the mesoscopic limit (see Definition~\ref{meslimdef} in Section~\ref{subsecmeso}), they give rise to Weyl operators
\be
\label{Weyldiss0}
W(\vec{r})=\exp\Big(i\vec{r}\cdot\vec{F}\Big)=m-\lim_{N\to\infty}W^{(N)}(\vec{r})\ ,
\ee 
where the vector $\vec{F}$ has components $F_\mu$, $1\leq\mu\leq d^2$ given by
\be
\label{Weyldiss1}
F_\mu=m-\lim_{N\to\infty}F^{(N)}_\mu
\ee
and such that
\be
\label{Weyldiss2}
\Big[F_\mu\,,\,F_\nu\Big]=i\sigma_{\mu\nu}(\vec{\omega})\ .
\ee

\subsection{Structure of the symplectic matrix}
\label{reminv}

The density matrix $\rho$ that represents $\omega$ at each lattice site can be expanded as $\rho=\sum_{\mu=1}^{d^2}\,r_\mu\,v_\mu$ with respect the orthonormal matrix basis. It thus turns out that the corresponding generalised Bloch vector $\{r_\mu\}_{\mu=1}^{d^2}$ is in the kernel of the symplectic matrix,
$$
\sum_{\nu=1}^{d^2}\sigma_{\mu\nu}(\vec{\omega})\,r_\nu={\rm Tr}\Big(\rho\,[v_\mu\,,\,\rho]\Big)=0\ ,
$$
whence $\sigma(\vec{\omega})$ is not invertible.
Actually, the kernel of the symplectic matrix is at least $d$-dimensional for it also contains the generalized Bloch vectors corresponding to the eigenprojectors of $\rho$.

By an orthogonal rotation $R(\vec{\omega})$, any non-invertible $\sigma(\vec{\omega})$ can be brought into the form
\be
\label{invsigma}
\widetilde{\sigma}(\vec{\omega})=R(\vec{\omega})\,\sigma(\vec{\omega})\,R^{tr}(
\vec{\omega})=\begin{pmatrix}
0&0\\0&\widetilde{\sigma}^{11}(\vec{\omega})\end{pmatrix}\ ,
\ee
where the diagonal zero entry stands for a zero $d_0(\vec{\omega})\times d_0(\vec{\omega})$ square-matrix, while off-diagonal zeroes stand for $d_0(\vec{\omega})\times d_1(\vec{\omega})$ and $d_1(\vec{\omega})\times d_0(\vec{\omega})$ $0$ zero rectangular matrices, while $\widetilde{\sigma}^{11}(\vec{\omega})$ is a $d_1(\vec{\omega})\times d_1(\vec{\omega})$ invertible symplectic matrix, with $d^2\geq d_0(\vec{\omega})\geq d$ the dimension of the kernel of $\sigma(\vec{\omega})$ and $d_1(\vec{\omega})=d^2-d_0(\vec{\omega})$ an even integer.

The orthogonal rotation matrix $R(\vec{\omega})$ that transforms $\sigma(\vec{\omega})$ into
$\widetilde{\sigma}(\vec{\omega})$ does in general depend on the vector $\vec{\omega}$ and amounts to a rotation of the hermitian matrix basis 
$\{v_\mu\}_{\mu=1}^{d^2}$ into a new hermitian matrix basis $\left\{v_\alpha(\vec{\omega})=\sum_{\mu=1}^{d^2}R_{\alpha\mu}(\vec{\omega})v_\mu\right\}$.
One can thus rotate the operator-valued vector $\vec{F}$ into the form
\be
\label{Gops}
\vec{G}(\vec{\omega})=R(\vec{\omega})\vec{F}
\ee
so that the commutation relations~\eqref{Weyldiss2} turn into
\be
\label{Gops1}
\Big[G_\mu(\vec{\omega})\,,\,G_\nu(\vec{\omega})\Big]=\,i\,\sum_{\alpha,\beta=1}^{d^2}R_{\mu\alpha}(\vec{\omega})\,R_{\nu\beta}(\vec{\omega})\,\sigma_{\alpha\beta}(\vec{\omega})=\,i\,\widetilde{\sigma}_{\mu\nu}(\vec{\omega})\ .
\ee
Therefore, the first $d_0(\vec{\omega})$ components of $\vec{G}(\vec{\omega})$ commute with all the others and among themselves and constitute a commutative set. 
\medskip

\begin{definition}
\label{rotfluct}
By $\vec{G}_0(\vec{\omega})$ we will denote the $d_0(\vec{\omega})$-dimensional operator-valued vector consisting of the commuting components of $\vec{G}(\vec{\omega})$ and by $\vec{G}_1(\vec{\omega})$ the vector whose comonents are the remaining $d_1(\vec{\omega})$ operators.
\end{definition}
\medskip

Then, the Weyl operators~\eqref{Weyldiss0} split into the product of the exponentials of the commuting components of $\vec{G}_0(\vec{\omega})$ and a quantum Weyl operators that cannot be further split:
\be
\label{split}
W(\vec{r})=\left(\prod_{\mu=1}^{d_0(\vec{\omega})}\exp\Big(i\,s_\mu(\vec{\omega})\,G_\mu(\vec{\omega})\Big)\right)\,
\exp\left(i\sum_{\mu=d_0(\vec{\omega})+1}^{d^2}s_\mu(\vec{\omega})\,G_\mu(\vec{\omega})\right)\ ,
\ee
where $\vec{s}(\vec{\omega})=R(\vec{\omega})\vec{r}$. Furthermore, the rotation into the new matrix basis $\{v_\mu(\vec{\omega})\}_{\mu=1}^{d^2}$ amounts to rotating the Kossakowski matrix $C$ in the Lindblad generator~\eqref{mflind1a} into a new, $\vec{\omega}$-dependent Kossakowski matrix $C(\vec{\omega})=R(\vec{\omega})\,C\,R^{tr}(\vec{\omega})$ with symmetric and anti-symmetric components $A(\vec{\omega})=R(\vec{\omega})\,A\,R^{tr}(\vec{\omega})$ and $B(\vec{\omega})=R(\vec{\omega})\,B\,R^{tr}(\vec{\omega})$.

Because of~\eqref{sympdyn}, the matrix $\sigma(\vec{\omega}_t)$ remains non-invertible in the course of time.

\subsection{Mesoscopic dissipative dynamics}
\label{subsec4.2}

Given the local exponential operators 
$W^{(N)}(\vec{r})$ in~\eqref{tNWeyl}, we now study the mesoscopic limit of their dynamics at positive times $t\geq 0$:
$$
W^{(N)}(\vec{r})\mapsto W_t^{(N)}(\vec{r}):=\gamma^{(N)}_t\left[W^{(N)}(\vec{r})\right]\ .
$$ 
We shall prove the existence of the following limit (see Definition \ref{meslimdef}) 
\be
\label{meslim1}
\lim_{N\to\infty}\omega_{\vec{r}_1\vec{r}_2}\left(\gamma^{(N)}_t\left[W^{(N)}(\vec{r})\right]\right)
=\Omega_{\vec{r}_1\vec{r}_2}\left(\Phi^{\vec{\omega}}_t\Big[W(\vec{r})\Big]\right)\ ,\quad \forall \vec{r}_{1,2},r\in\RR^{d^2}\ ,
\ee
where $\Omega$ is the mesoscopic state emerging from the microscopic state 
$\omega$ at $t=0$ according to \eqref{quasistate}, $W(\vec{r})=\exp(i\vec{r}\cdot\vec{F})$ is any element of the Weyl algebra $\mathcal{W}(\chi,\sigma^{(\omega)})$ corresponding to the  matrix $\sigma(\vec{\omega})$ at time $t=0$ with the components of $\vec{F}$ satisfying the commutation relations \eqref{Weyldiss2}. These limits define the maps $\Phi^{\vec{\omega}}_t$ that describe the mesoscopic dynamics corresponding to the microscopic dissipative time-evolution $\gamma^{(N)}_t$; their explicit form is given in the following theorem whose proof is provided in Section \ref{app2}. 
\medskip

\begin{theorem}
\label{expfluct}
According to Definition \ref{meslimdef}, the dynamics of quantum fluctuations is given by 
the mesoscopic limit $\Phi_t^{\vec{\omega}}:=m-\lim_{N\to\infty}\gamma^{(N)}_t$, where
\bea
\label{mainres}
\Phi_t^{\vec{\omega}}\left[W(\vec{r})\right]&=&\exp\Big(-\frac{1}{2}\vec{r}\cdot\Big(Y_t(\vec{\omega})\,\vec{r}\Big)\Big)\,W(X_t^{tr}(\vec{\omega})\vec{r})\ ,
\eea
where, with $\TT$ denoting time-ordering, 
\bea
\label{lastaidd1}
X_t(\vec{\omega})&:&=\TT{\rm e}^{\int_0^t{\rm d}s\, Q(\vec{\omega}_s)}\\
\label{matrices1}
Q(\vec{\omega}_t)&:=&-i\sigma(\vec{\omega}_t)\,\widetilde{B}\,+\,D(\vec{\omega}_t)\\
\label{matrices3}
Y_t(\vec{\omega})&:=&\int_0^t{\rm d}s\, X_{t,s}(\vec{\omega})\,\Big(\sigma(\vec{\omega}_s)\,A\,\sigma^{tr}(\vec{\omega}_s)\Big)\,X^{tr}_{t,s}(\vec{\omega})\ .
\eea
In the above expression, $X_{t,s}(\vec{\omega}):=X_t(\vec{\omega})\,X^{-1}_s(\vec{\omega})$, $A$ is the symmetric component of the Kossakowski matrix $C$ in~\eqref{mflind1a00}, $\widetilde{B}=B+\,2\,i\,h^{(re)}$ in~\eqref{hmat2}.
Finally, $\sigma(\vec{\omega}_t)$ is the time-dependent symplectic matrix with entries given by \eqref{tsympform1} and $D(\vec{\omega}_t)$ is the matrix defined in \eqref{d0}.
\end{theorem}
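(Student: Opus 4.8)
The plan is to establish the mesoscopic limit \eqref{meslim1} by deriving a closed evolution equation for the limiting matrix elements and then integrating it. Writing $g_N(t):=\omega_{\vec r_1\vec r_2}\big(\gamma^{(N)}_t[W^{(N)}(\vec r)]\big)$, I would differentiate in $t$ via $\partial_t\gamma^{(N)}_t=\gamma^{(N)}_t\circ\LL^{(N)}$ and analyse $\LL^{(N)}[W^{(N)}(\vec r)]$. The organising device is the splitting $V^{(N)}_\mu=F^{(N)}_\mu+\sqrt N\,\omega(v_\mu)\mathbf 1$, isolating a macroscopic ($O(\sqrt N)$, scalar) part from a fluctuation ($O(1)$) part. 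Using that, weakly in the GNS representation based on $\omega$, the commutator $[V^{(N)}_\mu,W^{(N)}(\vec r)]$ tends to the central quantity $-\big(\sigma(\vec\omega)\vec r\big)_\mu\,W^{(N)}(\vec r)$ up to the $O(1/\sqrt N)$ correction $\tfrac{i}{\sqrt N}\sum_{\rho\gamma}r_\rho J^\gamma_{\mu\rho}F^{(N)}_\gamma$, I would split $\LL^{(N)}[W^{(N)}(\vec r)]$ into three groups: (i) $O(\sqrt N)$ scalar terms, (ii) finite terms linear in the fluctuations, of the form $W^{(N)}(\vec r)\,F^{(N)}_\gamma$, and (iii) a finite quadratic scalar term.

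Group (iii) is immediate: only the double-commutator dissipator $\widetilde{\AA}^{(N)}$ contributes at this order, and since the surviving tensor $\big(\sigma(\vec\omega)\vec r\big)_\mu\big(\sigma(\vec\omega)\vec r\big)_\nu$ is symmetric, the antisymmetric part of $\widetilde A$ drops out, leaving $-\tfrac12\,\vec r\cdot\big(\sigma(\vec\omega)\,A\,\sigma^{tr}(\vec\omega)\big)\vec r\;W^{(N)}(\vec r)$, i.e. the instantaneous diffusion rate of \eqref{matrices3}. For group (ii), I would collect the fluctuation corrections coming from $\HH^{(N)}$, from the anticommutator $\widetilde{\BB}^{(N)}$, and from the antisymmetric part of $\widetilde{\AA}^{(N)}$, and show, using the cyclicity/antisymmetry identities \eqref{aid0} for $J^\gamma_{\mu\nu}$, that they assemble exactly into $i\,\big(Q^{tr}(\vec\omega)\vec r\big)_\gamma\,W^{(N)}(\vec r)F^{(N)}_\gamma$ with $Q(\vec\omega)=-i\,\sigma(\vec\omega)\widetilde B+D(\vec\omega)$ of \eqref{matrices1}: the free term yields the $i\,\mathcal E$ part of $D$, while the $\widetilde B$-anticommutator yields both the $\widetilde D$ part and the genuinely dissipative $-i\,\sigma\widetilde B$ part that distinguishes the fluctuation drift from the macroscopic/quasi-local one.

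The hard part is group (i), and it is the genuine obstacle. These $O(\sqrt N)$ contributions do not cancel term by term; from $\HH^{(N)}$ and the antisymmetric dissipator one finds a divergent scalar multiple $i\sqrt N\,(\vec r\cdot\dot{\vec\omega}_t)\,W^{(N)}(\vec r)$, which is nonzero precisely because the reference state is \emph{not} invariant, so that $\gamma^{(N)}_t$ drags the single-site means $\omega(v_\mu)$ to $\omega_\mu(t)\neq\omega(v_\mu)$. The key identity to be proved is
\be
\sum_{\mu}\big(\sigma(\vec\omega_t)\vec r\big)_\mu\Big(\sum_\nu\widetilde B_{\mu\nu}\,\omega_\nu(t)+i\,\epsilon_\mu\Big)=-\,i\,\vec r\cdot\dot{\vec\omega}_t\ ,
\ee
which follows from the macroscopic equations \eqref{macrodyn}, from \eqref{aid1}, and from the definition \eqref{tsympform2} of $\sigma(\vec\omega_t)$. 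It shows that the would-be divergence is exactly the rate at which the time-\emph{dependent} centering $F^{(N)}_\mu(t)$ of \eqref{fluct} differs from the static one; hence the $O(\sqrt N)$ terms reorganise into the classical evolution of the macroscopic background $\vec\omega\mapsto\vec\omega_t$ and are reabsorbed by working with the co-moving fluctuations \eqref{fluct}, so that the limit of $g_N(t)$ exists. I expect the uniform control of this reabsorption — together with the error estimates needed to discard the $O(1/\sqrt N)$ corrections and the residual non-commutativity between fluctuations and the slowly varying mean-field background — to be the real technical work, to be carried out with the $L_1$-clustering assumption \eqref{stclus} and the central-limit statement of Theorem~\ref{th1}.

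With groups (i)–(iii) in hand, I would close by integrating the resulting non-autonomous, linear evolution of the pair (Weyl label, Gaussian prefactor). The drift integrates to the time-ordered exponential $X_t(\vec\omega)=\TT\exp\!\int_0^t Q(\vec\omega_s)\,\mathrm ds$ of \eqref{lastaidd1}; the dependence of $Q$ on $\vec\omega_s$, equivalently on $\sigma(\vec\omega_s)$ whose own flow is \eqref{eqsigmadot}, is exactly what makes the limiting map non-linear. Propagating the instantaneous diffusion $\sigma(\vec\omega_s)\,A\,\sigma^{tr}(\vec\omega_s)$ by $X_{t,s}(\vec\omega)=X_t(\vec\omega)X^{-1}_s(\vec\omega)$ and accumulating over $[0,t]$ produces $Y_t(\vec\omega)$ of \eqref{matrices3}. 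Combining the prefactor with the rotated label gives $\Phi^{\vec\omega}_t[W(\vec r)]=\exp\!\big(-\tfrac12\vec r\cdot Y_t(\vec\omega)\vec r\big)\,W\!\big(X^{tr}_t(\vec\omega)\vec r\big)$, which is \eqref{mainres}. Finally, I would upgrade the short-time convergence, valid on the norm-convergence radius $R$ through the exponential series exactly as in the quasi-local Theorem~\ref{limdyn}, to all finite $t\ge 0$ by means of the semigroup property \eqref{avtime} of the macroscopic flow and the extension argument of Proposition~\ref{extdyn}.
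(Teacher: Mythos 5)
Your blueprint reproduces, in its core computations, the paper's own proof: the paper likewise expands the action of $\LL^{(N)}$ on exponentials of fluctuations (Proposition~\ref{LINDACT}, built from Lemmas~\ref{lemmaux0}--\ref{lemmaux2}), organizes terms by powers of $\sqrt N$, absorbs the divergent scalars into the motion of the macroscopic averages --- your key identity is correct and, supplemented by the $O(1/N)$ rate of Proposition~\ref{scal}, is exactly how the paper disposes of the $O(\sqrt N)$ group --- and reads off the drift $Q$ and the accumulated diffusion $Y_t$. The packaging differs: instead of deriving and integrating a limiting evolution equation, the paper runs an interpolation (Duhamel) argument, writing the difference between $\gamma^{(N)}_t$ and the candidate Gaussian map as $\int_0^t{\rm d}s\,\gamma^{(N)}_s\big[\Delta^{(N)}_{t,s}\big]\,{\rm e}^{-\frac12\vec r\cdot(Y_{t,s}(\vec\omega)\vec r)}$ and showing the integrand dies in the mesoscopic limit. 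This is not a cosmetic choice: your forward scheme has a closure problem you do not address, since $\frac{{\rm d}}{{\rm d}t}g_N$ contains expectations of $W^{(N)}(\vec r)\,F^{(N)}_\gamma$, which are not functions of $g_N$. In the interpolant those unclosed terms appear with the coefficient matrix $T^{(N)}\widetilde B-\widetilde D^{(N)}_s-i\mathcal E+Q(\vec\omega_s)$, which tends weakly to zero by the very definition of $Q$ (this is \eqref{thproof3}), so no hierarchy ever needs to be closed; a forward derivation must instead justify the Gaussian ansatz and the exchange of the $N\to\infty$ limit with time derivatives.

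The genuine gap is in your group (iii). It is false that only $\widetilde{\AA}^{(N)}$ produces $O(1)$ scalars quadratic in $\vec r$: the free Hamiltonian contributes $\frac i2\,\vec r\cdot\big(\mathcal E\,T^{(N)}\vec r\big)$ (Lemma~\ref{lemmaux0}) and $\widetilde{\BB}^{(N)}$ contributes $\frac12\,\vec r\cdot\big(\widetilde D^{(N)}_t\,T^{(N)}+2i\,T^{(N)}h^{(re)}T^{(N)}\big)\vec r$ (Lemma~\ref{lemmaux2}), plus re-ordering scalars if you insist on placing $F^{(N)}_\gamma$ to the right of $W^{(N)}(\vec r)$. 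These terms neither vanish nor symmetrize away on their own, so keeping only the $A$-term makes your final integration of the pair (Weyl label, Gaussian prefactor) unbalanced: differentiating the moving label $W\big(X^{tr}_t(\vec\omega)\vec r\big)$ produces, via the CCR, the extra scalar $-\frac i2\big(X^{tr}_t\vec r\big)\cdot\big(\sigma(\vec\omega)\,Q^{tr}(\vec\omega)X^{tr}_t\vec r\big)$, which in your accounting has nothing to cancel against. The paper balances precisely these pieces in \eqref{thproof4}: with all matrices evaluated along $\vec\omega_s$, the limit combines into
\begin{equation*}
\vec\xi\cdot\Big(i\,D\,\sigma-i\,Q\,\sigma+2i\,\sigma\,h\,\sigma\Big)\vec\xi
=-\,\vec\xi\cdot\Big(\sigma\big(B+2h^{(im)}\big)\sigma\,\vec\xi\Big)=0\ ,
\end{equation*}
which vanishes only because $\sigma\big(B+2h^{(im)}\big)\sigma$ is antisymmetric. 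This cancellation --- not the symmetry of $\widetilde A$ alone --- is the reason why $Y_t$ in \eqref{matrices3} contains only $A$; omitting it leaves the diffusion matrix polluted by $h^{(re)}$-, $\mathcal E$- and $\widetilde D$-dependent terms. A smaller misattribution of the same kind: $\widetilde{\AA}^{(N)}$ contributes nothing linear in the fluctuations at leading order, so its antisymmetric part plays no role in your group (ii); the drift $Q$ comes entirely from $\HH^{(N)}$ and $\widetilde{\BB}^{(N)}$.
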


The structure of the mesoscopic dynamics looks like that of Gaussian maps transforming Weyl operators onto Weyl operators with rotated parameters and further multiplied by a damping Gaussian factor. Indeed, the time evolution sends $\vec{r}$ into $X_t^{tr}(\vec{\omega})\vec{r}$ and the exponent 
$\vec{r}\cdot\Big(Y_t(\vec{\omega})\,\vec{r}\Big)$
in the prefactor is positive since 
$\displaystyle A=\frac{C+C^{tr}}{2}\geq 0$ because such is the Kossakowski matrix $C$.
However, as we shall see in the next section, the dependence on the macroscopic dynamics 
of mean-field quantities makes the maps $\Phi^{\vec{\omega}}_t$ non-linear on the Weyl algebra $\cw(\chi,\sigma^{(\omega)})$.

\subsection{Structure of the mesoscopic dynamics}
\label{subsec4.3}

In this section we discuss in detail the properties of the mesoscopic dynamics defined by the maps $\Phi^{\vec{\omega}}_t$, $t\geq 0$ in \eqref{mainres}.
It turns out that they act non-linearly on products of Weyl operators. Indeed, if $\Phi^{\vec{\omega}}_t$ were linear, using \eqref{Weyl}, one would get
$$
\Phi^{\vec{\omega}}_t\left[W(\vec{r_1})W(\vec{r}_2)\right]=\Phi^{\vec{\omega}}_t\left[{\rm e}^{i\,\vec{r}_2\cdot\left(\sigma(\vec{\omega})\vec{r}_1\right)}\,W(\vec{r_2})W(\vec{r}_1)\right]=
{\rm e}^{i\,\vec{r}_2\cdot\left(\sigma(\vec{\omega})\vec{r}_1\right)}\,\Phi^{\vec{\omega}}_t\left[\,W(\vec{r_2})W(\vec{r}_1)\right]\ .
$$ 
Instead, the following proposition shows that the symplectic matrix in the exponent at the right hand side of the above equality is not  $\sigma(\vec{\omega})$ at $t=0$, rather $\sigma(\vec{\omega}_t)$ at time $t>0$. 
This is a consequence of the fact that the local operators $W^{(N)}(\vec{r})$ and $W^{(N)}(\vec{s})$ satisfy a Baker-Campbell-Haussdorf relation of the form
$$
W^{(N)}(\vec{r})\,W^{(N)}(\vec{s})=W^{(N)}(\vec{s})\,W^{(N)}(\vec{r})\,\exp\left(\Big[\vec{s}\cdot\vec{F}^{(N)}\,,\,\vec{r}\cdot\vec{F}^{(N)}\Big]\,+\,O\left(\frac{1}{\sqrt{N}}\right)\right)\ .
$$
Since the leading order term in the argument of the exponential function is a mean-field quantity, it keeps evolving in time 
under the action of $\gamma^{(N)}_t$ in the large $N$ limit and tends to the scalar quantity $i\,\vec{s}\cdot\left(\sigma(\vec{\omega}_t)\vec{r}\right)$.
This result is formally derived in the proof of the following Proposition given in Section~\ref{subsgen}.
\medskip

\begin{proposition}
The mesoscopic dynamics of the product of two Weyl operators satisfies
\be
\label{pcomm}
\Phi^{\vec{\omega}}_t\left[W(\vec{r})\,W(\vec{s})\right]={\rm e}^{i\,\vec{s}\cdot\left(\sigma(\vec{\omega}_t)\vec{r}\right)}\,\Phi^{\vec{\omega}}_t\left[W(\vec{s})\,W(\vec{r})\right]\ ,\qquad\forall\,\vec{r}\,,\,\vec{s}\in\RR^{d^2}\ .
\ee
\label{ccrt}
\end{proposition}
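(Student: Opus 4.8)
The plan is to argue entirely at the level of matrix elements, since by Definition~\ref{meslimdef} and Remark~\ref{remGNS} the map $\Phi^{\vec\omega}_t$ is fixed on any product of Weyl operators by the mesoscopic limit of $\gamma^{(N)}_t$ applied to the corresponding product of local exponentials, evaluated on the dense family of vector states $\pi_\Omega(W(\vec r_{1,2}))\vert\Omega\rangle$. It thus suffices to establish, for all $\vec r_1,\vec r_2\in\RR^{d^2}$,
\[
\lim_{N\to\infty}\omega_{\vec r_1\vec r_2}\!\left(\gamma^{(N)}_t\!\left[W^{(N)}(\vec r)\,W^{(N)}(\vec s)\right]\right)={\rm e}^{i\,\vec s\cdot(\sigma(\vec\omega_t)\vec r)}\,\lim_{N\to\infty}\omega_{\vec r_1\vec r_2}\!\left(\gamma^{(N)}_t\!\left[W^{(N)}(\vec s)\,W^{(N)}(\vec r)\right]\right),
\]
the claimed operator identity then following from \eqref{meslim1}.

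First I would use the microscopic Baker--Campbell--Hausdorff relation recalled just before the statement to swap the two local exponentials, $W^{(N)}(\vec r)\,W^{(N)}(\vec s)=W^{(N)}(\vec s)\,W^{(N)}(\vec r)\,\exp(G^{(N)})$, with $G^{(N)}=[\vec s\cdot\vec F^{(N)},\vec r\cdot\vec F^{(N)}]+O(1/\sqrt N)$. By \eqref{tcomm} the leading term is the time-independent mean-field operator $G^{(N)}=\sum_{\mu\nu}s_\mu r_\nu\,T^{(N)}_{\mu\nu}=\frac1N\sum_{k=0}^{N-1}\sum_{\mu\nu}s_\mu r_\nu[v_\mu^{(k)},v_\nu^{(k)}]$, a genuine $1/N$ average of single-site operators, and hence bounded and self-adjoint.

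The heart of the argument is to show that, after the dissipative evolution and in the large $N$ limit, the exponentiated mean-field factor detaches from the Weyl part as the scalar ${\rm e}^{i\,\vec s\cdot(\sigma(\vec\omega_t)\vec r)}$. Two ingredients enter. First, $G^{(N)}$ is mean-field, hence asymptotically central: since $[G^{(N)},V^{(N)}_\nu]=O(1/\sqrt N)$ (only diagonal site-commutators survive, giving an extra $1/\sqrt N$), the non-derivation defect of $\gamma^{(N)}_t$ on the product, governed by the identity \eqref{Lind0} through factors $[V^{(N)}_\mu,\cdot]$ and $[\cdot,V^{(N)}_\nu]$, vanishes in the limit, so that $\gamma^{(N)}_t$ effectively factorizes the mean-field exponential off the Weyl operators, leaving its evolved scalar value. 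Second, on the commutative mean-field sector the dynamics reduces to the classical macroscopic flow of Section~\ref{subsec3.1}: by Definition~\ref{defmacro} and \eqref{macr1b} one has $\gamma^{(N)}_t[T^{(N)}_{\mu\nu}]\to\omega_{\mu\nu}(t)$, so that $\gamma^{(N)}_t[\exp(G^{(N)})]$ converges to $\exp\!\big(i\sum_{\mu\nu}s_\mu r_\nu\,\sigma_{\mu\nu}(\vec\omega_t)\big)={\rm e}^{i\,\vec s\cdot(\sigma(\vec\omega_t)\vec r)}$, where I used $\sigma_{\mu\nu}(\vec\omega_t)=-i\,\omega_{\mu\nu}(t)$ from \eqref{tsympform1}. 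It is precisely this transport of the commutator from its initial value $\sigma(\vec\omega)$ to $\sigma(\vec\omega_t)$ along the macroscopic flow that installs the time-evolved symplectic form in the prefactor; feeding the detached scalar into the displayed matrix-element identity and recalling \eqref{meslim1} gives the statement.

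The main obstacle is making the factorization rigorous: because $\gamma^{(N)}_t$ is dissipative it is not multiplicative, so the mean-field factor cannot simply be pulled out. I would control this through a Duhamel expansion of $\gamma^{(N)}_t$, showing that every correction term accumulated over $[0,t]$ carries at least one factor $[V^{(N)}_\mu,G^{(N)}]=O(1/\sqrt N)$ and hence drops out, while using the clustering estimate \eqref{macro1} and the uniform norm bound on $\exp(G^{(N)})$ (from self-adjointness of $G^{(N)}$) to ensure convergence of the matrix elements. A secondary technical point is that $\exp(G^{(N)})$ is a function of a mean-field operator rather than a polynomial; commuting the large $N$ limit with the functional calculus is legitimate because the mean-field algebra becomes commutative in the limit, but it has to be justified by approximating the exponential and invoking the scalar convergence of the macroscopic averages term by term.
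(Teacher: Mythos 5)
Your proposal is correct and follows essentially the same route as the paper: reduce to matrix elements via Definition~\ref{meslimdef}, perform the Baker--Campbell--Hausdorff swap to produce the mean-field factor $\exp\big(\vec{s}\cdot(T^{(N)}\vec{r})\big)$, and then replace that factor --- inside $\gamma^{(N)}_t$ and in the large-$N$ limit --- by the scalar ${\rm e}^{i\,\vec{s}\cdot(\sigma(\vec{\omega}_t)\vec{r})}$, justified by the asymptotic centrality of mean-field operators (Lemma~\ref{lemtool2}) and the asymptotic factorization of the dissipative dynamics on mean-field quantities (Corollary~\ref{spt}), whose proof in the paper is precisely your Duhamel argument based on \eqref{Lind0}. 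One cosmetic slip: $G^{(N)}$ is skew-adjoint rather than self-adjoint (so $\exp(G^{(N)})$ is essentially unitary), but the uniform norm bound you invoke holds regardless since $\|G^{(N)}\|$ is bounded uniformly in $N$.
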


The non-linearity of the fluctuation dynamics conflicts with the fact that any dissipative quantum dynamics should be described by a semigroup of linear, completely positive maps. Notice that, even if systems with time-dependent macroscopic averages have already been studied~\cite{sew}, the puzzling result of Proposition~\ref{ccrt} had not yet emerged for, in the framework of quantum fluctuations theory only time-invariant states have been considered so far. In order to reconcile the result of Proposition~\ref{ccrt} with the desired behaviour of quantum dynamical maps, one needs to identify the proper mesoscopic algebra suited to time-evolving canonical commutation relations. One has indeed to consider quantum fluctuations obeying different algebraic rules that depend on the macroscopic averages. The proper tool is offered by an extended algebra that allows to account for the dynamics of quantum fluctuations with time-varying commutation relations.
One is thus led to deal with a peculiar hybrid system, in which there appear together quantum and classical degrees of freedom, strongly connected since the commutator of two fluctuations is a classical dynamical variable. Remarkably, the need for such a mathematical setting naturally emerges from a concrete many-body quantum system as the dissipative quantum spin chain discussed above.

The maps $\Phi^{\vec{\omega}}_t$ can be extended to linear maps $\Phi^{ext}_t$ on a
larger algebra than $\cw(\chi,\sigma^{(\omega)})$.  
Via the relations \eqref{tsympform2}, the algebra $\cw(\chi,\sigma^{(\omega)})$ does indeed depend on the vector $\vec{\omega}$ of macroscopic averages at time $t=0$. We shall then denote it by $\cw_{\vec{\omega}}$ and by $W_{\vec{\omega}}(\vec{r}(\vec{\omega}))$ its Weyl operators, where we further include the possibility that the vectors parametrizing the Weyl operators also depend on $\vec{\omega}$.
We shall assume that, for all $\vec{\omega}\in\cs$, the representation of the Weyl algebra be regular so that 
\bea
\label{extweyl0}
W_{\vec{\omega}}(\vec{r}(\vec{\omega}))=\exp\left(i\vec{r}(\vec{\omega})\cdot\vec{F}(\vec{\omega})\right)\ ,
\eea
where $\vec{F}(\vec{\omega})$ is the operator-valued vector with components given by 
the Bosonic operators $F_\mu(\vec{\omega})$, $\mu=1,2,\ldots,d^2$, for each $\vec{\omega}\in\cs$ so that (compare with~\eqref{COMSIGMA}), 
\be
\label{COMSIGMA1}
\Big[\vec{r}_1(\vec{\omega})\cdot\vec{F}(\vec{\omega})\,,\,\vec{r}_2(\vec{\omega})\cdot\vec{F}(\vec{\omega})\Big]=i\,\vec{r}_1(\vec{\omega})\cdot\Big(\sigma(\vec{\omega})\vec{r}_2(\vec{\omega})\Big)
\ .
\ee
We are thus dealing with a so-called field of von Neumann algebras $\{\cw_{\vec{\omega}}\}_{\vec{\omega}\in\cs}$ that can be assembled together into a direct integral von Neumann algebra~\cite{LiBingRen}
\be
\label{dirint}
\cw^{ext}=\int^{\oplus}_{\cs}{\rm d}\vec{\omega}\,\cw_{\vec{\omega}}\ .
\ee 
The most general elements of $\cw^{ext}$ are operator-valued functions of the form
\be
\label{extweyl1}
W^f_{\vec{r}}:\,\cs\ni\vec{\omega}\mapsto f(\vec{\omega})\,W_{\vec{\omega}}(\vec{r}(\vec{\omega}))\ ,
\ee
with $f$ any element of the von Neumann algebra $L^\infty(\cs)$ of \textit{essentially bounded} functions on $\cs$ with respect to the measure ${\rm d}\vec{\omega}$, that is $f$ is measurable and bounded apart from sets of zero measure, while the Weyl operators $W_{\vec{\omega}}(\vec{r}(\vec{\omega}))\in\cw_{\vec{\omega}}$ correspond to the 
operator-valued functions $W^1_{\vec{r}}$ evaluated at 
$\vec{\omega}$; namely, $W^1_{\vec{r}}(\vec{\omega})=W_{\vec{\omega}}(\vec{r}(\vec{\omega}))$.
\medskip

\begin{remark}
\label{remext}
{\rm 
Notice that the extended algebra cannot be written in a simpler tensor form; indeed, each $\vec{\omega}$ determines its own Weyl algebra $\cw_{\vec{\omega}}$ and commutators of operators in $\cw_{\vec{\omega}}$ produce functions on $\cs$.
Only if the algebras $\cw_{\vec{\omega}}$ were the same, $\cw_{\vec{\omega}}=\cw$ for all $\vec{\omega}\in\cs$, one could write $\cw^{ext}=L^\infty(\cs)\otimes\cw$.\\
States on $\cw^{ext}$ are provided by general convex combinations of the form
\be
\label{extstate}
\Omega^{ext}=\int^\oplus_{\cs}{\rm d}\vec{\nu}\,\rho(\vec{\nu})\,\Omega_{\vec{\nu}}\ .
\ee
where $\Omega_{\vec{\nu}}$ is any state on the Weyl algebra $\cw_{\vec{\nu}}$ and $\rho$ is any probability distribution over $\cs$.
One may call Gaussian a state $\Omega^{ext}$ on $\mathcal{W}^{ext}$ if the $\Omega_{\vec{\nu}}$ in~\eqref{extstate} are all Gaussian and a specific Gaussian state $\Omega_{\vec{\omega}}$ on the Weyl algebra $\cw_{\vec{\omega}}$ can be selected by choosing a Dirac delta distribution localised at $\vec{\omega}\in\cs$, $\rho(\vec{\nu})=\delta_{\vec{\omega}}(\vec{\nu})$.
}\qed
\end{remark}
\medskip

On the extended algebra, we can then consider the extended linear maps $\Phi^{ext}_t$ defined by their action on the building blocks $W^f_{\vec{r}}$ of $\cw^{ext}$:
\be
\label{extweyl2}
\left(\Phi^{ext}_t[W^f_{\vec{r}}]\right)(\vec{\omega})=f(\vec{\omega}_t)\,\Phi^{\vec{\omega}}_t\left[W_{\vec{\omega}}(\vec{r}(\vec{\omega}_t))\right]\ .
\ee
Notice that $\Phi^{ext}$ makes all parametric dependences on $\vec{\omega}$ evolve in time
but for the one labelling the Weyl algebra which is left fixed.
Then, functions $f(\vec{\omega})$ and vectors $\vec{r}(\vec{\omega})$ are mapped into $f_t(\vec{\omega}):=f(\vec{\omega}_t)$, respectively  $\vec{r}_t(\vec{\omega}):=\vec{r}(\vec{\omega}_t)$, while, according to \eqref{mainres}, 
\bea
\label{lastaidd0}
\Phi_t^{\vec{\omega}}\left[W_{\vec{\omega}}(\vec{r}(\vec{\omega}_t))\right]&=&
g^{\vec{r},t}(\vec{\omega})\,W_{\vec{\omega}}\left(X_t^{tr}(\vec{\omega})\vec{r}(\vec{\omega}_t\right)\\
\label{extweyl3}
g^{\vec{r},t}(\vec{\omega})&:=&\exp\left(-\frac{1}{2}\Big(\vec{r}(\vec{\omega}_t)\cdot\Big(Y_t(\vec{\omega})\vec{r}(\vec{\omega}_t)\Big)\Big)\right)\ .
\eea
Notice that, because of the dependence of the matrix $Y_t(\vec{\omega})$ on the whole trajectory $\vec{\omega}\mapsto\vec{\omega}_t$, and not only on the end value $\vec{\omega}_t$, the functions $g^{\vec{r},t}(\vec{\omega})\neq g_t^{\vec{r}}(\vec{\omega}):=g^{\vec{r}}(\vec{\omega}_t)$. On the other hand, if the vector $\vec{r}(\vec{\omega})=\vec{r}$ does not explicitly depend on $\vec{\omega}$, then it does not evolve in time 
and one recovers the action \eqref{mainres} of the non-linear maps $\Phi_t$ of which the maps $\Phi^{ext}_t$ are indeed linear extensions.

The action of the extended dynamical maps can then be recast as
\be
\label{extweyl4}
\Phi^{ext}_t[W^f_{\vec{r}}]=f_t\,g^{\vec{r},t}\,W^1_{X_t^{tr}\vec{r}_t}\ ,
\ee
where it is understood that, when evaluating such an operator valued function at $\vec{\omega}\in\cs$, the matrix-valued function $X_t$ becomes $X_t(\vec{\omega})$, so that $\Big(X_t^{tr}\vec{r}_t\Big)(\vec{\omega})=X_t^{tr}(\vec{\omega})\vec{r}(\vec{\omega}_t)$.

Notice that the maps $\Phi^{ext}_t$ reproduce the time-dependent algebraic relations \eqref{pcomm}.
Indeed, setting
$E(\vec{\omega}):=\exp(i\,\vec{r}_1\cdot\left(\sigma(\vec{\omega})\vec{r}_2\right))$, with $\vec{\omega}$-independent vectors, then 
\beann
\left(\Phi^{ext}_t[W^1_{\vec{r}_1}W^1_{\vec{r}_2}]\right)(\vec{\omega})&=&\left(\Phi^{ext}_t[E\,W^1_{\vec{r}_2}W^1_{\vec{r}_1}]\right)(\vec{\omega})
=E(\vec{\omega}_t)\,
\Phi^{ext}_t[W^1_{\vec{r}_2}W^1_{\vec{r}_1}]\\
&=&\exp(i\,\vec{r}_1\cdot\left(\sigma(\vec{\omega}_t)\vec{r}_2\right))\,\Phi^{ext}_t[W^1_{\vec{r}_2}W^1_{\vec{r}_1}]\ ,
\eeann
with $E_t(\vec{\omega})=E(\vec{\omega}_t)$.

The expression \eqref{extweyl4} is best suited to inspect the composition law of the extended maps:
\bea
\nonumber
\Phi_s^{ext}\circ\Phi_t^{ext}[W^f_{\vec{r}}]&=&\Phi^{ext}_s\left[f_t\,g^{\vec{r},t}\,W^1_{\vec{r}_t}\right]=(f_t)_s\,g_s^{\vec{r},t}\,\Phi^{ext}_s\left[W^1_{X_t^{tr}\vec{r}_t}\right]\\
\label{100}
&=&(f_t)_s\,g_s^{\vec{r},t}\,g^{(X^{tr}_t\vec{r}_t)_s,s}\,W^1_{X^{tr}_s(X^{tr}_t\vec{r}_t)_s}\ .
\eea
When evaluated at $\vec{\omega}$, using \eqref{extweyl3}, the right hand side yields
\bea
\label{extweyl6a}
&&\hskip-.7cm
\Phi_s^{ext}\circ\Phi_t^{ext}[W^f_{\vec{r}}](\vec{\omega})=
(f_t)_s(\vec{\omega})\,g^{\vec{r},t}(\vec{\omega}_s)\,g^{(X^{tr}_t\vec{r}_t)_s,s}(\vec{\omega})\,W_{\vec{\omega}}\left(\Big(X^{tr}_s(X^{tr}_t\vec{r}_t)_s\Big)(\vec{\omega})\right)\\
\label{extweyl6d}
&&\hskip-.7cm
\Big(X^{tr}_s(X^{tr}_t\vec{r}_t)_s\Big)(\vec{\omega})=X^{tr}_s(\vec{\omega})\,X_t^{tr}(\vec{\omega}_s)\vec{r}((\vec{\omega}_s)_t)\\
\label{extweyl6b}
&&\hskip-.7cm
g_s^{\vec{r},t}(\vec{\omega})=g^{\vec{r},t}(\vec{\omega}_s)=
\exp\left(-\frac{1}{2}\vec{r}((\vec{\omega}_s)_t)\cdot\Big(Y_t(\vec{\omega}_s)\,\vec{r}((\vec{\omega}_s)_t)\Big)\right)\\
\label{extweyl6c}
&&\hskip-.7cm
g^{(X^{tr}_t\vec{r}_t)_s,s}(\vec{\omega})=\exp\left(-\frac{1}{2}\Big(X^{tr}_t(\vec{\omega}_s)\vec{r}((\vec{\omega}_s)_t)\Big)\cdot\Big(Y_s(\vec{\omega})\,X_t^{tr}(\vec{\omega}_s)\vec{r}((\vec{\omega}_s)_t)\Big)\right)\ .
\eea
The dependence on $\vec{\omega}_s$ of the matrix $Y_t(\vec{\omega}_s)$ means that the macroscopic trajectories over which the various integral \eqref{matrices1}-\eqref{matrices3} are computed originates from $\vec{\omega}_s$. Since the motion along a macroscopic trajectory composes in such a way that 
$(\vec{\omega}_s)_t=(\vec{\omega}_t)_s=\vec{\omega}_{t+s}$  for all $s,t\geq 0$ (see \eqref{avtime}), on one hand $(f_t)_s(\vec{\omega})=f_{t+s}(\vec{\omega})$, $\vec{r}((\vec{\omega}_s)_t)=\vec{r}(\omega_{s+t})$, while
\bea
\label{extweyl7a}
&&
X_t(\vec{\omega}_s)=\TT{\rm e}^{\int_0^t{\rm d}u\, Q(\vec{\omega}_{s+u})}=
\TT{\rm e}^{\int_s^{t+s}{\rm d}u\, Q(\vec{\omega}_u)}=X_{t+s}(\vec{\omega})\,X^{-1}_s(\vec{\omega})\qquad \hbox{\rm whence}\\
\label{extweyl7b}
&&
X_{t,u}(\vec{\omega}_s)=
X_t(\vec{\omega}_s)\,X^{-1}_u(\vec{\omega}_s)=X_{t+s}(\vec{\omega})\,X^{-1}_{u+s}(\vec{\omega})\ .
\eea
From the first relation it follows that
\be
\label{extweyle}
X^{tr}_s(\vec{\omega})\,X^{tr}_t(\vec{\omega}_s)=X^{tr}_{s+t}(\vec{\omega})\ ,
\ee
while the second one yields
\bea
\nonumber
Y_t(\vec{\omega}_s)&=&\int_0^t{\rm d}u\, X_{t,u}(\vec{\omega}_s)\,\sigma(\vec{\omega}_{u+s})\,A\,\sigma^{tr}(\vec{\omega}_{u+s})\,X^{tr}_{t,u}(\vec{\omega}_s)\\
\nonumber
&=&
\int_s^{t+s}{\rm d}u\, X_{t+s,u}(\vec{\omega})\,\sigma(\vec{\omega}_u)\,A\,\sigma^{tr}(\vec{\omega}_u)\,X_{t+s,u}^{tr}(\vec{\omega})\\
&=&Y_{t+s}(\vec{\omega})-\int_0^s{\rm d}u\, X_{t+s,u}(\vec{\omega})\,\sigma(\vec{\omega}_u)\,A\,\sigma^{tr}(\vec{\omega}_u)\,X^{tr}_{t+s,u}(\vec{\omega})\ .
\label{extweyl7c}
\eea
Furthermore, using~\eqref{extweyl7a} and \eqref{extweyl7b},
\be
\label{extweyl7d}
X_t(\vec{\omega}_s)\,Y_s(\vec{\omega})\,X^{tr}_t(\vec{\omega}_s)=\int_0^s{\rm d}u\, X_{t+s,u}(\vec{\omega})\,\sigma(\vec{\omega}_u)\,A\,\sigma^{tr}(\vec{\omega}_u)\,X^{tr}_{t+s,u}(\vec{\omega})\ .
\ee
Together with \eqref{extweyl7c}, it yields
$$
g_s^{\vec{r},t}(\vec{\omega})\,g^{(X^{tr}_t\vec{r}_t)_s,s}(\vec{\omega})=
\exp\left(-\frac{1}{2}\,\vec{r}(\vec{\omega}_{s+t})\cdot\Big(Y_{s+t}(\vec{\omega})\,\vec{r}(\vec{\omega}_{s+t})\Big)\right)=g^{\vec{r},s+t}(\vec{\omega})\ .
$$
In conclusion, \eqref{100} becomes
\be
\label{extweylf}
\Phi_s^{ext}\circ\Phi_t^{ext}[W^f_{\vec{r}}]=f_{t+s}\,g^{\vec{r},t+s}\,W^1_{\vec{r}_{s+t}}=\Phi_{t+s}^{ext}[W^f_{\vec{r}}]\ ,
\ee
whence the extended maps $\Phi^{ext}_t$ satisfy a semigroup composition law.
\medskip

As stated in the following Proposition whose proof is given in Section~\ref{subsgen}, the linear extended maps $\Phi^{ext}_t$ on the direct integral von Neumann algebra $\cw^{ext}$ are also completely positive.
\medskip

\begin{theorem}
The maps $\Phi^{ext}_t$ in \eqref{extweyl2} form a one parameter family of completely positive, unital, Gaussian maps on the von Neumann algebra $\cw^{ext}$.
\label{CPos}
\end{theorem}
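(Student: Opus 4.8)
My plan is to reduce complete positivity of $\Phi^{ext}_t$ to a fiberwise statement about Gaussian maps on CCR algebras, and then to verify the relevant matrix inequality by a Lyapunov-type differential identity. The semigroup law is already at hand in \eqref{extweylf}, and unitality is immediate: setting $f\equiv1$ and $\vec r\equiv\vec 0$ in \eqref{extweyl2} contracts $Y_t$ with $\vec 0$, giving Gaussian factor $1$, while $X_t^{tr}\vec 0=\vec 0$, so $\Phi^{ext}_t[\mathbf 1]=\mathbf 1$; the Gaussian character is manifest from \eqref{mainres}. The substance of the theorem is therefore complete positivity. The conceptual first step is to rewrite \eqref{extweyl2} in the manifestly fibered form $\big(\Phi^{ext}_t[F]\big)(\vec\omega)=\Psi_{t,\vec\omega}\big[F(\vec\omega_t)\big]$, where for $F=W^f_{\vec r}$ one has $F(\vec\omega_t)=f(\vec\omega_t)\,W_{\vec\omega_t}(\vec r(\vec\omega_t))\in\cw_{\vec\omega_t}$ and $\Psi_{t,\vec\omega}:\cw_{\vec\omega_t}\to\cw_{\vec\omega}$ is the Gaussian map
\[
\Psi_{t,\vec\omega}\big[W_{\vec\omega_t}(\vec u)\big]=\exp\!\Big(-\tfrac12\,\vec u\cdot\big(Y_t(\vec\omega)\,\vec u\big)\Big)\,W_{\vec\omega}\big(X_t^{tr}(\vec\omega)\,\vec u\big).
\]
Comparing with \eqref{extweyl4} confirms this identification, which is the crux: it makes explicit that the \emph{input} CCR structure is $\sigma(\vec\omega_t)$ while the \emph{output} structure is $\sigma(\vec\omega)$, the macroscopic flow entering only through the relabelling $\vec\omega_t\mapsto\vec\omega$ of fibers and not through the symplectic form.

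Second, I would invoke the complete-positivity criterion for quasi-free maps between CCR algebras: a map $W(\vec u)\mapsto e^{-\frac12\vec u\cdot Y\vec u}\,W(T\vec u)$ from $\cw(\sigma_{\rm in})$ to $\cw(\sigma_{\rm out})$ is completely positive precisely when $Y+\frac{i}{2}\big(\sigma_{\rm in}-T^{tr}\sigma_{\rm out}T\big)\ge0$ as a Hermitian matrix. With $T=X_t^{tr}(\vec\omega)$, $\sigma_{\rm in}=\sigma(\vec\omega_t)$ and $\sigma_{\rm out}=\sigma(\vec\omega)$, applying this to $\Psi_{t,\vec\omega}$ amounts to establishing the positivity of
\[
Z_t(\vec\omega):=Y_t(\vec\omega)+\frac{i}{2}\Big(\sigma(\vec\omega_t)-X_t(\vec\omega)\,\sigma(\vec\omega)\,X_t^{tr}(\vec\omega)\Big)\ \ge\ 0 .
\]
The plan is to prove this by differentiation. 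Writing $P_t:=X_t\sigma(\vec\omega)X_t^{tr}$, using $\dot X_t=Q(\vec\omega_t)X_t$ from \eqref{lastaidd1}, the evolution \eqref{eqsigmadot} of $\sigma(\vec\omega_t)$, and the defining integral \eqref{matrices3} of $Y_t$, one finds that $Z_t$ obeys a Lyapunov equation
\[
\frac{{\rm d}}{{\rm d}t}Z_t=Q(\vec\omega_t)\,Z_t+Z_t\,Q^{tr}(\vec\omega_t)+\sigma(\vec\omega_t)\,C\,\sigma^{tr}(\vec\omega_t),\qquad Z_0=0 .
\]
The inhomogeneity is where the algebraic structure enters: using $Q=-i\sigma\widetilde B+D$ from \eqref{matrices1}, the antisymmetry $D^{tr}=-D$ of \eqref{remD}, the identity $\widetilde B-\widetilde B^{tr}=2B$ following from \eqref{hmat2}, and $A+B=C$ of \eqref{KosAB}, the Hamiltonian and mixed contributions cancel and the source collapses exactly to $\sigma(\vec\omega_t)\,C\,\sigma^{tr}(\vec\omega_t)$. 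Since $C\ge0$ is the Kossakowski matrix of \eqref{mflind1a00} and $\sigma(\vec\omega_t)$ is real, this source is positive semi-definite, whence $Z_t=\int_0^t{\rm d}s\,X_{t,s}(\vec\omega)\,\sigma(\vec\omega_s)\,C\,\sigma^{tr}(\vec\omega_s)\,X_{t,s}^{tr}(\vec\omega)\ge0$. Each $\Psi_{t,\vec\omega}$ is therefore completely positive; note that only $C\ge0$ is used, in agreement with the earlier remark that the purely quantum part of the generator need not have a positive Kossakowski matrix.

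Finally, I would assemble the fiberwise result into complete positivity on the direct integral \eqref{dirint}. Given sections $F_1,\dots,F_n\in\cw^{ext}$, complete positivity requires $\big[\Phi^{ext}_t[F_i^*F_j]\big]_{ij}\ge0$ in $M_n(\CC)\otimes\cw^{ext}$. Evaluating at a fiber $\vec\omega$, and using that products in $\cw^{ext}$ are taken fiberwise, the fibered form yields $\big[\Psi_{t,\vec\omega}[F_i(\vec\omega_t)^*F_j(\vec\omega_t)]\big]_{ij}$, which is positive because $\big[F_i(\vec\omega_t)^*F_j(\vec\omega_t)\big]_{ij}\ge0$ in $M_n(\CC)\otimes\cw_{\vec\omega_t}$ and $\Psi_{t,\vec\omega}$ is completely positive; as this holds for almost every $\vec\omega$, the direct-integral operator is positive. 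Together with the semigroup law \eqref{extweylf}, unitality, and the Gaussian form of \eqref{mainres}, this gives the statement. I expect the main obstacle to be the second step: correctly pinning down the input/output symplectic data so that the Gaussian CP criterion applies (the flow must be carried by the fibers, not by the symplectic form), and performing the cancellations that reduce the Lyapunov source precisely to $\sigma(\vec\omega_t)\,C\,\sigma^{tr}(\vec\omega_t)$. The measurability of $\vec\omega\mapsto\Psi_{t,\vec\omega}$ needed for the direct-integral assembly follows from the continuous dependence of $X_t$, $Y_t$ and $\sigma$ on $\vec\omega$.
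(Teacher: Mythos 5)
Your proposal is correct, and at its core it rests on exactly the same matrix inequality as the paper's proof, namely
\[
Y_t(\vec\omega)+\frac{i}{2}\Big(\sigma(\vec\omega_t)-X_t(\vec\omega)\,\sigma(\vec\omega)\,X^{tr}_t(\vec\omega)\Big)
=\int_0^t{\rm d}s\,X_{t,s}(\vec\omega)\,\sigma(\vec\omega_s)\,C\,\sigma^{tr}(\vec\omega_s)\,X^{tr}_{t,s}(\vec\omega)\;\geq\;0\,,
\]
which you obtain by solving a Lyapunov equation with source $\sigma(\vec\omega_s)C\sigma^{tr}(\vec\omega_s)$ and the paper obtains by integrating the total derivative $\frac{{\rm d}}{{\rm d}s}\big(X_{t,s}(\vec\omega)\sigma(\vec\omega_s)X^{tr}_{t,s}(\vec\omega)\big)=-2i\,X_{t,s}(\vec\omega)\sigma(\vec\omega_s)B\sigma^{tr}(\vec\omega_s)X^{tr}_{t,s}(\vec\omega)$: the same cancellations ($D^{tr}=-D$, $\widetilde B-\widetilde B^{tr}=2B$, $A+B=C$) in either packaging. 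Where you genuinely differ is in how complete positivity is reduced to this inequality. You recast $\Phi^{ext}_t$ fiberwise as quasi-free maps $\Psi_{t,\vec\omega}:\cw_{\vec\omega_t}\to\cw_{\vec\omega}$ between CCR algebras with \emph{different} symplectic data, correctly identifying the crux that the macroscopic flow sits in the fiber relabelling while the input Weyl relations carry $\sigma(\vec\omega_t)$ (this is what makes the fibered form consistent on products, in line with Proposition \ref{ccrt}), and you then invoke the known CP criterion for Gaussian maps between CCR algebras. The paper never cites that criterion but in effect proves it inline: it expands $Z^\dagger Z$ in Weyl operators, applies $\Phi^{ext}_t\otimes{\bf 1}_n$, and recognizes the resulting Gram-type coefficient matrix as expectations of Weyl products in an auxiliary Weyl algebra $\cv_{\vec\omega}$ with symplectic matrix $\hat\sigma_t(\vec\omega)=\sigma(\vec\omega_t)-X_t(\vec\omega)\sigma(\vec\omega)X^{tr}_t(\vec\omega)$ under a functional $\varphi_t$ with covariance $Y_t(\vec\omega)$, whose positivity is precisely the displayed inequality; it also verifies separately that Gaussian states are mapped to Gaussian states. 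Your route buys brevity and conceptual transparency, at the price of an external input: you should cite the quasi-free CP criterion explicitly (Demoen--Vanheuverzwijn--Verbeure) and check that the version you invoke tolerates degenerate antisymmetric forms, since $\sigma(\vec\omega)$ has a nontrivial kernel here; the paper's auxiliary-Weyl-algebra argument is self-contained exactly at this point.
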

\medskip

Since the maps $\Phi_t^{ext}$ form a semigroup on $\cw^{ext}$, their generator $\LL^{ext}$ is obtained by taking the time-derivative of $\Phi_t^{ext}$ at $t=0$ and will be of the form $\displaystyle\LL^{ext}=\int_{\cs}^\oplus{\rm d}\vec{\omega}\,\LL_{\vec{\omega}}$. The components $\LL_{\vec{\omega}}$
cannot be of the typical Lindblad form that is expected of the generators of Gaussian completely positive semigroups,
\beann
\hskip-1cm
\LL_{\vec{\omega}}\Big[W^f_{\vec{r}}(\vec{\omega})\Big]&=&i\,\Big[\sum_{\mu\nu=1}^{d^2}H_{\mu\nu}(\vec{\omega})\,F_\mu(\vec{\omega})\,F_\nu(\vec{\omega})\,,\,W^f_{\vec{r}}(\vec{\omega})\Big]\\
\hskip-1cm
&+&\sum_{\mu,\nu=1}^{d^2}K_{\mu\nu}(\vec{\omega})\Big(F_\mu(\vec{\omega})\,W^f_{\vec{r}}(\vec{\omega})\,F_\nu(\vec{\omega})\,-\,\frac{1}{2}\left\{F_\mu(\vec{\omega})\,F_\nu(\vec{\omega})\,,\,W^f_{\vec{r}}(\vec{\omega})\right\}\Big)\ .
\eeann
If it were so, then
$$
\LL_{\vec{\omega}}[W^f_{\vec{r}}(\vec{\omega})]=f(\vec{\omega})\,\LL_{\vec{\omega}}[W^1_{\vec{r}}(\vec{\omega})]\ ,
$$
and scalar functions would remain constant in time.
We will show that the generator is of hybrid form~\cite{Ciccotti}--\cite{Fratino} with 
\begin{itemize}
\item
a drift contribution that makes $\vec{\omega}$ evolve in time as a solution to the dynamical equation~\eqref{d0};
\item
mixed classical-quantum contributions;
\item 
fully quantum contributions.
\end{itemize}

Intriguingly, despite the complete positivity of the maps $\Phi^{ext}_t$, we will show that the fully quantum terms of the generator need not be of Lindblad form.

As we shall soon see, one has to take into account the non-invertibility of the symplectic matrix $\sigma(\vec{\omega})$. According to Section~\ref{reminv}, by means of a suitable orthogonal transformation $R(\vec{\omega})$, $\sigma(\vec{\omega})$ can always be brought into the form~\eqref{invsigma} and the Weyl operators decomposed 
into a classical and quantum contribution as in~\eqref{split}. In the following, after rotating a given $d^2\times d^2$ matrix into $X(\vec{\omega})=R(\vec{\omega})\,X\,R^{tr}(\vec{\omega})$, we shall decompose it as 
\be
\label{01dec}
X(\vec{\omega})=\begin{pmatrix}X^{00}(\vec{\omega})&
X^{01}(\vec{\omega})\cr
X^{10}(\vec{\omega})&X^{11}(\vec{\omega})\end{pmatrix}\ ,
\ee 
where, as in Remark~\ref{reminv}, $X^{00}(\vec{\omega})$ is a 
$d_0(\vec{\omega})\times d_0(\vec{\omega})$ matrix, $X^{01}(\vec{\omega})$ a $d_0(\vec{\omega})\times d_1(\vec{\omega})$ matrix, $X^{10}(\vec{\omega})$ a 
$d_1(\vec{\omega})\times d_0(\vec{\omega})$ matrix and $X^{11}(\vec{\omega})$ a $d_1(\vec{\omega})\times d_1(\vec{\omega})$ matrix, where $d_0(\vec{\omega})$ is the dimension of the kernel of $\sigma(\vec{\omega})$ and $d_1(\vec{\omega})=d^2-d_0(\vec{\omega})$.
\medskip

\begin{theorem}
\label{extgenth}
The extended dissipative dynamics $\Phi^{ext}_t$ of quantum fluctuations has a generator of the form
$\displaystyle\LL^{ext}=\int_{\cs}^\oplus{\rm d}\vec{\omega}\,\LL_{\vec{\omega}}$.
Every $\vec{\omega}$-component consists of four different contributions,
$\LL_{\vec{\omega}}=\LL^{drift}_{\vec{\omega}}+\LL^{cc}_{\vec{\omega}}+\LL^{cq}_{\vec{\omega}}+\LL^{qq}_{\vec{\omega}}$: a drift term
\be
\label{extgenth1a}
\LL^{drift}_{\vec{\omega}}\Big[W^f_{\vec{r}}(\vec{\omega})\Big]=\dot{\vec{\omega}}\cdot\partial_{\vec{\omega}}\,W^f_{\vec{r}}(\vec{\omega})\ ,
\ee 
a differential operator involving the classical degrees of freedom
$G_\mu(\vec{\omega})$, $\mu=1,2,\ldots,d_0(\vec{\omega})$ introduced in~\eqref{Gops},
\be
\label{extgenth1b}
\LL^{cc}_{\vec{\omega}}\Big[W^f_{\vec{r}}(\vec{\omega})\Big]=\sum_{\mu,\nu=1}^{d_0(\vec{\omega})}\,D^{00}_{\mu\nu}(\vec{\omega})G_\nu(\vec{\omega})\,\frac{\partial\,W^f_{\vec{r}}(\vec{\omega})}{\partial G_\mu(\vec{\omega})}
\ee
a mixed classical-quantum term
\be
\label{extgenth1c}
\LL^{cq}_{\vec{\omega}}\Big[W^f_{\vec{r}}(\vec{\omega})\Big]=\sum_{\mu=1}^{d_0(\vec{\omega})}\sum_{\nu=d_0(\vec{\omega})+1}^{d^2}\frac{D^{01}_{\mu\nu}(\vec{\omega})}{2}\,\left\{G_\nu(\vec{\omega})\,,\,\frac{\partial\,W^f_{\vec{r}}(\vec{\omega})}{\partial G_\mu(\vec{\omega})}\right\}\ ,
\ee
and a purely quantum term 
\bea
\label{extgenth1d}
&&\hskip-1cm
\LL^{qq}_{\vec{\omega}}\Big[W^f_{\vec{r}}(\vec{\omega})\Big]=
i\,\sum_{\mu=1}^{d_0(\vec{\omega})}\sum_{\nu=d_0(\vec{\omega}))+1}^{d^2}\Big(H^{01}_{\mu\nu}(\vec{\omega})+H^{10}_{\nu\mu}(\vec{\omega})\Big)\,G_\mu(\vec{\omega})\,\Big[G_\nu(\vec{\omega})\,,\,W^f_{\vec{r}}(\vec{\omega})\Big]\\
\label{extgenth1e}
&&
+i\,\sum_{\mu,\nu=d_0(\vec{\omega})+1}^{d^2}H^{11}_{\mu\nu}(\vec{\omega})\,\Big[G_\mu(\vec{\omega})\,G_\nu(\vec{\omega})\,,\,W^f_{\vec{r}}(\vec{\omega})\Big]\\
\label{extgenth1g}
\hskip-1cm
&&
+\sum_{\mu,\nu=d_0(\vec{\omega})+1}^{d^2}K^{11}_{\mu\nu}(\vec{\omega})\Big(G_\mu(\vec{\omega})\,W^f_{\vec{r}}(\vec{\omega})\,G_\nu(\vec{\omega})\,-\,\frac{1}{2}\left\{G_\mu(\vec{\omega})\,G_\nu(\vec{\omega})\,,\,W^f_{\vec{r}}(\vec{\omega})\right\}\Big)\ ,
\eea
with the $d_1(\vec{\omega})\times d_1(\vec{\omega})$ matrix of coefficients $H^{11}(\vec{\omega})$ given by 
\be
\label{ht}
H^{11}(\vec{\omega})=(h^{(re)}(\vec{\omega}))^{11}\,+\,\frac{1}{4}\Big\{(\sigma^{11}(\vec{\omega}))^{-1}\,,\,D^{11}(\vec{\omega})\Big\}\ ,
\ee
where $(h^{(re)}(\vec{\omega}))^{11}$ is the $11$-component of the matrix $h^{(re)}$ in~\eqref{hmat1} rotated by $R(\vec{\omega})$ as in~\eqref{01dec}.

Further, the $d_0(\vec{\omega})\times d_1(\vec{\omega})$ matrix $H^{01}(\vec{\omega})$, respectively the $d_1(\vec{\omega})\times d_0(\vec{\omega})$ matrix $H^{01}(\vec{\omega})$ read
\bea
\label{extgenth1h}
H^{10}(\vec{\omega})&=&\frac{1}{2}(\sigma^{11}(\vec{\omega}))^{-1}\,D^{10}(\vec{\omega})\,-\frac{i}{2}\,B^{10}(\vec{\omega})\,+\,(h^{(re)}(\vec{\omega}))^{10}\\ 
H^{01}(\vec{\omega})&=&\frac{1}{2}D^{01}(\vec{\omega})\,
(\sigma^{11}(\vec{\omega}))^{-1}\,+\frac{i}{2}\,B^{01}(\vec{\omega})\,+\,(h^{(re)}(\vec{\omega}))^{10}\ ,
\eea
where $(h^{(re)}(\vec{\omega}))^{10}$ and $(h^{(re)}(\vec{\omega}))^{10}$ are the $10$ and $01$ components of the matrix $h^{(re)}$ in~\eqref{hmat1} rotated by $R(\vec{\omega})$ as in~\eqref{01dec}.

Finally, $K^{11}(\vec{\omega})$ amounts to 
\be
\label{extgenth1i}
K^{11}(\vec{\omega})\,=\,A^{11}(\vec{\omega})\,+\,B^{11}(\vec{\omega})\,+\,\frac{i}{2}\Big[(\sigma^{11}(\vec{\omega}))^{-1}\,,\,D^{11}(\vec{\omega})\Big]\ .
\ee
\end{theorem}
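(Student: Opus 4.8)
The plan is to obtain $\LL^{ext}$ as the strong derivative at $t=0$ of the semigroup $\{\Phi^{ext}_t\}_{t\geq 0}$ whose composition law was established in~\eqref{extweylf}. Because $\cw^{ext}$ is a direct integral and $\Phi^{ext}_t$ acts fiberwise as in~\eqref{extweyl2}, it suffices to differentiate the explicit expression~\eqref{extweyl4}, evaluated at a generic $\vec{\omega}\in\cs$, and then split the result into classical and quantum pieces. The product rule applied to the three factors $f_t$, $g^{\vec{r},t}$ and $W^1_{X^{tr}_t\vec{r}_t}$ requires only the initial data and first derivatives of the matrices in~\eqref{lastaidd1}--\eqref{matrices3}: from the time-ordered exponential one has $X_0(\vec{\omega})=\openone$ and $\dot X_0(\vec{\omega})=Q(\vec{\omega})=-i\,\sigma(\vec{\omega})\widetilde B+D(\vec{\omega})$, while $Y_0(\vec{\omega})=0$ and $\dot Y_0(\vec{\omega})=\sigma(\vec{\omega})\,A\,\sigma^{tr}(\vec{\omega})$.

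First I would isolate the \emph{drift}. The arguments $f_t(\vec{\omega})=f(\vec{\omega}_t)$ and $\vec{r}_t(\vec{\omega})=\vec{r}(\vec{\omega}_t)$ carry all the explicit dependence on the macroscopic trajectory, so differentiating them at $t=0$ produces $\dot{\vec{\omega}}\cdot\partial_{\vec{\omega}}$ with $\dot{\vec{\omega}}=D(\vec{\omega})\vec{\omega}$ given by~\eqref{d0}; collecting these contributions yields exactly $\LL^{drift}_{\vec{\omega}}$ in~\eqref{extgenth1a}. Differentiating the Gaussian prefactor and using $Y_0=0$ gives the c-number $-\tfrac12\,\vec{r}\cdot\big(\sigma(\vec{\omega})A\sigma^{tr}(\vec{\omega})\,\vec{r}\big)$, which I will hold aside to be recombined with the quantum part. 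The remaining and most delicate term is the derivative of $W_{\vec{\omega}}(X^{tr}_t(\vec{\omega})\vec{r})$: invoking the regularity of the representation I write it as $\exp(i\,\vec{s}_t\cdot\vec{F}(\vec{\omega}))$ with $\vec{s}_t=X^{tr}_t\vec{r}$ and differentiate through Duhamel's formula. Since $[\vec{s}_t\cdot\vec{F},\dot{\vec{s}}_t\cdot\vec{F}]$ is central by~\eqref{COMSIGMA1}, the series truncates and leaves a term linear in $\vec{F}$ times $W$ plus a purely c-number BCH correction bilinear in $\vec{r}$, with $\dot{\vec{s}}_0=Q^{tr}(\vec{\omega})\vec{r}$.

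To separate classical from quantum degrees of freedom I would then rotate everything by $R(\vec{\omega})$ as in Section~\ref{reminv}, passing to the operators $\vec{G}(\vec{\omega})=R(\vec{\omega})\vec{F}$ whose first $d_0(\vec{\omega})$ components $\vec{G}_0(\vec{\omega})$ commute, and decomposing each matrix into its $00,01,10,11$ blocks as in~\eqref{01dec}. The linear-in-$\vec{F}$ term, rewritten via $\vec{G}$, generates a linear flow of the $G_\mu$; its $00$ and $01$ blocks act on the commuting variables and, because $[G_\mu(\vec{\omega}),\,\cdot\,]=0$ for $\mu\leq d_0(\vec{\omega})$, can only be expressed as the first-order differential operators $\LL^{cc}_{\vec{\omega}}$ in~\eqref{extgenth1b} and $\LL^{cq}_{\vec{\omega}}$ in~\eqref{extgenth1c}, the symmetric (anticommutator) ordering in the latter reflecting the Weyl prescription used in~\eqref{split}. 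The $11$ block, acting on the genuinely non-commuting fluctuations, is converted into commutators with a quadratic Hamiltonian: matching the linear symplectic flow generated by $Q^{11}$ to the adjoint action of $\tfrac12\sum G_\mu H^{11}_{\mu\nu}G_\nu$ forces the inversion of $\sigma$ on the quantum subspace, producing the coefficient $H^{11}$ in~\eqref{ht} together with the mixed coefficients $H^{10},H^{01}$ in~\eqref{extgenth1h}; the held-aside Gaussian term $\sigma A\sigma^{tr}$, re-expressed in the $G$-basis, supplies the dissipative structure $G_\mu\,W\,G_\nu-\tfrac12\{G_\mu G_\nu,W\}$ with coefficient $K^{11}$ in~\eqref{extgenth1i}, the commutator $\tfrac{i}{2}[(\sigma^{11}(\vec{\omega}))^{-1},D^{11}(\vec{\omega})]$ arising precisely from the non-commutativity of $(\sigma^{11})^{-1}$ with $D^{11}$ when one symmetrizes the linear and Gaussian contributions into standard Lindblad-like form.

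The hard part will be the bookkeeping of the central BCH corrections and their recombination with the Gaussian prefactor: it is exactly this interplay, together with the need to use $(\sigma^{11})^{-1}$ only on the invertible block, that converts the ``double-commutator'' mean-field dissipator into the $K^{11}$, $H^{11}$ coefficients and, crucially, accounts for the fact that $K^{11}(\vec{\omega})$ need not be positive semi-definite even though each $\Phi^{ext}_t$ is completely positive by Theorem~\ref{CPos}. A secondary technical point is to verify that no $\vec{G}_0$-commutator terms survive, so that the whole classical sector is genuinely first order; this follows because $\sigma(\vec{\omega})$ annihilates the $0$-block, whence $Q$ feeds the classical variables only through $D^{00}$ and $D^{01}$.
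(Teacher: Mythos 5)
Your proposal is correct and follows essentially the same route as the paper's proof: differentiate the explicit Gaussian form of $\Phi^{ext}_t$ fiberwise at $t=0$ (using $X_0=\mathbf{1}$, $\dot X_0=Q(\vec{\omega})$, $Y_0=0$, $\dot Y_0=\sigma A\sigma^{tr}$, with the commutator series truncating by centrality exactly as in the paper's Lemma~\ref{tool3}), rotate by $R(\vec{\omega})$ into the block decomposition~\eqref{01dec}, and then match the resulting expression against a Lindblad-like ansatz in the $G$-operators, which forces the appearance of $(\sigma^{11}(\vec{\omega}))^{-1}$ and yields $H^{11}$, $H^{10}$, $H^{01}$, $K^{11}$ by symmetrizing and antisymmetrizing the matching equations, with the leftover $00$ and $01$ blocks necessarily realized as the first-order differential operators $\LL^{cc}_{\vec{\omega}}$ and $\LL^{cq}_{\vec{\omega}}$. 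This is the same decomposition and the same coefficient-matching argument the paper uses.
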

\medskip

\noindent
The generator components $\LL_{\vec{\omega}}$ are easily checked to be 
trace-preserving while hermiticity preservation, $\LL_{\vec{\omega}}[X]^\dag=\LL_{\vec{\omega}}[X^\dag]$, follows since the blocks $D^{ij}(\vec{\omega})$, $i,j=0,1$, are real matrices, and $B^{10}(\vec{\omega})$, $B^{01}(\vec{\omega})$, as much as $B(\vec{\omega})$, change sign under complex conjugation while $(h^{(re)}(\vec{\omega}))^{10}$
and $(h^{(re)}(\vec{\omega}))^{01}$ do not. Then, $H^{11}(\vec{\omega})$ and $K^{11}(\vec{\omega})$ are hermitian matrices whereas $H^{01}(\vec{\omega})$, $H^{10}(\vec{\omega})$ are real matrices.

\begin{remark}{\rm 
Notice that $\LL_{\vec{\omega}}$ contains purely classical, purely quantum and mixed classical-quantum contributions.  
Furthermore, the apparent Lindblad structure of the purely quantum contribution $\LL^{qq}_{\vec{\omega}}$ corresponds to a Kossakowski matrix $K^{11}(\vec{\omega})$ which is in general not positive semi-definite.
This is due to the correction to $C^{11}(\vec{\omega})=A^{11}(\vec{\omega})+B^{11}(\vec{\omega})\geq 0$ given by
$\displaystyle\frac{i}{2}\Big[(\widetilde{\sigma}^{11}(\vec{\omega}))^{-1}\,,\,D^{11}(\vec{\omega})\Big]$; the latter matrix is traceless and cannot thus be positive semi-definite whence the positivity condition $K^{11}(\vec{\omega})\geq 0$ can be violated. 
Interestingly, despite of this, $\LL=\int^{\oplus}_{\cs}{\rm d}\vec{\omega}\,\LL_{\vec{\omega}}$ still generates a semigroup of completely positive maps on the extended algebra. Though the dynamics on the extended algebra consists of a semigroup of completely positive maps, the fact that its generator is not in Lindblad form with positive Kossakowski matrix is because it mixes classical and quantum terms. In order to recover the standard expression one should proceed to a fully quantum rendering of the evolution, by lifting the classical contributions to a larger non-commutative algebra in such a way that the generator in theorem \ref{extgenth} emerges as a restriction to a suitable commutative sub-algebra: a similar approach was proposed in a rather different context in \cite{Diosi}. \qed}
\end{remark}

\begin{remark}
\label{remgen2}
{\rm
Unlike the dissipative fluctuation time-evolution $\Phi^{\vec{\omega}}_t$ which is 
non-linear, the unitary time-evolution $\alpha_t$ on the quasi-local algebra $\ca$ given by Theorem~\ref{limdyn} is linear and does not need to be extended to a larger algebra in order to be an acceptable quantum transformation. 
However, if, in analogy to what has been done for $\Phi^{\vec{\omega}}_t$, one introduces an extended algebra $\ca^{ext}$ whose elements are operator-valued functions $O$ on $\cs$ with values in $\ca$, $\vec{\omega}\mapsto O_{\vec{\omega}}\in\ca$, unlike in Remark~\ref{remext}, at each $\vec{\omega}$ we have the same quasi-local $C^*$ algebra $\ca$, so $\ca^{ext}=L^\infty(\cs)\otimes\ca$.
Then, the extended algebra is generated by operators of the form $O_f$ such that $O_f(\vec{\omega})=f(\vec{\omega})\,O$, with $f\in L^\infty(\cs)$ and $O$ any local spin operator with finite support.
We then define $\alpha_t^{ext}$ on $\ca^{ext}$ as follows,
$$
\alpha_t[O_f](\vec{\omega})=f(\vec{\omega}_t)\,\alpha^{\vec{\omega}}_t[O]\ ,
$$
where $\vec{\omega}\mapsto\vec{\omega}_t$ as in~\eqref{formalsol} and 
$\alpha^{\vec{\omega}}_t$ given by~\eqref{auto2} with unitary operators 
$U^{(S)}_t(\vec{\omega})$ generated by hamiltonians $H_s=H(\vec{\omega}_s)$ where the dependence on $\vec{\omega}$ is now made explicit.
It then follows that we again obtain a semigroup on $\ca^{ext}$; indeed,
$$
\alpha_t\circ\alpha_s[O](\vec{\omega})=\alpha^{\vec{\omega}}_t\circ\alpha^{\vec{\omega}_t}_s[O(\vec{\omega}_{t+s})]=\alpha^{\vec{\omega}}_{t+s}[O]\ ,
$$
since, 
\beann
&&
\hskip-.8cm
U^{(S)}_s(\vec{\omega}_t)U^{(S)}_t(\vec{\omega})=\mathbb{T}{\rm e}^{-i\int_0^s{\rm d}u\, H^{(S)}(\vec{\omega}_{t+u})}\,\mathbb{T}{\rm e}^{-i\int_0^t{\rm d}u\, H^{(S)}(\vec{\omega}_u)}=\mathbb{T}{\rm e}^{-i\int_t^{t+s}{\rm d}u\, H^{(S)}(\vec{\omega}_u)}\,\mathbb{T}{\rm e}^{-i\int_0^t{\rm d}u\, H^{(S)}_u(\vec{\omega})}\\
&&\hskip 2.2cm
=\mathbb{T}{\rm e}^{-i\int_0^{t+s}{\rm d}u\, H^{(S)}(\vec{\omega}_u)}=U^{(S)}_{t+s}(\vec{\omega})\ .
\eeann
By taking the time-derivative of $\alpha^{\vec{\omega}}_t[O]$ at time $t=0$, a time-independent generator is obtained, of the form 
$$
\mathcal{K}=\int_{\cs}^{\oplus}{\rm d}\vec{\omega}\,\mathcal{K}_{\vec{\omega}}\ ,\qquad
\mathcal{K}_{\vec{\omega}}=\mathcal{K}^{drift}_{\vec{\omega}}\,+\,\mathcal{K}_{\vec{\omega}}^{qq}\ .
$$
It is a hybrid generator characterised by the absence of mixed classical-quantum contributions, by a purely classical drift part and a purely quantum contribution; explicitly, they read (compare~\eqref{auto3} at $t=0$):
$$
\mathcal{K}^{drift}_{\vec{\omega}}[O_f(\vec{\omega})]=\Big(\dot{\vec{\omega}}\cdot\partial_{\vec{\omega}}f(\vec{\omega})\Big)\,O\ ,\qquad
\mathcal{K}^{qq}_{\vec{\omega}}[O_f(\vec{\omega})]=f(\vec{\omega})\,\sum_{\mu,\nu=1}^{d^2}B_{\mu\nu}\,\omega_\nu\,\sum_{k=0}^{S-1}\Big[v_\mu^{(k)}\,,\,O\Big]\ .
$$
}\qed
\end{remark}

\section{Conclusions}

We have considered a quantum spin chain subjected to a purely dissipative mean-field
quantum dynamics. By endowing the quantum spin chain with a state not left invariant by the time-evolution, we studied the infinite volume limit of the latter on three algebras of observables.
ù
The first algebra consists of commuting macroscopic averages that behave as classical degrees of freedom obeying macroscopic equations of motions; the second algebra, build from quasi-local spin operators, despite the dissipative character of the microscopic dynamics, undergoes a unitary time-evolution with a homogeneous time-dependent hamiltonian.
Finally, the third class of observables taken into consideration represents a mesoscopic description level associated with suitable quantum fluctuations showing a collective bosonic behaviour. Due to the time-dependence of the canonical commutation relations obeyed by the fluctuations operators, the mesoscopic degrees of freedom also behave dissipatively, but their dynamics is not directly interpretable in terms of linear, completely positive maps.

We have thus extended the algebra of quantum fluctuation to accommodate the fact that
macroscopic averages and quantum fluctuations are both dynamical variables.
The issue is not only mathematically interesting, but also of physical relevance since in almost all experimental setups the macroscopic properties of the system actually vary in time.

On the extended algebra the non-linear fluctuations dynamics becomes linear, Gaussian and completely positive, giving rise to hybrid dynamical semigroups. 
Quantum fluctuations have also been experimentally investigated probing systems made of large number of atoms, and quantum effects have been reported \cite{Krauter,Julsgaard,Hammerer}. Collective spin operators of these atomic many-body systems, once scaled by the inverse square root of the number of particles, have been observed to obey a bosonic algebra. For this reason, they have been named {\sl mechanical oscillators}: they might provide a suitable concrete physical scenario where to test the theoretical results here reported.

\section{Proofs}
\label{sec5}

We first prove Theorem~\ref{limdyn} which provides the unitary dynamics of quasi-local observables and then Theorem~\ref{expfluct} which establishes the form of the dissipative dynamics of quantum fluctuations.

\subsection{Dynamics of local observables}
\label{Applemmacors}

We begin with the proof of Lemma~\ref{LhN} which provides a bound on the norm of the 
action of powers of the generator $\LL^{(N)}$ in~\eqref{mflind1a} on products, $P^{(N)}$, of mean-field and strictly local operators.
Consequences of this fact are Corollary~\ref{exg} which asserts that the series
$$
\gamma^{(N)}_t[P^{(N)}]={\rm e}^{t\,\LL^{(N)}}[P^{(N)}]=\sum_{k=0}^\infty\frac{t^k}{k!}\,\left(\LL^{(N)}\right)^k[P^{(N)}]\ ,
$$ 
converges uniformly in $N$ for $t\geq 0$ in a suitable finite interval of time, and 
Corollary~\ref{spt} which states that $\gamma^{(N)}$ behaves almost automorphically on products of $P^{(N)}$. 
These two latter facts will then be used  to derive firstly the time-evolution of microscopic averages in Proposition~\ref{mfdyn} and then the dynamics of quasi-local operators of the quantum spin chain in Theorem~\ref{limdyn}.
\medskip

\begin{lemma}
Let $P^{(N)}\in\ca$ be a spin operator of the form 
\be
\label{specop}
P^{(N)}=X_1^{(N)}\,X_2^{(N)}\,\dots\, X_{m-1}^{(N)}\,O\,X_m^{(N)}\,\dots\, X_p^{(N)}\ ,
\ee
where $O$ is a strictly local operator and $\displaystyle X_j^{(N)}=\frac{1}{N}\sum_{k=0}^{N-1}x_j^{(k)}$ is a mean-field operator as in \eqref{macro} for all $1\leq j\leq p$. Then, 
$$
\label{boundlem1}
\|\left(\LL^{(N)}\right)^k[P^{(N)}]\|\le\frac{(k+p)!}{p!}\left(2\,c\, v^2\,\ell(O)\,(d^2+2d^4)\right)^k\,\|O\|\,x^p \ ,
$$
where $\LL^{(N)}$ is the generator in~\eqref{mflind1a}, $\ell(O)$ is, according to Definition~\ref{defsupp}, is the finite support of $O$ and
$$
c=\max_{\mu,\nu}\big\{|A_{\mu\nu}|,|B_{\mu\nu}|\,,\,|h^{(re)}_{\mu\nu}|\,,\,|h^{(im)}_{\mu\nu}|\,,\,
|\epsilon_\mu|\big\}\, ,\, v=\max_\mu\{\|v_\mu\|\}\,,\,
x=\max_{1\leq j\leq p}\{\|x_j\|\}\ .
$$
\label{LhN}
\end{lemma}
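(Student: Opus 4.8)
The plan is to prove the estimate by induction on $k$, the engine being the structural fact that $\LL^{(N)}$ sends every product of the form~\eqref{specop} to a finite sum of products of the same form, increasing the number of mean-field factors by at most one per application and never enlarging the support of the strictly local factor. The single observation that disentangles all the $N$-dependences is that the collective operators~\eqref{mflind1c} are $\sqrt N$ times mean-field operators: writing $\overline v_\mu^{(N)}:=\frac1N\sum_{k=0}^{N-1}v_\mu^{(k)}$ one has $V_\mu^{(N)}=\sqrt N\,\overline v_\mu^{(N)}$, hence $\|V_\mu^{(N)}\|\le\sqrt N\,v$, while each commutator $[V_\mu^{(N)},\,\cdot\,]$ produces a compensating factor $1/\sqrt N$.

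First I would establish a single-step estimate. Decomposing $\LL^{(N)}=\HH^{(N)}+\widetilde\AA^{(N)}+\widetilde\BB^{(N)}$ as in~\eqref{submap0}--\eqref{submap2}, every term acts through the commutators $[V_\mu^{(N)},\,\cdot\,]$. For a product $\Pi$ with $q$ mean-field factors and one strictly local operator $\widetilde O$ with support inside $\cs(O)$, the Leibniz rule splits $[V_\mu^{(N)},\Pi]$ into $q+1$ products, in each of which one factor is replaced by its commutator with $V_\mu^{(N)}$. Commuting with a mean-field factor $\frac1N\sum_k y^{(k)}$ gives $1/\sqrt N$ times the mean-field operator with single-site generator $[v_\mu,y]$, of norm $\le 2v\|y\|$; commuting with $\widetilde O$ gives $\frac1{\sqrt N}\sum_{k\in\cs(\widetilde O)}[v_\mu^{(k)},\widetilde O]$, still supported inside $\cs(O)$ and of norm $\le 2v\,\ell(O)\|\widetilde O\|$. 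Hence $\|[V_\mu^{(N)},\Pi]\|\le\frac1{\sqrt N}\,2v\,\ell(O)\,(q+1)\,\|\Pi\|$, the support of the strictly local part is unchanged, and the number of mean-field factors is preserved.

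The three pieces of $\LL^{(N)}$ then reinstate the $\sqrt N$: in $\HH^{(N)}=i\sqrt N\sum_\mu\epsilon_\mu[V_\mu^{(N)},\,\cdot\,]$ the explicit $\sqrt N$ cancels the $1/\sqrt N$; in $\widetilde\AA^{(N)}$ and $\widetilde\BB^{(N)}$ the multiplicative factor $V_\nu^{(N)}=\sqrt N\,\overline v_\nu^{(N)}$ does the same, the anticommutator in $\widetilde\BB^{(N)}$ appending one new mean-field factor $\overline v_\nu^{(N)}$ of norm $\le v$, while the commutator in $\widetilde\AA^{(N)}$ (which is in fact $O(1/N)$, though only an $O(1)$ bound is needed) leaves the factor count unchanged. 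Using $\|V_\nu^{(N)}\|\le\sqrt N\,v$, sub-multiplicativity of the norm, and $|\epsilon_\mu|,|A_{\mu\nu}|,|B_{\mu\nu}|,|h^{(re)}_{\mu\nu}|,|h^{(im)}_{\mu\nu}|\le c$, and collecting the contributions—$d^2$ from the single sum in $\HH^{(N)}$, $2d^4$ from the two double sums in $\widetilde\AA^{(N)}+\widetilde\BB^{(N)}$—yields
\be
\|\LL^{(N)}[\Pi]\|\le 2\,c\,v^2\,\ell(O)\,(d^2+2d^4)\,(q+1)\,\|\Pi\|\ ,
\ee
a bound in which the generators' norms cancel, leaving a dependence only on $\|\Pi\|$ and the factor count $q$. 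The induction is then immediate: the base case $k=0$ is $\|P^{(N)}\|\le\|O\|\prod_{j=1}^p\|X_j^{(N)}\|\le\|O\|\,x^p$; assuming $(\LL^{(N)})^j[P^{(N)}]=\sum_\alpha\Pi_\alpha$ with each $\Pi_\alpha$ of the form~\eqref{specop}, having at most $p+j$ mean-field factors and strictly local part of support $\le\ell(O)$, and with $\sum_\alpha\|\Pi_\alpha\|$ bounded by $\frac{(j+p)!}{p!}\big(2cv^2\ell(O)(d^2+2d^4)\big)^j\|O\|x^p$, applying the single-step bound with $q\le p+j$ to each $\Pi_\alpha$ multiplies the estimate by $(p+j+1)\cdot 2cv^2\ell(O)(d^2+2d^4)$ and raises the factor count to $\le p+j+1$; since $\prod_{j=0}^{k-1}(p+j+1)=\frac{(k+p)!}{p!}$, this closes the induction.

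The main obstacle is precisely the bookkeeping of the $N$-scalings: naively each commutator in $\widetilde\DD^{(N)}$ carries $1/\sqrt N$ and the dissipator would appear to vanish, so one must identify the multiplicative collective operator as $V_\nu^{(N)}=\sqrt N\,\overline v_\nu^{(N)}$ and bound $\|V_\nu^{(N)}\|\le\sqrt N\,v$ to expose the net $O(1)$ estimate. The second delicate point is that only $\widetilde\BB^{(N)}$ enlarges the number of factors, and by exactly one, so that the Leibniz branching at each step is \emph{linear} in the current factor count; this is what produces the factorial growth $(k+p)!/p!$ rather than a faster one, and ultimately what makes the exponential series $\sum_k \frac{t^k}{k!}(\LL^{(N)})^k[P^{(N)}]$ converge uniformly in $N$ on a finite time interval.
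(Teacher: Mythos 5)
Your overall strategy coincides with the paper's: a one-step analysis of the three pieces $\HH^{(N)}$, $\widetilde{\AA}^{(N)}$, $\widetilde{\BB}^{(N)}$ via the Leibniz rule; the bookkeeping observation that every $1/\sqrt{N}$ produced by a commutator $[V^{(N)}_\mu,\cdot\,]$ is compensated either by the explicit $\sqrt{N}$ in $\HH^{(N)}$ or by reading the multiplicative $V^{(N)}_\nu$ as $\sqrt{N}$ times a mean-field operator of norm at most $v$; the fact that only the anticommutator part appends a new mean-field factor and the support of the local part never grows; and the count of at most $(d^2+2d^4)(q+1)$ new monomials per application, which iterated gives $(k+p)!/p!\,$ times the $k$-th power of the constant. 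The paper phrases this as an iteration, you phrase it as an induction on $k$; these are the same argument.

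There is, however, one step that fails as literally written: the single-step inequality $\|\LL^{(N)}[\Pi]\|\le 2cv^2\ell(O)(d^2+2d^4)(q+1)\,\|\Pi\|$ cannot hold with $\|\Pi\|$ on the right-hand side. Each Leibniz term replaces one factor of $\Pi$ by its commutator with $V^{(N)}_\mu$, and its norm is controlled only by the \emph{product of the norms of the individual factors}, not by the norm of the product $\Pi$ itself, which can be strictly smaller (factors of a product can nearly annihilate each other, and nothing prevents $\|\Pi\|\ll \|X_1^{(N)}\|\cdots\|O\|\cdots\|X_p^{(N)}\|$). Consequently the induction cannot carry $\sum_\alpha\|\Pi_\alpha\|$ as its bookkeeping quantity. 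The repair is standard and short: to each monomial $\Pi_\alpha$ assign the weight $w(\Pi_\alpha)$ given by the product of the norm bounds of its factors (at most $x$ for the original mean-field factors, $2vx$ or $v$ for commutator-generated ones, $2v\,\ell(O)$ accumulating on the local part), note $\|\Pi_\alpha\|\le w(\Pi_\alpha)$, and run your induction on $\sum_\alpha w(\Pi_\alpha)$; your single-step estimate is valid at the level of weights, the factorial count is unchanged, and the claimed bound follows from $\|(\LL^{(N)})^k[P^{(N)}]\|\le\sum_\alpha w(\Pi_\alpha)$. This is precisely what the paper does when it bounds each monomial ``as if it consisted of the product of $h+p$ mean-field quantities and strictly local operators all supported within $\cs(O)$'', i.e.\ it states the estimate monomial-by-monomial in terms of factor norms rather than in terms of the norm of the evolved operator.
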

\medskip

\medskip

\begin{proof}
Firstly, let us consider the action on $P^{(N)}$ of 
$\mathbb{H}^{(N)}$ in~\eqref{submap0}: it consists of 
the sum of at most $d^2$ terms of the form 
\beann
\sqrt{N}\,\Big[V^{(N)}_\mu\,,\,P^{(N)}\Big]&=&
\sqrt{N}\,\sum_{j=1}^{m-1}X_1^{(N)}\cdots X_{j-1}^{(N)}\,\Big[V^{(N)}_\mu\,,\,X_j^{(N)}\Big]\,X_{j+1}^{(N)}\cdots X_{m-1}^{(N)}\,O\cdots X_p^{(N)}\\
&+&
\sqrt{N}\,X_1^{(N)}\,\cdots\,X_{m-1}^{(N)}\,\Big[V^{(N)}_\mu,\,\,O\Big]\,X_m^{(N)}\cdots
X_p^{(N)}\\
&+&\sqrt{N}\,\sum_{j=m}^{p}X_1^{(N)}\cdots X_{m-1}^{(N)}\,O\,X_m^{(N)}\cdots\Big[V^{(N)}_\mu\,,\,X_j^{(N)}\Big]\cdots X_p^{(N)}\ .
\eeann
Notice that the commutators 
\be
\label{comm1a}
\Big[V^{(N)}_\mu\,,\,O\Big]=\frac{1}{\sqrt{N}}\sum_{k=0}^{N-1}[v_\mu^{(k)}\,,\,O]=\frac{1}{\sqrt{N}}\sum_{k\in\mathcal{S}(O)}[v_\mu^{(k)}\,,\,O]
\ee
scale as fluctuation operators since the sum is fixed by the finite support of $O$, while commutators of the form
\be
\label{comm1b}
\Big[V^{(N)}_\mu\,,\,X_j^{(N)}\Big]=\frac{1}{\sqrt{N}}\frac{1}{N}\sum_{k=0}^{N-1}
\left[v^{(k)}_\mu\,,\,x^{(k)}_j\right]
\ee
scale as mean-field operators further multiplied by $1/\sqrt{N}$. Therefore, the action of $\HH^{(N)}$ on $P^{(N)}$ reduces to the sum of at most $d^2(p+1)$ monomials consisting of the products of a local operator and $p$ mean-field operators multiplied by the coefficients $\epsilon_\mu$. Moreover,
$$
\left\|X_j^{(N)}\right\|\leq x\ ,\ 
\left\|\frac{1}{N}\sum_{k=0}^{N-1}\left[v^{(k)}_\mu\,,\,x^{(k)}_j\right]\right\|\leq 2\,v\,\,x\ ,\ \left\|\sum_{k\in\mathcal{S}(O)}\Big[v_\mu^{(k)}\,,\,O\Big]\right\|\leq 2\,v\,\ell(O)\,\|O\|\ .
$$
On the other hand, $\widetilde{\DD}^{(N)}$ yields sums of at most $d^4$ terms of the form 
$[V^{(N)}_\mu\,,\,P^{(N)}]\,V^{(N)}_\nu$ and $V^{(N)}_\mu\,[P^{(N)}\,,\,V^{(N)}_\nu]$.
Then, the factor $1/\sqrt{N}$ in~\eqref{comm1a} and~\eqref{comm1b} can be used to turn the operator $V_\nu^{(N)}$ that scales as a fluctuation operator, into a mean-field one $V_\nu^{(N)}/\sqrt{N}$.
It thus follows that the action of the generator gives rise to the sum of at most
$2\,d^4(p+1)$ monomials consisting of the products of a local operator and $p+1$ mean-field operators multiplied by either the coefficients $\widetilde{A}_{\mu\nu}$ or 
$\widetilde{B}_{\mu\nu}$. With respect to the monomials contributed by $\HH^{(N)}$, they contain one additional term,
$$
\left\|\frac{V^{(N)}_\mu}{\sqrt{N}}\right\|\leq v\ .
$$ 
Since, by~\eqref{ONB}, $\|v_\mu\|\leq1$, it follows that $v=\max_\mu\{\|v_\mu\|\}\leq1$; thus, the norms of the monomials provided by $\HH^{(N)}$ can be bounded as those provided by $\widetilde{\DD}$. 
Therefore, one can estimate the norm of the action of $\LL^{(N)}$ by means of the norms of $d^2+2\,d^4$ monomials containing $(p+1)$ mean-field operators and a single local one. Furthermore, one sees that the monomials not containing commutators with the local operator $O$ are bounded by
$2\,v^2\,\|O\|\,x^p$ while those containing it by $2\,v^2\,\ell(O)\,\|O\|\,x^p$.

Iterating this argument, $\left(\LL^{(N)}\right)^h[P^{(N)}]$ will then contain at most 
$\displaystyle(d^2+2d^{4})^h\frac{(p+h)!}{p!}$ monomials, each one with a norm that can be upper bounded as if consisted of the product of $h+p$ mean-field quantities and strictly local operators all supported within $\cs(O)$ and thus by at most $\ell(O)$ sites.
Finally, the result follows since each of the coefficients multiplying the monomials is bounded from above by $2c$ and the worst case scenario is when all successive commutators act on local operators as each of them provides a factor $\ell(O)>1$.
\qed
\end{proof}
\bigskip

The previous Lemma can now be used to show that $\gamma^{(N)}_t$ maps mean-field quantities into infinite sums of products of mean-field quantities that converge in norm for all times $t$ in a certain time interval $[0,R]$.
\medskip

\begin{corollary}
Let $P^{(N)}$ be as in \eqref{specop}; then, $\displaystyle\sum_{k=0}^\infty\frac{t^k}{k!}(\LL^{(N)})^k[P^{(N)}]$ converges in norm to $\gamma^{(N)}_t[P^{(N)}]$ for $0\leq t<R=(2\,c\, v^2\,\ell(O)\,(d^2+2\,d^4))^{-1}$.
Consider
$f_N(z)=\omega\left(a\,{\rm e}^{z\,\mathbb{L}^{(N)}}[P^{(N)}]\,b\right)$, where $z\in\mathbb{C}$ and $a,b\in\ca$ are strictly local operators.
Then, for $|z|<R$, 
$$
\lim_{N\to\infty}f_N(z)=\sum_{k=0}^{\infty}\frac{z^k}{k!}\lim_{N\to\infty}\omega\left(a\,\left(\mathbb{L}^{(N)}\right)^k[P^{(N)}]\,b\right)\ .
$$
\label{exg}
\end{corollary}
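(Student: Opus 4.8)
The plan is to derive both assertions from the norm bound of Lemma~\ref{LhN}, exploiting that the estimate there is uniform in $N$. Setting $\kappa := 2\,c\,v^2\,\ell(O)\,(d^2+2d^4)$, so that $R=\kappa^{-1}$, the Lemma rewrites as
\[
\left\|\frac{t^k}{k!}\left(\LL^{(N)}\right)^k[P^{(N)}]\right\|\le \|O\|\,x^p\,\binom{k+p}{p}\,(t\kappa)^k ,
\]
using $(k+p)!/(p!\,k!)=\binom{k+p}{p}$. First I would recall the generating identity $\sum_{k\ge0}\binom{k+p}{p}\zeta^k=(1-\zeta)^{-(p+1)}$, valid for $|\zeta|<1$: it shows the majorant series converges exactly when $t\kappa<1$, i.e. for $0\le t<R$, with a bound independent of $N$. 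Since on each finite-dimensional algebra $\ca_{[0,N-1]}$ the exponential series defining $\gamma^{(N)}_t=\exp(t\LL^{(N)})$ converges in norm to $\gamma^{(N)}_t[P^{(N)}]$, the content of the first claim is precisely this $N$-uniform control of the convergence on the common interval $[0,R)$ for operators of the form~\eqref{specop}.

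For the second assertion I would fix $z$ with $|z|<R$, set $r:=|z|\kappa<1$, and, using that $\omega(a\,\cdot\,b)$ is bounded and the exponential series converges in norm, expand $f_N$ as a power series in $z$,
\[
f_N(z)=\sum_{k=0}^\infty \frac{z^k}{k!}\,\omega\!\left(a\,\left(\LL^{(N)}\right)^k[P^{(N)}]\,b\right).
\]
The key point is that each summand admits the $N$-independent majorant
\[
\left|\frac{z^k}{k!}\,\omega\!\left(a\,\left(\LL^{(N)}\right)^k[P^{(N)}]\,b\right)\right|
\le \|a\|\,\|b\|\,\|O\|\,x^p\,\binom{k+p}{p}\,r^k=:M_k ,
\]
with $\sum_{k\ge0}M_k<\infty$.

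It then remains to verify the existence of the term-wise limits and to interchange limit and sum. For the former I would invoke the structure exhibited inside the proof of Lemma~\ref{LhN}: each $\left(\LL^{(N)}\right)^k[P^{(N)}]$ is a finite sum of monomials consisting of mean-field operators times a single strictly local operator, and the expectation of such a monomial placed between the strictly local $a,b$ converges as $N\to\infty$ by the weak limits~\eqref{macro1bis} together with the clustering of $\omega$. With the term-wise limits secured and the uniform summable majorant $\{M_k\}$ in hand, the dominated convergence theorem for series (Tannery's theorem) yields
\[
\lim_{N\to\infty}f_N(z)=\sum_{k=0}^\infty\frac{z^k}{k!}\lim_{N\to\infty}\omega\!\left(a\,\left(\LL^{(N)}\right)^k[P^{(N)}]\,b\right),
\]
as claimed. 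I expect the interchange itself to be routine once the majorant is uniform in $N$; the real substance lies in the $N$-uniformity of Lemma~\ref{LhN} and in the clustering-based convergence of the individual coefficients, which is where the hypotheses on the state $\omega$ enter decisively.
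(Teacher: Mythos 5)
Your proof follows essentially the same route as the paper's: the $N$-uniform bound of Lemma~\ref{LhN} provides a summable majorant for $|z|<R$, which justifies exchanging the large-$N$ limit with the power series. The only difference is that you make explicit two points the paper leaves implicit — the summability of the majorant $\sum_k\binom{k+p}{p}r^k$ via the generating-function identity, and the existence of the term-wise limits through the clustering properties of $\omega$ applied to the monomials produced by $\big(\LL^{(N)}\big)^k$ — which fills in, rather than alters, the argument.
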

\bigskip

\begin{proof}
Given the power series expansion 
$\displaystyle\sum_{k=0}^{\infty}\frac{z^k}{k!}(\mathbb{L}^{(N)})^k[P^{(N)}]$, $z\in\CC$, from Lemma \ref{LhN} it follows that
$$
\left\|\frac{z^k}{k!}\,(\mathbb{L}^{(N)})^k[P^{(N)}]\,\right\|\leq
\left(\frac{|z|}{R}\right)^k\frac{(k+p)!}{k!p!}\,\|O\|\,\|x\|^p\ .
$$
Since the bound is independent of $N$, the convergence is uniform in $N$ for all
$|z|<R$ and one can exchange the infinite sum with the large $N$ limit. 
\qed
\end{proof}
\bigskip

Using the previous corollary one can show that the dissipative dynamics of products of operators of the form $P^{(N)}$ factorizes in the large $N$ limit, despite the fact that, for each finite $N$, the time-evolution is not an automorphism of $\ca$. 
\medskip

\begin{corollary}
For all $0<t<R=(2\,c\, v^2\,\ell(O)\,(d^2+2\,d^4))^{-1}$ and operators $P^{(N)}$ and $Q^{(N)}$ of the form \eqref{specop},
$$
\left\|\gamma^{(N)}_t[P^{(N)}\,Q^{(N)}]\,-\,\gamma^{(N)}_t[P^{(N)}]\,\gamma^{(N)}_t[Q^{(N)}]\right\|\,=\,O\left(\frac{1}{N}\right)\ .
$$
\label{spt}
\end{corollary}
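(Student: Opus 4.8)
The plan is to control the difference
$G^{(N)}_t:=\gamma^{(N)}_t[P^{(N)}Q^{(N)}]-\gamma^{(N)}_t[P^{(N)}]\,\gamma^{(N)}_t[Q^{(N)}]$
by a Duhamel (variation-of-constants) argument resting on the Leibniz-defect identity~\eqref{Lind0}. Writing $P_s:=\gamma^{(N)}_s[P^{(N)}]$, $Q_s:=\gamma^{(N)}_s[Q^{(N)}]$ and $\Delta^{(N)}(x,y):=\sum_{\mu,\nu}C_{\mu\nu}[V^{(N)}_\mu,x][y,V^{(N)}_\nu]$ for the failure of $\LL^{(N)}$ to be a derivation, identity~\eqref{Lind0} reads $\LL^{(N)}[P_sQ_s]=\LL^{(N)}[P_s]\,Q_s+P_s\,\LL^{(N)}[Q_s]+\Delta^{(N)}(P_s,Q_s)$. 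Differentiating $G^{(N)}_t$ and using $\partial_t\gamma^{(N)}_t=\LL^{(N)}\circ\gamma^{(N)}_t$ then produces the linear inhomogeneous equation $\partial_t G^{(N)}_t=\LL^{(N)}[G^{(N)}_t]+\Delta^{(N)}(P_t,Q_t)$ with $G^{(N)}_0=0$, whence $G^{(N)}_t=\int_0^t\gamma^{(N)}_{t-s}\big[\Delta^{(N)}(P_s,Q_s)\big]\,{\rm d}s$. Since $t-s\ge 0$, the contraction property of $\gamma^{(N)}_t$ recalled in Remark~\ref{remgen} gives at once $\big\|G^{(N)}_t\big\|\le\int_0^t\big\|\Delta^{(N)}(P_s,Q_s)\big\|\,{\rm d}s$.

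It therefore remains to show $\|\Delta^{(N)}(P_s,Q_s)\|=O(1/N)$ uniformly for $s\in[0,t]$; since $\|\Delta^{(N)}(P_s,Q_s)\|\le\sum_{\mu,\nu}|C_{\mu\nu}|\,\|[V^{(N)}_\mu,P_s]\|\,\|[Q_s,V^{(N)}_\nu]\|$, this reduces to the single key bound $\|[V^{(N)}_\mu,\gamma^{(N)}_s[P^{(N)}]]\|=O(1/\sqrt N)$ uniformly for $s\le t<R$ (and its analogue for $Q$). This is the technical core of the argument, and the place where Lemma~\ref{LhN} and Corollary~\ref{exg} enter. I would expand $\gamma^{(N)}_s[P^{(N)}]$ in the norm-convergent series of Corollary~\ref{exg} and recall from the proof of Lemma~\ref{LhN} that each $(\LL^{(N)})^k[P^{(N)}]$ is a finite sum of monomials, every one a product of a single strictly local factor supported in $\mathcal{S}(O)$ and $p+k$ mean-field factors. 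Commuting $V^{(N)}_\mu$ through such a monomial by the Leibniz rule produces at most $p+k+1$ terms, and each elementary commutator carries an explicit factor $1/\sqrt N$: indeed $[V^{(N)}_\mu,X_j^{(N)}]=\frac{1}{\sqrt N}\cdot\frac{1}{N}\sum_k[v^{(k)}_\mu,x^{(k)}_j]$ has norm $\le 2v\,x/\sqrt N$, while $[V^{(N)}_\mu,O']=\frac{1}{\sqrt N}\sum_{k\in\mathcal{S}(O)}[v^{(k)}_\mu,O']$ has norm $\le 2v\,\ell(O)\,\|O'\|/\sqrt N$.

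Collecting these estimates, the commutator of $V^{(N)}_\mu$ with the $k$-th term of the series is bounded by $1/\sqrt N$ times the Lemma~\ref{LhN} bound, multiplied by the extra polynomial factor $p+k+1$ coming from the number of Leibniz terms. Since this polynomial prefactor does not change the radius of convergence (by the ratio test), the resulting series still converges for $s<R$, uniformly in $N$, and is $O(1/\sqrt N)$; this yields the desired control of $\|[V^{(N)}_\mu,P_s]\|$ and, symmetrically, of $\|[Q_s,V^{(N)}_\nu]\|$. Feeding these into the estimate for $\|\Delta^{(N)}(P_s,Q_s)\|$ gives $O(1/N)$ uniformly in $s\in[0,t]$, and integrating over $[0,t]$ returns $\|G^{(N)}_t\|=O(1/N)$, as claimed. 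Two minor points must be dealt with along the way: the product $P^{(N)}Q^{(N)}$ carries two strictly local factors rather than one, so Lemma~\ref{LhN} and Corollary~\ref{exg} should be invoked in the mildly generalized form with $\ell(O)$ replaced by the cardinality of the union of the two supports (the combinatorics being unchanged); and the contraction bound is used only at nonnegative times $t-s$. The main obstacle is exactly the uniform-in-$N$ summation of the commutator series: one must check that inserting the factor $1/\sqrt N$ survives the sum over $k$ in spite of the additional $(p+k+1)$ growth, for it is precisely this that makes the non-derivation defect genuinely of order $1/N$ rather than merely vanishing.
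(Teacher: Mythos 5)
Your proposal is correct and follows essentially the same route as the paper's proof: the same Duhamel/interpolation identity built on the Leibniz-defect relation \eqref{Lind0}, the contraction property of $\gamma^{(N)}_t$, and a term-by-term estimate of $\big[V^{(N)}_\mu\,,\,\gamma^{(N)}_s[P^{(N)}]\big]$ through the norm-convergent series of Corollary~\ref{exg}, using that commutation with the collective operators turns mean-field factors into mean-field factors and strictly local factors into strictly local factors with no larger support. The only difference is bookkeeping: the paper pulls the overall $1/N$ out by writing $V^{(N)}_\mu=\frac{1}{\sqrt N}\sum_k v^{(k)}_\mu$ and bounds the unnormalized commutators uniformly in $N$, whereas you keep the $1/\sqrt N$ attached to each commutator factor — the estimates coincide.
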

\bigskip

\begin{proof}
The norm of the difference we want to show to vanish in the large $N$ limit, can be recast as
$$
I=\left\|
\int_0^t{\rm d}s\ \frac{{\rm d}}{{\rm d}s}\Big(\gamma^{(N)}_s\Big[
\gamma^{(N)}_{t-s}\Big[P^{(N)}\Big]\,
\gamma^{(N)}_{t-s}\Big[Q^{(N)}\Big]\Big]\Big)
\right\|=\left\|\int_0^t{\rm d}s\,\gamma^{(N)}_s[D_N(t-s)]\right\|\ ,
$$
where, using~\eqref{Lind0},
\beann
D_N(t-s)&=&\LL^{(N)}\Big[\gamma^{(N)}_{t-s}\Big[P^{(N)}\Big]\,
\gamma^{(N)}_{t-s}\Big[Q^{(N)}\Big]\Big]\,-\,
\LL^{(N)}\Big[\gamma^{(N)}_{t-s}\Big[P^{(N)}\Big]\Big]\,
\gamma^{(N)}_{t-s}\Big[Q^{(N)}\Big]\\
&-&\gamma^{(N)}_{t-s}\Big[P^{(N)}\Big]\,\LL^{(N)}\Big[
\gamma^{(N)}_{t-s}\Big[Q^{(N)}\Big]\Big]\\
&=&
\sum_{\mu,\nu=1}^{d^2}C_{\mu\nu}\frac{1}{N}\left[\sum_{k=0}^{N-1}v_\mu^{(k)}\,,\,\gamma^{(N)}_{t-s}\left[P^{(N)}\right]\right]\,\left[\gamma^{(N)}_{t-s}\left[Q^{(N)}\right]\,,\,\sum_{h=0}^{N-1}v_\nu^{(h)}\right]\ .
\eeann

Since $\gamma_t^{(N)}$ is a contraction for any $N\geq 1$ (see Remark~\ref{remgen}.1), $\left\|\gamma_t^{(N)}[X]\right\|
\leq\|X\|$, whence one estimates
$I\leq\int_0^t{\rm d}s\,\left\|D_N(s)\right\|$, where
$$
\left\|D_N(s)\right\|\leq
\sum_{\mu,\nu=1}^{d^2}\,\Big|C_{\mu\nu}\Big|\,\frac{1}{N}\left\|\left[\sum_{k=0}^{N-1}v_\mu^{(k)}\,,\,\gamma^{(N)}_{s}\Big[P^{(N)}\Big]\right]\right\|\left\|\left[\sum_{h=0}^{N-1}v_\nu^{(h)}\,,\,\gamma^{(N)}_{s}\Big[Q^{(N)}\Big]\right]\right\|\ .
$$
The result then follows by showing that
$$
\lim_{N\to\infty}\left\|\left[\sum_{k=0}^{N-1}v_\mu^{(k)}\,,\,\gamma^{(N)}_s\left[P^{(N)}\right]\right]\right\|\leq \lim_{N\to\infty}\sum_{\ell=0}^{\infty}\frac{s^\ell}{\ell!}\left\|\left[\sum_{k=0}^{N-1}v_\mu^{(k)},{\mathbb{L}^\ell}_N[P^{(N)}]\right]\right\|<\infty\ .
$$
In order to prove it, one can use the argument of the proof of Lemma \ref{exg}: operators from different sites commute, hence commutators of $\displaystyle\sum_{k=0}^{N-1}v_\mu^{(k)}$ with mean-field operators yield mean-field operators, while commutators of $\displaystyle\sum_{k=0}^{N-1}v_\mu^{(k)}$ with local operators yield local operators with the same or a smaller support. Therefore, the radius of norm-convergence with respect to $s\geq 0$ of  
$$
\sum_{\ell=0}^{\infty}\frac{s^\ell}{\ell!}\left\|\left[\sum_{k=0}^{N-1}v_\mu^{(k)},\mathbb{L}^\ell_N[P^{(N)}]\right]\right\|
$$
can be estimated by the $R$ in the previous corollary.
\qed
\end{proof}
\bigskip

In order to proceed with the proof of Theorem~\ref{limdyn}, we first derive the time evolutions of the macroscopic averages introduced in Definition~\ref{defmacro}.
\medskip

\begin{proposition}
Let $\omega$ be a translation-invariant, clustering state on the quasi-local algebra $\ca$ and $\LL^{(N)}$ the dissipative Lindblad generator in \eqref{mflind1a} with $R$ as in Corollary \ref{exg}. Then, for $0\leq t<R$, the macroscopic averages in \eqref{macr1a}  evolve according to the set of non-linear equations
\beann
\nonumber
\frac{{\rm d}}{{\rm d}t}\omega_\alpha(t)&=&\sum_{\mu=1}^{d^2}\Big(\sum_{\nu=1}^{d^2}\widetilde{B}_{\mu\nu}\,\omega_\nu(t)\,+\,i\,\epsilon_\mu\Big)\,\omega_{\mu\alpha}(t)\\
&=&\sum_{\mu,\gamma=1}^{d^2}\Big(\sum_{\nu=1}^{d^2}\widetilde{B}_{\mu\nu}\,\omega_\nu(t)\,+\,i\,\epsilon_\mu\Big)\,J^\mu_{\alpha\gamma}\,\,\omega_\gamma(t)\ ,\quad\forall \alpha=1,2,\ldots d^2\ .
\eeann
\label{mfdyn} 
\end{proposition}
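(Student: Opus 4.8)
The plan is to differentiate $\omega^{(N)}_\alpha(t)=\omega\big(\gamma^{(N)}_t[X^{(N)}_\alpha]\big)$, with $X^{(N)}_\alpha=\frac1N\sum_{k=0}^{N-1}v^{(k)}_\alpha$, at finite $N$, to evaluate $\LL^{(N)}[X^{(N)}_\alpha]$ term by term using the decomposition $\LL^{(N)}=\HH^{(N)}+\widetilde{\AA}^{(N)}+\widetilde{\BB}^{(N)}$ of \eqref{mflind1a}--\eqref{submap2}, and then to pass to the large-$N$ limit. First I would record the finite-$N$ identity $\frac{{\rm d}}{{\rm d}t}\omega^{(N)}_\alpha(t)=\omega\big(\gamma^{(N)}_t[\LL^{(N)}[X^{(N)}_\alpha]]\big)$, valid since $\gamma^{(N)}_t={\rm e}^{t\LL^{(N)}}$ commutes with its generator. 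The interchange of $\frac{{\rm d}}{{\rm d}t}$ with $\lim_{N\to\infty}$ on $[0,R)$ is then justified by Corollary~\ref{exg}: viewing $\omega^{(N)}_\alpha(z)$ as a power series in complex $z$, its convergence is uniform in $N$ for $|z|<R$ (with $O=\mathbf{1}$, $p=1$ in \eqref{specop}), so the limit is analytic there and may be differentiated term by term.

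The core is the explicit evaluation of the three contributions to $\LL^{(N)}[X^{(N)}_\alpha]$, where the site-locality $[x^{(j)},y^{(\ell)}]=\delta_{j\ell}[x,y]^{(j)}$ controls the scalings. Setting $T^{(N)}_{\mu\alpha}:=\frac1N\sum_{k=0}^{N-1}[v^{(k)}_\mu,v^{(k)}_\alpha]$, a mean-field operator, one gets $[V^{(N)}_\mu,X^{(N)}_\alpha]=\frac{1}{\sqrt N}\,T^{(N)}_{\mu\alpha}$ and $V^{(N)}_\nu=\sqrt N\,X^{(N)}_\nu$. Hence, from \eqref{submap0}, $\HH^{(N)}[X^{(N)}_\alpha]=i\sum_\mu\epsilon_\mu T^{(N)}_{\mu\alpha}$, whose $\omega\circ\gamma^{(N)}_t$-expectation is exactly $i\sum_\mu\epsilon_\mu\,\omega^{(N)}_{\mu\alpha}(t)$ by \eqref{macr1b}, converging to $i\sum_\mu\epsilon_\mu\,\omega_{\mu\alpha}(t)$. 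A second commutator with $V^{(N)}_\nu$ in \eqref{submap1} produces a further $1/\sqrt N$ against a mean-field operator, so $\widetilde{\AA}^{(N)}[X^{(N)}_\alpha]=O(1/N)$ in norm and, $\gamma^{(N)}_t$ being a contraction (Remark~\ref{remgen}), it drops out of the limit. Finally the anti-commutator structure of \eqref{submap2} preserves the scaling: $\widetilde{\BB}^{(N)}[X^{(N)}_\alpha]=\frac12\sum_{\mu,\nu}\widetilde{B}_{\mu\nu}\,\{T^{(N)}_{\mu\alpha},X^{(N)}_\nu\}$, a sum of products of two mean-field operators.

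The delicate step, which I expect to be the main obstacle, is evaluating $\lim_{N\to\infty}\omega\big(\gamma^{(N)}_t[\{T^{(N)}_{\mu\alpha},X^{(N)}_\nu\}]\big)$. Here I would invoke Corollary~\ref{spt} to factorize the dissipative evolution of the product, $\gamma^{(N)}_t[T^{(N)}_{\mu\alpha}X^{(N)}_\nu]=\gamma^{(N)}_t[T^{(N)}_{\mu\alpha}]\,\gamma^{(N)}_t[X^{(N)}_\nu]+O(1/N)$, and likewise for the reversed ordering. Expanding both evolved factors through Corollary~\ref{exg} into series of products of mean-field operators that converge uniformly in $N$, each monomial is again a product of mean-field operators, whose $\omega$-expectation converges, by the clustering mechanism of Appendix~A leading to \eqref{macro1bis}, to the product of the corresponding scalar limits. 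Re-summing, both orderings give $\omega_{\mu\alpha}(t)\,\omega_\nu(t)$, so the anti-commutator contributes $\sum_{\mu,\nu}\widetilde{B}_{\mu\nu}\,\omega_\nu(t)\,\omega_{\mu\alpha}(t)$. Collecting the three pieces yields the first displayed equation, and substituting $\omega_{\mu\alpha}(t)=\sum_\gamma J^\gamma_{\mu\alpha}\,\omega_\gamma(t)$ from \eqref{aid1} gives the second. The subtlety throughout is bookkeeping the $1/\sqrt N$ factors so that only the free term and the anti-commutator survive while every double commutator vanishes, and making sure the factorization-plus-clustering argument legitimately replaces the \emph{evolved} mean-field operators by their scalar limits inside a product expectation.
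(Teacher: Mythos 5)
Your proposal is correct and follows essentially the same route as the paper's proof: the same decomposition $\LL^{(N)}=\HH^{(N)}+\widetilde{\AA}^{(N)}+\widetilde{\BB}^{(N)}$, the same use of Corollary~\ref{exg} to justify exchanging the large-$N$ limit with the time-derivative on $[0,R)$, the vanishing of $\widetilde{\AA}^{(N)}$ by its $O(1/N)$ norm and contractivity, and the factorization of the $\widetilde{\BB}^{(N)}$ anti-commutator via Corollary~\ref{spt} plus clustering, followed by the substitution $\omega_{\mu\alpha}(t)=\sum_\gamma J^\gamma_{\mu\alpha}\,\omega_\gamma(t)$ from \eqref{aid1}. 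No gaps.
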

\medskip

\medskip

\begin{proof}
Consider the expression of the generator $\LL^{(N)}$ as given in~\eqref{mflind1a}, 
Corollary~\ref{exg} states that, for all $0\leq t<R$, the series in~\eqref{macr1a}, obtained by expanding $\gamma_t^{(N)}$, converges uniformly in $N$; one can then  exchange the large $N$ limit with the time-derivative obtaining:
$$
\frac{{\rm d}}{{\rm d}t}\omega_\alpha(t)=\lim_{N\to\infty}\omega\left(\gamma^{(N)}_t\left[\Big(\HH^{(N)}\,+\,\widetilde{\AA}^{(N)}\,+\,\widetilde{\BB}^{(N)}\Big)\Big[\frac{1}{N}\sum_{k=0}^{N-1}v_\alpha^{(k)}\Big]\right]\right)\ ,
$$
where 
\bea
\label{maph}
\HH^{(N)}\left[\frac{1}{N}\sum_{k=0}^{N-1}v_\alpha^{(k)}\right]
&=&i\,\sum_{\mu=1}^{d^2} \epsilon_\mu\,\frac{1}{N}\sum_{k=0}^{N-1}\,[v^{(k)}_\mu\,,\,v^{(k)}_\alpha]\\
\label{mapA}
\widetilde{\AA}^{(N)}\left[\frac{1}{N}\sum_{k=0}^{N-1}v_\alpha^{(k)}\right]&=&
\frac{1}{2N^2}\sum_{\mu,\nu=1}^{d^2}\sum_{k=0}^{N-1}\widetilde{A}_{\mu\nu}\,\left[\left[v^{(k)}_\mu\,,\,v^{(k)}_\alpha\right]\,,\,v^{(k)}_\nu\right]\\
\label{mapB}
\widetilde{\BB}^{(N)}\left[\frac{1}{N}\sum_{k=0}^{N-1}v_\alpha^{(k)}\right]&=&
\frac{1}{2N^2}\sum_{\mu,\nu=1}^{d^2}\sum_{k,\ell=0}^{N-1}\widetilde{B}_{\mu\nu}\left\{\left[v^{(k)}_\mu\,,\,v^{(k)}_\alpha\right]\,,\,v^{(\ell)}_\nu\right\}\ .
\eea
Using~\eqref{macr1b} and~\eqref{aid1}, the large $N$ limit of the hamiltonian contribution yields
$$
\lim_{N\to\infty}\omega\left(\gamma^{(N)}_t\left[\HH^{(N)}\Big[\frac{1}{N}\sum_{k=0}^{N-1}v_\alpha^{(k)}\Big]\right]\right)=i\,\sum_{\mu=1}^{d^2}\epsilon_\mu\omega_{\mu\alpha}(t)
=i\,\sum_{\mu,\beta=1}^{d^2}\epsilon_\mu\,J^\mu_{\alpha\beta}\,\omega_{\beta}(t)\ .
$$
Concerning the dissipative contribution, since $\gamma^{(N)}_t$ is a contraction one has
$$
\left|\omega\left(\gamma^{(N)}_t\left[\widetilde{\AA}^{(N)}\left[\frac{1}{N}\sum_{k=0}^{N-1}v_\alpha^{(k)}\right]\right]\right)\right|\leq \left\|\widetilde{\AA}^{(N)}\left[\frac{1}{N}\sum_{k=0}^{N-1}v_\alpha^{(k)}\right]\right\|\leq\frac{2\,v^3}{N}\,\sum_{\mu,\nu=1}^{d^2}|\,\widetilde{A}_{\mu\nu}|\,\ ,
$$
whence $\widetilde{\AA}^{(N)}$ does not contribute. 

On the other hand, using \eqref{mapB} and Corollary \ref{spt}, it follows that, in norm,
\beann
&&
\hskip-.5cm
\gamma^{(N)}_t\left[\widetilde{\BB}^{(N)}\left[\frac{1}{N}\sum_{k=0}^{N-1}v_\alpha^{(k)}\right]\right]
=
\frac{1}{2}\sum_{\mu,\nu=1}^{d^2}\widetilde{B}_{\mu\nu}\,\gamma^{(N)}_t\left[\left\{\frac{1}{N}\sum_{k=0}^{N-1}[v^{(k)}_\mu\,,\,v^{(k)}_\alpha]\,,\,\frac{1}{N}\sum_{\ell=0}^{N-1}v_\nu^{(\ell)}\right\}\right]\\
&&
\hskip.5cm
=\frac{1}{2}\sum_{\mu,\nu=1}^{d^2}\widetilde{B}_{\mu\nu}\,\left\{\gamma^{(N)}_t\left[\frac{1}{N}\sum_{k=0}^{N-1}[v^{(k)}_\mu\,,\,v^{(k)}_\alpha]\right]\,,\,\gamma^{(N)}_t\left[\frac{1}{N}\sum_{\ell=0}^{N-1}v_\nu^{(\ell)}\right]\right\}\,+\,O\left(\frac{1}{N}\right)\ .
\eeann
From Corollary~\ref{exg} one knows that, for $0\leq t<R$, mean-field operators are turned  into norm-convergent series of mean-field operators; moreover, these latter behave as stated in~\eqref{macro1} in the large $N$ limit. Then, using Corollary~\ref{spt} together with~\eqref{macr1a} and~\eqref{macr1b} one obtains
\begin{eqnarray*}
&&\hskip-2cm
\lim_{N\to\infty}\omega\left(\gamma^{(N)}_t\left[\widetilde{\BB}^{(N)}\left[\frac{1}{N}\sum_{k=0}^{N-1}v_\alpha^{(k)}\right]\right]\right)=\\
&&\hskip-1cm
=\sum_{\mu,\nu=1}^{d^2}\widetilde{B}_{\mu\nu}\lim_{N\to\infty}\omega\left(\gamma^{(N)}_t\left[\frac{1}{N}\sum_{k=0}^{N-1}[v^{(k)}_\mu\,,\,v^{(k)}_\alpha]\right]\,\gamma^{(N)}_t\left[\frac{1}{N}\sum_{\ell=0}^{N-1}v_\nu^{(\ell)}\right]\right)\\ 
&&\hskip-1cm
=\sum_{\mu,\nu=1}^{d^2}\widetilde{B}_{\mu\nu}\lim_{N\to\infty}\omega\left(\gamma^{(N)}_t\left[\sum_{k=0}^{N-1}\frac{([v_\mu,v_\alpha])^{(k)}}{N}\right]\right)\,\omega\left(\gamma^{(N)}_t\left[\sum_{\ell=0}^{N-1}\frac{v_\nu^{(\ell)}}{N}\right]\right)\\
&&\hskip-1cm
=\sum_{\mu,\nu=1}^{d^2}\widetilde{B}_{\mu\nu}\,\omega_{\mu\alpha}(t)\,\omega_\nu(t)=
\sum_{\mu,\beta,\nu=1}^{d^2}\widetilde{B}_{\mu\nu}\,\omega_\nu(t)\,J^\mu_{\alpha\beta}\,\omega_\beta(t)\ .
\label{estimate}
\end{eqnarray*}
\qed
\end{proof}
\bigskip

By means of the time-evolution of macroscopic averages, we move on to prove Theorem \ref{limdyn}: we first show that the result holds for times $0\leq t\leq R$, $R$ as in Corollary~\ref{exg}, and for strictly local operators and then relax these two constraints.
\bigskip

\noindent
\textbf{Theorem 2.}\quad
\textit{
Let the quasi-local algebra $\ca$ be equipped with a translation-invariant, clustering state $\omega$. In the large $N$ limit, the local dissipative generators $\LL^{(N)}$ in \eqref{mflind1a} define on $\ca$ a one-parameter family of automorphisms that depend on the state $\omega$ and are such that, for all $0\leq t\leq T$, $T\geq 0$ arbitrary, 
$$
\lim_{N\to\infty}\omega\left(a\,\gamma^{(N)}_t[O]\,b\right)=\omega\left(a\,\alpha_t[O]\,b\right)\ ,
$$
for all $a,b\in\ca$ and $O\in\ca$. If $O$ has finite support $\cs(O)\subset[0,S-1]$, then
\be
\label{auto2a}  
\alpha_t[O]=\big(U^{(S)}_t\big)^\dag\,O\,U^{(S)}_t\ ,\quad
U^{(S)}_t=\mathbb{T}{\rm e}^{-i\int_0^t{\rm d}s\, H^{(S)}_s}\ ,
\ee
with explicitly time-dependent hamiltonian
\be
\label{auto3aa}
H^{(S)}_t=-i\,\sum_{\mu}^{d^2}\sum_{k=0}^{S-1}\,\Big(\sum_{\nu=1}^{d^2}\widetilde{B}_{\mu\nu}\,\omega_\nu(t)\,+\,i\,\epsilon_\mu\Big)\,v_\mu^{(k)} \ ,
\ee
where 
$$
\mathbb{T}{\rm e}^{-i\int_0^t{\rm d}s\, H^{(S)}_s}=
\mathbf{1}+\sum_{k=1}^\infty
(-i)^k\int_0^t{\rm d}s_1\cdots\int_0^{s_{k-1}}{\rm d}s_k\,H^{(S)}_{s_1}\cdots H^{(S)}_{s_k}\ .
$$}
\medskip

\begin{proof}
Given $a,b\in\ca$ and $O$ strictly local with fixed finite support $\cs(O)=[0,S-1]$,
we consider the hamiltonian $H^{(S)}_t$ as in~\eqref{auto3aa}, set 
$$
O_t:=\left(U_t^{(S)}\right)^\dag\,O\,U^{(S)}_t\ ,\
D^{(N)}(t,s):=\frac{{\rm d}}{{\rm d}s}\gamma^{(N)}_s\left[O_{t-s}\right]
$$
and study the large $N$ limit of
\be
\label{Th2a}
I^{(N)}(t)=\omega\left(a\,\left(\gamma^{(N)}_t[O]\,-\,O_t\right)\,b\right)\\ 
=\int_0^t{\rm d}s\,\omega\left(a\,D^{(N)}(t,s)\,b\right)\ .
\ee
One finds
\bea
\nonumber
&&\hskip-1cm
D^{(N)}(t,s)=
\gamma^{(N)}_s\Bigg[\LL^{(N)}\left[O_{t-s}\right]-i\left[H^{(S)}_s\,,\,O_{t-s}\right]\Bigg]\\
\label{p30}
&&
=\gamma^{(N)}_s\left[\Big(\widetilde{\AA}^{(N)}\,+\,\widetilde{\BB}^{(N)}\Big)\left[O_{t-s}\right]\right]\,-\,\gamma^{(N)}_s\left[\sum_{k=0}^{S-1}\sum_{\mu,\nu=1}^{d^2}\widetilde{B}_{\mu\nu}\,\omega_\nu(s)\,v^{(k)}_\mu\,,\,O_{t-s}\right]\ .
\eea
Since $H^{(S)}_t$ is the sum of single-site contributions, 
$O_{t-s}$ is a strictly local operator with the same support as $O$.
Thus, as in the proof of the previous proposition, the action of $\widetilde{\AA}^{(N)}$ of the generator $\LL^{(N)}$ (see \eqref{submap1}) is such that
$\lim_{N\to\infty}\left\|\widetilde{\AA}^{(N)}\left[O_{t-s}\right]\right\|=0$. 
Instead, the contribution $\widetilde{\BB}^{(N)}$ in~\eqref{submap2} yields
$$
\widetilde{\BB}^{(N)}\left[O_{t-s}\right]=\sum_{\mu,\nu=1}^{d^2}\frac{\widetilde{B}_{\mu\nu}}{2}\left\{X^{(S)}_\mu(t-s)\,,\,\frac{1}{N}\sum_{k=0}^Nv^{(k)}_\nu\right\}\ ,\ X^{(S)}_\mu(t-s):=\left[\sum_{\ell=0}^{N-1}v^{(\ell)}_\mu\,,\,O_{t-s}\,\right]\ ,
$$
with $X^{(S)}_\mu(t-s)$ a strictly local operator with support
fixed by $O$. Then,
\beann
&&
\lim_{N\to\infty}\left\|\left\{X_\mu^{(S)}(t,s)\,,\,\frac{1}{N}\sum_{k=0}^Nv^{(k)}_\nu\right\}-2\,X_\mu^{(S)}(t,s)\,\frac{1}{N}\sum_{k=0}^Nv^{(k)}_\nu\right\|=0\qquad\hbox{yields}\\
&&
\lim_{N\to\infty}\left\|D^{(N)}(t,s)\right\|=\lim_{N\to\infty}\left\|\gamma^{(N)}_s\left[\sum_{\mu,\nu=1}^{d^2}\,B_{\mu\nu}\,X_\mu^{(S)}(t-s)\,Z_\nu^{(N)}(s)\right]\right\|\ ,\ \hbox{where}\\
&&
Z_\nu^{(N)}(s):=\frac{1}{N}\sum_{k=0}^{N-1}\Big(v_\nu^{(k)}\,-\,\omega_\nu(s)\Big)\ .
\eeann
Therefore, one can focus upon the limit
\beann
&&
\lim_{N\to\infty}\left|I^{(N)}(t)\right|\le\sum_{\mu,\nu=1}^{d^2}\left|B_{\mu\nu}\right|\int_0^t{\rm d}s\, 
\lim_{N\to\infty}\,\Big|I_{\mu\nu}^{(N)}(t,s)\Big|\qquad\hbox{where}\\
&&
I_{\mu\nu}^{(N)}(t,s):= \omega\left(a\,\gamma^{(N)}_s\left[Z^{(N)}_\nu(s)\, X^{(N)}_\mu(t,s)\right]\,b\right)\ .
\eeann
Using the Cauchy-Schwarz inequality and the Kadison inequality for completely positive maps $\gamma$, $\gamma[x^\dag x]\geq (\gamma[x])^\dag\,\gamma[x]$, we have:
$$
\Big|I^{(N)}_{\mu\nu}(t,s)\Big|\le\sqrt{\omega\left(a a^\dagger\right)}\,\Big\|X^{(S)}_\mu(t,s)\Big\|\,\sqrt{\omega\left(b^\dag\,
\gamma^{(N)}_s\left[\left(Z^{(N)}_\nu(s)\right)^2\right]\,b\right)} \ .
$$
Both $X_\mu^{(S)}(t,s)$ and $Z_\nu^{(N)}(s)$ have norms independent of $N$; moreover,  
$$
\lim_{N\to\infty}\omega\left(\gamma^{(N)}_s\left[Z^{(N)}_\nu(s)\right]\right)=0
$$ 
because of \eqref{macr1a} and of the fact that $\gamma^{(N)}_t[\mathbf{1}]=\mathbf{1}$. Furthermore, $Z_\nu^{(N)}(s)$ is a mean-field quantity, whence $\lim_{N\to\infty}I^{(N)}(t)=0$ follows from Corollary~\ref{spt} which yields
$$
\lim_{N\to\infty}\omega\left(a\,\gamma^{(N)}_s\left[\left(Z^{(N)}_\nu(s)\right)^2\right]\,a^\dagger\right)=\omega\left(a\,a^\dagger\right)\,
\left(\lim_{N\to\infty}\omega\left(\gamma^{(N)}_s[Z^{(N)}_\nu(s)]\right)\right)^2=0\ .
$$
The result just obtained is valid for $0\leq t< R$ and for strictly local operators $O$.
It can be extended to all times in compact subsets of the positive real line and to the whole  quasi-local algebra $\ca$.
While the norm-preserving maps $\alpha_t$, $0\leq t<R$, can be extended by continuity to the quasi-local algebra, the extension to any finite time $t\geq0$ is obtained by the following Proposition~\ref{extdyn}.
\qed

\end{proof}
\bigskip

The first extension regards the time domain and makes use of the following result
\cite{Normal Families}.

\medskip

\begin{theorem}{\bf (Vitali-Porter)}\quad
Let $\Omega$ be a connected open subset of the complex plane  $\mathbb{C}$ and
$D(z_0,r)=\left\{z:|z-z_0|<r,r>0\right\}$ an open disk about $z_0\in\mathbb{C}$.

Let $\left\{f_N(z)\right\}$ be a locally bounded sequence of analytic functions on $\Omega$, namely such that for all $z_0\in\Omega$ there is a positive number $M=M(z_0)$ and a neighbourhood $D(z_0,r)\subseteq \Omega$ such that $\left|f_N(z)\right|\le M$, for all $z\in D(z_0,r)$ and all $f_N$.

If $\lim_{N\to\infty}f_N(z)$ exists for all $z$ in a subset $E\subseteq\Omega$  which contains at least one accumulation point in $\Omega$, then $\lim_{N\to\infty}f_N(z)=f(z)$ 
uniformly on any compact subset of $\Omega$ where $f$ is then analytic.
\end{theorem}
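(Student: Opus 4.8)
The plan is to deduce the statement from two classical pillars of complex analysis: Montel's normality theorem, that a locally bounded family of holomorphic functions is precompact in the topology of uniform convergence on compact sets, and the identity theorem, that holomorphic functions agreeing on a set with an accumulation point in a connected domain coincide. The overall strategy consists of three moves: show the sequence is normal, so that subsequential locally uniform limits exist and are analytic; show all such limits must coincide because they agree on $E$; and then pass from ``every subsequence has a sub-subsequence converging to the same $f$'' to convergence of the full sequence.

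First I would establish equicontinuity on compact subsets from local boundedness, which is the genuinely analytic input. Fix a compact $K\subset\Omega$ and choose $r>0$ so small that the closed $2r$-neighbourhood $\overline{K_{2r}}$ of $K$ is a compact subset of $\Omega$; covering it by finitely many of the disks furnished by the hypothesis and taking the maximum of the corresponding bounds gives a constant $M$ with $|f_N|\le M$ on $\overline{K_{2r}}$ uniformly in $N$. For $z,w\in K$ with $|z-w|<r$, Cauchy's integral formula over the circle $|\zeta-z|=2r$ yields
$$
f_N(z)-f_N(w)=\frac{w-z}{2\pi i}\oint_{|\zeta-z|=2r}\frac{f_N(\zeta)}{(\zeta-z)(\zeta-w)}\,{\rm d}\zeta\ ,
$$
whence, since $|\zeta-w|>r$ on that circle, $|f_N(z)-f_N(w)|\le \frac{M}{r}\,|z-w|$, a Lipschitz bound uniform in $N$. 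Together with $|f_N|\le M$ this makes $\{f_N\}$ uniformly bounded and equicontinuous on $K$.

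Next, the Arzel\`a--Ascoli theorem yields that $\{f_N\}$ restricted to $K$ is relatively compact in $C(K)$; exhausting $\Omega$ by an increasing sequence of compacts and extracting a diagonal subsequence, every subsequence of $\{f_N\}$ admits a sub-subsequence converging uniformly on every compact subset of $\Omega$. Each such locally uniform limit is holomorphic by Weierstrass' theorem (equivalently, by Morera's theorem, since uniform convergence preserves the vanishing of contour integrals). This is precisely the normality of the family.

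Finally I would pin down the limit and upgrade subsequential convergence to full convergence. Let $g_1,g_2$ be limits of two locally uniformly convergent subsequences of $\{f_N\}$. On $E$ the pointwise limit $\lim_N f_N$ exists, so $g_1$ and $g_2$ agree on $E$; since $E$ has an accumulation point in the connected open set $\Omega$, the identity theorem forces $g_1\equiv g_2=:f$ on $\Omega$, with $f$ analytic. Thus every locally uniformly convergent subsequence has the same limit $f$. Were $\{f_N\}$ not locally uniformly convergent to $f$, some compact $K$, some $\eps>0$ and some subsequence would satisfy $\sup_K|f_{N_j}-f|\ge\eps$ for all $j$; but by normality this subsequence has a further sub-subsequence converging locally uniformly, necessarily to $f$, contradicting the bound. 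Hence $f_N\to f$ uniformly on compact subsets. The main obstacle is the equicontinuity estimate above: it is the one place where holomorphy, through Cauchy's formula, is genuinely used, and everything else is either soft compactness or the identity theorem.
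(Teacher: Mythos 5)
Your proof is correct, but note that the paper does not prove this statement at all: the Vitali--Porter theorem is quoted there as a classical result, cited from Schiff's \emph{Normal Families}, and is used as a black box to extend the convergence $\gamma^{(N)}_t\to\alpha_t$ beyond the radius $R$ of norm convergence. What you have supplied is the standard textbook proof of that classical theorem: Cauchy's integral formula turns local boundedness into a uniform Lipschitz bound on compacta, Arzel\`a--Ascoli plus a diagonal extraction gives normality (Montel's theorem), Weierstrass/Morera gives analyticity of subsequential limits, the identity theorem pins all such limits down to a single $f$ because they agree on $E$, and the ``every subsequence has a sub-subsequence converging to $f$'' argument upgrades this to locally uniform convergence of the full sequence. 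All steps are sound; the only blemish is a sign slip in your Cauchy-formula identity, where the prefactor should be $z-w$ rather than $w-z$ (since $\frac{1}{\zeta-z}-\frac{1}{\zeta-w}=\frac{z-w}{(\zeta-z)(\zeta-w)}$), which is immaterial because only the modulus enters the estimate $|f_N(z)-f_N(w)|\le \frac{M}{r}|z-w|$.
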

\medskip

\begin{proposition}
The convergence of the microscopic dynamics $\gamma^{(N)}_t$ to an automorphism $\alpha^\omega_t$ on strictly local operators $O\in\ca$ as established in Theorem \ref{limdyn} holds for all times $t\in[0,T]$ for any fixed $T\geq 0$.
\label{extdyn}
\end{proposition}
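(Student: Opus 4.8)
The plan is to analytically continue the microscopic dynamics in the time variable and apply the Vitali--Porter theorem to the resulting family of entire functions, using the short-time bound of Lemma~\ref{LhN} to produce local boundedness that is uniform in $N$ over a complex neighbourhood of the \emph{whole} half-line $[0,\infty)$. The convergence already established for $0\le t<R$ supplies the set with an accumulation point that Vitali--Porter requires, and analytic continuation will then transport it to arbitrary $T$.

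First I would fix $a,b\in\ca$ and a strictly local $O$ with $\cs(O)\subseteq[0,S-1]$, take $N\ge S$, and set
\[
f_N(z)=\omega\!\left(a\,{\rm e}^{z\,\LL^{(N)}}[O]\,b\right),\qquad z\in\CC .
\]
Since $\LL^{(N)}$ acts trivially outside the finite-dimensional algebra $\ca_{[0,N-1]}$, the map ${\rm e}^{z\LL^{(N)}}$ is a matrix exponential and each $f_N$ is entire. By Theorem~\ref{limdyn} (proved for $0\le t<R$) the limit $\lim_{N\to\infty}f_N(t)=\omega(a\,\alpha_t[O]\,b)$ exists on $E=[0,R/2]$, which has accumulation points.

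The heart of the argument, and the step I expect to be the main obstacle, is the uniform-in-$N$ local bound off the real axis: mere contractivity bounds $|f_N(t)|$ on the real axis but says nothing nearby. The decisive observation is that $\LL^{(N)}$ commutes with $\gamma^{(N)}_{t_0}={\rm e}^{t_0\LL^{(N)}}$, so for every real $t_0\ge 0$,
\[
(\LL^{(N)})^k\!\left[\gamma^{(N)}_{t_0}[O]\right]=\gamma^{(N)}_{t_0}\!\left[(\LL^{(N)})^k[O]\right].
\]
As $\gamma^{(N)}_{t_0}$ is a contraction for $t_0\ge 0$ (Remark~\ref{remgen}), Lemma~\ref{LhN} with $p=0$ gives $\|(\LL^{(N)})^k[\gamma^{(N)}_{t_0}[O]]\|\le\|(\LL^{(N)})^k[O]\|\le k!\,R^{-k}\|O\|$ with the \emph{same} $R=(2c\,v^2\,S\,(d^2+2d^4))^{-1}$ for all $t_0$, i.e. a base-point-independent radius. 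Writing ${\rm e}^{z\LL^{(N)}}={\rm e}^{(z-t_0)\LL^{(N)}}\gamma^{(N)}_{t_0}$ and expanding yields
\[
|f_N(z)|\le \|a\|\,\|b\|\,\|O\|\sum_{k\ge 0}\Big(\tfrac{|z-t_0|}{R}\Big)^k=\frac{\|a\|\,\|b\|\,\|O\|}{1-|z-t_0|/R},\qquad |z-t_0|<R,
\]
uniformly in $N$. Hence $\{f_N\}$ is locally bounded on the connected open set $\Omega=\{z:\,{\rm dist}(z,[0,\infty))<R\}\supset[0,T]$. Without this cancellation the recentred series would have an $N$-dependent or $t_0$-shrinking radius and Vitali--Porter could not be invoked.

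Finally I would apply Vitali--Porter: $f_N\to f$ uniformly on compacta of $\Omega$, with $f$ analytic, so $\lim_{N\to\infty}f_N(t)$ exists for every $t\in[0,T]$. To identify the limit with $\omega(a\,\alpha_t[O]\,b)$, note that the macroscopic averages solve the polynomial system \eqref{d0}, whose solutions are global (the quantity $\sum_\alpha\omega_\alpha^2$ is conserved) and real-analytic; hence $H^{(S)}_t$ in \eqref{auto3aa}, the propagator $U^{(S)}_t$, and $t\mapsto\omega(a\,\alpha_t[O]\,b)$ are real-analytic on $[0,\infty)$. Since $f$ and this function agree on $[0,R)$, the identity theorem gives equality on all of $[0,T]$; continuity then extends $\alpha_t$ from strictly local operators to the whole of $\ca$. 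The same construction applied to $f_N(z)=\omega(\gamma^{(N)}_z[\tfrac1N\sum_k v_\alpha^{(k)}])$, now with $p=1$ in Lemma~\ref{LhN}, extends the existence of the macroscopic averages $\omega_\alpha(t)$ to every $t\ge 0$, closing the loop.
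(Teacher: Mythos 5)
Your proposal is correct, and its skeleton --- the functions $f_N(z)=\omega\big(a\,{\rm e}^{z\LL^{(N)}}[O]\,b\big)$, uniform-in-$N$ local bounds, and the Vitali--Porter theorem --- is exactly the one used in the paper's proof of Proposition~\ref{extdyn}, with the hypotheses supplied by Corollary~\ref{exg}. The two places where you deviate are both sound and worth comparing. For local boundedness the paper splits ${\rm e}^{z\LL^{(N)}}={\rm e}^{t\LL^{(N)}}\circ{\rm e}^{iy\LL^{(N)}}$, $z=t+iy$, and bounds $\big\|{\rm e}^{iy\LL^{(N)}}[O]\big\|$ by the norm-convergent series of Corollary~\ref{exg}; your recentring ${\rm e}^{z\LL^{(N)}}=\gamma^{(N)}_{t_0}\circ{\rm e}^{(z-t_0)\LL^{(N)}}$, combined with $(\LL^{(N)})^k\circ\gamma^{(N)}_{t_0}=\gamma^{(N)}_{t_0}\circ(\LL^{(N)})^k$ and contractivity, is the same cancellation (the contraction absorbs the real-time flow, Lemma~\ref{LhN} controls the residual complex displacement), just packaged so that the base-point independence of the radius $R$ is explicit. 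The substantive difference is the identification of the Vitali--Porter limit: the paper complexifies the time-ordered propagators, $U^{(S)}_\pm(z(t))$ along $z(s)=s(1+iy/t)$, and applies the identity theorem in the strip $\Omega_T$; this tacitly requires $H^{(S)}$ --- hence the macroscopic averages $\vec{\omega}$ --- to be continued to complex times, a point the paper does not elaborate. You instead stay on the real axis: the averages solve the polynomial system \eqref{d0}, remain bounded by the conserved norm, hence are globally defined and real-analytic, so $U^{(S)}_t$ and $t\mapsto\omega\big(a\,\alpha_t[O]\,b\big)$ are real-analytic, and agreement on $[0,R)$ propagates by the identity theorem for real-analytic functions. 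Your route avoids the non-unitary complexified propagators altogether, at the cost of invoking analytic ODE theory; the paper's version yields equality of the two analytic continuations on the whole strip rather than only on $[0,T]$. Finally, your closing remark --- rerunning the construction with $p=1$ on mean-field operators to extend the macroscopic averages themselves --- is precisely the content of the paper's Corollary~\ref{quasilocal}, so the loop is closed in the same way.
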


\begin{proof}
Given the connected open subset 
$\Omega=\left\{z=t+iy\,:\,t>0,\ |y|<R\right\}\subset\CC$, consider the sequence of complex functions $f_N(z)=\omega\left(a\,{\rm e}^{z\LL^{(N)}}[O]\,b\right)$ defined in Corollary \ref{exg}.
These are analytic functions and locally bounded on $\Omega$ for $\gamma^{(N)}$ is a contraction,
$$
\Big|f_N(z)\Big|=\Big|\omega\left(a\,{\rm e}^{t\LL^{(N)}}\circ {\rm e}^{iy\LL^{(N)}}[O]\,b\right)\Big|\le \|a\|\,\|b\|\,\left\|{\rm e}^{iy\LL^{(N)}}[O]\right\|\ .
$$
The last norm is bounded uniformly in $N$ for $|y|<R$; this follows by applying Corollary \ref{exg}, which also shows that $\lim_{N\to\infty} f_N(z)$ exists for all $z\in E=\left\{z=t+iy\,:\,t>0\,,\,|z|<R\right\}$. Then, the Vitali-Porter theorem ensures that $\lim_{N\to\infty}f_N(z)=f(z)$ with $f(z)$ an analytic function, uniformly on any compact subset of $\Omega$.

With $\mathcal{S}(O)=[0,S-1]$ the support of $O$, let us consider the time-evolutor $U^{(S)}_t$ in \eqref{auto2a} and complexify the time-dependence sending $H^{(S)}_s$ in~\eqref{auto3aa}  into $H^{(S)}_{z(s)}$, where
$z(s)=s(1+iy/t)$ so that $z(t)=t+iy$, and consider
$\displaystyle
U^{(S)}_{-}(z(t))=\mathbb{T}\exp\left(-i\int_0^t{\rm d}s\,H^{(S)}(z(s))\right)
$,
together with the inverted time-ordered exponential
$\displaystyle
U^{(S)}_+(z(t))=\mathbb{T}^{-1}\exp\left(i\int_0^t{\rm d}s\,H^{(S)}(z(s))\right)$:
they satisfy $U^{(S)}_+(z(t))\,U^{(S)}_-(z(t))=\mathbf{1}$, while, if 
$z(s)=s$ for all $s\in[0,t]$, then
$$
U^{(S)}_-(z(t))=U^{(S)}_t\ ,\qquad U^{(S)}_+(z(t))=\big(U^{(S)}_t\big)^\dagger\ .
$$
Since the hamiltonians in \eqref{auto3} are sums of single-site operators that do not modify the support of the time-evolving strictly local operator $O$, the functions
$$
u_S(z(t))=\omega\left(a\,U^{(S)}_+(z(t))\,O\,U^{(S)}_-(z(t))\,b\right)\qquad a,b\in\ca\ ,
$$
are also analytic on $\Omega$; indeed, they are bounded:
$$
\left|u_S(z(t))\right|\le\|a\|\,\|b\|\,\|O\|\,\left\|U^{(S)}_+(z(t))\right\|\,\left\|U^{(S)}_-(z(t))\right\|\leq \|a\|\,\|b\|\,\|O\|\,{\rm e}^{2t\,H_{max}}\ ,
$$ 
where $H_{max}:=\max_{0\leq s\leq t}\|H^{(S)}(z(s))\|$.
Consider now $T>R$ and the subset 
$$
\Omega_{T}=\left\{z=t+iy\,:\,0<t<T,\ |y|<R\right\}\ .
$$
We have that $f(z)$ and $u_S(z)$ are both analytic functions on $\Omega_T$. Moreover, due to Theorem \ref{limdyn}, $f(z(t))=u_{S}(z(t))$ for $z=t\in[0,R)$; therefore, $f(z(t))=u_{S}(z(t))$ for all $z(t)\in \Omega_T$, so that the restriction to the real line yields the result.
\qed
\end{proof}
\medskip

We can now conclude by extending the previous results from strictly local
operators $O$ to mean-field operators.
\medskip

\begin{corollary}
The convergence of the microscopic dynamics $\gamma^{(N)}_t$ to the automorphisms $\alpha_t$ on the quasi-local algebra $\ca$ as in Theorem \ref{limdyn} holds for operators arising as strong limits of mean-field operators.
\label{quasilocal}
\end{corollary}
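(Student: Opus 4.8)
The plan is to promote Theorem~\ref{limdyn}, proved for a \emph{fixed} strictly local operator $O$, to the diagonal situation in which $O$ is replaced by a mean-field operator $X^{(N)}=\frac1N\sum_{k=0}^{N-1}x^{(k)}$ that carries the index $N$ itself, and then to strong limits of such operators. Theorem~\ref{limdyn} cannot be invoked verbatim here: not only does the observable depend on $N$ simultaneously with the map $\gamma^{(N)}_t$, but mean-field operators converge only in the strong (GNS) topology and not in norm, so the norm-continuity argument that extends $\alpha_t$ from strictly local operators to all of $\ca$ is unavailable. I would therefore treat the microscopic and the automorphic side of the desired identity separately, showing that each converges to the same scalar multiple of $\omega(ab)$.

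For the microscopic side I would invoke Corollary~\ref{exg}: for $0\le t<R$ the operator $\gamma^{(N)}_t[X^{(N)}]$ is a series, norm-convergent uniformly in $N$, of products of mean-field operators. By the clustering estimate~\eqref{macro1bis} each such product tends weakly to a scalar multiple of the identity, whence
\[
\lim_{N\to\infty}\omega\left(a\,\gamma^{(N)}_t[X^{(N)}]\,b\right)=\omega(ab)\,\omega_x(t)\ ,
\]
with $\omega_x(t)$ the macroscopic average of Definition~\ref{defmacro} (recovered on setting $a=b=\mathbf 1$). For the automorphic side I would use that the Hamiltonian $H^{(S)}_t$ in~\eqref{auto3aa} is a sum of single-site terms with site-independent coefficients, so that $U^{(S)}_t=\prod_k u^{(k)}_t$ factorizes into commuting single-site unitaries and $\alpha_t$ sends $x^{(k)}$ to the site-$k$ translate of the fixed single-site operator $\widetilde x_t:=u_t^\dagger\,x\,u_t$. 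Hence $\alpha_t[X^{(N)}]=\frac1N\sum_k\widetilde x_t^{(k)}$ is \emph{again} a mean-field operator, and~\eqref{macro1bis} together with~\eqref{macreq} yield $\lim_N\omega\big(a\,\alpha_t[X^{(N)}]\,b\big)=\omega(ab)\,\omega(\widetilde x_t)=\omega(ab)\,\omega_x(t)$. The two scalars coincide, which proves the claim for mean-field operators on $0\le t<R$.

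The two remaining extensions parallel those already carried out for strictly local operators. The passage to all finite $t\ge0$ follows as in Proposition~\ref{extdyn}, applying the Vitali--Porter theorem to the analytic, locally bounded functions $z\mapsto\omega\big(a\,{\rm e}^{z\LL^{(N)}}[X^{(N)}]\,b\big)$ (local boundedness from the contraction property of $\gamma^{(N)}_t$) and identifying the limit with the automorphic expression, which is available for all $t$ since the macroscopic averages $\omega_x(t)$ solve the globally defined equations~\eqref{d0} on the compact set $\cs$. Finally, since $\gamma^{(N)}_t$ is a contraction (Remark~\ref{remgen}) and $\alpha_t$ is norm-preserving, the sandwiched functionals are uniformly bounded; combining this with the Kadison--Schwarz inequality $\gamma^{(N)}_t[Y^\dagger Y]\ge\gamma^{(N)}_t[Y]^\dagger\gamma^{(N)}_t[Y]$ to bound the GNS-norm of $\gamma^{(N)}_t[\,\cdot\,]$ on cyclic vectors, a uniform-boundedness argument transfers the convergence to any operator obtained as a strong limit of mean-field operators. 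The crux throughout is precisely this diagonal $N$-dependence combined with the absence of norm convergence: the statement cannot be read off from Theorem~\ref{limdyn} and must instead be re-derived from the factorization of Corollary~\ref{spt} and the clustering of $\omega$.
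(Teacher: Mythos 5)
Your proposal is correct and follows essentially the same route as the paper's own proof: both identify, for $0\le t<R$, the microscopic dynamics and the unitarily implemented dynamics of mean-field operators with the macroscopic trajectory $\omega_x(t)$ (via Proposition~\ref{mfdyn}, the clustering/factorization properties, and the single-site structure of the unitaries), and then extend to all finite times by the Vitali--Porter analyticity argument of Proposition~\ref{extdyn}. The paper's version is terser --- it works with the scalar expectations $\omega\left(U^{(N)}_+(z(t))\,X^{(N)}\,U^{(N)}_-(z(t))\right)$ and leaves the $a,b$-sandwiched (weak-operator) statement and the passage to strong limits of mean-field operators implicit --- so your additional detail on those points fills in, rather than departs from, its argument.
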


\begin{proof}
Consider the mean-field operator $\displaystyle X^{(N)}=\frac{1}{N}\sum_{k=0}^{N-1}x^{(k)}$. For $t\in[0,R)$ the dynamics implemented by $U^{(N)}_t$ satisfies, in the large $N$ limit, the equation of motion of
Proposition \ref{mfdyn}. Furthermore, with the notations of the previous proposition,
\beann
\lim_{N\to\infty}\omega\left(U^{(N)}_+(z(t))\,X^{(N)}\,U^{(N)}_-(z(t))\right)&=&\lim_{N\to\infty}\frac{1}{N}\sum_{k=0}^{N-1}\omega\left(U^{(N)}_+(z(t))\,x^{(k)}\,U^{(N)}_-(z(t))\right)\\
&\le&\left\|U^{(N)}_+(z(t))\,x\,U^{(N)}_-(z(t))\right\|\ .
\eeann
with $x\in\ca$ strictly local. Then, as  in Proposition \ref{extdyn}, 
$\lim_{N\to\infty}\omega\left(U^{(N)}_+(z(t))\,X^{(N)}\,U^{(N)}_-(z(t))\right)$ provides an analytic function on compact subsets of $\Omega_{T}=\left\{z=t+iy\,:\,0<t<T,\ |y|<R\right\}$, $T\geq R$, and its restriction to $t\in[0,T)$ implements the large $N$ dynamics induced by the generator $\LL^{(N)}$.
\qed
\end{proof}

\subsection{Dynamics of quantum fluctuations}
\label{app2}

This section will be devoted to the proofs of the results concerning the structure and properties of the generator of the dissipative dynamics of quantum fluctuations.
We start with the proof of Theorem \ref{expfluct} which is divided into several steps, the first ones concerning the algebraic behaviour of quantum fluctuations, mean-field quantities and local exponentials in the large $N$ limit. 
\medskip

\begin{lemma}
\label{lemtool1}
For all $\vec{r}_{1,2}\in\RR^{d^2}$, it holds that
$$
\lim_{N\to\infty}\left\|\left(W^{(N)}_t(\vec{r}_2)\right)^\dag\,\Big(\vec{r}_1\cdot\vec{F}^{(N)}_t\Big)\,W^{(N)}_t(\vec{r}_2)-\vec{r}_1\cdot\vec{F}^{(N)}_t\,+\,i\,\vec{r}_2\cdot\left(T^{(N)}\,\vec{r}_1\right)\right\|=0\ ,
$$
where $T^{(N)}=[T^{(N)}_{\mu\nu}]$ is the mean-field operator-valued matrix with entries \eqref{tcomm}.
\end{lemma}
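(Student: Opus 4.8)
The plan is to read $W^{(N)}_t(\vec r_2)$ as the Weyl-type unitary built from the \emph{time-dependent} fluctuations of \eqref{fluct}, namely $W^{(N)}_t(\vec r_2)={\rm e}^{iB}$ with $B:=\vec r_2\cdot\vec F^{(N)}_t=\sum_\mu r_{2\mu}F^{(N)}_\mu(t)$ self-adjoint, and to set $A:=\vec r_1\cdot\vec F^{(N)}_t$. For each finite $N$ the operator $B$ is bounded, so the Hadamard expansion
\be
\big(W^{(N)}_t(\vec r_2)\big)^\dag\,A\,W^{(N)}_t(\vec r_2)={\rm e}^{-iB}\,A\,{\rm e}^{iB}=\sum_{n=0}^\infty\frac{(-i)^n}{n!}\,\big({\rm ad}_B\big)^n[A]\ ,\qquad {\rm ad}_B[\,\cdot\,]:=[B,\,\cdot\,]\ ,
\ee
converges in norm. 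I would then show that only the $n=0$ and $n=1$ terms survive the large $N$ limit, that the $n=1$ term is exactly cancelled by the additive $\vec r_2\cdot(T^{(N)}\vec r_1)$ in the statement, and that the whole tail $n\ge2$ is $O(1/\sqrt N)$.

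First I would evaluate the leading commutator via \eqref{tcomm}, noting that the scalar subtractions $\omega^{(N)}_t(v^{(k)}_\mu)$ drop out of every commutator:
\be
{\rm ad}_B[A]=\sum_{\mu,\nu=1}^{d^2}r_{2\mu}\,r_{1\nu}\,\big[F^{(N)}_\mu(t)\,,\,F^{(N)}_\nu(t)\big]=\sum_{\mu,\nu=1}^{d^2}r_{2\mu}\,r_{1\nu}\,T^{(N)}_{\mu\nu}=:M^{(N)}\ ,
\ee
where $M^{(N)}=\frac1N\sum_{k=0}^{N-1}w_0^{(k)}$ is the time-independent mean-field operator generated by $w_0=\sum_{\mu\nu}r_{2\mu}r_{1\nu}[v_\mu,v_\nu]$. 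Hence the $n=1$ term equals $-iM^{(N)}=-i\,\vec r_2\cdot(T^{(N)}\vec r_1)$ and cancels the extra summand, so that the quantity whose norm must vanish reduces to $\sum_{n=2}^\infty\frac{(-i)^n}{n!}\,({\rm ad}_B)^{n-1}[M^{(N)}]$.

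The crux is an estimate on the iterated commutators $({\rm ad}_B)^m[M^{(N)}]$: the naive bound ${\rm ad}_B\le 2\|B\|=O(\sqrt N)$ is useless, so I would instead exploit locality. Because operators on distinct sites commute and the scalars drop, each commutation of $B$ with a single-site mean-field operator $\frac1N\sum_k y^{(k)}$ keeps only the diagonal terms and yields
\be
{\rm ad}_B\Big[\tfrac1N\textstyle\sum_k y^{(k)}\Big]=\frac{1}{\sqrt N}\cdot\frac1N\sum_{k=0}^{N-1}\Big(\sum_{\alpha=1}^{d^2}r_{2\alpha}\,[v_\alpha,y]\Big)^{(k)}\ ,
\ee
that is, $1/\sqrt N$ times a new mean-field operator whose generator has norm at most $2\|\vec r_2\|_1\|y\|$ (the $\ell^1$ norm of $\vec r_2$). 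Combining this with $\big\|\frac1N\sum_k z^{(k)}\big\|\le\|z\|$ and iterating gives, by induction, $\big\|({\rm ad}_B)^m[M^{(N)}]\big\|\le (2\|\vec r_2\|_1)^m\,N^{-m/2}\,\|w_0\|$. Substituting into the tail and summing over $m=n-1\ge1$ produces a bound $C(\vec r_1,\vec r_2)\,N^{-1/2}$, with $C=\|w_0\|\,{\rm e}^{2\|\vec r_2\|_1}$, which tends to $0$ as $N\to\infty$. The only genuine obstacle is setting up this locality-based recursion cleanly, tracking that every commutation with $B$ simultaneously preserves the mean-field form and contributes exactly one factor $1/\sqrt N$; once that is in place, convergence and smallness of the tail are routine.
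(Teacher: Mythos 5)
Your proposal is correct and follows essentially the same route as the paper's proof: the Hadamard/BCH expansion of the conjugation, identification of the $n=1$ term with $-i\,\vec{r}_2\cdot\big(T^{(N)}\vec{r}_1\big)$ via \eqref{tcomm} (so that it cancels the added summand), and an $O(1/\sqrt{N})$ norm bound on the $n\geq 2$ tail obtained from locality, since only same-site terms survive each commutation with $\vec{r}_2\cdot\vec{F}^{(N)}_t$ and each such commutation contributes exactly one factor $1/\sqrt{N}$. The only difference is presentational: the paper writes the iterated commutators in closed form as sums of single-site nested commutators, while you package the same observation as an induction on the mean-field form; the resulting estimates coincide up to immaterial constants.
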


\begin{proof}
Using 
\be
\label{aux1}
{\rm e}^x\,y\,{\rm e}^{-x}=\sum_{n=0}\frac{1}{n!}\KK^n_x[y]\ ,\quad
\KK^n_x[y]=\left[x\,,\,\KK^{n-1}_x[y]\right]\ ,\quad \KK^0_x[y]=y\ ,\quad\forall\,x\,,\,y\in\ca\ ,
\ee
by means of \eqref{tcomm} we write 
\beann
\label{aux1a}
&&
\left(W^{(N)}_t(\vec{r}_2)\right)^\dag\,\Big(\vec{r}_1\cdot\vec{F}^{(N)}_t\Big)\,W^{(N)}_t(\vec{r}_2)=\vec{r}_1\cdot\vec{F}^{(N)}_t\,-\,i\left[\vec{r}_2\cdot\vec{F}^{(N)}_t\,,\,\vec{r}_1\cdot\vec{F}^{(N)}_t\right]\,+\,Z^{(N)}(t)\\ 
&&\hskip 2cm
=\vec{r}_1\cdot\vec{F}^{(N)}_t\,-\,i\vec{r}_2\cdot\left(T^{(N)}\,\vec{r}_1\right)
\,+\,Z^{(N)}(t)\ .
\label{aux1b}
\eeann
In order to deal with
$\displaystyle
Z^{(N)}(t)=\sum_{n=2}^\infty\frac{(-i)^n}{n!} \KK^n_{\vec{r}_2\cdot\vec{F}^{(N)}_t}[\vec{r}_1\cdot\vec{F}^{(N)}_t]$,
notice that, since operators at different lattice sites commute,
$$
\label{aux2}
\KK^n_{\vec{r}_2\cdot\vec{F}^{(N)}_t}[\vec{r}_1\cdot\vec{F}^{(N)}_t]=
\sum_{k=0}^{N-1}\sum_{\nu,\mu_1,\ldots,\mu_n=1}^{d^2}
\frac{r_{1\nu}r_{2\mu_1}\cdots r_{2\mu_n}}{\sqrt{N^{n+1}}}\left[v^{(k)}_{\mu_1}\,,\,\left[v^{(k)}_{\mu_2}\,,\,\cdots\,\left[v^{(k)}_{\mu_n}\,,\,v^{(k)}_\nu\right]\right]\cdots\right]\ ,
$$
whence, with $\xi=\max_{j}\{|r_{1j}|\,,\,|r_{2j}|\}$ and $v=\max_{\mu}\|v_\mu\|$,
$$
\left\|\KK^n_{\vec{r}_2\cdot\vec{F}^{(N)}_t}[\vec{r}_1\cdot\vec{F}^{(N)}_t]\right\|\leq
\frac{v\,\xi\,d^2(2\,\xi\,v\,d^2)^n}{\sqrt{N^{n-1}}}\Rightarrow\|Z^{(N)}(t)\|\leq\frac{v\,\xi\,d^2}{\sqrt{N}}
\,{\rm e}^{2v\xi d^2}\ ,
$$
so that $\lim_{N\to\infty}\|Z^{(N)}(t)\|=0$ and the result follows.
\qed
\end{proof}
\medskip

\begin{lemma}
\label{lemtool2}
In the large $N$ limit any mean-field quantity $\displaystyle X^{(N)}=\frac{1}{N}\sum_{k=0}^{N-1}x^{(k)}$, $x\in M_d(\CC)$, commutes with the local exponential operators in the sense that
$$
\lim_{N\to\infty}\left\|\left[X^{(N)}\,,\,W_t^{(N)}(\vec{r})\right]\right\|=0\ .
$$
\end{lemma}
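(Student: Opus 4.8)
The plan is to exploit the unitarity of the local exponential in order to reduce the statement to controlling the difference between $X^{(N)}$ and its conjugation by $W_t^{(N)}(\vec{r})$. Writing $A:=\vec{r}\cdot\vec{F}^{(N)}_t$ so that $W_t^{(N)}(\vec{r})={\rm e}^{iA}$, and using that $W_t^{(N)}(\vec{r})$ is unitary, I would first note
$$
\left\|\left[X^{(N)}\,,\,W_t^{(N)}(\vec{r})\right]\right\|=\left\|\left(W_t^{(N)}(\vec{r})\right)^\dag X^{(N)}\,W_t^{(N)}(\vec{r})-X^{(N)}\right\|\ ,
$$
and then expand the right-hand side by means of the adjoint-action series \eqref{aux1} with $x=-iA$, obtaining
$$
\left(W_t^{(N)}(\vec{r})\right)^\dag X^{(N)}\,W_t^{(N)}(\vec{r})-X^{(N)}=\sum_{n=1}^\infty\frac{(-i)^n}{n!}\,\KK^n_A\big[X^{(N)}\big]\ .
$$

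The heart of the argument is a norm estimate on the nested commutators $\KK^n_A[X^{(N)}]$. I would prove by induction that each application of $\KK_A$ carries an extra factor $N^{-1/2}$: if $M^{(N)}=\frac1N\sum_k m^{(k)}$ is any mean-field operator, then, since the scalar shifts $\omega^{(N)}_t(v^{(k)}_\mu)$ appearing in the fluctuations commute with everything and operators at distinct sites commute,
$$
\KK_A\big[M^{(N)}\big]=\big[A\,,\,M^{(N)}\big]=\frac{1}{\sqrt N}\,\frac1N\sum_{k=0}^{N-1}\sum_{\mu=1}^{d^2}r_\mu\,\big[v^{(k)}_\mu\,,\,m^{(k)}\big]\ ,
$$
so that the nominally double site-sum collapses to a single one and one again obtains $1/\sqrt N$ times a mean-field operator whose single-site generator has norm at most $2\,v\,\xi\,d^2\,\|m\|$, with $\xi=\max_\mu|r_\mu|$ and $v=\max_\mu\|v_\mu\|$. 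Iterating from $X^{(N)}$, whose single-site generator is $x$, yields $\big\|\KK^n_A[X^{(N)}]\big\|\le N^{-n/2}\,(2\,v\,\xi\,d^2)^n\,\|x\|$.

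Summing the series then gives
$$
\left\|\left[X^{(N)}\,,\,W_t^{(N)}(\vec{r})\right]\right\|\le\|x\|\sum_{n=1}^\infty\frac{1}{n!}\left(\frac{2\,v\,\xi\,d^2}{\sqrt N}\right)^n=\|x\|\left({\rm e}^{2v\xi d^2/\sqrt N}-1\right)\ ,
$$
which vanishes as $N\to\infty$ (indeed it is $O(1/\sqrt N)$), establishing the claim. I would also remark that the bound is uniform in $t$, since the time-dependent shifts enter only through scalars that drop out of every commutator.

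The main obstacle, though a mild one that closely parallels the estimate in Lemma \ref{lemtool1}, is verifying the $N^{-n/2}$ scaling of the nested commutators together with the controlled geometric growth $(2\,v\,\xi\,d^2)^n$ of the accompanying mean-field operators; this is precisely what guarantees convergence of the exponential series and its vanishing in the large $N$ limit. The crucial structural input is that the commutator of a fluctuation (scaling as $1/\sqrt N$) with a mean-field operator (scaling as $1/N$) again produces $1/\sqrt N$ times a mean-field operator, because the intrinsic commutativity of different lattice sites collapses the double sum $\sum_{k,\ell}[v^{(k)}_\mu,m^{(\ell)}]$ to the single sum $\sum_k[v^{(k)}_\mu,m^{(k)}]$.
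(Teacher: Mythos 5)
Your proof is correct and follows essentially the same route as the paper's: both exploit unitarity of $W_t^{(N)}(\vec{r})$ to reduce the commutator to the conjugation difference, expand via the adjoint-action series \eqref{aux1}, observe that site-wise commutativity collapses the nested commutators to single-site sums (so each application of $\KK_{\vec{r}\cdot\vec{F}^{(N)}_t}$ contributes a factor $1/\sqrt{N}$ while the time-dependent scalar shifts drop out), and conclude with an exponential series bound of order $O(1/\sqrt{N})$. The only cosmetic difference is the final estimate, $\|x\|\bigl({\rm e}^{2v\xi d^2/\sqrt{N}}-1\bigr)$ in your version versus $\frac{\|x\|}{\sqrt{N}}\,{\rm e}^{2v\xi d^2}$ in the paper, both of which vanish at the same rate.
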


\begin{proof}
Using \eqref{aux1} one writes
\bea
\nonumber
\left[X^{(N)}\,,\,W_t^{(N)}(\vec{r})\right]&=&\left(X^{(N)}\,-\,W_t^{(N)}(\vec{r})\,X^{(N)}\,\left(W_t^{(N)}(\vec{r})\right)^\dag\right)\,W_t^{(N)}(\vec{r})\\
\nonumber
&=&-\left(\sum_{n=1}^\infty \KK^n_{\vec{r}\cdot\vec{F}^{(N)}_t}[X^{(N)}]\right)\,W_t^{(N)}(\vec{r})\qquad\hbox{with}\\
\label{auxilium}
\KK^n_{\vec{r}\cdot\vec{F}^{(N)}_t}[X^{(N)}]&=&\sum_{k=0}^{N-1}\sum_{\mu_1,\ldots,\mu_n=1}^{d^2}
\frac{r_{\mu_1}\cdots r_{\mu_n}}{N\sqrt{N^n}}\left[v^{(k)}_{\mu_1}\,,\,\left[v^{(k)}_{\mu_2}\,,\,\cdots\,\left[v^{(k)}_{\mu_n}\,,\,x^{(k)}\right]\right]\cdots\right]\ .
\eea
Then, as in the proof of the previous lemma, the result follows from
$$
\left\|\left[X^{(N)}\,,\,W_t^{(N)}(\vec{r})\right]\right\|\leq
\frac{\|x\|}{\sqrt{N}}\,{\rm e}^{2\,v\,\xi\,d^2}\ .
$$
\qed
\end{proof}
\medskip

The following Proposition specifies the speed with which the limit established in Proposition~\ref{mfdyn} is attained, a result which will be applied in the  coming estimates.
\medskip

\begin{proposition}
\label{scal}
With $\omega^{(N)}_\alpha(t)$ and $\omega^{(N)}_{\mu\alpha}(t)$ defined in \eqref{macr1a}
and \eqref{macr1b} one has that:
\bea
\label{aidd00}
&&
\left|\frac{{\rm d}}{{\rm d}t}\omega^{(N)}_\alpha(t)\,-\,\sum_{\beta=1}^{d^2}\Big(\widetilde{D}^{(N)}_{\alpha\beta}(t)\,+\,i\,\mathcal{E}_{\alpha\beta}\Big)\,\omega^{(N)}_{\beta}(t)\,\right|=O\left(\frac{1}{N}\right)\\ 
\label{Daid}
&&
\widetilde{D}^{(N)}_{\alpha\beta}(t)=\sum_{\beta,\mu,\nu=1}^{d^2}\widetilde{B}_{\mu\nu}\,J^\mu_{\alpha\beta}\,\omega^{(N)}_\nu(t)\ ,
\eea
with $\mathcal{E}_{\alpha\beta}$ the entries of the matrix $\mathcal{E}$ defined in~\eqref{D}.
\end{proposition}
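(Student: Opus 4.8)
The plan is to run the proof of Proposition~\ref{mfdyn} again, but to keep the $O(1/N)$ remainders at each step instead of sending them to zero; the restriction $0\le t<R$ under which Corollaries~\ref{exg} and~\ref{spt} apply is in force throughout. Since $\LL^{(N)}$ is a bounded generator on the finite-dimensional algebra $\ca_{[0,N-1]}$, one may differentiate under $\omega$ to get
$$
\frac{{\rm d}}{{\rm d}t}\omega^{(N)}_\alpha(t)=\omega\left(\gamma^{(N)}_t\left[\LL^{(N)}\left[\frac{1}{N}\sum_{k=0}^{N-1}v^{(k)}_\alpha\right]\right]\right)\ ,
$$
and then split $\LL^{(N)}=\HH^{(N)}+\widetilde{\AA}^{(N)}+\widetilde{\BB}^{(N)}$ as in~\eqref{mflind1a}, treating the three images \eqref{maph}--\eqref{mapB} separately.

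First I would dispose of the two pieces that are exact or trivially small. The Hamiltonian contribution \eqref{maph} is a single mean-field operator, so $\omega\circ\gamma^{(N)}_t$ of it equals $i\sum_\mu\epsilon_\mu\,\omega^{(N)}_{\mu\alpha}(t)$ with no error; inserting the exact linear identity $\omega^{(N)}_{\mu\alpha}(t)=\sum_\gamma J^\gamma_{\mu\alpha}\,\omega^{(N)}_\gamma(t)$ (immediate from~\eqref{aid1}) and the cyclic symmetry $J^\gamma_{\mu\alpha}=J^\mu_{\alpha\gamma}$ of~\eqref{aid0} reproduces precisely $\sum_\beta i\,\mathcal{E}_{\alpha\beta}\,\omega^{(N)}_\beta(t)$. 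The term $\widetilde{\AA}^{(N)}$ in~\eqref{mapA} carries a prefactor $1/N^2$ in front of a single lattice sum, hence has norm $O(1/N)$; as $\gamma^{(N)}_t$ is a contraction and $\|\omega\|=1$, it feeds only into the $O(1/N)$ remainder.

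The core is the dissipative piece $\widetilde{\BB}^{(N)}$ in~\eqref{mapB}, which I would write as $\frac{1}{2}\sum_{\mu,\nu}\widetilde{B}_{\mu\nu}\,\{P^{(N)}_{\mu\alpha},Q^{(N)}_\nu\}$ with the mean-field operators $P^{(N)}_{\mu\alpha}=\frac1N\sum_k[v^{(k)}_\mu,v^{(k)}_\alpha]$ and $Q^{(N)}_\nu=\frac1N\sum_\ell v^{(\ell)}_\nu$. Corollary~\ref{spt}, applied to both orderings in the anticommutator, gives $\gamma^{(N)}_t[\{P,Q\}]=\{\gamma^{(N)}_t[P],\gamma^{(N)}_t[Q]\}+O(1/N)$ in norm, so after applying $\omega$ I am left, up to $O(1/N)$, with the two products $\omega(\gamma^{(N)}_t[P]\,\gamma^{(N)}_t[Q])$ and $\omega(\gamma^{(N)}_t[Q]\,\gamma^{(N)}_t[P])$. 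The decisive input is the quantitative factorisation
$$
\omega\Big(\gamma^{(N)}_t[P]\,\gamma^{(N)}_t[Q]\Big)=\omega\Big(\gamma^{(N)}_t[P]\Big)\,\omega\Big(\gamma^{(N)}_t[Q]\Big)+O\!\left(\frac{1}{N}\right)\ ,
$$
which upgrades the bare limit used in Proposition~\ref{mfdyn} to an explicit rate. Granting it, the $\widetilde{\BB}$-contribution becomes $\sum_{\mu,\nu}\widetilde{B}_{\mu\nu}\,\omega^{(N)}_\nu(t)\,\omega^{(N)}_{\mu\alpha}(t)+O(1/N)$, and a second use of the linear identity together with the cyclic symmetry recasts it as $\sum_\beta\widetilde{D}^{(N)}_{\alpha\beta}(t)\,\omega^{(N)}_\beta(t)+O(1/N)$ with $\widetilde{D}^{(N)}$ as in~\eqref{Daid}. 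Collecting the three pieces yields the claim.

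I expect the factorisation estimate to be the main obstacle, being exactly the point where the $1/N$ rate, rather than plain convergence, must be extracted. I would establish it by expanding $\gamma^{(N)}_t[P]$ and $\gamma^{(N)}_t[Q]$ through the series of Corollary~\ref{exg}: each term $(\LL^{(N)})^k[P]$ is a finite sum of products of mean-field averages supported on a set that grows only linearly in $k$, on which the $L_1$-clustering estimate~\eqref{macro1}, extended to products of several mean-field averages as in Appendix~A, gives factorisation with an $O(1/N)$ error. The uniform-in-$N$ bound of Lemma~\ref{LhN} controls the combinatorial growth of these errors so that, for $t<R$, they sum to a convergent series of order $1/N$; this legitimises interchanging the summation with the factorisation and delivers the required rate uniformly on $[0,R)$.
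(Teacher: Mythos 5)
Your handling of the Hamiltonian and $\widetilde{\AA}^{(N)}$ pieces matches the paper, but for the decisive $\widetilde{\BB}^{(N)}$ term you take a different route whose key step is not actually available. The paper never expands $\gamma^{(N)}_t$ in a series here: it uses the exact algebraic identity obtained by writing each mean-field factor as $\frac{1}{\sqrt{N}}F^{(N)}(t)+\omega^{(N)}(t)$, where the time-dependent fluctuations \eqref{fluct} have zero mean under $\omega\circ\gamma^{(N)}_t$, so that the deviation from factorisation is \emph{exactly} $\frac{1}{N}\sum_{\gamma}J^{\gamma}_{\mu\alpha}\,\omega\big(\gamma^{(N)}_t\big[\{F^{(N)}_{\gamma}(t),F^{(N)}_{\nu}(t)\}\big]\big)$, and then bounds this uniformly in $N$ by Cauchy--Schwarz together with Lemma~\ref{bound} of Appendix B. Your proof instead rests on the ``quantitative factorisation'' $\omega(\gamma^{(N)}_t[P]\,\gamma^{(N)}_t[Q])=\omega(\gamma^{(N)}_t[P])\,\omega(\gamma^{(N)}_t[Q])+O(1/N)$, to be proved by expanding both series (Corollary~\ref{exg}) and factorising monomial by monomial ``as in Appendix~A''. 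But Appendix~A provides the $O(1/N)$ rate only for \emph{two} mean-field factors, equation \eqref{macro1}; for products of $p\geq 3$ factors it proves only the rate-free limit \eqref{recursion}. A rate can in fact be extracted (center every factor and use Cauchy--Schwarz: $|\omega(\tilde{X}_1\,M\,\tilde{X}_p)|\leq\|M\|\sqrt{\omega(\tilde{X}_1^2)\,\omega(\tilde{X}_p^2)}=O(1/N)$), but this is a step you must supply, and its constants grow exponentially with the number of factors (one error term per subset of centered factors), a growth that Lemma~\ref{LhN} does not control; consequently your error series converges only on some interval strictly smaller than $[0,R)$, not on $[0,R)$ as you claim.

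The second, more serious gap is the time restriction itself. Proposition~\ref{scal} carries no restriction to $t<R$: the quantities $\omega^{(N)}_\alpha(t)$ are defined for all finite $t\geq 0$ (Definition~\ref{defmacro}), and the estimate is invoked in the proof of Theorem~\ref{expfluct} at arbitrary intermediate times $s\in[0,t]$, so a short-time version is insufficient for the role the proposition plays. Any proof based on the norm-convergent exponential series is intrinsically confined to small times, and the analytic-continuation device of Proposition~\ref{extdyn} (Vitali--Porter) transfers \emph{limits}, not uniform-in-$N$ $O(1/N)$ bounds, so it cannot push your estimate past $R$. This is precisely why the paper's argument avoids the series altogether: the exact fluctuation identity plus the differential-inequality bound of Lemma~\ref{bound} (valid on every $[0,T]$) yields the $O(1/N)$ rate at all finite times.
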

\medskip

\begin{proof}
Consider the time-derivative of $\omega^{(N)}_\alpha(t)$ in~\eqref{macr1a}
\beann
\frac{{\rm d}}{{\rm d}t}\omega^{(N)}_\alpha(t)&=&\omega\left(\gamma_t^{(N)}\left[\HH^{(N)}\left[\frac{1}{N}\sum_{k=0}^{N-1}v_\alpha^{(k)}\right]\right]\right)\,+\,\omega\left(\gamma_t^{(N)}\left[\widetilde{\AA}^{(N)}\left[\frac{1}{N}\sum_{k=0}^{N-1}v_\alpha^{(k)}\right]\right]\right)\\
&+&\frac{1}{2}\sum_{\mu,\nu=1}^{d^2}\widetilde{B}_{\mu\nu}\,\omega\left(\gamma_t^{(N)}\left[\left\{\frac{1}{N}\sum_{k=0}^{N-1}\left[v_\mu^{(k)},v_\alpha^{(k)}\right],\frac{1}{N}\sum_{h=0}^{N-1}v_\nu^{(h)}\right\}\right]\right)\, .
\eeann
Since, using~\eqref{macr1a} and~\eqref{aid1},
$$
\omega\left(\gamma_t^{(N)}\left[\HH^{(N)}\left[\frac{1}{N}\sum_{k=0}^{N-1}v_\alpha^{(k)}\right]\right]\right)=i\sum_{\mu=1}^{d^2}\epsilon_\mu\omega^{(N)}_{\mu\alpha}(t)\ ,
$$
and, as already seen in the proof of Proposition~\ref{mfdyn}, the action of the $\widetilde{\AA}^{(N)}$ term of the generator on mean-field observables is in norm a $O\left(\frac{1}{N}\right)$ quantity, one has 
\begin{equation*}
\begin{split}
\left|\frac{{\rm d}}{{\rm d}t}\omega^{(N)}_\alpha(t)\,-\,\sum_{\mu=1}^{d^2}\Big(\sum_{\nu=1}^{d^2}\widetilde{B}_{\mu\nu}\,\omega^{(N)}_\nu(t)\,+\,i\,\epsilon_\mu\Big)\omega^{(N)}_{\mu\alpha}(t)\,\right|\le O\left(\frac{1}{N}\right)+\hspace{4cm}\\
 +\frac{1}{2}\left|\sum_{\mu,\nu=1}^{d^2}\widetilde{B}_{\mu\nu}\,\left(\omega\left(\gamma_t^{(N)}\left[\left\{\frac{1}{N}\sum_{k=0}^{N-1}\left[v_\mu^{(k)},v_\alpha^{(k)}\right],\frac{1}{N}\sum_{h=0}^{N-1}v_\nu^{(k)}\right\}\right]\right)-2\omega_{\mu\alpha}^{(N)}(t)\omega_{\nu}^{(N)}(t)\right)\right|\, .
\end{split}
\end{equation*}
Using~\eqref{macr1a},~\eqref{aid1},~\eqref{fluct} and the fact that fluctuation have zero mean values, one rewrites
\beann
&&\omega\left(\gamma_t^{(N)}\left[\left\{\frac{1}{N}\sum_{k=0}^{N-1}\left[v_\mu^{(k)},v_\alpha^{(k)}\right],\frac{1}{N}\sum_{h=0}^{N-1}v_\nu^{(k)}\right\}\right]\right)-2\omega_{\mu\alpha}^{(N)}(t)\omega_{\nu}^{(N)}(t)=\\
&&\hskip2cm
=\frac{1}{N}\sum_{\gamma=1}^{d^2}\,J^\gamma_{\mu\alpha}\,
\omega\Big(\gamma^{(N)}_t\Big[\Big\{F^{(N)}_\gamma(t)\,,\,F^{(N)}_\nu(t)\Big\}\Big]\Big)\ .
\eeann
The required scaling results from Lemma~\ref{bound} in Appendix B and the fact that the Cauchy-Schwartz inequality relative to the expectations with respect to the state 
$\omega\circ\gamma^{(N)}_t$ yields
$$
\left|\omega\left(\gamma_t^{(N)}\left[F_\gamma^{(N)}(t)F_{\nu}^{(N)}(t)\right]\right)\right|^2\leq
\omega\left(\gamma_t^{(N)}\left[\left(F_\gamma^{(N)}(t)\right)^2\right]\right)\,\omega\left(\gamma_t^{(N)}\left[\left(F_\nu^{(N)}(t)\right)^2\right]\right)
\ .
$$
\qed
\end{proof}
\medskip

The following proposition establishes the asymptotic form of the action of the generator $\LL^{(N)}$ on local exponential.
\bigskip

\begin{proposition}
Given the local exponentials $W^{(N)}_t(\vec{r})=\exp\left(i\,\vec{r}\cdot\vec{F}^{(N)}_t\right)$, we have that:
\beann
&&
\lim_{N\to\infty}\omega_{\vec{r}_1\vec{r}_2}\left(\gamma^{(N)}_t\left[\LL^{(N)}\left[W^{(N)}_t(\vec{r})\right]\Big)\,W^{(N)}_t(\vec{r})\right]\right)=\\
&&
=\lim_{N\to\infty}\omega_{\vec{r}_1\vec{r}_2}\Bigg(\gamma^{(N)}_t\left[\Bigg(i\,\sqrt{N}\vec{r}\cdot\Big(\widetilde{D}^{(N)}_t\vec{\omega}_t^{(N)}\Big)\,-\,i\,\vec{r}\cdot\left(T^{(N)}\,\widetilde{B}\,-\,\widetilde{D}^{(N)}_t\right)\vec{F}^{(N)}_t\right.\\
&&\hskip3cm
\left.
-\frac{1}{2}\vec{r}\cdot\Big(T^{(N)}\,\Big(\widetilde{A}\,+\,2\,i\,h^{(re)}\Big)\,T^{(N)}\,-\,\widetilde{D}^{(N)}_t\,T^{(N)}\Big)\vec{r}
\,\Bigg)W^{(N)}_t(\vec{r})\right.\Bigg]\Bigg)\ ,
\eeann
where $\vec{\omega}^{(N)}_t\in\RR^{d^2}$ is the vector with components $\omega^{(N)}_\mu(t)$
in \eqref{macr1a}, $\vec{\epsilon}$ is the real vector with component given by the coefficients of the free hamiltonian $h^{(N)}$ in~\eqref{2Ham}, $T^{(N)}=[T^{(N)}_{\mu\nu}]$ is the operator-valued matrix with entries \eqref{tcomm}, while $\widetilde{D}^{(N)}_t$ is the real $d^2\times d^2$ matrix whose entries are defined in~\eqref{D}. 
\label{LINDACT}
\end{proposition}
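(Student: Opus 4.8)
The plan is to compute $\LL^{(N)}\!\big[W^{(N)}_t(\vec r)\big]$ directly from the decomposition $\LL^{(N)}=\HH^{(N)}+\widetilde{\AA}^{(N)}+\widetilde{\BB}^{(N)}$ of \eqref{submap0}--\eqref{submap2}, and to reorganise the result as $\big(\text{divergent scalar}+\text{term linear in }\vec F^{(N)}_t+\vec r\text{-quadratic term}\big)\,W^{(N)}_t(\vec r)$ modulo operators that vanish under the functional $\omega_{\vec r_1\vec r_2}\circ\gamma^{(N)}_t$. The first ingredient is that, because $\LL^{(N)}$ (hence $\gamma^{(N)}_t$) is symmetric under permutations of the sites $0,\dots,N-1$ while $\omega$ is translation invariant, $\omega^{(N)}_t(v^{(k)}_\mu)$ is independent of $k$ and equals $\omega^{(N)}_\mu(t)$ of \eqref{macr1a}; therefore $V^{(N)}_\mu=F^{(N)}_\mu(t)+\sqrt N\,\omega^{(N)}_\mu(t)\,\mathbf 1$. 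Since the additive scalar drops out of every commutator, $\big[V^{(N)}_\mu,\,\cdot\,\big]=\big[F^{(N)}_\mu(t),\,\cdot\,\big]$, and only the fluctuation part of $V^{(N)}_\mu$ acts through the three pieces of the generator.

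I would then expand the single commutator by \eqref{aux1}: with $W=W^{(N)}_t(\vec r)$, one has $W^\dagger V^{(N)}_\mu W-V^{(N)}_\mu=\sum_{n\ge1}\frac{(-i)^n}{n!}\,\KK^n_{\vec r\cdot\vec F^{(N)}_t}\big[V^{(N)}_\mu\big]$, in which $\KK^1=\sum_\alpha r_\alpha\,[F^{(N)}_\alpha(t),F^{(N)}_\mu(t)]=\sum_\alpha r_\alpha T^{(N)}_{\alpha\mu}$ is a mean-field operator by \eqref{tcomm}, whereas each $\KK^n$ with $n\ge2$ is a nested commutator of fluctuations with a mean-field operator and is $O(N^{-(n-1)/2})$ in norm by the single-site counting already used in Lemmas \ref{lemtool1} and \ref{lemtool2}. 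Thus $\big[V^{(N)}_\mu,W\big]=-\,i\,W\sum_\alpha r_\alpha T^{(N)}_{\alpha\mu}-\tfrac12 W\,\KK^2_{\vec r\cdot\vec F^{(N)}_t}\big[V^{(N)}_\mu\big]+O(N^{-1})$, and, by Lemma \ref{lemtool2}, every mean-field factor may be commuted across $W$ at the price of an $O(N^{-1/2})$ norm error. Feeding this into $\HH^{(N)}$, $\widetilde{\AA}^{(N)}$ and $\widetilde{\BB}^{(N)}$ and writing the surviving $V^{(N)}_\nu$ as $F^{(N)}_\nu(t)+\sqrt N\,\omega^{(N)}_\nu(t)\,\mathbf 1$ produces the candidate contributions.

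The delicate step, which I expect to be the main obstacle, is the amplification in $\widetilde{\BB}^{(N)}$ and in $\HH^{(N)}$: there $[V^{(N)}_\mu,W]$ is multiplied by $V^{(N)}_\nu$ (norm $O(\sqrt N)$), respectively by the explicit prefactor $\sqrt N$, so the $O(N^{-1/2})$ pieces are promoted to $O(1)$ and cannot be dropped. They are resolved by a second central-limit split of the amplified mean-field operator, $\sqrt N\,T^{(N)}_{\alpha\mu}=\sqrt N\,\omega^{(N)}_{\alpha\mu}(t)\,\mathbf 1+\sum_\gamma J^\gamma_{\alpha\mu}F^{(N)}_\gamma(t)$, obtained from \eqref{aid1}. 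Collecting, and using repeatedly the symmetries $J^\gamma_{\alpha\beta}=J^\beta_{\gamma\alpha}=-J^\gamma_{\beta\alpha}$ of \eqref{aid0}, one finds that: the divergent scalars from the $\epsilon_\mu$ part of $\HH^{(N)}$ and from the $\sqrt N\,\omega^{(N)}_\nu(t)$ part of $\widetilde{\BB}^{(N)}$ combine into $i\sqrt N\,\vec r\cdot\big(\widetilde D^{(N)}_t\vec\omega^{(N)}_t\big)$ (the $\epsilon$ term supplying the $i\mathcal E$ part of the drift matrix of \eqref{aidd00}); the $F^{(N)}_\nu(t)$ part of $V^{(N)}_\nu$ and the fluctuation remainder $\sum_\gamma J^\gamma_{\alpha\mu}F^{(N)}_\gamma(t)$ assemble into the linear term $-\,i\,\vec r\cdot\big((T^{(N)}\widetilde B-\widetilde D^{(N)}_t)\,\vec F^{(N)}_t\big)$; and the double commutator of $\widetilde{\AA}^{(N)}$ — whose double-commutator structure annihilates the scalar parts of both $V^{(N)}_\mu$ and $V^{(N)}_\nu$, so that it contributes $\tfrac12\sum_{\mu\nu}\widetilde A_{\mu\nu}\,W\big(\sum_\beta r_\beta T^{(N)}_{\beta\nu}\big)\big(\sum_\alpha r_\alpha T^{(N)}_{\alpha\mu}\big)$ cleanly — together with the symmetric cross-terms of $\widetilde{\BB}^{(N)}$ and the $\KK^2$ contributions produces the $\vec r$-quadratic term, including the $2i\,h^{(re)}$ upgrade of $\widetilde A$ and the $-\widetilde D^{(N)}_t T^{(N)}$ correction. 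Keeping the expansion to exactly the order that survives amplification, and tracking the placement and signs of the $\widetilde D^{(N)}_t$ corrections through the $J$-identities, is where the computation is most error-prone.

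It then remains to show that the operators not accounted for above are negligible under $\omega_{\vec r_1\vec r_2}\circ\gamma^{(N)}_t$. After the two splits these are of two kinds: commutators of mean-field operators with $W$ or with fluctuations that are not amplified, which are $O(N^{-1/2})$ in norm and hence weakly negligible by Lemmas \ref{lemtool1} and \ref{lemtool2}; and products of the form $F^{(N)}_\gamma(t)\,(\text{mean-field})$ left over from the amplified pieces, whose weak limit is controlled exactly as in the proof of Proposition \ref{scal}, namely by the Cauchy--Schwarz inequality for the positive functional $\omega\circ\gamma^{(N)}_t$ together with the uniform second-moment bound on $\omega\big(\gamma^{(N)}_t[(F^{(N)}_\gamma(t))^2]\big)$ from Lemma \ref{bound}, and by the asymptotic factorisation of mean-field products under $\gamma^{(N)}_t$ of Corollary \ref{spt}. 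Substituting $\widetilde B=B+2i\,h^{(re)}$ and reassembling the three surviving contributions then yields the stated identity inside $\omega_{\vec r_1\vec r_2}\big(\gamma^{(N)}_t[\,\cdot\,]\big)$.
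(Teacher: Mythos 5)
Your proposal is correct and follows essentially the same route as the paper: the paper likewise works with the decomposition $\LL^{(N)}=\HH^{(N)}+\widetilde{\AA}^{(N)}+\widetilde{\BB}^{(N)}$, expands the commutators with $W^{(N)}_t(\vec r)$ through the adjoint series \eqref{aux1}, performs the same two central-limit splits ($V^{(N)}_\mu=F^{(N)}_\mu(t)+\sqrt N\,\omega^{(N)}_\mu(t)\,\mathbf 1$ and $\sqrt N\,T^{(N)}_{\mu\beta}=\sum_\alpha J^\alpha_{\mu\beta}F^{(N)}_\alpha(t)+\sqrt N\,\omega^{(N)}_{\mu\beta}(t)$) to handle exactly the $\sqrt N$-amplified pieces you single out, and controls the remainders weakly via Lemmas \ref{lemtool1}, \ref{lemtool2}, \ref{tool1}, \ref{bound} and Corollary \ref{spt}. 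The only difference is organizational: the paper packages the three contributions as separate Lemmas \ref{lemmaux0}--\ref{lemmaux2} and then sums them, rather than carrying out a single unified computation.
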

\medskip

The proof of the above Proposition follows by grouping together the results of the next three Lemmas.
\medskip

\begin{lemma}
\label{lemmaux0}
The action of the $\HH^{(N)}$ component of $\LL^{(N)}$ is such that
\beann
&&\hskip-.5cm
\lim_{N\to\infty}\omega_{\vec{r}_1\vec{r}_2}\left(\gamma^{(N)}_t\left[\HH^{(N)}\left[W^{(N)}_t(\vec{r})\right]\Big)\,W^{(N)}_t(\vec{r})\right]\right)=\\
&&\hskip-.3cm
=\lim_{N\to\infty}\omega_{\vec{r}_1\vec{r}_2}\Bigg(\gamma^{(N)}_t\left[\Big(-\sqrt{N}\vec{r}\cdot\Big(\mathcal{E}\,\vec{\omega}^{(N)}_t\Big)\,-\,
\vec{r}\cdot\Big(\mathcal{E}\,\vec{F}^{(N)}_t\Big)+\frac{i}{2}\vec{r}\cdot\Big(\mathcal{E}\,T^{(N)}\,\vec{r}\Big)\Big)\,W^{(N)}_t(\vec{r})\right]\Bigg)\ ,
\eeann
where $\mathcal{E}$ is the $d^2\times d^2$ matrix defined in~\eqref{D}.
\end{lemma}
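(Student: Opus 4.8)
The plan is to prove the identity first as an operator equation for $\HH^{(N)}[W^{(N)}_t(\vec{r})]$ modulo a remainder that vanishes in norm, and then to transport it through the contraction $\gamma^{(N)}_t$ and the bounded functional $\omega_{\vec{r}_1\vec{r}_2}$. First I would insert the representation $\HH^{(N)}[x]=i\sqrt{N}\sum_\mu\epsilon_\mu[V^{(N)}_\mu,x]$ from \eqref{submap0} and expand $[V^{(N)}_\mu,W^{(N)}_t(\vec{r})]=(V^{(N)}_\mu-W^{(N)}_t(\vec{r})\,V^{(N)}_\mu\,(W^{(N)}_t(\vec{r}))^\dag)\,W^{(N)}_t(\vec{r})$ by means of the adjoint series \eqref{aux1} with $x=i\,\vec{r}\cdot\vec{F}^{(N)}_t$. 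This gives $\HH^{(N)}[W^{(N)}_t(\vec{r})]=-i\sqrt{N}\sum_\mu\epsilon_\mu\big(\sum_{n\ge 1}\frac{1}{n!}\KK^n_{i\vec{r}\cdot\vec{F}^{(N)}_t}[V^{(N)}_\mu]\big)W^{(N)}_t(\vec{r})$, so the whole computation reduces to controlling the nested commutators $\KK^n$.

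The key step is the first bracket: since operators at different sites commute, $[\vec{r}\cdot\vec{F}^{(N)}_t,V^{(N)}_\mu]=\sum_\nu r_\nu\,T^{(N)}_{\nu\mu}$ is exactly the mean-field operator of \eqref{tcomm}, and each further bracket with $\vec{r}\cdot\vec{F}^{(N)}_t$ collapses to a single-site sum and supplies one extra factor $1/\sqrt{N}$, precisely as in the proof of Lemma~\ref{lemtool1}; hence $\|\KK^n_{i\vec{r}\cdot\vec{F}^{(N)}_t}[V^{(N)}_\mu]\|=O(N^{-(n-1)/2})$. After the prefactor $\sqrt{N}$ only the $n=1$ and $n=2$ contributions remain, the tail $n\ge 3$ being a norm remainder of order $N^{-1/2}$. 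For $n=1$ I would use $T^{(N)}_{\nu\mu}=\sum_\gamma J^\gamma_{\nu\mu}\frac{1}{N}\sum_k v^{(k)}_\gamma$ from \eqref{aid1} together with the exact splitting $\frac{1}{N}\sum_k v^{(k)}_\gamma=\omega^{(N)}_\gamma(t)\mathbf{1}+\frac{1}{\sqrt{N}}F^{(N)}_\gamma(t)$, valid because $\omega^{(N)}_\gamma(t)=\omega^{(N)}_t(\frac{1}{N}\sum_k v^{(k)}_\gamma)$ by \eqref{macr1a} and \eqref{fluct}; this separates the genuinely $O(\sqrt{N})$ scalar piece from an $O(1)$ fluctuation piece. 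For $n=2$ the double bracket gives a single $O(1/\sqrt{N})$ mean-field operator which, after the $\sqrt{N}$ prefactor, produces the quadratic term. The only nontrivial bookkeeping is the index algebra: from the cyclic symmetry $J^\mu_{\alpha\beta}=J^\beta_{\mu\alpha}$ and the antisymmetry $J^\gamma_{\alpha\beta}=-J^\gamma_{\beta\alpha}$ of \eqref{aid0} one gets $\sum_\beta J^\mu_{\alpha\beta}v_\beta=[v_\mu,v_\alpha]$, which reorganises the scalar, fluctuation and double-commutator contributions into $-\sqrt{N}\,\vec{r}\cdot(\mathcal{E}\,\vec{\omega}^{(N)}_t)$, $-\vec{r}\cdot(\mathcal{E}\,\vec{F}^{(N)}_t)$ and $\frac{i}{2}\vec{r}\cdot(\mathcal{E}\,T^{(N)}\vec{r})$ respectively, with $\mathcal{E}$ as in \eqref{D}.

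To close, I would collect the three surviving terms and the remainder $R^{(N)}$ with $\|R^{(N)}\|=O(N^{-1/2})$, obtaining that $\HH^{(N)}[W^{(N)}_t(\vec{r})]$ equals the asserted bracketed operator times $W^{(N)}_t(\vec{r})$ plus $R^{(N)}$. The divergent $O(\sqrt{N})$ scalar term appears identically on both sides, so it never needs to converge on its own; what matters is only that the two integrands differ by $R^{(N)}$. Applying $\gamma^{(N)}_t$, a contraction (Remark~\ref{remgen}), and then $\omega_{\vec{r}_1\vec{r}_2}$, which satisfies $|\omega_{\vec{r}_1\vec{r}_2}(X)|\le\|X\|$ because the $W^{(N)}(\vec{r}_{1,2})$ in \eqref{meslim1aa} are unitary and $\omega$ is a state, one finds $|\omega_{\vec{r}_1\vec{r}_2}(\gamma^{(N)}_t[R^{(N)}])|\le\|R^{(N)}\|\to 0$, which yields the stated equality of limits. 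The main obstacle I anticipate is not analytic but combinatorial: keeping the divergent $O(\sqrt{N})$ scalar term, the $O(1)$ fluctuation term and the $O(1)$ quadratic term cleanly separated while verifying that the $J$-contractions assemble exactly into the matrix $\mathcal{E}$; by contrast the norm estimate on the higher brackets is a routine repetition of Lemma~\ref{lemtool1}.
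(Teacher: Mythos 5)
Your proposal is correct and follows essentially the same route as the paper's proof: you expand the commutators via the adjoint series \eqref{aux1}, identify the $n=1$ term (split through \eqref{aid1} and the exact decomposition $\frac{1}{N}\sum_k v^{(k)}_\gamma=\omega^{(N)}_\gamma(t)\mathbf{1}+\frac{1}{\sqrt{N}}F^{(N)}_\gamma(t)$ into the $O(\sqrt{N})$ scalar and the fluctuation pieces) and the $n=2$ term yielding $\frac{i}{2}\vec{r}\cdot\big(\mathcal{E}\,T^{(N)}\vec{r}\big)$, and control the $n\geq 3$ tail in norm exactly as in Lemma~\ref{lemtool1}. Working with the collective seeds $V^{(N)}_\mu$ rather than site-wise with $v^{(k)}_\mu$ is an immaterial bookkeeping difference, and your explicit transport of the $O(N^{-1/2})$ remainder through the contraction $\gamma^{(N)}_t$ and the bounded functional $\omega_{\vec{r}_1\vec{r}_2}$ just makes precise a step the paper leaves implicit.
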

\medskip

\begin{proof}
Using \eqref{aux1}, one splits the action 
$$
\HH^{(N)}\left[W^{(N)}_t(\vec{r})\right]=i\sum_{\mu=1}^{d^2}\sum_{k=0}^{N-1}\epsilon_\mu\left[v^{(k)}_\mu\,,\,W^{(N)}_t(\vec{r})\right]
=-i\sum_{\mu=1}^{d^2}h_\mu\sum_{k=0}^{N-1}\sum_{n=1}^\infty\frac{i^n}{n!}\KK^n_{\vec{r}\cdot\vec{F}^{(N)}_t}[v^{(k)}_\mu]\,W^{(N)}_t(\vec{r})
$$
into three terms: the first one, with $n=1$, scales as $1/\sqrt{N}$ and, by means of~\eqref{aid1}, the definition of quantum fluctuations~\eqref{fluct} and of the matrix $\mathcal{E}$ in~\eqref{D}, can be recast as
\beann
&&\hskip-1cm
\sum_{\mu,\nu=1}^{d^2}\epsilon_\mu\,r_\nu\,\frac{1}{\sqrt{N}}\sum_{k=0}^{N-1}\Big[v^{(k)}_\nu\,,\,v^{(k)}_\mu\Big]\,W^{(N)}_t(\vec{r})
=
\sum_{\mu,\nu\,\gamma=1}^{d^2}\epsilon_\mu\,r_\nu\,J^\gamma_{\nu\mu}\frac{1}{\sqrt{N}}\sum_{k=0}^{N-1}v^{(k)}_\gamma\,W^{(N)}_t(\vec{r})\\
&&\hskip1cm
=-\sum_{\mu,\nu\,\gamma=1}^{d^2}\epsilon_\mu\,r_\nu\,J^\mu_{\nu\gamma}\Big(F^{(N)}_\gamma(t)\,+\,\sqrt{N}\omega^{(N)}_\gamma(t)\Big)\,W^{(N)}_t(\vec{r})\\
&&\hskip 2cm
=-\vec{r}\cdot\Big(\mathcal{E}\vec{F}^{(N)}_t\Big)\,W^{(N)}_t(\vec{r})\,-\,\sqrt{N}\,\vec{r}\cdot\Big(\mathcal{E}\vec{\omega}^{(N)}_t\Big)\,W^{(N)}_t(\vec{r})\ .
\eeann
The second one corresponds to $n=2$ and scales as $1/N$: by using the algebraic 
relations~\eqref{aid1} and the expressions in~\eqref{D}, it reads
\beann
&&\hskip-1cm
\frac{i}{2N}\sum_{\mu,\nu,\gamma=1}^{d^2}\epsilon_\mu\,r_\nu\,r_\gamma\,\sum_{k=0}^{N-1}\Big[v^{(k)}_\nu\,,\,\Big[v^{(k)}_\gamma\,,\,v^{(k)}_\mu\Big]\Big]\,W^{(N)}_t(\vec{r})=\\
&&=
\frac{i}{2}\sum_{\mu,\nu,\gamma=1}^{d^2}\epsilon_\mu\,r_\nu\,r_\gamma\,J^\eta_{\gamma\mu}\frac{1}{N}\sum_{k=0}^{N-1}\Big[v^{(k)}_\nu\,,\,v^{(k)}_\eta\Big]\,W^{(N)}_t(\vec{r})
=\frac{i}{2}\vec{r}\cdot\left(\mathcal{E}\,T^{(N)}\,\vec{r}\right)
\,W^{(N)}_t(\vec{r})\ .
\eeann
Finally, the norm of the third remaining term,
$$
-i\sum_{\mu=1}^{d^2}\epsilon_\mu\sum_{k=0}^{N-1}\sum_{n=3}^\infty\frac{i^n}{n!}\HH^{(N)}_{\vec{r}\cdot\vec{F}^{(N)}_t}[v^{(k)}_\mu]\,W^{(N)}_t(\vec{r})\ ,
$$ 
can be shown to vanish as $1/\sqrt{N}$ by similar methods as in the proof of Lemma~\ref{lemtool1}.
\qed
\end{proof}
\bigskip

\begin{lemma}
\label{lemmaux1}
The action of the $\widetilde{\AA}^{(N)}$ component of $\LL^{(N)}$ is such that
$$
\lim_{N\to\infty}\omega_{\vec{r}_1\vec{r}_2}\left(\gamma^{(N)}_t\left[\widetilde{\AA}^{(N)}\left[W^{(N)}_t(\vec{r})\right]\right]\right)=\lim_{N\to\infty}\omega_{\vec{r}_1\vec{r}_2}\Bigg(\gamma^{(N)}_t\left[-\frac{1}{2}\vec{r}\cdot\Big(T^{(N)}\,\widetilde{A}\,T^{(N)}\,\vec{r}\Big)\,W^{(N)}_t(\vec{r})\right]\Bigg)\ .
$$
\end{lemma}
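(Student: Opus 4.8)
The plan is to prove the stronger, purely operatorial statement
\[
\Big\|\widetilde{\AA}^{(N)}\big[W^{(N)}_t(\vec{r})\big]\,+\,\tfrac12\,\vec{r}\cdot\big(T^{(N)}\,\widetilde{A}\,T^{(N)}\,\vec{r}\big)\,W^{(N)}_t(\vec{r})\Big\|\ \to\ 0 \qquad (N\to\infty)\ ,
\]
from which the asserted equality of limits follows immediately: $\gamma^{(N)}_t$ is a contraction (Remark~\ref{remgen}) and $\omega_{\vec{r}_1\vec{r}_2}(X)=\omega\big(W^{(N)}(\vec{r}_1)\,X\,W^{(N)}(\vec{r}_2)\big)$ is bounded by $\|X\|$ since the $W^{(N)}(\vec{r}_{1,2})$ are unitary, so a vanishing norm difference survives both operations. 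Writing $W:=W^{(N)}_t(\vec{r})$ and recalling $\widetilde{\AA}^{(N)}[W]=\frac12\sum_{\mu,\nu}\widetilde{A}_{\mu\nu}\big[[V^{(N)}_\mu,W],V^{(N)}_\nu\big]$ from~\eqref{submap1}, I would first treat the inner commutator via $[V^{(N)}_\mu,W]=\big(V^{(N)}_\mu-W\,V^{(N)}_\mu\,W^\dag\big)W$. Because $V^{(N)}_\mu$ and $F^{(N)}_\mu(t)$ differ only by a scalar, Lemma~\ref{lemtool1} (used with $\vec{r}_1=\vec{e}_\mu$ and conjugation by $W(-\vec{r})$) gives in norm $V^{(N)}_\mu-W\,V^{(N)}_\mu\,W^\dag=P^{(N)}_\mu+o(1)$, with the uniformly bounded mean-field operator $P^{(N)}_\mu:=-i\sum_\alpha r_\alpha\,T^{(N)}_{\alpha\mu}$ (note $\|T^{(N)}_{\alpha\mu}\|\le 2v^2$). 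Thus $[V^{(N)}_\mu,W]=P^{(N)}_\mu W+\widetilde{R}^{(N)}_\mu W$ with $\|\widetilde{R}^{(N)}_\mu\|\to0$.

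Next I would expand the outer commutator as
\[
\big[[V^{(N)}_\mu,W],V^{(N)}_\nu\big]=P^{(N)}_\mu\,[W,V^{(N)}_\nu]+[P^{(N)}_\mu,V^{(N)}_\nu]\,W+\widetilde{R}^{(N)}_\mu\,[W,V^{(N)}_\nu]+[\widetilde{R}^{(N)}_\mu,V^{(N)}_\nu]\,W\ .
\]
The same use of Lemma~\ref{lemtool1} yields $[W,V^{(N)}_\nu]=\big(i\sum_\beta r_\beta\,T^{(N)}_{\beta\nu}\big)W+o(1)$; crucially $\|[W,V^{(N)}_\nu]\|$ stays bounded uniformly in $N$ although $\|V^{(N)}_\nu\|=O(\sqrt{N})$. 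Hence the first term tends in norm to $\sum_{\alpha,\beta}r_\alpha r_\beta\,T^{(N)}_{\alpha\mu}T^{(N)}_{\beta\nu}\,W$ (all $T$'s kept as operators to the left of $W$). The second term is $O(1/\sqrt{N})$ because $[P^{(N)}_\mu,V^{(N)}_\nu]$, the commutator of the mean-field $\frac1N\sum_k[v^{(k)}_\alpha,v^{(k)}_\mu]$ with $\frac1{\sqrt N}\sum_\ell v^{(\ell)}_\nu$, retains only same-site contributions; the third vanishes since $\|\widetilde{R}^{(N)}_\mu\|\to0$ and $\|[W,V^{(N)}_\nu]\|$ is bounded; and the fourth vanishes because $\widetilde{R}^{(N)}_\mu$ is a sum of nested commutators $\KK^n_{i\vec{r}\cdot\vec{F}^{(N)}_t}[V^{(N)}_\mu]$, $n\ge2$, each a single-site sum $\frac1{N^{(n-1)/2}}\frac1{\sqrt N}\sum_k(\cdot)$ whose commutator with $V^{(N)}_\nu$ again collapses to one site-sum and gains an extra factor $1/\sqrt N$.

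Collecting the single surviving contribution and using the antisymmetry $\big(T^{(N)}\big)^{tr}=-T^{(N)}$, namely $\sum_{\mu,\nu}\widetilde{A}_{\mu\nu}\,T^{(N)}_{\alpha\mu}T^{(N)}_{\beta\nu}=\big(T^{(N)}\widetilde{A}\,(T^{(N)})^{tr}\big)_{\alpha\beta}=-\big(T^{(N)}\widetilde{A}\,T^{(N)}\big)_{\alpha\beta}$, one obtains $\widetilde{\AA}^{(N)}[W]\to-\frac12\,\vec{r}\cdot\big(T^{(N)}\widetilde{A}\,T^{(N)}\vec{r}\big)W$ in norm, which is the claim. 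The main obstacle is exactly the feature that $\|V^{(N)}_\nu\|$ grows like $\sqrt N$: the naive bound $\|[\widetilde{R}^{(N)}_\mu,V^{(N)}_\nu]\|\le 2\|\widetilde{R}^{(N)}_\mu\|\,\|V^{(N)}_\nu\|$ does not close, so the vanishing must be extracted from the single-site-sum structure of all the nested commutators and from the uniform-in-$N$ boundedness of $[W,V^{(N)}_\nu]$ in place of its a priori $O(\sqrt N)$ size — precisely the cancellation mechanism underlying Lemmas~\ref{lemtool1} and~\ref{lemtool2}.
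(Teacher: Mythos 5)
Your proposal is correct and takes essentially the same route as the paper's own proof: the same BCH-type expansion of the inner and outer commutators (your $P^{(N)}_\mu$ and $\widetilde{R}^{(N)}_\mu$ are exactly the paper's $-i\big[\vec{r}\cdot\vec{F}^{(N)}_t,V^{(N)}_\mu\big]$ and $-\Sigma^{(N)}_\mu$), the same surviving product-of-commutators term turned into $-\tfrac12\,\vec{r}\cdot\big(T^{(N)}\widetilde{A}\,T^{(N)}\vec{r}\big)$ via antisymmetry of $T^{(N)}$, and the same single-site-collapse mechanism to kill $[P^{(N)}_\mu,V^{(N)}_\nu]$ and $[\widetilde{R}^{(N)}_\mu,V^{(N)}_\nu]$ despite $\|V^{(N)}_\nu\|=O(\sqrt{N})$. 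The only cosmetic differences are that you invoke Lemma~\ref{lemtool1} as a black box for the leading terms and state the reduction from norm convergence to the equality of limits explicitly (and your exponent $N^{-(n-1)/2}$ in the nested-commutator bookkeeping is off by a factor $\sqrt{N}$ from the exact structure $N^{-(n+1)/2}\sum_k(\cdot)$, which only strengthens the vanishing you need).
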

\medskip

\begin{proof}
Using \eqref{aux1}, a same argument as in the proof of Lemma \ref{lemtool1} yields
\bea
\nonumber
\left[V^{(N)}_\mu\,,\,W^{(N)}_t(\vec{r})\right]&=&\left(V^{(N)}_\mu\,-\,W^{(N)}_t(\vec{r})\,V^{(N)}_\mu\,\left(W^{(N)}_t(\vec{r})\right)^\dag \right)\,W^{(N)}_t(\vec{r})\\
\label{aux4a}
&=&-\left(i\left[\vec{r}\cdot\vec{F}^{(N)}_t\,,\,V^{(N)}_\mu\right]\,+\,\Sigma_\mu^{(N)}\right)\,W^{(N)}_t(\vec{r})\\
\label{aux4b}
\Sigma_\mu^{(N)}&:=&\sum_{n=2}^\infty\frac{i^n}{n!}\KK_{\vec{r}\cdot\vec{F}^{(N)}_t}^n[V^{(N)}_\mu]\ .
\eea
The latter contribution can be estimated as follow: first of all, using~\eqref{auxilium} one gets the upper bound
$$
\left\|\KK_{\vec{r}\cdot\vec{F}^{(N)}_t}^n[V^{(N)}_\mu]\right\|\le
\frac{1}{\sqrt{N^{n-1}}}\,(2\,\|q\|)^n\,\|v_\mu\|\ ,
$$ 
where $q:=\sum_{\mu=1}^{d^2}r_\mu\,v_\mu$, so that
$\displaystyle\left\|\Sigma^{(N)}_\mu\right\|\leq\frac{1}{\sqrt{N}}\,{\rm e}^{2\|q\|}$ implies $\lim_{N\to\infty}\left\|\Sigma^{(N)}_\mu\right\|=0$. An analogous argument shows that $\lim_{N\to\infty}\left\|\left[\Sigma^{(N)}_\mu\,,\,V^{(N)}_\nu\right]\right\|=0$.
Then, applying again~\eqref{aux4a}, one gets
\beann
&&\hskip-1cm
\left[\left[V^{(N)}_\mu\,,\,W^{(N)}_t(\vec{r})\right]\,,\,V^{(N)}_\nu\right]=
-\Big(i\left[\left[\vec{r}\cdot\vec{F}^{(N)}_t\,,\,V^{(N)}_\mu\right]\,,\,V^{(N)}_\nu\right]\,+\,
\left[\Sigma_\mu^{(N)}\,,\,V^{(N)}_\nu\right]\Big)\,W^{(N)}_t(\vec{r})\\
&&\hskip1cm
-\Big(i\left[\vec{r}\cdot\vec{F}^{(N)}_t\,,\,V^{(N)}_\mu\right]\,+\,\Sigma_\mu^{(N)}\Big)\,\Big(i\left[\vec{r}\cdot\vec{F}^{(N)}_t\,,\,V^{(N)}_\nu\right]\,+\,\Sigma^{(N)}_\nu\Big)\,W^{(N)}_t(\vec{r})\ .
\eeann
Then, since the terms $\left[\vec{r}\cdot\vec{F}^{(N)}_t\,,\,V^{(N)}_\mu\right]$ scale as mean-field quantities and are thus bounded in the large $N$ limit, to the leading order
\beann
\widetilde{\AA}^{(N)}\left[W^{(N)}_t(\vec{r})\right]&\simeq&\sum_{\mu,\nu=1}^{d^2}\frac{\widetilde{A}_{\mu\nu}}{2}\left[\vec{r}\cdot\vec{F}^{(N)}_t\,,\,V^{(N)}_\mu\right]\left[\vec{r}\cdot\vec{F}^{(N)}_t\,,\,V^{(N)}_\nu\right]\,W^{(N)}_t(\vec{r})\\
&=&
-\frac{1}{2}\vec{r}\cdot\Big(\,T^{(N)}\,\widetilde{A}\,T^{(N)}\vec{r}\Big)\,W^{(N)}_t(\vec{r})\ .
\eeann
\qed
\end{proof}
\bigskip

\begin{lemma}
\label{lemmaux2}
The action of the $\widetilde{\BB}^{(N)}$ component of $\LL^{(N)}$ is such that
\beann
&&\hskip-.8cm
\lim_{N\to\infty}\omega_{\vec{r}_1\vec{r}_2}\left(\gamma^{(N)}_t\left[\widetilde{\BB}^{(N)}\left[W^{(N)}_t(\vec{r})\right]\Big)\,W^{(N)}_t(\vec{r})\right]\right)=\\
&&\hskip-.5cm
=\lim_{N\to\infty}\omega_{\vec{r}_1\vec{r}_2}\Bigg(\gamma^{(N)}_t\Bigg[\Bigg(
\sqrt{N}\,i\,\vec{r}\cdot\widetilde{D}^{(N)}_t\vec{\omega}^{(N)}_t\,-\,i\,\vec{r}\cdot\Big(T^{(N)}\,\widetilde{B}\,-\,\widetilde{D}^{(N)}_t\Big)\,\vec{F}^{(N)}_t\\
&&\hskip1cm
+\,\frac{1}{2}\vec{r}\cdot\Big(\widetilde{D}^{(N)}_t\,T^{(N)}\,+\,2\,i\,T^{(N)}\,h^{(re)}\,T^{(N)}\Big)\,\vec{r}\Bigg)\,W^{(N)}_t(\vec{r})\Bigg]\Bigg)
\ ,
\eeann
where $\vec{\omega}^{(N)}_t$ is the real vector with components given by~\eqref{macr1a} and $\widetilde{D}^{(N)}_t$ is the matrix defined in~\eqref{D}.
\end{lemma}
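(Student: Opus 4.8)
The plan is to follow the Baker--Campbell--Hausdorff machinery of Lemma~\ref{lemmaux1}. Abbreviating $W:=W^{(N)}_t(\vec r)$ and setting $P_\mu:=\big[\vec r\cdot\vec F^{(N)}_t\,,\,V^{(N)}_\mu\big]=\sum_{\gamma}r_\gamma\,T^{(N)}_{\gamma\mu}$, a mean-field operator by \eqref{tcomm}, the expansion \eqref{aux4a}--\eqref{aux4b} reads
\be
\big[V^{(N)}_\mu\,,\,W\big]=-\big(i\,P_\mu+\Sigma^{(N)}_\mu\big)\,W\ ,\qquad \big\|\Sigma^{(N)}_\mu\big\|=O\big(1/\sqrt N\big)\ ,
\ee
with $\Sigma^{(N)}_\mu=\sum_{n\ge2}\frac{i^n}{n!}\KK^n_{\vec r\cdot\vec F^{(N)}_t}[V^{(N)}_\mu]$. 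The decisive difference with the $\widetilde{\AA}^{(N)}$ case is that the anti-commutator, contrary to the double commutator, retains the macroscopic (scalar) part of $V^{(N)}_\nu$; I would therefore use the exact splitting $V^{(N)}_\nu=F^{(N)}_\nu(t)+\sqrt N\,\omega^{(N)}_\nu(t)\,\mathbf 1$, valid because $\frac1N\sum_k\omega^{(N)}_t(v^{(k)}_\nu)=\omega^{(N)}_\nu(t)$.

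A direct rearrangement, using $[W,V^{(N)}_\nu]=(iP_\nu+\Sigma^{(N)}_\nu)W$, gives
\be
\big\{[V^{(N)}_\mu,W],V^{(N)}_\nu\big\}=-2\big(iP_\mu+\Sigma^{(N)}_\mu\big)V^{(N)}_\nu\,W-\big(iP_\mu+\Sigma^{(N)}_\mu\big)\big(iP_\nu+\Sigma^{(N)}_\nu\big)W-\big[V^{(N)}_\nu\,,\,iP_\mu+\Sigma^{(N)}_\mu\big]W\ .
\ee
Substituting the splitting of $V^{(N)}_\nu$ and weighting by $\tfrac12\widetilde B_{\mu\nu}$, I would organise the surviving contributions into four classes: (i) a divergent scalar $-2i\sqrt N\,\omega^{(N)}_\nu\,P_\mu W$; (ii) fluctuation-linear pieces $-2iP_\mu F^{(N)}_\nu(t)W$; (iii) the quadratic piece $+P_\mu P_\nu W$; and (iv) the anomalous term $-2\sqrt N\,\omega^{(N)}_\nu\Sigma^{(N)}_\mu W$, treated separately below. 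The genuinely negligible terms---$-2\Sigma^{(N)}_\mu F^{(N)}_\nu(t)W$, the mixed products in the second bracket, and the commutator term, which collapses to $-[F^{(N)}_\nu(t),iP_\mu]W$ since the scalar part drops out---vanish under $\lim_N\omega_{\vec r_1\vec r_2}(\gamma^{(N)}_t[\cdot])$ by the Cauchy--Schwarz/Kadison estimates and the contractivity of $\gamma^{(N)}_t$ from Lemmas~\ref{lemtool1}--\ref{lemtool2}.

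For classes (i) and (ii) I would split the mean-field operator $\sqrt N P_\mu$ through $T^{(N)}_{\gamma\mu}=\omega^{(N)}_{\gamma\mu}(t)\mathbf 1+\tfrac1{\sqrt N}\sum_\eta J^\eta_{\gamma\mu}F^{(N)}_\eta(t)$, so that the scalar part yields, via \eqref{aid1} and the $J$-symbol symmetry $J^{\eta}_{\gamma\mu}=-J^{\mu}_{\gamma\eta}$ from \eqref{aid0}, the divergent term $i\sqrt N\,\vec r\cdot(\widetilde D^{(N)}_t\vec\omega^{(N)}_t)$, while the fluctuation remnant of (i) combines with (ii) into $-i\,\vec r\cdot\big((T^{(N)}\widetilde B-\widetilde D^{(N)}_t)\vec F^{(N)}_t\big)$.

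The step I expect to be the main obstacle is the quadratic sector, for two reasons. First, term (iv) does \emph{not} vanish, unlike in the $\widetilde{\AA}^{(N)}$ case: because $\|\Sigma^{(N)}_\mu\|=O(1/\sqrt N)$ it survives once multiplied by the macroscopic $\sqrt N\,\omega^{(N)}_\nu$ factor. I would isolate its leading ($n=2$) Kubo term $\sqrt N\,\Sigma^{(N)}_\mu\simeq-\tfrac12\sqrt N\,[\vec r\cdot\vec F^{(N)}_t,P_\mu]$, recognise $\sqrt N\,[\vec r\cdot\vec F^{(N)}_t,P_\mu]=\sum_{\gamma,\delta}r_\gamma r_\delta\,\tfrac1N\sum_k\big[v^{(k)}_\gamma,[v^{(k)}_\delta,v^{(k)}_\mu]\big]$ as a mean-field operator, reduce it with \eqref{aid1} applied twice, and check that it reproduces exactly $\tfrac12\,\vec r\cdot(\widetilde D^{(N)}_t\,T^{(N)})\vec r$. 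Second, the piece $+P_\mu P_\nu W$ must be paired with $\widetilde B=B+2ih^{(re)}$: the antisymmetric $B$ part drops because $[P_\mu,P_\nu]=O(1/N)$, while the symmetric $h^{(re)}$ part, together with $T^{(N),tr}=-T^{(N)}$, produces the $T^{(N)}h^{(re)}T^{(N)}$ contribution. Careful sign tracking here is the crux, since it fixes the coefficient of $T^{(N)}h^{(re)}T^{(N)}$ in the stated quadratic form $\tfrac12\vec r\cdot(\widetilde D^{(N)}_t T^{(N)}+2iT^{(N)}h^{(re)}T^{(N)})\vec r$.
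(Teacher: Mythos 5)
Your decomposition is, up to the order of operations, the paper's own proof: the paper first splits $V^{(N)}_\nu=F^{(N)}_\nu(t)+\sqrt N\,\omega^{(N)}_\nu(t)\,\mathbf 1$, producing the two blocks $\BB^{(N)}_1$ (anticommutator with the fluctuation) and $\BB^{(N)}_2$ (macroscopic average times $\sum_k[v^{(k)}_\mu,W^{(N)}_t(\vec r)]$), and only then applies the exponential expansion to each block, whereas you expand $\big[V^{(N)}_\mu,W^{(N)}_t(\vec r)\big]$ first and split afterwards; the resulting terms are in one-to-one correspondence. In particular, your classes (i)+(iv) are exactly the paper's $\BB^{(N)}_2$, and your observation that the anomalous term (iv) survives the limit and reproduces $\tfrac12\vec r\cdot\big(\widetilde D^{(N)}_t T^{(N)}\vec r\big)$ coincides with the paper's double-commutator computation below \eqref{aux10b}; identifying this as non-negligible is indeed the crucial point. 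Two remarks. First, you dispose of the $B$-part of the quadratic sector by the purely algebraic fact that the antisymmetric $B_{\mu\nu}$ contracted against $P_\mu P_\nu$ leaves only $\tfrac12\sum_{\mu\nu}B_{\mu\nu}[P_\mu,P_\nu]=O(1/N)$ in norm; the paper instead passes to expectations, using the weak limit of $T^{(N)}$ together with the antisymmetry of $\sigma B\sigma$ (via Lemma~\ref{lemtool2} and Corollary~\ref{spt}). Your argument is cleaner and entirely norm-based. Second, for the discarded term $\Sigma^{(N)}_\mu F^{(N)}_\nu(t)\,W^{(N)}_t(\vec r)$ note that its norm is only $O(1)$, since $\|F^{(N)}_\nu(t)\|\sim\sqrt N$, so the estimates you need are the generalized Cauchy--Schwarz inequality of Lemma~\ref{tool1} and the second-moment bound of Lemma~\ref{bound} in Appendix B, rather than Lemmas~\ref{lemtool1}--\ref{lemtool2} which you cite.

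The one claim that will not survive the computation is the sign you announce for the purely quadratic $h^{(re)}$ term. Your class (iii), weighted by $\tfrac12\widetilde B_{\mu\nu}$ with $\widetilde B=B+2i\,h^{(re)}$, gives
$$
\frac12\sum_{\mu,\nu}2i\,h^{(re)}_{\mu\nu}\,P_\mu P_\nu
= i\sum_{\mu,\nu,\alpha,\beta}h^{(re)}_{\mu\nu}\,r_\alpha r_\beta\,T^{(N)}_{\alpha\mu}T^{(N)}_{\beta\nu}
= -\,i\,\vec r\cdot\Big(T^{(N)}h^{(re)}T^{(N)}\vec r\Big)\ ,
$$
using $T^{(N),tr}=-T^{(N)}$ and dropping $O(1/N)$ reorderings: the coefficient is $-2i$, not the $+2i$ appearing in the statement you set out to prove. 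This is not a defect of your method: the paper's own proof arrives at the same $-\,i\,\vec r\cdot\big(T^{(N)}h^{(re)}T^{(N)}\vec r\big)$ in \eqref{aux9}, and Proposition~\ref{LINDACT}, which assembles the three lemmas, likewise carries the quadratic form $-\tfrac12\vec r\cdot\big(T^{(N)}(\widetilde A+2ih^{(re)})T^{(N)}-\widetilde D^{(N)}_tT^{(N)}\big)\vec r$, i.e.\ again coefficient $-2i$. In other words, the printed statement of the lemma has a sign typo in the $2i\,T^{(N)}h^{(re)}T^{(N)}$ term, and careful sign tracking will lead you to contradict it rather than confirm it; your derivation should be trusted over the statement, and you should not adjust your bookkeeping to match the latter.
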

\medskip

\begin{proof}
The operators  
$\displaystyle V^{(N)}_\nu=\frac{1}{\sqrt{N}}\sum_{k=0}^{N-1}v_\nu^{(k)}$ are turned into fluctuations $F^{(N)}_\nu(t)$ (see relation \eqref{fluct}) by adding and subtracting the scalars $\omega^{(N)}_t(v_\nu^{(k)})$. Thus, with the notation of \eqref{macr1a}, one gets:
\bea
\label{aux9a}
\widetilde{\BB}^{(N)}\left[W^{(N)}_t(\vec{r})\right]&=&\frac{1}{2}\sum_{\mu,\nu=1}^{d^2}\widetilde{B}_{\mu\nu}\,\left\{\left[V^{(N)}_\mu\,,\,W^{(N)}_t(\vec{r})\right]\,,\,F^{(N)}_\nu(t)\right\}\\
\label{aux9b}
&+&\,\sum_{\mu,\nu=1}^{d^2}\widetilde{B}_{\mu\nu}\,\omega^{(N)}_\nu(t)\sum_{k=0}^{N-1}\left[v^{(k)}_\mu\,,\,W^{(N)}_t(\vec{r})\right]\ ,
\eea
with $\omega^{(N)}_\nu(t)$ given by \eqref{macr1a}.
We denote by $\BB_1^{(N)}[W^{(N)}_t(\vec{r})]$, respectively $\BB_2^{(N)}[W^{(N)}_t(\vec{r})]$ the expression in \eqref{aux9a}, respectively in \eqref{aux9b} and treat them separately.
\medskip

\noindent
$\bullet$ \textbf{Term} $\BB_1^{(N)}:$\ A similar argument as the one leading to \eqref{aux4a}-\eqref{aux4b} allows one to recast
\beann
\BB^{(N)}_1\left[W^{(N)}_t(\vec{r})\right]&=&-\frac{1}{2}\sum_{\mu,\nu=1}^{d^2}\widetilde{B}_{\mu\nu}\left\{P\,W^{(N)}_t(\vec{r})\,,\,F^{(N)}_\nu(t)\right\}\ ,\quad\hbox{where}\\
P&=&i\left[\vec{r}\cdot\vec{F}^{(N)}_t\,,\,V^{(N)}_\mu\right]\,+\,\Sigma^{(N)}_\mu\ ,\qquad
\Sigma_\mu^{(N)}:=\sum_{n=2}^\infty\frac{i^n}{n!}\KK_{\vec{r}\cdot\vec{F}^{(N)}_t}^n[V^{(N)}_\mu]\ .
\eeann
The anti-commutators can be studied as follows; firstly, we rewrite 
\bea
\nonumber
&&\hskip-1.5cm
\left\{P\,W^{(N)}_t(\vec{r})\,,\,F^{(N)}_t(v_\nu)\right\}
=\left(P\,\left(W^{(N)}_t(\vec{r})\,F^{(N)}_\nu(t)\,W^{(N)}_t(-\vec{r})\right)\,+\,P\,F^{(N)}_\nu(t)\right)\,W^{(N)}_t(\vec{r})\\
\nonumber
&&\hskip 3cm+\,\left[F^{(N)}_\nu(t)\,,\,P\right]\,W^{(N)}_t(\vec{r})\\
\label{aux5a}
&&\hskip -1cm
=2\,P\,F^{(N)}_\nu(t)\,W^{(N)}_t(\vec{r})\,+\,i\,P\,\left[\vec{r}\cdot\vec{F}^{(N)}_t\,,\,F^{(N)}_\nu(t)\right]\,W^{(N)}_t(\vec{r})\,+\,P\,\Sigma^{(N)}_\nu\,W^{(N)}_t(\vec{r})\\
\label{aux5b}
&&\hskip 3cm
+\,\left[F^{(N)}_\nu(t)\,,\,P\right]\,W^{(N)}_t(\vec{r})\ .
\eea
Then, using the scaling of the local fluctuations $F^{(N)}_\alpha(t)$ and that of 
the infinite sums $\Sigma^{(N)}_\beta$, together with the fact that spin operators at different sites commute, one readily finds that, for all $\alpha,\beta=1,2,\ldots,d^2$,
\be
\label{aux6}
\lim_{N\to\infty}\left\|\left[F^{(N)}_\alpha(t)\,,\,\Sigma^{(N)}_\beta\right]\right\|=\lim_
{N\to\infty}\left\|\left[F^{(N)}_\alpha(t)\,,\,P\right]\right\|=0\ .
\ee
Thus, the only terms in \eqref{aux5a} and \eqref{aux5b} whose norms do not vanish in large $N$ limit are  
\be
\label{aux7}
2\,\Sigma^{(N)}_\mu\,F^{(N)}_\nu(t)\,W^{(N)}_t(\vec{r})\ ,\qquad 2i\left[\vec{r}\cdot\vec{F}^{(N)}_t\,,\,V^{(N)}_\mu\right]F^{(N)}_\nu(t)\,W^{(N)}_t(\vec{r})
\ee
from the first contribution to \eqref{aux5a} and
\be
\label{aux8}
-\,\left[\vec{r}\cdot\vec{F}^{(N)}_t\,,\,V^{(N)}_\mu\right]\,\left[\vec{r}\cdot\vec{F}^{(N)}_t\,,\,F^{(N)}_\nu(t)\right]\,W^{(N)}_t(\vec{r})=
-\sum_{\alpha,\beta=1}^{d^2}r_\alpha\,r_\beta\,T^{(N)}_{\alpha\mu}\,T^{(N)}_{\beta\nu}
\,W^{(N)}_t(\vec{r})
\ee
from the second one, where use has been made of \eqref{tcomm}. As regards the first term in \eqref{aux7}, using \eqref{aux6}, Lemma \ref{tool1} and Lemma \ref{bound} in Appendix B, one finds
\beann
&&
\lim_{N\to\infty}\omega_{\vec{r}_1\vec{r}_2}\left(\gamma^{(N)}_t\left[\Sigma^{(N)}_\mu\,
F^{(N)}_\nu(t)\, W^{(N)}_t(\vec{r})\right]\right)=
\lim_{N\to\infty}\omega_{\vec{r}_1\vec{r}_2}\left(\gamma^{(N)}_t\left[F^{(N)}_\nu(t)\,
\Sigma^{(N)}_\mu\,W^{(N)}_t(\vec{r})\right]\right)\\
&&\hskip 2cm\leq
\lim_{N\to\infty}\|\Sigma_\mu^{(N)}\|\,\sqrt{\omega\left(W^{(N)}(\vec{r}_1)\gamma^{(N)}_t\left[\Big(F^{(N)}_\nu(t)\Big)^2\right]\,\left(W^{(N)}(\vec{r}_1)\right)^\dag\right)}=0\ .
\eeann
On the other hand, the second term in \eqref{aux7} contributes to the large $N$ limit with
\beann
&&\hskip-1cm
-\,i\,\sum_{\mu,\nu=1}^{d^2}\widetilde{B}_{\mu\nu}\left[\vec{r}\cdot\vec{F}^{(N)}_t\,,\,V^{(N)}_\mu\right]\,F^{(N)}_\nu(t)=-\,i\,\sum_{\mu,\nu,\alpha=1}^{N-1}\widetilde{B}_{\mu\nu}\,r_\alpha\,\frac{1}{N}\sum_{k=0}^{N-1}\left[v^{(k)}_\alpha\,,\,v^{(k)}_\mu\right]\,F^{(N)}_\nu(t)\\
&&\hskip 2cm
=-i\,\vec{r}\cdot\Big(T^{(N)}\,\widetilde{B}\,\vec{F}^{(N)}_t\Big)\ .
\eeann
Concerning \eqref{aux8}, it gives rise to a contribution 
$$
\frac{1}{2}\sum_{\alpha,\beta\,\mu,\nu=1}^{d^2}r_\alpha\,r_\beta\,T^{(N)}_{\alpha\mu}\,T^{(N)}_{\beta\nu}\,\widetilde{B}_{\mu\nu}\,W^{(N)}_t(\vec{r})=-\frac{1}{2}\vec{r}\cdot\Big(T^{(N)}\widetilde{B}\,T^{(N)}\vec{r}\Big)\,W^{(N)}_t(\vec{r})
$$
that can be again estimated by means of Lemma~\ref{tool1} in Appendix B. 
With $h^{(re)}$ as defined in~\eqref{hmat1}, one has $\widetilde{B}=B+2ih^{(re)}$. Then 
\beann
&&
\left|\omega_{\vec{r}_1\vec{r}_2}\left(\gamma^{(N)}_t\left[\vec{r}\cdot\Big(T^{(N)}\,\left(\frac{\widetilde{B}}{2}\,-\,i\,h^{(re)}\right)\,T^{(N)}\,\vec{r}\Big)\,W^{(N)}_t(\vec{r})\right]\right)\right|^2\\
&&\hskip 1cm
\le\,\frac{1}{4}\left|\omega\left(W^{(N)}(\vec{r}_1)\,\gamma^{(N)}_t\left[\Big(\vec{r}\cdot\Big((T^{(N)}\,B\,T^{(N)}\vec{r}\Big)\Big)^2\right]\,\left(W^{(N)}(\vec{r}_1)\right)^\dag\right)\right|\ .
\eeann   
Since the operator-valued matrix $T^{(N)}$ has entries which scale as mean-field quantities,
and, in the large $N$ limit, tend to the entries of the symplectic matrix $\sigma(\vec{\omega}_t)$, Lemma \ref{lemtool2} yields
$$
\lim_{N\to\infty}\left|\omega_{\vec{r}_1\vec{r}_2}\left(\gamma^{(N)}_t\left[\vec{r}\cdot\Big(T^{(N)}\,\Big(\widetilde{B}-2ih^{(re)}\Big)\,T^{(N)}\vec{r}\Big)\,W^{(N)}_t(\vec{r})\right]\right)\right|\le\,\left|\vec{r}\cdot\Big(\sigma(\vec{\omega}_t)\,B\,\sigma(\vec{\omega}_t)\vec{r}\Big)\right|^2=0
$$
because of the anti-symmetric character of the matrices $\sigma(\vec{\omega}_t)$ and $B$.
Concluding,
\bea
\nonumber
&&\hskip-1cm
\lim_{N\to\infty}\omega_{\vec{r}_1\vec{r}_2}\left(\gamma^{(N)}_t\left[\BB_1^{(N)}\left[W^{(N)}_t(r)\right]\right]\right)=\\
\label{aux9}
&&\hskip .3cm
=\lim_{N\to\infty}\omega_{\vec{r}_1\vec{r}_2}\Bigg(\gamma^{(N)}_t\left[-i\,\vec{r}\cdot\Big(T^{(N)}\,\widetilde{B}\,\vec{F}^{(N)}_t\,+\,\Big(T^{(N)}\,h^{(re)}\,T^{(N)}\vec{r}\Big)\Big)\,W^{(N)}_t(r)\right]\Bigg)\ .
\eea
\medskip

\noindent
$\bullet$ \textbf{Term} $\BB_2^{(N)}:$\ In analogy with the treatment of previous commutators, we first recast \eqref{aux9b} as follows:
\bea
\label{aux10a}
\hskip-.8cm
\BB_2^{(N)}\left[W^{(N)}_t(r)\right]&=&\sum_{\mu,\nu=1}^{d^2}\,\widetilde{B}_{\mu\nu}\,\omega^{(N)}_\nu(t)\,\left(i\left[\sum_{k=0}^{N-1}v^{(k)}_\mu\,,\,\vec{r}\cdot\vec{F}^{(N)}_t\right]\right.\\
\label{aux10b}
\hskip-.8cm&+&\left.\frac{1}{2}\left[\vec{r}\cdot\vec{F}^{(N)}_t\,,\,\left[\vec{r}\cdot\vec{F}^{(N)}_t\,,\,\sum_{k=0}^{N-1}v_\mu^{(k)}\right]\right]\right)\,W^{(N)}_t(r)\,+\,B_N\ ,
\eea
where $B_N$ is a term which vanishes in norm when $N\to\infty$.
Using the matrix basis relations \eqref{ONB},~\eqref{D} and the anti-symmetry of the operator-valued matrix $T^{(N)}$, the double commutator in \eqref{aux10b} can be recast 
in the form
\beann
&&\hskip-.5cm
\frac{1}{2}\sum_{\mu,\nu=1}^{d^2}\widetilde{B}_{\mu\nu}\,\omega^{(N)}_\nu(t)\,\sum_{\alpha\,\beta=1}^{d^2}\,r_\alpha r_\beta\,\frac{1}{N}\,\sum_{k=0}^{N-1}\left[v^{(k)}_\alpha\,,\,\left[v^{(k)}_\beta\,,\,v_\mu^{(k)}\right]\right]=\\
&&=\frac{1}{2}\sum_{\alpha,\beta,\gamma=1}^{d^2}\,r_\alpha r_\beta\,\left(\sum_{\mu,\nu=1}^{d^2}\widetilde{B}_{\mu\nu}\,J^\gamma_{\beta\mu}\,\omega^{(N)}_\nu(t)\right)\,\frac{1}{N}\sum_{k=0}^{N-1}\left[v^{(k)}_\alpha\,,\,v^{(k)}_\gamma\right]=\frac{1}{2}\vec{r}\cdot\Big(\widetilde{D}^{(N)}_t\,T^{(N)}\vec{r}\Big)\ ,
\eeann
where $\widetilde{D}^{(N)}_t$ is the matrix given in~\eqref{D}.
Analogously, the sum in \eqref{aux10a} can be rewritten as 
\beann
&&i\,\sum_{\alpha,\beta=1}^{d^2}r_\beta\sum_{\mu,\nu=1}^{d^2}\omega^{(N)}_\nu(t)\,\widetilde{B}_{\mu\nu}\,J^\alpha_{\mu\beta}\,\frac{1}{\sqrt{N}}\sum_{k=0}^{N-1}v^{(k)}_\alpha
=\,i\,\sum_{\alpha,\beta=1}^{d^2}r_\beta\,\widetilde{D}^{(N)}_{\beta\alpha}(t)\,\frac{1}{\sqrt{N}}\sum_{k=0}^{N-1}v^{(k)}_\alpha\\
&&\hskip 1cm
=\vec{r}\cdot\Big(\Big(i\,\widetilde{D}^{(N)}_t\,\vec{F}^{(N)}_t\Big)\,+\,\sqrt{N}\,\vec{r}\cdot\Big(\Big(i\,\widetilde{D}^{(N)}_t\,\vec{\omega}^{(N)}_t\Big)\ ,
\eeann
where $\vec{\omega}^{(N)}_t$ denotes the vector with $d^2$ real components $\omega^{(N)}_\mu(t)$ given by \eqref{macr1a}.
\qed
\end{proof}
\bigskip

With the help of the previous Proposition, we now conclude with the proof of Theorem \ref{expfluct}. 
\medskip 

\noindent
\textbf{Theorem 3.}\quad\textit{
According to Definition \ref{meslimdef}, the dynamics of quantum fluctuations is given by 
the mesoscopic limit $\Phi_t^{\vec{\omega}}:=m-\lim_{N\to\infty}\gamma^{(N)}_t$, where
\beann
\Phi_t^{\vec{\omega}}\left[W(\vec{r})\right]&=&\exp\Big(-\frac{1}{2}\vec{r}\cdot\Big(Y_t(\vec{\omega})\,\vec{r}\Big)\Big)\,W(X_t^{tr}(\vec{\omega})\vec{r})\ ,
\eeann
where, with $\TT$ denoting time-ordering, 
\beann
X_t(\vec{\omega})&:&=\TT{\rm e}^{\int_0^t{\rm d}s\, Q(\vec{\omega}_s)}\\
Q(\vec{\omega}_t)&:=&-i\sigma(\vec{\omega}_t)\,\widetilde{B}\,+\,D(\vec{\omega}_t)\\
Y_t(\vec{\omega})&:=&\int_0^t{\rm d}s\, X_{t,s}(\vec{\omega})\,\Big(\sigma(\vec{\omega}_s)\,A\,\sigma^{tr}(\vec{\omega}_s)\Big)\,X^{tr}_{t,s}(\vec{\omega})\ .
\eeann
In the above expression, $X_{t,s}(\vec{\omega}):=X_t(\vec{\omega})\,X^{-1}_s(\vec{\omega})$, $A$ is the symmetric component of the Kossakowski matrix $C$ in~\eqref{mflind1a00}, $\widetilde{B}=B+\,2\,i\,h^{(re)}$ in~\eqref{hmat1}.
Finally, $\sigma(\vec{\omega}_t)$ is the time-dependent symplectic matrix with entries given by \eqref{tsympform1} and $D(\vec{\omega}_t)$ is the matrix defined in \eqref{d0}.}
\medskip

\begin{proof}
We prove the assertion by showing that $\lim_{N\to\infty}I^{(N)}(t)=0$, where
$$
I^{(N)}(t):=\omega_{\vec{r}_1\vec{r}_2}\left(\gamma^{(N)}_t\left[W^{(N)}_t(\vec{r})\right]\,-\,
{\rm e}^{-1/2\,\vec{r}\cdot\left(Y_t(\vec{\omega})\vec{r}\right)}\,W^{(N)}(X_t^{tr}(\vec{\omega})\vec{r})\right)\ .
$$
Indeed, from Theorem \ref{th1} we know that, for all $\vec{r}_{1,2}\in\RR^{d^2}$,
$$
\lim_{N\to\infty}\omega_{\vec{r}_1\vec{r}_2}\left(W^{(N)}(X^{tr}_t(\vec{\omega})\vec{r})\right)=
\Omega_{\vec{r}_1\vec{r}_2}\left(W(X^{tr}_t(\vec{\omega})\vec{r})\right)\ .
$$
Since the matrix $\widetilde{B}$ is such that the conjugated matrix $\widetilde{B}^*=-\widetilde{B}$ and the matrices $\sigma(\omega_t)$ and $D(\vec{\omega}_t)$ are real, such is also $Q(\vec{\omega}_t)$ as well as
the matrix $X_t(\vec{\omega})$ solution to 
$$
\frac{{\rm d}}{{\rm d}t}X_t(\vec{\omega})=Q(\vec{\omega}_t)\,X_t(\vec{\omega})\ ,\qquad
X_0(\vec{\omega})=\mathbf{1}	 .
$$ 
Moreover, its inverse matrix, $X^{-1}_t(\vec{\omega})$, is given by the inverted time-ordered exponential 
$$
X^{-1}_t(\vec{\omega})=\mathbb{T}^{-1}\exp\left(-\int_0^t{\rm d}s\,Q(\vec{\omega}_s)\right)\ ,
\qquad \frac{{\rm d}}{{\rm d}t}X^{-1}_t(\vec{\omega})=-X^{-1}_t(\vec{\omega})\,Q(\vec{\omega}_t)\ .
$$
Then, we set $X_{t,s}(\vec{\omega}):=X_t(\vec{\omega})\,X^{-1}_s(\vec{\omega})$, and introduce the matrix
$$
Y_{t,s}(\vec{\omega})=\int_s^t{\rm d}\tau\, X_{t,\tau}(\vec{\omega})\,\sigma(\vec{\omega}_\tau)\,A\,\sigma^{tr}(\vec{\omega}_\tau)\,X^{tr}_{t,\tau}(\vec{\omega})\ .
$$
Since $W^{(N)}_{s=0}(X^{tr}_{t,0}(\vec{\omega})\vec{r})=W^{(N)}(X^{tr}_t(\vec{\omega})\vec{r})$, we write:
\bea
\nonumber
&&\hskip-1cm
\gamma^{(N)}_t\left[W^{(N)}_t(\vec{r})\right]\,-\,{\rm e}^{-1/2\,\vec{r}\cdot\left(Y_t(\vec{\omega})\vec{r}\right)}\,
W^{(N)}(X_t^{tr}(\vec{\omega})\vec{r})=\\
\nonumber
&&\hskip 2cm
=\int_0^t{\rm d}s\,\frac{{\rm d}}{{\rm d}s}\Big(\gamma_s^{(N)}\left[W^{(N)}_s(X_{t,s}^{tr}(\vec{\omega})\vec{r})\right]\,{\rm e}^{-1/2\,\vec{r}\cdot\left(Y_{t,s}(\vec{\omega})\vec{r}\right)}\Big)=\\
&&\hskip 3cm
=\int_0^t{\rm d}s\, \gamma^{(N)}_s\left[\Delta^{(N)}_{t,s}\right]\,{\rm e}^{-\frac{1}{2}\,\vec{r}\cdot\left(Y_{t,s}(\vec{\omega})\vec{r}\right)}\\
\label{aidd1}
&&\hskip-1cm
\nonumber
\Delta^{(N)}_{t,s}=\LL^{(N)}\left[W^{(N)}_s(X_{t,s}^{tr}(\vec{\omega})\,r)\right]\,+\,\frac{{\rm d}}{{\rm d}s}W^{(N)}_s(X_{t,s}^{tr}(\vec{\omega})\vec{r})\\
&&\hskip 2cm
-\,\frac{1}{2}
\frac{{\rm d}}{{\rm d}s}\Big(\vec{r}\cdot\left(Y_{t,s}(\vec{\omega})\vec{r}\right)\Big)\,W^{(N)}_s(X^{tr}_{t,s}(\vec{\omega})\vec{r})\ .
\eea

Observe that the time-derivative of the exponent of $W^{(N)}_s(X^{tr}_{t,s}(\vec{\omega})\vec{r})$ yields
$$
\frac{{\rm d}}{{\rm d}s}\left(X_{t,s}^{tr}(\vec{\omega})\vec{r}\cdot\vec{F}^{(N)}_s\right)=-\left(X^{tr}_{t,s}(\vec{\omega})\vec{r}\right)\cdot \left(Q(\vec{\omega}_s)\vec{F}^{(N)}_s\right)\,-\,\sqrt{N}\,\left(X^{tr}_{t,s}(\vec{\omega})\vec{r}\right)\cdot\dot{\vec{\omega}}^{(N)}_s\ ,
$$
where $\vec{\omega}^{(N)}_t$ stands for the vector with components 
$\omega^{(N)}_\nu(t)$ (see \eqref{macr1a}).
Then, from the well known result of Lemma \ref{tool3} reported and proved for sake of completeness in Appendix B,
\beann
&&
\frac{{\rm d}}{{\rm d}s}W^{(N)}_s(X^{tr}_{t,s}(\vec{\omega})\vec{r})=\sum_{k=1}^\infty\frac{i^k}{k!}\mathbb{K}_{\left(X_{t,s}^{tr}(\vec{\omega})\vec{r}\right)\cdot\vec{F}^{(N)}_s}^{k-1}\left[\frac{{\rm d}}{{\rm d}s}\left(X_{t,s}^{tr}(\vec{\omega})\vec{r}\right)\cdot\vec{F}^{(N)}_s\right]\,W^{(N)}_s(X^{tr}_{t,s}(\vec{\omega})\vec{r}) \\
&&=\Bigg(-iX^{tr}_{t,s}(\vec{\omega})\vec{r}\cdot\left(Q(\vec{\omega}_s)\vec{F}^{(N)}_s\right)\,-i\,\sqrt{N}\left(X^{tr}_{t,s}(\vec{\omega})\vec{r}\right)\cdot\dot{\vec{\omega}}^{(N)}_s\\
&&+\frac{1}{2}\left[\left(X^{tr}_{t,s}(\vec{\omega})\vec{r}\right)\cdot\vec{F}^{(N)}_s\,,\,\left(X^{tr}_{t,s}(\vec{\omega})\vec{r}\right)\cdot\left(Q(\vec{\omega}_s)\,F^{(N)}_s\right)\right]\,+\,Z^{(N)}_t\Bigg)\,W^{(N)}_s(X^{tr}_{t,s}(\vec{\omega})\vec{r})\ ,
\eeann
where $Z^{(N)}_t$ contains an infinite sum  starting from $k=3$ and thus vanishes in norm when $N\to \infty$, while the commutator yields
$$
\left[\left(X^{tr}_{t,s}(\vec{\omega})\vec{r}\right)\cdot\vec{F}^{(N}_s\,,\,\left(X^{tr}_{t,s}(\vec{\omega})\vec{r}\right)\cdot\left(Q(\vec{\omega}_s)\vec{F}^{(N)}_s\right)\right]\,=\,-\,\left(X^{tr}_{t,s}(\vec{\omega})\vec{r}\right)\cdot\left(Q(\vec{\omega}_s)\,T^{(N)}\,X^{tr}_{t,s}(\vec{\omega})\vec{r}\right)\ ;
$$
thus, through $T^{(N)}$, it exhibits a mean-field scaling when $N\to\infty$. Finally,
$$
\frac{{\rm d}}{{\rm d}s}\vec{r}\cdot\left(Y_{t,s}(\vec{\omega})\vec{r}\right)=\left(X^{tr}_{t,s}(\vec{\omega})\vec{r}\right)\cdot\left(\sigma(\vec{\omega}_s)\,A\,\sigma(\vec{\omega}_s)\,X^{tr}_{t,s}(\vec{\omega})\vec{r}\right)\ .
$$
Setting $\vec{\xi}=X^{tr}_{t,s}(\vec{\omega})\vec{r}$ for sake of simplicity, \eqref{aidd1} can thus be recast as
\beann
\Delta^{(N)}_{t,s}&=&\LL^{(N)}\left[W^{(N)}_s(\vec{\xi})\right]\,-i\,\sqrt{N}\,\left(\vec{\xi}\cdot\dot{\vec{\omega}}^{(N)}_s\right)\,W^{(N)}_s(\vec{\xi})\,-i\,\Big(\vec{\xi}\cdot\left(Q(\vec{\omega}_s)\vec{F}^{(N)}_s\right)\Big)\,W^{(N)}_s(\vec{\xi})\\
&-&\frac{1}{2}\left(\,\vec{\xi}\cdot\Big(Q(\vec{\omega}_s)\,T^{(N)}\,+\,\sigma(\vec{\omega}_s)\,A\,\sigma(\vec{\omega}_s)\Big)\vec{\xi}\right)\,W^{(N)}_s(\vec{\xi})\\
&+&\,Z^{(N)}_t\,W^{(N)}_s(\vec{\xi})\ .
\eeann
The last term does not contribute to the mesoscopic limit and Proposition \ref{LINDACT} provides the mesoscopic behaviour of the first contribution to the right hand side of the equality above. We now group together terms with the same scaling with $1/N$ and show that, in the mesoscopic limit, the following quantities vanish:
\bea
\label{thproof1}
&&\hskip-1cm
\gamma^{(N)}_s\left[\sqrt{N}\,\vec{\xi}\cdot\left(\Big(\widetilde{D}^{(N)}_s+i\mathcal{E}\Big)\,\vec{\omega}^{(N)}_s\,-\,\dot{\vec{\omega}}^{(N)}_s\right)\,W^{(N)}_s(\vec{\xi})\right]\\
\label{thproof3}
&&\hskip-1cm
\gamma^{(N)}_s\left[\left(\vec{\xi}\cdot\Big(\,T^{(N)}\,\widetilde{B}\,-\,\widetilde{D}^{(N)}_s\,-\,i\,\mathcal{E}\,+\,\,Q(\vec{\omega}_s)\Big)\vec{F}^{(N)}_s\right)\,W^{(N)}_s(\vec{\xi})\right]
\\
\label{thproof2} 
&&\hskip-1cm
\gamma^{(N)}_s\left[\left(\vec{\xi}\cdot\Big(T^{(N)}\,A\,T^{(N)}\,+\,\sigma(\vec{\omega}_s)\,A\,\sigma(\vec{\omega}_s)\Big)\vec{\xi}\right)\,W^{(N)}_s(\vec{\xi})\right]
\\\label{thproof4}
&&\hskip-1cm
\gamma^{(N)}_s\left[\left(\vec{\xi}\cdot\Big(D^{(N)}_s\,T^{(N)}\,-\,Q(\vec{\omega}_s)\,T^{(N)}\,-\,2\,i\,T^{(N)}\,h\,T^{(N)}\Big)\vec{\xi}\right)\,W^{(N)}_s(\vec{\xi})\right]\ .
\eea
Notice that $T^{(N)}$ is an operator-valued matrix with entries that scale as mean-field observables; then, we proceed by showing that, in the large $N$ limit, in the above expressions, mean-field operators of the form $\displaystyle M^{(N)}_\alpha:=\frac{1}{N}\sum_{k=0}^{N-1}v_\alpha^{(k)}$ can be substituted by their expectations 
$\displaystyle\omega_\alpha(t)=\lim_{N\to\infty}\omega^{(N)}_t(M^{(N)}_\alpha)$ with respect to the large $N$ limit of the time-evolving state $\omega^{(N)}_t=\omega\circ\gamma^{(N)}_t$.
Indeed, in \eqref{thproof2} and \eqref{thproof4} there appear terms of the type
\be
\gamma_s^{(N)}\left[M^{(N)}_\alpha\, M^{(N)}_\beta\, W_s^{(N)}(\vec{\xi})\right]\ ,
\label{P2}
\ee
while terms of the form
\be
\gamma_s^{(N)}\left[M^{(N)}_\alpha\, F_s^{(N)}(v_\mu)\, W_s^{(N)}(\vec{\xi})\right]\ ,
\label{P3}
\ee
appear in \eqref{thproof3} and terms as
\be
\gamma_s^{(N)}\left[M^{(N)}_\alpha\, W_s^{(N)}(\vec{\xi})\right]\ ,
\label{P1}
\ee
are to be found both in~\eqref{thproof1} and~\eqref{thproof4}.

Let us consider the latter expression  and study the limit
$$
\lim_{N\to\infty}\omega_{\vec{r}_1\vec{r}_2}\left(\gamma_s^{(N)}\left[\left(M^{(N)}_\alpha-\omega_\alpha(s)\right)\, W_s^{(N)}(\vec{\xi})\right]\right)\ .
$$
Using Lemma \ref{tool1} in Appendix B, and the Kadison inequality, one has
\begin{eqnarray*}
&&\Big|\omega_{\vec{r}_1\vec{r}_2}\left(\gamma_s^{(N)}\left[\left(M^{(N)}_\alpha-\omega_\alpha(s)\right)\, W_s^{(N)}(\vec{\xi})\right]\right)\Big|\le\\
&&\hskip 2cm\leq\sqrt{\omega\left(W^{(N)}(\vec{r}_1)\gamma_s^{(N)}\left[\left(M^{(N)}_\alpha-\omega_\alpha(t)\right)^2\right]\left(W^{(N)}(\vec{r}_1)\right)^\dag\right)}\ .
\end{eqnarray*}
Then, Lemma \ref{lemtool2} and Corollary \ref{spt} yield
\begin{eqnarray*}
&&
\lim_{N\to\infty}\omega\left(W^{(N)}(\vec{r}_1)\gamma_s^{(N)}\left[\left(M^{(N)}_\alpha-\omega_\alpha(s)\right)^2\right]\left(W^{(N)}(\vec{r}_1)\right)^\dag\right)=\\
&&\hskip 2cm
=\lim_{N\to\infty}\Big(\omega\left(\gamma_s^{(N)}\left[M^{(N)}_\alpha-\omega_\alpha(t)\right]\right)\Big)^2=0\ .
\end{eqnarray*}
Therefore, we have that
$$
\lim_{N\to\infty}\omega_{\vec{r}_1\vec{r}_2}\left(\gamma_s^{(N)}\left[M^{(N)}_\alpha\, W_s^{(N)}(\vec{\xi})\right]\right)=\omega_\alpha(s)\lim_{N\to\infty}\omega_{\vec{r}_1\vec{r}_2}\left(\gamma_s^{(N)}\left[ W_s^{(N)}(\vec{\xi})\right]\right)\ .
$$
By a similar argument, one shows that
\begin{eqnarray*}
&&
\lim_{N\to\infty}\omega_{\vec{r}_1\vec{r}_2}\left(\gamma_s^{(N)}\left[M^{(N)}_\alpha\, M^{(N)}_\beta\, W_s^{(N)}(\vec{\xi})\right]\right)=\\
&&=\omega_\alpha(s)\lim_{N\to\infty}\omega_{\vec{r}_1\vec{r}_2}\left(\gamma_s^{(N)}\left[M^{(N)}_\beta\, W_s^{(N)}(\vec{\xi})\right]\right)=\omega_\alpha(s)\omega_\beta(s)\lim_{N\to\infty}\omega_{\vec{r}_1\vec{r}_2}\left(\gamma_s^{(N)}\left[ W_s^{(N)}(\vec{\xi})\right]\right)\ .
\end{eqnarray*}
Now we consider the following quantity
$$
\lim_{N\to\infty}\omega_{\vec{r}_1\vec{r}_2}\left(\gamma_s^{(N)}\left[\left(M^{(N)}_\alpha-\omega_{\alpha}(s)\right)\, F_\mu^{(N)}(s)\, W_s^{(N)}(\vec{\xi})\right]\right)\ .
$$
The operator $M^{(N)}_\alpha$ scales as a mean-field quantity; therefore, the norm of its commutator with quantities that scale as fluctuations vanishes in the large $N$ limit. Then, because of Lemma \ref{lemtool2}, we have that
\begin{equation*}
\begin{split}
\lim_{N\to\infty}\omega_{\vec{r}_1\vec{r}_2}\left(\gamma_s^{(N)}\left[\left(M^{(N)}_\alpha-\omega_{\alpha}(s)\right)\, F_\mu^{(N)}(s)\, W_s^{(N)}(\vec{\xi})\right]\right)=\hspace{4cm}\\
\lim_{N\to\infty}\omega_{\vec{r}_1\vec{r}_2}\left(\gamma_s^{(N)}\left[ F_\mu^{(N)}(s)\, W_s^{(N)}(\vec{\xi})\left(M^{(N)}_\alpha-\omega_{\alpha}(s)\right)\right]\right)\ .
\end{split}
\end{equation*}
Finally, Lemma~\ref{tool1} in Appendix B and the Kadison inequality applied to the term on the right-hand side of the equality, yield the following bound
\begin{equation*}
\begin{split}
\Big|\omega_{\vec{r}_1\vec{r}_2}\left(\gamma_s^{(N)}\left[ F_\mu^{(N)}(s)\, W_s^{(N)}(\vec{\xi})\left(M^{(N)}_\alpha-\omega_{\alpha}(s)\right)\right]\right)\Big|\le \hspace{4cm}\\
\sqrt{\omega\left(W^{(N)}(\vec{r}_1)\,\gamma_s^{(N)}\left[\left(F_\mu^{(N)}(s)\right)^2\right]\,\left(W^{(N)}(\vec{r}_1)\right)^\dagger\right)}\ \times\hspace{3cm}\\
\times\sqrt{\omega\Big(\left((W^{(N)}(\vec{r}_2)\right)^\dagger\,\gamma_s^{(N)}\left[\left(M^{(N)}_\alpha-\omega_{\alpha}(s)\right)^2\right]\,W^{(N)}(\vec{r}_2)\Big)}\ .
\end{split}
\end{equation*}
The first term on the right-hand side is bounded by Lemma~\ref{bound} in Appendix B, while the second one, as already shown, vanishes in the large $N$ limit. Therefore, 
\begin{equation*}
\begin{split}
\lim_{N\to\infty}\omega_{\vec{r}_1\vec{r}_2}\left(\gamma_s^{(N)}\left[M^{(N)}_\alpha\, F_s^{(N)}(v_\mu)\, W_s^{(N)}(\xi)\right]\right)=\hspace{4cm}\\
=\omega_\alpha(s)\lim_{N\to\infty}\omega_{\vec{r}_1\vec{r}_2}\left(\gamma_s^{(N)}\left[F_s^{(N)}(v_\mu)\, W_s^{(N)}(\xi)\right]\right)\, .
\end{split}
\end{equation*}
Applying these considerations to the quantities \eqref{thproof1}--\eqref{thproof4}, one thus sees that~\eqref{thproof1} vanishes in the large $N$ limit because of Proposition~\ref{scal}.
Furthermore
$$
\lim_{N\to\infty}\omega\left(\gamma_s^{(N)}\left[\vec{\xi}\cdot\Big(T^{(N)}\,A\,T^{(N)}\vec{\xi}\Big)\right]\right)=-\vec{\xi}\cdot\Big(\sigma(\vec{\omega}_{s})A\sigma(\vec{\omega}_{s})\vec{\xi}\Big)\, ,
$$ 
whence \eqref{thproof2} vanishes in the arge $N$ limit and analogously
\begin{equation*}
\lim_{N\to\infty}\omega	\left(\gamma_s^{(N)}\left[\left(T^{(N)}\,\widetilde{B}\,-\,\widetilde{D}_s^{(N)}\,-\,i\,\mathcal{E}\,+\,Q(\vec{\omega}_s)\right)\right]\right)=0\ .
\end{equation*}
Finally, as regards the large $N$ limit of~\eqref{thproof4}, using~\eqref{D}, \eqref{matrices1} and~\eqref{hmat1},\eqref{hmat2}, the scalar product behaves as
$$
\vec{\xi}\cdot\Big(i\,D(\vec{\omega}_t)\,\sigma(\vec{\omega}_s)-iQ(\vec{\omega}_s)\,\sigma(\vec{\omega}_s)+2i\sigma(\vec{\omega}_s)\,h\,\sigma(\vec{\omega}_s)\Big)\vec{ \xi}
=-\vec{\xi}\cdot\Big(\sigma(\vec{\omega}_s)\Big(B+2h^{(im)}\Big)\sigma(\vec{\omega}_s)\vec{\xi}\Big)=0\ ,
$$
the latter equality resulting form the fact that $\sigma(\vec{\omega}_t)\,\Big(B\,+\,2\,h^{(im)}\Big)\,\sigma(\vec{\omega}_t)$ is antisymmetric. 
\qed
\end{proof}

\subsection{Structure of the dissipative generator}
\label{subsgen}

In this section we prove various properties of the mesoscopic dynamics and its generator. We start by showing that the maps $\Phi^{\vec{\omega}}_t$ defined by Theorem~\ref{expfluct} cannot act linearly on the fluctuation algebra.
\medskip

\noindent{\bf Proposition 1.}\quad\textit{
The mesoscopic dynamics of the product of two Weyl operators satisfies
$$
\Phi^{\vec{\omega}}_t\left[W(\vec{r})\,W(\vec{s})\right]={\rm e}^{i\,\vec{s}\cdot\left(\sigma(\vec{\omega}_t)\vec{r}\right)}\,\Phi^{\vec{\omega}}_t\left[W(\vec{s})\,W(\vec{r})\right]\ ,\qquad\forall\,\vec{r}\,,\,\vec{s}\in\RR^{d^2}\ .
$$
}
\medskip

\begin{proof}
According to \eqref{meslim1a} in Definition \ref{meslimdef}, we show that, for all $\vec{r}_{1,2}\in\RR^{d^2}$,
$$
\lim_{N\to\infty}\omega_{\vec{r}_1\vec{r}_2}\left(\gamma^{(N)}_t\left[W^{(N)}_t(\vec{r})\,W^{(N)}_t(\vec{s})\right]\right)={\rm e}^{i\,\vec{s}\cdot\left(\sigma(\vec{\omega}_t)\vec{r}\right)}\,
\lim_{N\to\infty}\omega_{\vec{r}_1\vec{r}_2}\left(\gamma^{(N)}_t\left[W^{(N)}_t(\vec{s})\,W^{(N)}_t(\vec{r})\right]\right)\ .
$$
Since the exponentials $W^{(N)}_t(\vec{r})$ are unitaries, we write
$$
W^{(N)}_t(\vec{r})\,W^{(N)}_t(\vec{s})=W^{(N)}_t(\vec{s})\,\exp\left(i\,\left(W^{(N)}_t(\vec{s})\right)^\dag\,\vec{r}\cdot\vec{F}^{(N)}_t\,W^{(N)}_t(\vec{s})\right)\ ,
$$
so that Lemma \ref{lemtool1} and \ref{lemtool2} in Section~\ref{app2} yield
$$
\lim_{N\to\infty}\omega_{\vec{r}_1\vec{r}_2}\left(\gamma^{(N)}_t\left[W^{(N)}_t(\vec{r})\,W^{(N)}_t(\vec{s})\right]\right)=
\lim_{N\to\infty}\omega_{\vec{r}_1\vec{r}_2}\left(\gamma^{(N)}_t\left[W^{(N)}_t(\vec{s})\,W^{(N)}_t(\vec{r})\,{\rm e}^{\vec{s}\cdot\left(T^{(N)}\vec{r}\right)}\right]\right)\ .
$$
The result then follows by showing that $\lim_{N\to\infty}I^{(N)}(t)=0$, where
$$
I^{(N)}(t):=\left|\omega_{\vec{r}_1\vec{r}_2}\left(\gamma^{(N)}_t\left[W^{(N)}_t(\vec{s})\,W^{(N)}_t(\vec{r})\,\left({\rm e}^{\vec{s}\cdot\left(T^{(N)}\vec{r}\right)}\,-\,{\rm e}^{i\,\vec{s}\cdot\left(\sigma(\vec{\omega}_t)\vec{r}\right)}\right)\right]\right)\right|\ .
$$
By using the Cauchy-Schwartz and Kadison inequalities, one bounds $\left(I^{(N)}(t)\right)^2$ by
$$
\omega\left(\left(W^{(N)}_t(\vec{r}_2)\right)^\dag\,\gamma^{(N)}_t\left[\left({\rm e}^{\vec{s}\cdot\left(T^{(N)}\vec{r}\right)}\,-\,{\rm e}^{i\,\vec{s}\cdot\left(\sigma(\vec{\omega}_t)\vec{r}\right)}\right)^\dag\left({\rm e}^{\vec{s}\cdot\left(T^{(N)}\vec{r}\right)}\,-\,{\rm e}^{i\,\vec{s}\cdot\left(\sigma(\vec{\omega}_t)\vec{r}\right)}\right)\right]\,W^{(N)}(\vec{r}_2)\right)\ .
$$
Then, writing
\beann
{\rm e}^{\vec{s}\cdot\left(T^{(N)}\vec{r}\right)}\,-\,{\rm e}^{i\,\vec{s}\cdot\left(\sigma(\vec{\omega}_t)\vec{r}\right)}
&=&\int_0^1{\rm d}x\, \frac{{\rm d}}{{\rm d}x}\,\left({\rm e}^{x\,\vec{s}\cdot\left(T^{(N)}\vec{r}\right)}\,{\rm e}^{
i(1-x)\,\vec{s}\cdot\left(\sigma(\vec{\omega}_t)\vec{r}\right)}\right)\\
&=&
\int_0^1{\rm d}x\,{\rm e}^{x\,\vec{s}\cdot\left(T^{(N)}\vec{r}\right)}\,
{\rm e}^{i(1-x)\,\vec{s}\cdot\left(\sigma(\vec{\omega}_t)\vec{r}\right)}\,
\Big(\vec{s}\cdot\left(T^{(N)}\vec{r}\right)\,-i\,\vec{s}\cdot\left(\sigma(\vec{\omega}_t)\vec{r}\right)\Big)\ ,
\eeann
and using that $\left(\vec{s}\cdot\left(T^{(N)}\vec{r}\right)\right)^\dag=-\vec{s}\cdot\left(T^{(N)}\vec{r}\right)$ and $\left(\vec{s}\cdot\left(\sigma(\vec{\omega}_t)\vec{r}\right)\right)^*=\vec{s}\cdot\left(\sigma(\vec{\omega}_t)\vec{r}\right)$, one gets
\beann
\left(I^{(N)}(t)\right)^2&\leq&-\omega\left(\left(W^{(N)}_t(\vec{r}_2)\right)^\dag\,\gamma^{(N)}_t\left[\Big(\vec{s}\cdot\left(T^{(N)}\vec{r}\right)\,-\,i\,\vec{s}\cdot\left(\sigma(\vec{\omega}_t)\vec{r}\right)\Big)^2\right]\,W^{(N)}(\vec{r}_2)\right)\\
&=&-\sum_{\mu,\mu'\nu,\nu'=1}^{d^2}s_\mu\,r_\nu s_{\mu'} r_{\nu'}\,\omega\left(\left(W^{(N)}_t(\vec{r}_2)\right)^\dag\,\gamma^{(N)}_t\left[Z^{(N)}_{\mu\nu}(t)\,Z^{(N)}_{\mu'\nu'}(t)\right]\,W^{(N)}(\vec{r}_2)\right)\ ,
\eeann
where, by means of \eqref{tsympform1}, we set $\displaystyle Z_{\mu\nu}^{(N)}(t):=\frac{1}{N}\sum_{k=0}^{N-1}\left(\Big[v_\mu^{(k)}\,,\,v_\nu^{(k)}\Big]\,-\,i\omega_{\mu\nu}(t)\right)$.
This latter is a mean-field quantity;
then, Corollary \ref{spt} yields
\beann
&&
\lim_{N\to\infty}\omega\left(\left(W^{(N)}_t(\vec{r}_2)\right)^\dag\,\gamma^{(N)}_t\left[Z^{(N)}_{\mu\nu}(t)\,Z^{(N)}_{\mu'\nu'}(t)\right]\,W^{(N)}(\vec{r}_2)\right)=\\
&&\hskip 1cm
=\lim_{N\to\infty}\omega\left(\left(W^{(N)}_t(\vec{r}_2)\right)^\dag\,\gamma^{(N)}_t\left[Z^{(N)}_{\mu\nu}(t)\right]\,\gamma^{(N)}_t\left[Z^{(N)}_{\mu'\nu'}(t)\right]\,W^{(N)}(\vec{r}_2)\right)\ .
\eeann
Now, Corollary \ref{exg} ensures that $\gamma^{(N)}_t\left[Z^{(N)}_{\mu\nu}(t)\right]$ is a  series of mean-field quantities, uniformly convergent with respect to $N$.
As such, because of Lemma \ref{lemtool2}, it commutes with the local exponential operators $W^{(N)}_t(r)$ in the large $N$ limit, so that
\beann
&&
\hskip-.8cm
\lim_{N\to\infty}\omega\left(\left(W^{(N)}_t(\vec{r}_2)\right)^\dag\,\gamma^{(N)}_t\left[Z^{(N)}_{\mu\nu}(t)\,Z^{(N)}_{\mu'\nu'}(t)\right]\,W^{(N)}(\vec{r}_2)\right)=\\
&&\hskip 1cm
=
\lim_{N\to\infty}\omega\left(\gamma^{(N)}_t\left[Z^{(N)}_{\mu\nu}(t)\right]\,\gamma^{(N)}_t\left[Z^{(N)}_{\mu'\nu'}(t)\right]\right)\\
&&\hskip 2cm
=\lim_{N\to\infty}\omega\left(\gamma^{(N)}_t\left[Z^{(N)}_{\mu\nu}(t)\right]\right)\lim_{N\to\infty}\omega\left(\gamma^{(N)}_t\left[Z^{(N)}_{\mu'\nu'}(t)\right]\right)=0\ ,
\eeann
where the last two equalities follow from \eqref{macro1} and \eqref{macr1a}.
\qed
\end{proof}
\bigskip

The next step is the proof of Theorem~\ref{CPos} asserting that the linear extended maps $\Phi^{ext}_t$ defined in~\eqref{extweyl2} are completely positive on the direct integral von Neumann algebra $\cw^{ext}$ defined in~\eqref{dirint}.
\medskip

\noindent
{\bf Theorem 4.}\quad
\textit{
The maps $\Phi^{ext}_t$ in \eqref{extweyl2} form a one parameter family of completely positive, unital, Gaussian maps on the von Neumann algebra $\cw^{ext}$.
}
\medskip

\begin{proof}
Gaussian maps transform Gaussian states into Gaussian states.
We shall then consider states $\Omega^{ext}$ on $\cw^{ext}$ such that, according to \eqref{extstate},  $\displaystyle\Omega_{\vec{\omega}}^{ext}=\int_{\cs}^{\oplus}{\rm d}\vec{\nu}\,\delta_{\vec{\omega}}(\vec{\nu})\,\Omega_{\vec{\nu}}$, where 
$\Omega_{\vec{\omega}}$ is a Gaussian state on the Weyl algebra $\cw_{\vec{\omega}}$:
$$
\Omega_{\vec{\omega}}\Big(W_{\vec{\omega}}(\vec{r}(\vec{\omega}))\Big)=\exp\left(-\frac{1}{2}\vec{r}(\vec{\omega})\cdot\Big(\Sigma(\vec{\omega})\vec{r}(\vec{\omega})\Big)\right)\ ,
$$
with covariance matrix $\Sigma(\vec{\omega})$ such that \cite{Olivares}
\be
\label{posgauss0}
\Sigma(\vec{\omega})+\frac{i}{2}\sigma(\vec{\omega})\geq 0\ .
\ee
Using \eqref{extweyl3}, the extended dynamics turns the state $\Omega^{ext}_{\vec{\omega}}$ into
the expectation functional $\Omega^t_{\vec{\omega}}:=\Omega^{ext}_{\vec{\omega}}\circ\Phi^{\vec{\omega}}_t$ on  $\cw_{\vec{\omega}}$, such that, with
$\displaystyle
E(\vec{\omega})=\exp\left(-\frac{i}{2}\vec{r}_1(\vec{\omega})\cdot\Big(\sigma(\vec{\omega})\,\vec{r}_2(\vec{\omega})\Big)\right)$ (see \eqref{Weyl}), 
\beann
&&\hskip-.6cm
\Omega^{ext}_{\vec{\omega}}\Big(\Phi^{ext}_t\Big[W^1_{\vec{r}_1}W^1_{\vec{r}_2}\Big]\Big)=\Omega^{ext}_{\vec{\omega}}\Big(\Phi^{ext}_t\Big[E\,W^1_{\vec{r}_1+\vec{r}_2}\Big]\Big)=E(\vec{\omega}_t)\,\Omega_{\vec{\omega}}\Big(\Phi^{\vec{\omega}}_t\Big[W_{\vec{\omega}}(\vec{r}_1+\vec{r}_2)\Big]\Big)\\
&&\hskip-.3cm
={\rm e}^{-\frac{i}{2}\vec{r}_1(\vec{\omega}_t)\cdot\big(\sigma(\vec{\omega}_t)\,\vec{r}_2(\vec{\omega}_t)\big)}\,
{\rm e}^{-\frac{1}{2}(\vec{r}_1(\vec{\omega}_t)+\vec{r}_2(\vec{\omega}_t))\cdot\big(Y_t(\vec{\omega})\,(\vec{r}_1(\vec{\omega}_t)+\vec{r}_2(\vec{\omega}_t))\big)}\,\times\\
&&\hskip 2cm
\times\,\Omega_{\vec{\omega}}\Big(W_{\vec{\omega}}\big(X_t^{tr}(\vec{\omega})(\vec{r}_1(\vec{\omega}_t)+\vec{r}_2(\vec{\omega}_t))\big)\Big)\\
&&\hskip-.3cm
={\rm e}^{-\frac{i}{2}\vec{r}_1(\vec{\omega})\cdot\big(\sigma(\vec{\omega}_t)\,\vec{r}_2(\vec{\omega})\big)}\,
{\rm e}^{-\frac{1}{2}(\vec{r}_1(\vec{\omega}_t)+\vec{r}_2(\vec{\omega}_t))\cdot\big(\Sigma_t(\vec{\omega})\,(\vec{r}_1(\vec{\omega}_t)+\vec{r}_2(\vec{\omega}_t))\big)}\ ,
\eeann
where $\Sigma_t(\vec{\omega})=Y_t(\vec{\omega})\,+\,X_t(\vec{\omega})\,\Sigma(\vec{\omega})\,X_t^{tr}(\vec{\omega})$ with $Y_t(\vec{\omega})$ the matrix defined in~\eqref{matrices3}. 
Then, $\Omega^{ext}_t$ amounts to a functional on the Weyl algebra $\cw_{\vec{\omega}_t}$ determined by the symplectic matrix $\sigma(\vec{\omega}_t)$. Such a functional is positive and thus corresponds to a Gaussian state, if and only if,
according to \eqref{posgauss0}, the covariance matrix  $\Sigma_t(\vec{\omega})$
satisfies 
\bea
\label{psgauss1}
\Sigma_t(\vec{\omega})\,+\,\frac{i}{2}\sigma(\vec{\omega}_t)&=&X_t(\vec{\omega})\,\left(\Sigma(\vec{\omega})+\frac{i}{2}\sigma(\vec{\omega})\right)\,X^{tr}_t(\vec{\omega})\ +\\
\label{posgauss2}
&+&
Y_t(\vec{\omega})\,+\,\frac{i}{2}\,\Big(\,\sigma(\vec{\omega}_t)\,-\,X_t(\vec{\omega})\,\sigma(\vec{\omega})\,X^{tr}_t(\vec{\omega})\Big)
\geq 0\ .
\eea
Notice that, because of \eqref{posgauss0}, the right hand side of \eqref{psgauss1} is positive; concerning the contribution in \eqref{posgauss2}, we argue as follows.
Since $X_{t,s}(\vec{\omega})=X_t(\vec{\omega})\,X_s^{-1}(\vec{\omega})$, one rewrites
\beann
&&\hskip-1cm
Y_t(\vec{\omega})+\frac{i}{2}\Big(\sigma(\vec{\omega}_t)-X_t(\vec{\omega})\,\sigma(\vec{\omega})\,X^{tr}_t(\vec{\omega})\Big)=\int_{0}^{t}{\rm d}s\ X_{t,s}(\vec{\omega})\,\sigma(\vec{\omega}_s)\,A\,\sigma^{tr}(\vec{\omega}_s)\,X^{tr}_{t,s}(\vec{\omega})\\
&&\hskip 2cm
+\frac{i}{2}\int_{0}^{t}{\rm d}s\, \frac{{\rm d}}{{\rm d}s}\Big(X_{t,s}(\vec{\omega})\,\sigma(\vec{\omega}_s)\,X^{tr}_{t,s}(\vec{\omega})\Big)\ .
\eeann
Using \eqref{matrices1}-\eqref{matrices3} and \eqref{eqsigmadot} together with~\eqref{d0}, one then gets
\beann
&&\hskip -1cm
\frac{{\rm d}}{{\rm d}s}\Big(X_{t,s}(\vec{\omega})\,\sigma(\vec{\omega}_s)\,X^{tr}_{t,s}(\vec{\omega})\Big)=-X_{t,s}(\vec{\omega})\,Q(\vec{\omega}_s)
\sigma(\vec{\omega}_s)\,X^{tr}_{t,s}(\vec{\omega})\,-\,X_{t,s}(\vec{\omega})\,
\sigma(\vec{\omega}_s)\,Q^{tr}(\vec{\omega}_s)\,X^{tr}_{t,s}(\vec{\omega})\\
&&+\,
X_{t,s}(\vec{\omega})\frac{{\rm} d}{{\rm d}s}\sigma(\vec{\omega}_s)\,X^{tr}_{t,s}(\vec{\omega})=
-2i\,X_{t,s}(\vec{\omega})\,\sigma(\vec{\omega}_s)\,B\,\sigma^{tr}(\vec{\omega}_s)\,X^{tr}_{t,s}(\vec{\omega})\,+\\
&&+\,X_{t,s}(\vec{\omega})\,\Big(\frac{{\rm d}}{{\rm d}s} \sigma(\vec{\omega}_s)\,-\,\Big[D(\vec{\omega}_s)\,,\,\sigma(\vec{\omega}_s)\Big]\Big)\,X^{tr}_{t,s}(\vec{\omega})=-2\,i\,
X_{t,s}(\vec{\omega})\,\sigma(\vec{\omega}_s)\,B\,\sigma^{tr}(\vec{\omega}_s)\,X^{tr}_{t,s}(\vec{\omega})\ ,
\eeann
whence the positivity of the Kossakowski matrix $C=A+B$ yields
$$
Y_t(\vec{\omega})+\frac{i}{2}\Big(\sigma(\vec{\omega}_t)\,-\,X_t(\vec{\omega})\,\sigma(\vec{\omega})\,X^{tr}_t(\vec{\omega})\Big)=\int_{0}^{t}ds\ X_{t,s}(\vec{\omega})\,\sigma(\vec{\omega}_s)\,C\,\sigma^{tr}(\vec{\omega}_s)\,X^{tr}_{t,s}(\vec{\omega})\,\geq\,0\ .
$$

Unitality, $\Phi^{ext}_t[\mathbf{1}]=\mathbf{1}$, where $\mathbf{1}$ is the identity in $\cw^{ext}$, follows directly from \eqref{extweyl2}. Complete positivity of $\Phi^{ext}_t$ amounts to showing that
$$
\Phi^{ext}_t\otimes {\bf 1}_n\left[Z^\dagger Z\right]\ge0\, ,\qquad \forall n\in\mathbb{Z},
$$
where $Z$ is any operator $Z\in\cw^{ext}\otimes M_n\left(\mathbb{C}\right)$, where $M_n(\CC)$ is the algebra of $n\times n$ complex matrices. Let 
$\{E_\mu\}_{\mu=1}^{n^2}$ any fixed orthonormal basis of hermitean matrices in $M_n(\CC)$; then, a generic element $Z\in\cw^{ext}$ can be written as
$$
Z=\sum_{i,j,\mu}d_{ij\mu}\,W_{\vec{r}_i}^{f_j}\otimes E_\mu\ ,
$$
where $W_{\vec{r}_i}^{f_j}(\vec{\omega})=f_j(\vec{\omega})\,W_{\vec{\omega}}(\vec{r}_i(\vec{\omega}))$, with $f_j$ measurable functions on $\cs$, $\vec{r}_i(\vec{\omega})$ real vectors in $\RR^{d^2}$ and
$d_{ij\mu}$ suitable complex coefficients. The operator-value at $\vec{\omega}$ of the positive element $Z^\dagger Z$ reads
\beann
(Z^\dagger Z)(\vec{\omega})&=&\sum_{i,j,\mu;k,\ell,\nu}d_{ij\mu}^*\,d_{k\ell\nu}\, f^*_j(\vec{\omega})\,f_k(\vec{\omega})\,W_{\vec{\omega}}^\dagger(\vec{r}_i(\vec{\omega}))\,W_{\vec{\omega}}(\vec{r}_k(\vec{\omega}))\otimes E_\mu\,E_\nu\\
&=&
\sum_{i,j,\mu;k,\ell,\nu}d_{ij\mu}^*\,d_{k\ell\nu}\, f^*_j(\vec{\omega})\,f_k(\vec{\omega})\,E_{ik}(\vec{\omega})\,W_{\vec{\omega}}(\vec{r}_k(\vec{\omega})-\vec{r}_i(\vec{\omega}))\otimes E_\mu\,E_\nu\\
E_{ik}(\vec{\omega})&=&\exp\left(\frac{i}{2}\vec{r}_i(\vec{\omega})\,\cdot\Big(\sigma(\vec{\omega})\,\vec{r}_k(\vec{\omega})\Big)\right)\ ,
\eeann
where again use has been made of the algebraic rules \eqref{Weyl}. By means of \eqref{extweyl4}, the action of 
$\Phi^{ext}_t\otimes{\bf 1}_n$ thus gives
\bea
\nonumber
&&
\Phi^{ext}_t\otimes{\bf 1}_n\left[Z^\dagger Z\right](\vec{\omega})=\sum_{i,j,\mu;k,\ell,\nu}d_{ij\mu}^{*}\,d_{k\ell\nu}\,f^*_j(\vec{\omega}_t)\,f_k(\vec{\omega}_t)\,F^t_{ik}(\vec{\omega})\,\times\\
\label{CP1}
&&\hskip 1cm
\times\,W_{\vec{\omega}}^\dagger\big(X^{tr}_t(\vec{\omega})\vec{r}_i(\vec{\omega}_t)\big)\,W_{\vec{\omega}}\big((X^{tr}_t(\vec{\omega})\vec{r}_k(\vec{\omega}_t)\big)\otimes E_\mu E_\nu\\
\label{CP2}
&&
F^t_{ik}(\vec{\omega})={\rm e}^{\frac{i}{2}\vec{r}_i(\vec{\omega}_t)\cdot\Big(\hat{\sigma}_t(\vec{\omega})\,\vec{r}_k(\vec{\omega}_t)\Big)}\,
{\rm e}^{-\frac{1}{2}\Big((\vec{r}_i(\vec{\omega}_t)-\vec{r}_k(\vec{\omega}_t))\cdot\Big(Y_t(\vec{\omega})(\vec{r}_i(\vec{\omega}_t)-\vec{r}_k(\vec{\omega}_t))\Big)}\\
\label{CP3}
&&
\hat{\sigma}_t(\vec{\omega})=\sigma(\vec{\omega}_t)\,-\,X_t(\vec{\omega})\,\sigma(\vec{\omega})\,X^{tr}_t(\vec{\omega})\ .
\eea
The function \eqref{CP2} can be interpreted as the expectation of the product of two
Weyl operators $V_{\vec{\omega}}(\vec{r}_{i,k}(\vec{\omega}_t))$ satisfying the Weyl algebraic rules 
$$
V_{\vec{\omega}}(\vec{r}_i(\vec{\omega}_t))\,V_{\vec{\omega}}(\vec{r}_k(\vec{\omega}_t))=V_{\vec{\omega}}(\vec{r}_i(\vec{\omega}_t)+\vec{r}_k(\vec{\omega}_t))\,\exp\left(-\frac{i}{2}\vec{r}_i(\vec{\omega}_t)\cdot\Big(\hat\sigma_t(\vec{\omega})\,\vec{r}_k(\vec{\omega}_t)\Big)\right)
$$
with symplectic matrix, that is real and anti-symmetric, $\hat\sigma_t(\vec{\omega})$ given by \eqref{CP3}, the expectation being defined by the functional $\varphi_t$ 
\be
\label{CP4}
\varphi_t\left(V_{\vec{\omega}}(\vec{r}(\vec{\omega}_tß))\right)=\exp\left(-\frac{1}{2}\vec{r}(\vec{\omega}_t)\cdot\Big(Y_t(\vec{\omega})\vec{r}(\vec{\omega}_t)\Big)\right)
\ee
acting on the Weyl algebra $\cv_{\vec{\omega}}$ generated by the $V_{\vec{\omega}}(\vec{r}(\vec{\omega}))$. 

Letting $\displaystyle Z^{\vec{\omega}}_{ij\mu}(t):=d_{ij\mu}\,f_j(\vec{\omega}_t)\,W_{\vec{\omega}}\big(X^{tr}_t(\vec{\omega})\vec{r}_i(\vec{\omega}_t)\big)\otimes E_\mu$\ ,
one can then write
$$
\Phi^{ext}_t\otimes{\bf 1}_n\left[Z^\dagger Z\right](\vec{\omega})
=\sum_{i,j,\mu;k,\ell,\nu}(Z^{\vec{\omega}}_{ij\mu})^\dagger(t)\, Z^{\vec{\omega}}_{k\ell\nu}(t)\, 
\varphi_t\Big((V_{\vec{\omega}}(\vec{r}_i(\vec{\omega}_t)))^\dag\,V_{\vec{\omega}}(\vec{r}_k(\vec{\omega}_t))\Big)\ .
$$
This is a positive operator in $\cw_{\vec{\omega}}\otimes M_n(\CC)$ if the expectation functional $\varphi_t$ on $\cv$ is positive, namely, according to \eqref{CP4}, if $\varphi_t$ amounts to a Gaussian state. This latter property is equivalent to having
$$
Y_t(\vec{\omega})\,+\,\frac{i}{2}\hat{\sigma}_t(\vec{\omega})=Y_t(\vec{\omega})\,+\,\frac{i}{2}\Big(\sigma(\vec{\omega}_t)\,-\,X_t(\vec{\omega})\,\sigma(\vec{\omega})\,X^{tr}_t(\vec{\omega})\Big)\geq0
$$
which has already been proved. 
\qed
\end{proof}
\bigskip

Finally, we prove Theorem~\ref{expfluct} which provides the hybrid form of the generator of the semigroup of completely positive extended maps $\{\Phi^{ext}_t\}_{t\geq 0}$.
\bigskip

\noindent
{\bf Theorem 5.}\quad
\textit{
The extended dissipative dynamics $\Phi^{ext}_t$ of quantum fluctuations has a generator of the form
$\displaystyle\LL^{ext}=\int_{\cs}^\oplus{\rm d}\vec{\omega}\,\LL_{\vec{\omega}}$.
Every $\vec{\omega}$-component consists of four different contributions,
$\LL_{\vec{\omega}}=\LL^{drift}_{\vec{\omega}}+\LL^{cc}_{\vec{\omega}}+\LL^{cq}_{\vec{\omega}}+\LL^{qq}_{\vec{\omega}}$: a drift term
$$
\LL^{drift}_{\vec{\omega}}\Big[W^f_{\vec{r}}(\vec{\omega})\Big]=\dot{\vec{\omega}}\cdot\partial_{\vec{\omega}}\,W^f_{\vec{r}}(\vec{\omega})\ ,
$$ 
a differential operator involving the classical degrees of freedom
$G_\mu(\vec{\omega})$, $\mu=1,2,\ldots,d_0(\vec{\omega})$,
$$
\LL^{cc}_{\vec{\omega}}\Big[W^f_{\vec{r}}(\vec{\omega})\Big]=\sum_{\mu,\nu=1}^{d_0(\vec{\omega})}\,D^{00}_{\mu\nu}(\vec{\omega})G_\nu(\vec{\omega})\,\frac{\partial\,W^f_{\vec{r}}(\vec{\omega})}{\partial G_\mu(\vec{\omega})}
$$
a mixed classical-quantum term
$$
\LL^{cq}_{\vec{\omega}}\Big[W^f_{\vec{r}}(\vec{\omega})\Big]=\sum_{\mu=1}^{d_0(\vec{\omega})}\sum_{\nu=d_0(\vec{\omega})+1}^{d^2}\frac{D^{01}_{\mu\nu}(\vec{\omega})}{2}\,\left\{G_\nu(\vec{\omega})\,,\,\frac{\partial\,W^f_{\vec{r}}(\vec{\omega})}{\partial G_\mu(\vec{\omega})}\right\}\ ,
$$
and a purely quantum term 
\beann
&&\hskip-1cm
\LL^{qq}_{\vec{\omega}}\Big[W^f_{\vec{r}}(\vec{\omega})\Big]=
i\,\sum_{\mu=1}^{d_0(\vec{\omega})}\sum_{\nu=d_0(\vec{\omega}))+1}^{d^2}\Big(H^{01}_{\mu\nu}(\vec{\omega})+H^{10}_{\nu\mu}(\vec{\omega})\Big)\,G_\mu(\vec{\omega})\,\Big[G_\nu(\vec{\omega})\,,\,W^f_{\vec{r}}(\vec{\omega})\Big]\\
&&
+i\,\sum_{\mu,\nu=d_0(\vec{\omega})+1}^{d^2}H^{11}_{\mu\nu}(\vec{\omega})\,\Big[G_\mu(\vec{\omega})\,G_\nu(\vec{\omega})\,,\,W^f_{\vec{r}}(\vec{\omega})\Big]\\
\hskip-1cm
&&
+\sum_{\mu,\nu=d_0(\vec{\omega})+1}^{d^2}K^{11}_{\mu\nu}(\vec{\omega})\Big(G_\mu(\vec{\omega})\,W^f_{\vec{r}}(\vec{\omega})\,G_\nu(\vec{\omega})\,-\,\frac{1}{2}\left\{G_\mu(\vec{\omega})\,G_\nu(\vec{\omega})\,,\,W^f_{\vec{r}}(\vec{\omega})\right\}\Big)\ ,
\eeann
with the $d_1(\vec{\omega})\times d_1(\vec{\omega})$ matrix of coefficients $H^{11}(\vec{\omega})$ given by 
$$
H^{11}(\vec{\omega})=(h^{(re)}(\vec{\omega}))^{11}\,+\,\frac{1}{4}\left\{\Big(\sigma^{11}(\vec{\omega})\Big)^{-1}\,,\,D^{11}(\vec{\omega})\right\}\ ,
$$
where $(h^{(re)}(\vec{\omega}))^{11}$ is the $11$-component of the matrix $h^{(re)}$ in~\eqref{hmat1} rotated by $R(\vec{\omega})$ as in~\eqref{01dec}.\\
\indent
Further, the $d_0(\vec{\omega})\times d_1(\vec{\omega})$ matrix $H^{01}(\vec{\omega})$, respectively the $d_1(\vec{\omega})\times d_0(\vec{\omega})$ matrix $H^{01}(\vec{\omega})$ read
\beann
H^{10}(\vec{\omega})&=&\frac{1}{2}(\sigma^{11}(\vec{\omega}))^{-1}\,D^{10}(\vec{\omega})\,-\frac{i}{2}\,B^{10}(\vec{\omega})\,+\,(h^{(re)(\vec{\omega}))^{10}}\\ 
H^{01}(\vec{\omega})&=&\frac{1}{2}D^{01}(\vec{\omega})\,
(\sigma^{11}(\vec{\omega}))^{-1}\,+\frac{i}{2}\,B^{01}(\vec{\omega})\,+\,(h^{(re)(\vec{\omega}))^{10}}\ ,
\eeann
where $(h^{(re)}(\vec{\omega}))^{10}$ and $(h^{(re)}(\vec{\omega}))^{10}$ are the $10$ and $01$ components of the matrix $h^{(re)}$ in~\eqref{hmat1} rotated by $R(\vec{\omega})$ as in~\eqref{01dec}.\\
\indent
Finally, $K^{11}(\vec{\omega})$ amounts to 
$$
K^{11}(\vec{\omega})\,=\,A^{11}(\vec{\omega})\,+\,B^{11}(\vec{\omega})\,+\,\frac{i}{2}\left[(\sigma^{11}(\vec{\omega}))^{-1}\,,\,D^{11}(\vec{\omega})\right]\ .
$$
}
\medskip

\begin{proof}
From the expression~\eqref{extweyl2} one gets
$$
\left.\frac{{\rm d}}{{\rm d}t}\Phi_t^{ext}[W^f_{\vec{r}}]\right|_{t=0}(\vec{\omega})=
\Big(\dot{\vec{\omega}}\cdot\partial_{\vec{\omega}}f(\vec{\omega})\Big)\,W^1_{\vec{r}}(\vec{\omega})\,+\,
f(\vec{\omega})\,\left.\frac{{\rm d}}{{\rm d}t}\Phi_t^{\vec{\omega}}[W^1_{\vec{r}}]\right|_{t=0}(\vec{\omega})\ ,
$$ 
where $\dot{\vec{\omega}}=D(\vec{\omega})\vec{\omega}$. Notice that $W^1_{\vec{r}}(\vec{\omega})=\exp\Big(i\vec{r}(\vec{\omega})\cdot\vec{F}(\vec{\omega})\Big)$. Then, 
from~\eqref{lastaidd0} and~\eqref{extweyl3}, Lemma~\ref{tool3} and the algebraic relations~\eqref{COMSIGMA1} which make higher commutators vanish, one derives 
\beann
&&\hskip-.5cm
\left.\frac{{\rm d}}{{\rm d}t}\Phi_t^{\vec{\omega}}[W^1_{\vec{r}}]\right|_{t=0}(\vec{\omega})=-\frac{1}{2}\,\vec{r}(\vec{\omega})\cdot\Big(\sigma(\vec{\omega})\,A\,\sigma^{tr}(\vec{\omega})\vec{r}(\vec{\omega})\Big)\,W^1_{\vec{r}}(\vec{\omega})\,+\,
\left.\frac{{\rm d}}{{\rm d}t}W_{\vec{\omega}}\Big(X^{tr}_t(\vec{\omega})\vec{r}(\vec{\omega}_t)\Big)\right|_{t=0}\\
&&\hskip-.5cm
\left.\frac{{\rm d}}{{\rm d}t}W_{\vec{\omega}}\Big(X^{tr}_t(\vec{\omega})\vec{r}(\vec{\omega}_t)\Big)\right|_{t=0}=\Bigg(i(Q^{tr}(\vec{\omega})\vec{r}(\vec{\omega}))\cdot\vec{F}(\vec{\omega})-\frac{1}{2}\Big[\vec{r}(\vec{\omega})\cdot\vec{F}(\vec{\omega})\,,\,(Q^{tr}(\vec{\omega})\vec{r}(\vec{\omega}))\cdot\vec{F}(\vec{\omega})\Big]\\
&&\hskip 1cm
+\,\Big(i(\dot{\vec{\omega}}\cdot\partial_{\vec{\omega}}\vec{r}(\vec{\omega}))\cdot\vec{F}(\vec{\omega})-\frac{1}{2}\Big[\vec{r}(\vec{\omega})\cdot\vec{F}(\vec{\omega})\,,\,(\dot{\vec{\omega}}\cdot\partial_{\vec{\omega}}\vec{r}(\vec{\omega}))\cdot\vec{F}(\vec{\omega})\Big]\Bigg)\,W^1_{\vec{r}}(\vec{\omega})\ .
\eeann
The proof of Lemma~\ref{tool3} holds also if one acts on $W^1_{\vec{r}}(\vec{\omega})$ with $\dot{\vec{\omega}}\cdot\partial_{\vec{\omega}}$, so that
$$
\Big(i(\dot{\vec{\omega}}\cdot\partial_{\vec{\omega}}\vec{r}(\vec{\omega}))\cdot\vec{F}(\vec{\omega})-\frac{1}{2}\Big[\vec{r}(\vec{\omega})\cdot\vec{F}(\vec{\omega})\,,\,(\dot{\vec{\omega}}\cdot\partial_{\vec{\omega}}\vec{r}(\vec{\omega}))\cdot\vec{F}(\vec{\omega})\Big]\Big)\,W^1_{\vec{r}}(\vec{\omega})=\dot{\vec{\omega}}\cdot\partial_{\vec{\omega}}W^1_{\vec{r}}(\vec{\omega})\ .
$$
Therefore, when evaluated at $\vec{\omega}$, the time-derivative of the dynamics at $t=0$, together with the expression $Q(\vec{\omega})=-i\sigma(\vec{\omega})\,\widetilde{B}\,+\,D(\vec{\omega})$ (see~\eqref{matrices1}-\eqref{matrices3}), yields
\beann
\hskip-1cm
\left.\frac{{\rm d}}{{\rm d}t}\Phi_t^{ext}[W^f_r]\right|_{t=0}(\vec{\omega})&=&
\dot{\vec{\omega}}\cdot\partial_{\vec{\omega}}\,W^f_r(\vec{\omega})\,+\,\vec{r}(\omega)\cdot\Big(\Big(\sigma(\vec{\omega})\,\widetilde{B}\,+i\,D(\vec{\omega})\Big)\vec{F}(\vec{\omega})\Big)\, W_r^f(\vec{\omega})\\
\hskip-1cm
&-&
\frac{1}{2}\,\vec{r}(\vec{\omega})\cdot\Big(\Big(\sigma(\vec{\omega})\,\Big(A+\,\widetilde{B}\Big)\,\sigma^{tr}(\vec{\omega})+iD(\vec{\omega})\sigma^{tr}(\vec{\omega})\Big)\,\vec{r}(\vec{\omega})\Big)\,W^f_r(\vec{\omega})\ .
\eeann
Using that $\sigma^{tr}(\vec{\omega})=-\sigma(\vec{\omega})$, by means of the rotation matrix $R(\vec{\omega})$ in~\eqref{invsigma}, of the decomposition~\eqref{01dec} and with the notation of Remark~\ref{reminv}, $\vec{s}(\vec{\omega})=R(\vec{\omega})\vec{r}$, one finally gets
\bea
\label{extgen1a}
\left.\frac{{\rm d}}{{\rm d}t}\Phi_t^{ext}[W^f_r]\right|_{t=0}(\vec{\omega})&=&
\dot{\vec{\omega}}\cdot\partial_{\vec{\omega}}\,W^f_{\vec{r}}(\vec{\omega})\\
\label{extgen1b}
&&\hskip-5cm
+\,\vec{s}_0(\vec{\omega})\cdot\Big(i\,D^{00}(\vec{\omega})\vec{G}_0(\vec{\omega})+i\,D^{01}(\vec{\omega})\vec{G}_1(\vec{\omega})\Big)\,W^f_{\vec{r}}(\vec{\omega})\\
\label{extgen1c}
&&\hskip-5cm
+\,\vec{s}_1(\vec{\omega})\cdot\Big(\Big(i\,D^{10}(\vec{\omega})+\widetilde{\sigma}^{11}(\vec{\omega})\,\widetilde{B}^{10}(\vec{\omega})\Big)\vec{G}_0(\vec{\omega})\Big)\,W^f_{\vec{r}}(\vec{\omega})\\
\label{extgen1d}
&&\hskip-5cm
+\,\vec{s}_1(\vec{\omega})\cdot\Big(\Big(i\,D^{11}(\vec{\omega})+\widetilde{\sigma}^{11}(\vec{\omega})\,\widetilde{B}^{11}(\vec{\omega})\Big)\vec{G}_1(\vec{\omega})\Big)\,W_{\vec{r}}^f(\vec{\omega})\\
\label{extgen1e}
&&\hskip-5cm
+\,\frac{1}{2}\,\vec{s}_0(\vec{\omega})\cdot\Big(i\,D^{01}(\vec{\omega})\widetilde{\sigma}^{11}(\vec{\omega})\vec{s}_1(\vec{\omega})\Big)\Big)\,W^f_{\vec{r}}(\vec{\omega})
\\
\label{extgen1f}
&&\hskip-5cm
+\,\frac{1}{2}\,\vec{s}_1(\vec{\omega})\cdot\Big(\Big(\widetilde{\sigma}^{11}(\vec{\omega})\,\Big(A^{11}(\vec{\omega})\,+\,\widetilde{B}^{11}(\vec{\omega})\Big)\,\widetilde{\sigma}^{11}(\vec{\omega})\,+i\,D^{11}(\vec{\omega})\,\widetilde{\sigma}^{11}(\vec{\omega})\Big)\,\vec{s}_1(\vec{\omega})\Big)\,W^f_{\vec{r}}(\vec{\omega})\ .
\eea
Let us try to write the component $\LL_{\vec{\omega}}$ of the generator $\LL^{ext}$ in the customary Lindblad form
\beann
\hskip-1cm
\LL_{\vec{\omega}}[W^1_{\vec{r}}(\vec{\omega})]&=&i\Big[\sum_{\mu,\nu=1}^{d^2}H_{\mu\nu}(\vec{\omega})\,G_\mu(\vec{\omega})\,G_\nu(\vec{\omega})\,,\,W^1_{\vec{r}}(\vec{\omega})\Big]\\
\hskip-1cm
&+&\sum_{\mu,\nu=1}^{d^2}K_{\mu\nu}(\vec{\omega})\,\Big(G_\mu(\vec{\omega})\,W^1_{\vec{r}}(\vec{\omega})\,G_\nu(\vec{\omega})\,-\,\frac{1}{2}\Big\{G_\mu(\vec{\omega})\,G_\nu(\vec{\omega})\,,\,W^1_{\vec{r}}(\vec{\omega})\Big\}\Big)\ ,
\eeann 
where the $d^2\times d^2$ matrices $H(\vec{\omega})$ and $K(\vec{\omega})$ are both hermitian and the operators $G_\mu(\vec{\omega})$ are those appearing in the decomposition~\eqref{split} of the Weyl operators into classical and quantum contributions.

By decomposing the matrix $K(\vec{\omega})$ and $H(\vec{\omega})$ as in~\eqref{01dec}, since the components of 
$\vec{G}_0(\vec{\omega})$ commute with all the others, there are no contributions to $\LL_{\vec{\omega}}$ from either $H^{00}(\vec{\omega})$ or $K^{00}(\vec{\omega})$, while those from $K^{01}(\vec{\omega})$ and $K^{10}(\vec{\omega})$ can be put together with the contributions from $H^{01}(\vec{\omega})$ and $H^{10}(\vec{\omega})$ in the hamiltonian matrix. Thus, one can, without restriction, set 
$$
H(\vec{\omega})=\begin{pmatrix}0&H^{01}(\vec{\omega})\cr H^{10}(\vec{\omega})& H^{11}(\vec{\omega})\end{pmatrix}\ ,\ K(\vec{\omega})=\begin{pmatrix}0&0\cr 0& K^{11}(\vec{\omega})\end{pmatrix}\ .
$$
Then, the relations~\eqref{COMSIGMA1} yield 
$\displaystyle
W^1_{\vec{r}}(\vec{\omega})\,\vec{G}_{\vec{\omega}}\,\Big(W^1_{\vec{r}}(\vec{\omega})\Big)^\dag=\vec{G}_{\vec{\omega}}\,+\,\sigma(\vec{\omega})\vec{s}(\vec{\omega})$, whence
\bea
\label{extgen2a}
&&\hskip -1cm
\LL_{\vec{\omega}}[W^1_{\vec{r}}(\vec{\omega})]=
\vec{s}_1(\vec{\omega})\cdot\Big(2\,i\,\widetilde{\sigma}^{11}(\vec{\omega})\,H^{10}(\vec{\omega})\vec{G}_0(\vec{\omega})\Big)\,W^1_{\vec{r}}(\vec{\omega})\\
\label{extgen2b}
&&\hskip-.5cm
+\vec{s}_1(\vec{\omega})\cdot\Big(\widetilde{\sigma}^{11}(\vec{\omega})\,\Big(2\,i\,H^{11}(\vec{\omega})\,+\,\frac{K^{11}(\vec{\omega})-(K^{11}(\vec{\omega}))^{tr}}{2}\Big)\,\vec{G}_1(\vec{\omega})\Big)\,W^1_{\vec{r}}(\vec{\omega})\\
\label{extgen2c}
&&\hskip-.5cm
+\vec{s}_1(\vec{\omega})\cdot\left(\widetilde{\sigma}^{11}(\vec{\omega})\,\Big(i\,H^{11}(\vec{\omega})\,
\,+\,\frac{1}{2}\,K^{11}(\vec{\omega})\Big)\,\widetilde{\sigma}^{11}(\vec{\omega})\,\vec{s}_1(\vec{\omega})\right)\,W^1_{\vec{r}}(\vec{\omega})\ .
\eea
From comparing equations~\eqref{extgen1a}-\eqref{extgen1f} and~\eqref{extgen2a}-\eqref{extgen2b}, one finds
\bea
\label{extgen3a}
&&\hskip-2cm
2\,i\,\widetilde{\sigma}^{11}(\vec{\omega})\,H^{10}(\vec{\omega})=iD^{10}(\vec{\omega})\,+\,\widetilde{\sigma}^{11}(\vec{\omega})\,\widetilde{B}^{10}(\vec{\omega})\\
\label{extgen3b}
&&\hskip-2cm
2\,i\,\widetilde{\sigma}^{11}(\vec{\omega})\,H^{11}(\vec{\omega})\,+\,\widetilde{\sigma}^{11}(\vec{\omega})\frac{K^{11}(\vec{\omega})-(K^{11}(\vec{\omega}))^{tr}}{2}=i\,D^{11}(\vec{\omega})\,+\,\widetilde{\sigma}^{11}(\vec{\omega})\,\widetilde{B}^{11}(\vec{\omega})\\
\nonumber
&&\hskip-2cm
\widetilde{\sigma}^{11}(\vec{\omega})\Big(i\,H^{11}(\vec{\omega})+\frac{1}{2}\,K^{11}(\vec{\omega})\Big)\widetilde{\sigma}^{11}(\vec{\omega})=\frac{1}{2}\,\widetilde{\sigma}^{11}(\vec{\omega})\,\Big(A^{11}(\vec{\omega})\,+\,\widetilde{B}^{11}(\vec{\omega})\Big)\,\widetilde{\sigma}^{11}(\vec{\omega})\\
\nonumber
&&\hskip 6cm
+\,\frac{i}{2}\,D^{11}(\vec{\omega})\,\widetilde{\sigma}^{11}(\vec{\omega})=\\
\nonumber
&&\hskip-2cm
=\frac{1}{2}\,\widetilde{\sigma}^{11}(\vec{\omega})\,\Big(A^{11}(\vec{\omega})\,+\,B^{11}(\vec{\omega})\Big)\,\widetilde{\sigma}^{11}(\vec{\omega})\,+\,i\,\widetilde{\sigma}^{11}(\vec{\omega})\,(h^{(re)}(\vec{\omega}))^{11}\,\widetilde{\sigma}^{11}(\vec{\omega})\\
\label{extgen3c}
&&\hskip6cm
+\,\frac{i}{2}\,D^{11}(\vec{\omega})\,\widetilde{\sigma}^{11}(\vec{\omega})
\ ,
\eea
where, in the last expression, use has been made of~\eqref{hmat2}, namely of $\widetilde{B}=B\,+\,2\,i\,h^{(re)}$.

The relations $D^\dag(\vec{\omega})=-D(\vec{\omega})$ (see~\eqref{remD}) and 
$\sigma^\dag(\vec{\omega})=-\sigma(\vec{\omega})$
also hold for the matrices diagonal blocks with respect to the decomposition~\eqref{01dec} after rotating them by $R(\vec{\omega})$ . Thus, taking the hermitean conjugate of both sides of~\eqref{extgen3c} yields
\beann
&&\hskip-.5cm
\widetilde{\sigma}^{11}(\vec{\omega})\Big(-i\,H^{11}(\vec{\omega})+\frac{1}{2}\,K^{11}(\vec{\omega})\Big)\widetilde{\sigma}^{11}(\vec{\omega})=\\
&&\hskip-.5cm
=\frac{1}{2}\,\widetilde{\sigma}^{11}(\vec{\omega})\,\Big(A^{11}(\vec{\omega})\,+\,B^{11}(\vec{\omega})\Big)\,\widetilde{\sigma}^{11}(\vec{\omega})\,-\,i\,\widetilde{\sigma}^{11}(\vec{\omega})\,(h^{(re)}(\vec{\omega}))^{11}\,\widetilde{\sigma}^{11}(\vec{\omega})\,-\,\frac{i}{2}\,\widetilde{\sigma}^{11}(\vec{\omega})\,D^{11}(\vec{\omega})\ .
\eeann
By adding and subtracting the last equality from~\eqref{extgen3c} itself and by multiplying both sides of the resulting equalities by the inverse of $\widetilde{\sigma}^{11}(\vec{\omega})$ one gets
\beann
H^{11}(\vec{\omega})&=&(h^{(re)}(\vec{\omega}))^{11}\,+\,\frac{1}{4}
\Big\{(\widetilde{\sigma}^{11}(\vec{\omega}))^{-1}\,,\,D^{11}(\vec{\omega})\Big\}\ ,\\
K^{11}(\vec{\omega})&=&C^{11}(\vec{\omega})\,+\frac{i}{2}\Big[(\widetilde{\sigma}^{11}(\vec{\omega}))^{-1}\,,\,D^{11}(\vec{\omega})\Big]\ ,
\eeann
where $C(\vec{\omega})=R(\vec{\omega})\,(A+B)\,R^\dag(\vec{\omega})$ is the positive semi-definite Kossakowski matrix in the microscopic generator~\eqref{mflind1a} rotated into the representation~\eqref{01dec} of the symplectic matrix. These $d_1(\vec{\omega})\times d_1(\vec{\omega})$ hermitian matrices also solve~\eqref{extgen3b}, while from~\eqref{extgen3a},
\beann
H^{10}(\vec{\omega})&=&\frac{1}{2}(\widetilde{\sigma}^{11}(\vec{\omega}))^{-1}\,D^{10}(\vec{\omega})\,-\frac{i}{2}\,B^{10}(\vec{\omega})\,+\,(h^{(re)}(\vec{\omega}))^{10}\ ,\\ 
H^{01}(\vec{\omega})&=&\frac{1}{2}D^{01}(\vec{\omega})\,
(\widetilde{\sigma}^{11}(\vec{\omega}))^{-1}\,+\frac{i}{2}\,\widetilde{B}^{01}(\vec{\omega})\,+\,(h^{(re)}(\vec{\omega}))^{01}\ .
\eeann
The still unmatched terms in~\eqref{extgen1b} and~\eqref{extgen1e} can only be recovered by acting on the Weyl operators in a way that involves both the commuting degrees of freedom represented by the first $d_0(\vec{\omega})$ components of $\vec{G}_0(\vec{\omega})$ and the remaining non-commuting ones. Then,
\beann
&&\vec{s}_0(\vec{\omega})\cdot\Big(i\,D^{00}(\vec{\omega})\vec{G}_0(\vec{\omega})+i\,D^{01}(\vec{\omega})\vec{G}_1(\vec{\omega})\Big)\,W^f_{\vec{r}}(\vec{\omega})=\\
&&\hskip 1cm
=\sum_{\mu=1}^{d_0(\vec{\omega})}\left(\sum_{\nu=1}^{d_0(\vec{\omega})}D^{00}_{\mu\nu}(\vec{\omega})\,G_\nu(\vec{\omega})\,+\,\sum_{\nu=d_0(\vec{\omega})+1}^{d^2}D^{01}_{\mu\nu}(\vec{\omega})\,G_\nu(\vec{\omega})\right)\frac{\partial\,W^f_{\vec{r}}(\vec{\omega})
}{\partial G_\mu(\vec{\omega})}\\
&&
\frac{1}{2}\vec{s}_0(\vec{\omega})\cdot\Big(i\,D^{01}(\vec{\omega})\,\widetilde{\sigma}^{11}(\vec{\omega})\vec{s}_1(\vec{\omega})\Big)\Big)\,W^f_r(\vec{\omega})=\\
&&\hskip 1cm
=-\frac{1}{2}\sum_{\mu=1}^{d_0(\vec{\omega})}\sum_{\nu=d_0(\vec{\omega})+1}^{d^2}D^{01}(\vec{\omega})\,\Big[G_\nu(\vec{\omega})\,,\,\frac{\partial\,W^f_{\vec{r}}(\vec{\omega})}{\partial G_\mu(\vec{\omega})}\Big]\ .
\eeann
Summing the right hand sides of the above equalities yields the classical contribution to the generator, $\LL_{\vec{\omega}}^{cc}$  in~\eqref{extgenth1b}, respectively  the mixed classical-quantum one, $\LL_{\vec{\omega}}^{cq}$ in~\eqref{extgenth1c}.
\qed
\end{proof}

\section{Appendix A}
\label{app0}

In the case of a clustering state $\omega$, one can then consider the large $N$ limit of 
$\omega\left(b^\dagger X^{(N)}\, c\right)$ where $b,c\in\ca$, obtaining
\be
\lim_{N\to\infty}\omega\left(b^\dagger X^{(N)}\, c\right)=\omega(b^\dag c)\,\omega(x)\ .
\label{MET}
\ee
Indeed, for any integer $N_0<N$ one can write:
$$
\lim_{N\to\infty} \omega\left(b^\dagger X^{(N)}\, c\right)=
\lim_{N\to\infty} \omega\Bigg( b^\dagger \bigg( \frac{1}{N} \sum_{k=0}^{N_0} x^{(k)} 
+ \frac{1}{N} \sum_{k=N_0+1}^{N-1} x^{(k)}\bigg)\, c\Bigg)\ .
$$
While the first contribution at the r.h.s. vanishes, concerning the second term we argue as follows.
Since strictly local operators are norm dense in $\mathcal{A}$, without
loss of generality one can assume $c$ to have support within $[-N_0,N_0]$, so that it commutes with $\sum_{k=N_0+1}^{N-1} x^{(k)}$. Using the clustering property (\ref{clustates})
one immediately gets the result (\ref{MET}). This means that, in the so-called weak operator topology,
{\it i.e.} under the state average, the large $N$ limit of $X^{(N)}$ is a scalar multiple of the identity operator:
$$
w-\lim_{N\to\infty} X^{(N)} = \omega(x)\, {\bf 1}\ .
$$
The relation~\eqref{macro1bis} can be proved as follows: because of definition \eqref{macro}, it is equivalent to 
$$
\lim_{N\to\infty}\omega\bigg(a^\dag\,\big(X^{(N)}-\omega(x)\big)\big(Y^{(N)}-\omega(y)\big)\,b\bigg)=0
$$
for all $a,b\in\ca$. Set
$$
\widetilde{X}_N=\frac{1}{N}\sum_{k=0}^{N-1}\underbrace{\bigg(x^{(k)}-\omega(x)\bigg)}_{\widetilde{x}^{(k)}}\ ,\quad \widetilde{Y}_N=\frac{1}{N}\sum_{k=0}^{N-1}\underbrace{\bigg(y^{(k)}-\omega(y)\bigg)}_{\widetilde{y}^{(k)}}\ ,
$$
so that $\omega(\widetilde{x}^{(k)})=\omega(\widetilde{x})=0$, $\omega\left(\widetilde{X}_N\right)=0$ and similarly for $\widetilde{y}$, $\widetilde{Y}_N$.
Then, as shown in the main text for a single variable, the quasi-locality of $a,b$ and the clustering properties of the state yield:
$$
\lim_{N\to\infty}\omega\bigg(a^\dag\,\big(X^{(N)}-\omega(x)\big)\big(Y^{(N)}-\omega(y)\big)\,b\bigg)=\omega(a^\dag b)\lim_{N\to\infty}\omega\bigg(\widetilde{X}_N\widetilde{Y}_N\bigg)\ .
$$
Further, one can write:
$$
\omega\bigg(\widetilde{X}_N\widetilde{Y}_N\bigg)=\frac{1}{N^2}\sum_{k=0}^{N-1}
\omega\bigg(\widetilde{x}^{(k)}\widetilde{y}^{(k)}\bigg)\,+\,
\frac{1}{N^2}\sum_{k\neq\ell=0}^{N-1}
\omega\bigg(\widetilde{x}^{(k)}\widetilde{y}^{(\ell)}\bigg)\ .
$$
Since $\omega$ is translation-invariant, the first term vanishes as $\omega\big(\widetilde{x}\widetilde{y}\big)/N$ when $N\to\infty$.
Moreover, thanks to the clustering property (\ref{clustates}), for any small $\epsilon>0$, 
there exists an integer $N_\epsilon$,
such that for $|k-\ell|^2>N_\epsilon$ one has:
$$
\left|\omega(\big(\widetilde{x}^{(k)}\widetilde{y}^{(\ell)}\big)-\omega(\widetilde{x})\,\omega(\widetilde{y})\right|=
\left|\omega(\big(\widetilde{x}^{(k)}\widetilde{y}^{(\ell)}\big)\right|\leq\epsilon\ .
$$
Then, using this result, one can finally write:
\beann
\left|\frac{1}{N^2}\sum_{k\neq\ell=0}^{N-1}
\omega\bigg(\widetilde{x}^{(k)}\widetilde{y}^{(\ell)}\bigg)\right|&\leq&
\frac{1}{N^2}\sum_{0<|k-\ell|\leq N_\epsilon}\left|\omega\bigg(\widetilde{x}^{(k)}\widetilde{y}^{(\ell)}\bigg)\right|\\
&+&\frac{1}{N^2}\sum_{|k-\ell|> N_\epsilon}\left|\omega\bigg(\widetilde{x}^{(k)}\widetilde{y}^{(\ell)}\bigg)\right|\\
&\leq&4\frac{2N_\epsilon+1}{N}\,\|x\|\,\|y\|\,+\,\epsilon\ ,
\eeann
so that, in the large $N$ limit, the relation \eqref{macro1bis} is indeed satisfied.
Notice that~\eqref{macro1bis} entails that, in the GNS representation,
\bea
\nonumber
&&
\lim_{N\to\infty}\omega\bigg(a^\dag \big(X^{(N)}-\omega(x)\big)^\dag\,\big(X^{(N)}-\omega(x)\big)\,a\bigg)=\\
\label{macro2}
&&\hskip 2cm
=\lim_{N\to\infty}\left\|\pi_\omega\big(X^{(N)}-\omega(x)\big)\vert\Psi_a\rangle\right\|^2=0\ ,
\eea
for all $a\in\ca$. Namely, mean-field spin observables converge to their expectations with respect to $\omega$ in the strong operator topology on the GNS Hilbert space $\mathbb{H}_\omega$.\\
For what concerns \eqref{macro1}, notice that 
\begin{equation*}
\begin{split}
\Big|\omega\left(X^{(N)}Y^{(N)}\right)-\omega(x)\omega(y)\Big|\le\frac{1}{N^2}\sum_{k,h=0}^{N-1}\Big|\omega\left(x^{(k)}y^{(h)}\right)-\omega(x)\omega(y)\Big|=\\
=\frac{1}{N^2}\sum_{k=0}^{N-1}\sum_{\ell=-k}^{N-1-k}\Big|\omega\left(x^{(0)}y^{(\ell)}\right)-\omega(x)\omega(y)\Big|\ ,
\end{split}
\end{equation*}
where the last equality holds because of the translation invariance of the state $\omega$. Now, assuming \eqref{stclus} one has
$$
\Big|\omega\left(X^{(N)}Y^{(N)}\right)-\omega(x)\omega(y)\Big|\le\frac{1}{N}\sum_{\ell\in\mathbb{Z}}\Big|\omega\left(x^{(0)}y^{(\ell)}\right)-\omega(x)\omega(y)\Big|
$$
that proves the scaling \eqref{macro1}. Notice that, by recursion, using the norm-boundedness of the mean-field quantities and the strong-limit in \eqref{macro2}, one can show that
\be
\label{recursion}
\lim_{N\to\infty}\omega\left(X_1^{(N)}X_2^{(N)}\cdots X_p^{(N)}\right)=\prod_{j=1}^p
\omega(x_j)\ .
\ee

\section{Appendix B}
\label{app3}

\begin{lemma}
\label{tool1}
Given the local dissipative dynamics $\gamma^{(N)}_t$ on the subalgebra $\ca_{[0,N-1]}\subset\ca$ and a state $\omega$ on the quasi-local algebra $\ca$, with the notation \eqref{meslim1a}, the following generalized Cauchy-Schwartz inequality holds:
\begin{equation*}
\left|\omega_{\vec{r}\vec{s}}\left(\gamma^{(N)}_t[xy]\right)\right|\le\sqrt{\omega\left(W^{(N)}(\vec{r})\,\gamma^{(N)}_t[x\,x^\dag](W^{(N)}(\vec{r}))^\dag\right)}
\sqrt{\omega\left((W^{(N)}(\vec{s}))^\dag\,\gamma^{(N)}_t[y^\dag y]\,W^{(N)}(\vec{s})\right)}
\end{equation*}
for all $\vec{r},\vec{s}\in\RR^{d^2}$ and $x\,,\,y\in\ca_{[0,N-1]}$.
\end{lemma}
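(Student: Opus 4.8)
The plan is to exploit the fact that, although $\gamma^{(N)}_t$ is completely positive and unital, it is \emph{not} multiplicative for finite $N$, so the product $xy$ inside $\gamma^{(N)}_t[xy]$ cannot be split directly. The standard device that bypasses this is a Stinespring dilation. Concretely, I would start from the GNS triple $(\pi_\omega,\HH_\omega,\ket{\omega})$ of the state $\omega$ and consider the unital completely positive map $\pi_\omega\circ\gamma^{(N)}_t:\ca_{[0,N-1]}\to B(\HH_\omega)$ (a composition of completely positive maps, unital since $\pi_\omega(\gamma^{(N)}_t[\mathbf{1}])=\pi_\omega(\mathbf{1})$). By Stinespring's theorem there exist a Hilbert space $\mathcal{K}$, a $*$-representation $\rho:\ca_{[0,N-1]}\to B(\mathcal{K})$ and an isometry $V:\HH_\omega\to\mathcal{K}$ such that $\pi_\omega(\gamma^{(N)}_t[z])=V^\dag\rho(z)\,V$ for all $z\in\ca_{[0,N-1]}$. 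Since $W^{(N)}(\vec{r})$, $x$ and $y$ all lie in $\ca_{[0,N-1]}$, every object in the statement becomes a matrix element of operators on $\mathcal{K}$ once the dilation is in place.

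With the dilation fixed, I would introduce the two vectors
\be
\ket{\xi}=\rho(x^\dag)\,V\,\pi_\omega\big((W^{(N)}(\vec{r}))^\dag\big)\ket{\omega}\ ,\qquad
\ket{\eta}=\rho(y)\,V\,\pi_\omega\big(W^{(N)}(\vec{s})\big)\ket{\omega}\ ,
\ee
both in $\mathcal{K}$. Using that $\rho$ is a homomorphism, that $\rho(x^\dag)^\dag=\rho(x)$ and that $V^\dag\rho(\cdot)V=\pi_\omega(\gamma^{(N)}_t[\cdot])$, a short computation gives $\langle\xi\vert\eta\rangle=\omega\big(W^{(N)}(\vec{r})\,\gamma^{(N)}_t[xy]\,W^{(N)}(\vec{s})\big)=\omega_{\vec{r}\vec{s}}(\gamma^{(N)}_t[xy])$, which is exactly the quantity to be bounded. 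The squared norms collapse in the same way: $\|\xi\|^2=\omega\big(W^{(N)}(\vec{r})\,\gamma^{(N)}_t[x\,x^\dag]\,(W^{(N)}(\vec{r}))^\dag\big)$, because $\rho(x)\rho(x^\dag)=\rho(x\,x^\dag)$ and $V^\dag\rho(x\,x^\dag)V=\pi_\omega(\gamma^{(N)}_t[x\,x^\dag])$; likewise $\|\eta\|^2=\omega\big((W^{(N)}(\vec{s}))^\dag\,\gamma^{(N)}_t[y^\dag y]\,W^{(N)}(\vec{s})\big)$. The ordinary Cauchy--Schwarz inequality $|\langle\xi\vert\eta\rangle|\le\|\xi\|\,\|\eta\|$ on $\mathcal{K}$ then yields precisely the asserted bound.

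The step I expect to be the real obstacle — or at least the one needing care — is the asymmetry between the two twists $\vec{r}$ and $\vec{s}$. If the same twist appeared on both sides, the map $(a,b)\mapsto\omega\big(W^{(N)}(\vec{r})\,\gamma^{(N)}_t[a^\dag b]\,(W^{(N)}(\vec{r}))^\dag\big)$ would be a manifestly positive sesquilinear form and Cauchy--Schwarz would be immediate; but with $\vec{r}\neq\vec{s}$ there is no single positive form to apply, which is exactly what forces the passage to a common dilation space $\mathcal{K}$ where the two differently twisted vectors $\ket{\xi}$ and $\ket{\eta}$ live and can be paired. One could equivalently phrase the argument through the Kadison--Schwarz inequality $\gamma^{(N)}_t[z^\dag z]\ge(\gamma^{(N)}_t[z])^\dag\gamma^{(N)}_t[z]$, but since $\gamma^{(N)}_t[xy]$ still does not factor, the Stinespring picture is the cleanest way to keep track of the non-multiplicativity while accommodating the distinct left and right Weyl factors.
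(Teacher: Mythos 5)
Your proof is correct: the identities $\langle\xi\vert\eta\rangle=\omega_{\vec{r}\vec{s}}\big(\gamma^{(N)}_t[xy]\big)$, $\|\xi\|^2=\omega\big(W^{(N)}(\vec{r})\,\gamma^{(N)}_t[x\,x^\dag]\,(W^{(N)}(\vec{r}))^\dag\big)$ and $\|\eta\|^2=\omega\big((W^{(N)}(\vec{s}))^\dag\,\gamma^{(N)}_t[y^\dag y]\,W^{(N)}(\vec{s})\big)$ all check out, and the ordinary Cauchy--Schwarz inequality on the dilation space yields exactly the claimed bound. This is essentially the paper's own argument: the paper writes the Kraus decomposition of $\gamma^{(N)}_t$ (the concrete, finite-dimensional incarnation of your Stinespring dilation), assembles the product $xy$ in a tensor-extended space, and applies Cauchy--Schwarz for the positive functional $\omega\otimes\Tr_2$, which is precisely your Hilbert-space Cauchy--Schwarz read through the GNS construction, with the asymmetric Weyl twists $\vec{r}$ and $\vec{s}$ absorbed into the two factors in the same way.
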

\medskip

\begin{proof}
Using the Kraus representation of completely positive maps
\beann
\gamma^{(N)}_t[x\,y]&=&\sum_j\mathcal{V}_j(t)^\dagger\,x\,y\,\mathcal{V}_j(t)\\
&=&\Tr_2\left(\left(\sum_j\mathcal{V}_j(t)^\dagger\otimes \ket{j}\bra{j}\right)\,x\,y\,\otimes\mathbf{1}\left(\sum_\ell\mathcal{V}_\ell(t)\otimes \ket{\ell}\bra{\ell}\right)\right)\ ,
\eeann
where the $\mathcal{V}_j(t)\in\ca_{[0,N-1]}$ are operators such that $\sum_j\mathcal{V}_j^\dag(t)\mathcal{V}_j(t)=\bf{1}$, the vectors $\ket{j}$ constitute an orthonormal basis in the auxiliary Hilbert space and $\Tr_2$ denotes the trace over it.
Setting $\mathcal{V}_t:=\sum_j\mathcal{V}_j(t)\otimes\ket{j}\bra{j}$, one can write
$$
\omega_{\vec{r}\vec{s}}\left(\gamma^{(N)}_t[x\,y]\right)=
\omega\otimes\Tr_2\Big(\Big(W^{(N)}(\vec{r})\otimes\mathbf{1}\Big)\,\mathcal{V}_t^\dagger\,(x\,y\,\otimes\mathbf{1})\,\mathcal{V}_t\,\Big(W^{(N)}(\vec{s})\otimes\mathbf{1}\Big)\Big)\ .
$$
Then, the Cauchy-Schwarz inequality for the positive, not normalized, functional $\omega\otimes\Tr_2$
yields
\beann
\left|\omega_{\vec{r}\vec{s}}\left(\gamma^{(N)}_t[x\,y]\right)\right|&\le&\sqrt{\omega\otimes\Tr_2\bigg(\bigg[W^{(N)}(\vec{r})\otimes\mathbf{1}\mathcal{V}^\dagger_t\,x\bigg]\,\bigg[x^\dag\,\mathcal{V}_t\,(W^{(N)}(\vec{r}))^\dag\otimes\mathbf{1}\bigg]\bigg)}\\
&\times&\sqrt{\omega\otimes\Tr_2\bigg(\bigg[(W^{(N)}(\vec{s}))^\dag\otimes\mathbf{1}\,\mathcal{V}_t^\dagger y^\dagger\bigg]\,\bigg[y\, \mathcal{V}_t\,W^{(N)}(\vec{s})\otimes\mathbf{1}\big]\bigg)}\ .
\eeann
The result then follows by computing $\Tr_2$.
\qed
\end{proof}
\bigskip

\begin{lemma}
\label{bound}
Given the generator $\LL^{(N)}$ in \eqref{mflind1a} and the time-dependent fluctuations \eqref{fluct},
then, $\forall \mu=1,2,\ldots,d^2$ and $\forall\, t\in[0,T]$, $T\geq 0$,
\begin{equation*}
\lim_{N\to\infty}\omega\left(W^{(N)}(\vec{r})\gamma_t^{(N)}\left[\left(F_\mu^{(N)}(t)\right)^2\right]\left(W^{(N)}(\vec{r})\right)^\dagger\right)<\infty \ .
\end{equation*}
\end{lemma}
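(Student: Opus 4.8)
The plan is to reduce the square of the time-dependent fluctuation to $N$ times the square of a \emph{centered mean-field operator}, exploit the clustering of $\omega$ on products of mean-field operators (Appendix~A together with Corollaries~\ref{exg} and~\ref{spt}), and only at the very end restore the Weyl exponentials using that their commutators with mean-field quantities vanish in norm (Lemma~\ref{lemtool2}). First I would rewrite the fluctuation as a rescaled mean-field operator: with $X_\mu^{(N)}=\frac1N\sum_{k=0}^{N-1}v_\mu^{(k)}$ and $\omega_\mu^{(N)}(t)$ as in \eqref{macr1a}, and setting $\tilde X:=X_\mu^{(N)}-\omega_\mu^{(N)}(t)\mathbf 1$, the sum of the subtracted means collapses to $\sqrt N\,\omega_\mu^{(N)}(t)$, so that $F_\mu^{(N)}(t)=\sqrt N\,\tilde X$ and hence $(F_\mu^{(N)}(t))^2=N\,\tilde X^2$. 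Since $\tilde X$ is a centered mean-field operator of the form \eqref{specop}, Corollary~\ref{spt} gives $\|\gamma^{(N)}_t[\tilde X^2]-(\gamma^{(N)}_t[\tilde X])^2\|=O(1/N)$, so that, writing $Z:=\gamma^{(N)}_t[\tilde X]$,
\[
\gamma^{(N)}_t[(F_\mu^{(N)}(t))^2]=N\,Z^2+E_N,\qquad \|E_N\|=O(1).
\]
The operator $Z$ is self-adjoint (hermiticity preservation of $\gamma^{(N)}_t$), it satisfies $\omega(Z)=\omega^{(N)}_t(\tilde X)=0$ by the very definition of $\omega_\mu^{(N)}(t)$, and by Corollary~\ref{exg} it is, for $t<R$, a norm-convergent (uniformly in $N$) series of products of mean-field operators.

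The two analytic inputs I would then establish are: (i) $\omega(Z^2)=O(1/N)$; and (ii) $\|[W^{(N)}(\vec r),Z]\|=O(1/\sqrt N)$. For (i) I expand $Z=\sum_\alpha c_\alpha\,P_\alpha$ into products $P_\alpha$ of mean-field operators, with $\sum_\alpha|c_\alpha|$ bounded uniformly in $N$ thanks to the factorial suppression underlying $R$ in Lemma~\ref{LhN}, and invoke the clustering estimate $\omega(P_\alpha P_\beta)=\omega(P_\alpha)\omega(P_\beta)+O(1/N)$ extending \eqref{macro1}--\eqref{recursion} to arbitrary products; since $\sum_\alpha c_\alpha\,\omega(P_\alpha)=\omega(Z)=0$, the leading term drops out and one is left with $\omega(Z^2)=O(1/N)$. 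Input (ii) follows termwise from Lemma~\ref{lemtool2}, summed against the fast-decaying coefficients $c_\alpha$.

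Finally I restore the exponentials. Writing $W=W^{(N)}(\vec r)$ and $\delta_N:=[W,Z]W^\dagger$, unitarity of $W$ gives $W Z^2 W^\dagger=(Z+\delta_N)^2$, whence
\[
N\,\omega\!\left(W Z^2 W^\dagger\right)=N\Big(\omega(Z^2)+\omega(Z\delta_N)+\omega(\delta_N Z)+\omega(\delta_N^2)\Big).
\]
The first term is $O(1)$ by (i); the last is $O(1)$ since $\|\delta_N\|^2=O(1/N)$; and each cross term is $O(1)$ because the Cauchy--Schwarz inequality for the state $\omega$ yields $|\omega(Z\delta_N)|\le\sqrt{\omega(Z^2)}\,\sqrt{\omega(\delta_N^\dagger\delta_N)}=\sqrt{O(1/N)}\,\sqrt{O(1/N)}=O(1/N)$. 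Adding the harmless contribution $|\omega(W E_N W^\dagger)|\le\|E_N\|=O(1)$, one concludes that the target is bounded uniformly in $N$, hence has a finite limit, for $t\in[0,R)$.

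The main obstacle is exactly this last balancing: the naive bound $\|[W,Z]\|=O(1/\sqrt N)$ is useless once multiplied by the factor $N$, and the argument closes only because the \emph{centering} $\omega(Z)=0$ purchases an extra order $1/\sqrt N$ inside $\omega(Z^2)$, which combines with $\|\delta_N\|=O(1/\sqrt N)$ through the state Cauchy--Schwarz to beat the $N$. A secondary obstacle is the extension from $t<R$ to arbitrary $t\in[0,T]$: here I would rerun the estimates using that the norm convergence of Corollary~\ref{exg} and the asymptotic factorization of Corollary~\ref{spt} for mean-field quantities persist at all times, by the Vitali--Porter analyticity argument already employed in Proposition~\ref{extdyn} and Corollary~\ref{quasilocal}.
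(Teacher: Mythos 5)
Your reduction $F_\mu^{(N)}(t)=\sqrt N\,\tilde X$ and the bootstrap from $\vec r=0$ to general $\vec r$ (via $\|[W^{(N)}(\vec r),Z]\|=O(1/\sqrt N)$, Cauchy--Schwarz, and the exact centering $\omega(Z)=0$) are structurally sound, and in fact parallel the paper's own strategy of first controlling the $\vec r=0$ quantity and then using it for $\vec r\neq 0$. The genuine gap is in your claim (i): note that, by Corollary~\ref{spt}, $\omega(Z^2)=O(1/N)$ is \emph{equivalent} to the $\vec r=0$ case of the lemma itself, so this is the whole content of the statement, and the argument you give for it does not go through. You invoke a factorization $\omega(P_\alpha P_\beta)=\omega(P_\alpha)\omega(P_\beta)+O(1/N)$ for arbitrary products of mean-field operators, ``extending \eqref{macro1}--\eqref{recursion}''. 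But the $O(1/N)$ rate in \eqref{macro1} is derived from the two-point $L_1$-clustering hypothesis \eqref{stclus} and holds only for products of \emph{two} mean-field operators; for longer products the paper proves only the limit \eqref{recursion}, with no rate. A rate for three or more factors requires control of higher-order truncated correlations (e.g.\ of $\omega\big(\tilde x^{(j)}\tilde x^{(k)}\tilde x^{(l)}\big)$ summed over well-separated triples), which is not implied by two-point clustering for a general state and is established nowhere in the paper; on top of this you would need the constants to be uniform over the monomials $P_\alpha$ of growing length in the expansion of $Z=\gamma_t^{(N)}[\tilde X]$, so as to sum them against $c_\alpha c_\beta$. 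In short, you have reduced the lemma to an unproven clustering-propagation statement of essentially the same strength.

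The secondary gap is the time extension. All your estimates (Corollary~\ref{exg}, Corollary~\ref{spt}, the series expansion of $Z$) are confined to $t<R$, and the proposed Vitali--Porter step is circular here: that theorem takes uniform-in-$N$ local boundedness as a \emph{hypothesis}, whereas the functions you would need to control carry the explicit factor $N$, so their uniform boundedness is precisely what is to be proved. The paper avoids both problems by a different mechanism, valid for all $t\geq 0$ at once: it sets $G^{(N)}(\vec r,t)=1+\sum_\beta\omega\big(W^{(N)}(\vec r)\,\gamma_t^{(N)}[(F_\beta^{(N)}(t))^2]\,(W^{(N)}(\vec r))^\dagger\big)$ and derives the Gronwall-type inequality $\frac{d}{dt}G^{(N)}(\vec r,t)<\big(K_1+K_2\,G^{(N)}(0,t)\big)G^{(N)}(\vec r,t)$. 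The potentially $\sqrt N$-divergent terms produced by the generator come multiplied by $\omega\big(W^{(N)}(\vec r)\,\gamma_t^{(N)}[F_\beta^{(N)}(t)]\,(W^{(N)}(\vec r))^\dagger\big)$, which vanishes identically at $\vec r=0$ because the fluctuations are centered with respect to the evolved state --- an exact algebraic cancellation requiring no clustering rate --- giving $G^{(N)}(0,t)<e^{K_0 t}G^{(N)}(0,0)$; the case $\vec r\neq 0$ is then bootstrapped from this bound via Lemma~\ref{tool1}. If you want to salvage your approach, the missing input is exactly a uniform bound on $\omega\big(\gamma_t^{(N)}[(F_\mu^{(N)}(t))^2]\big)$, and within the paper's hypotheses the only available route to it is this differential-inequality/centering argument, not quantitative clustering of higher products.
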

\medskip

\begin{proof}
Since $\left(F^{(N)}_\mu(t)\right)^2$ is a positive matrix, the following quantity:
\begin{equation}
\label{auxq1}
G^{(N)}(\vec{r},t):=1+\sum_{\beta=1}^{d^2}\omega\left(W^{(N)}(\vec{r})\,\gamma^{(N)}_t\left[\left(F_\mu^{(N)}(t)\right)^2\right]\,(W^{(N)}(\vec{r}))^\dag\right)
\end{equation}
satisfies 
\be
\label{aidd0}
G^{(N)}(\vec{r},t)\geq\max\left\{1\,,\,\omega\left(W^{(N)}(\vec{r})\,\gamma^{(N)}_t\left[\left(F^{(N)}_\mu(t)\right)^2\right]\,
(W^{(N)}(\vec{r}))^\dag\right)\right\}\ .
\ee
The Lemma is proved if we show that $G(\vec{r},t):=\lim_{N\to\infty} G^{(N)}(\vec{r},t)$ is finite $\forall t\ge0$. Let us then consider
\bea
\nonumber
&&\hskip-4cm
\frac{d}{dt}G^{(N)}(\vec{r},t)=
\sum_{\beta=1}^{d^2}\omega\Bigg(W^{(N)}(\vec{r})\,\gamma^{(N)}_t\Bigg[\LL^{(N)}\left[\left(F_\beta^{(N)}(t)\right)^2\right]\\
&&
-2\,\sqrt{N}\,\left(\frac{d}{dt}\omega^{(N)}_\beta(t)\right)\,F^{(N)}_\beta(t)\Bigg]\,(W^{(N)}(\vec{r}))^\dag\Bigg)\ ,
\label{derbound}
\eea
where \eqref{fluct} and \eqref{macr1a} have been used.

By splitting the generator as $\LL^{(N)}=\HH^{(N)}+\widetilde{\AA}^{(N)}\,+\,\widetilde{\BB}^{(N)}$ as in \eqref{mflind1a}, we first consider 
\beann
&&\hskip -.5cm
\widetilde{\AA}^{(N)}\left[\left(F^{(N)}_\beta(t)\right)^2\right]=\frac{1}{2}\sum_{\mu,\nu=1}^{d^2}\widetilde{A}_{\mu\nu}\,\left[\left[V^{(N)}_\mu\,,\,\left(F^{(N)}_\beta(t)\right)^2\right]\,,\,V^{(N)}_\nu\right]\\
&&\hskip -.5cm
=\frac{1}{2}\sum_{\mu,\nu=1}^{d^2}\widetilde{A}_{\mu\nu}\,\Bigg(F^{(N)}_\beta(t)\Big[\Big[V^{(N)}_\mu\,,\,F^{(N)}_\beta(t)\Big]\,,\,V^{(N)}_\nu\Big]\,+\,\Big[F^{(N)}_\beta(t)\,,\,
V^{(N)}_\nu\Big]\,\Big[V^{(N)}_\mu\,,\,F^{(N)}_\beta(t)\Big]\\
&&\hskip-.5cm
+
\Big[V^{(N)}_\mu\,,\,F^{(N)}_\beta(t)\Big]\,\Big[F^{(N)}_\beta(t)\,,\,V^{(N)}_\nu\Big]\,+\,\Big[\Big[V^{(N)}_\mu\,,\,F^{(N)}_\beta(t)\Big]\,,\,V^{(N)}_\nu\Big]\,F^{(N)}_\beta(t)\Bigg)\ .
\eeann
Since spin operators at different sites commute, the commutators read
\beann
\left[V^{(N)}_\mu\,,\,F^{(N)}_\beta(t)\right]&=&\frac{1}{N}\sum_{k=0}^{N-1}[v^{(k)}_\mu\,,
\,v_\beta^{(k)}]\\ 
\left[\left[V^{(N)}_\mu\,,\,F^{(N)}_\beta(t)\right]\,,\,V^{(N)}_\nu\right]&=&\frac{1}{N^{3/2}}\sum_{k=0}^{N-1}\Big[\Big[v^{(k)}_\mu\,,
\,v_\beta^{(k)}\Big]\,,\,v^{(k)}_\nu\Big]\ .
\eeann
Then, one readily obtains the uniform upper bound 
\be
\label{aidd2}
\left\|\widetilde{\AA}^{(N)}\left[\left(F^{(N)}_\beta(t)\right)^2\right]\right\|\,\le\frac{c}{2}\,(2vd)^4\ ,\ v=\max_\alpha\|v_\beta\|\ ,\ c=\max_{\mu\nu}\left|\widetilde{A}_{\mu\nu}|\,,\,|\widetilde{B}_{\mu\nu}|\right\}\ ,
\ee
where for later convenience we have also included $\widetilde{B}$ in the definition of the quantity $c$.
Since $\gamma^{(N)}_t$ is a contraction, it follows that the contribution of $\widetilde{\AA}^{(N)}$ to \eqref{derbound} is uniformly bounded in $N$ and $t$:
\bea
\nonumber
\sum_{\beta=1}^{d^2}\omega\Bigg(W^{(N)}(\vec{r})\gamma^{(N)}_t\Bigg[\widetilde{\AA}^{(N)}\left[\left(F_\beta^{(N)}(t)\right)^2\right]\Bigg](W^{(N)}(\vec{r}))^\dag\Bigg)&\leq&d^2\,\left\|\gamma^{(N)}_t\Bigg[\widetilde{\AA}^{(N)}\left[\left(F_\beta^{(N)}(t)\right)^2\right]\Bigg]\right\|\\
\label{aidd2b}
&\leq& \frac{c}{2}\,(2v)^4\,d^6
\eea
Let us then concentrate on the action of 
\bea
\label{aidd2c}
\widetilde{\BB}^{(N)}\left[\left(F^{(N)}_\beta(t)\right)^2\right]&=&
\frac{1}{2}\sum_{\mu,\nu=1}^{d^2}\widetilde{B}_{\mu\nu}\,\left\{F^{(N)}_\beta(t)\,\left[V^{(N)}_\mu\,,\,F^{(N)}_\beta(t)\right]\,,\,V^{(N)}_\nu\right\}\\
\label{aidd2d}
&+&\frac{1}{2}\sum_{\mu,\nu=1}^{d^2}\widetilde{B}_{\mu\nu}\,\left\{\left[V^{(N)}_\mu\,,\,F^{(N)}_\beta(t)\right]\,F^{(N)}_\beta(t)\,,\,V^{(N)}_\nu\right\}\ .
\eea
We shall denote by $B^{(N)}_1$ the contribution in~\eqref{aidd2c} and by $B^{(N)}_2$ the one in~\eqref{aidd2d}. 
We start focussing upon the first one; by adding and subtracting the mean value of $V^{(N)}_\nu$, we split $B^{(N)}_1=B^{(N)}_{11}+B^{(N)}_{12}$, where, using \eqref{macr1a},
\bea
\label{aidd3}
B^{(N)}_{11}&=&\frac{1}{2}\sum_{\mu,\nu=1}^{d^2}\widetilde{B}_{\mu\nu}\,\left\{F^{(N)}_\beta(t)\,\left[V^{(N)}_\mu\,,\,F^{(N)}_\beta(t)\right]\,,\,F^{(N)}_\nu(t)\right\}\\
\label{aidd4}
B^{(N)}_{12}&=&
\sum_{\mu,\nu=1}^{d^2}\widetilde{B}_{\mu\nu}\,\omega^{(N)}_\nu(t)\,F^{(N)}_\beta(t)\,\left[\sum_{k=0}^{N-1}v_\mu^{(k)}\,,\,F^{(N)}_\beta(t)\right]\ .
\eea
Using \eqref{aid0}, the commutator in \eqref{aidd4} can be recast as 
$$
\left[\sum_{k=0}^{N-1}v_\mu^{(k)}\,,\,F^{(N)}_\beta(t)\right]=
\frac{1}{\sqrt{N}}\sum_{k=0}^{N-1}\left[v_\mu^{(k)}\,,\,v_\beta^{(k)}\right]=
\sum_{\alpha=1}^{d^2}\,J^\alpha_{\mu\beta}\,\frac{1}{\sqrt{N}}\sum_{k=0}^{N-1}v_\alpha^{(k)}\ ;
$$
then, by adding and subtracting suitable mean-values, it can finally be expressed in terms of local fluctuations. Explicitly,  using
\eqref{macr1b} and \eqref{aid1}, it reads
\be
\label{aidd5}
\left[\sum_{k=0}^{N-1}v_\mu^{(k)}\,,\,F^{(N)}_\beta(t)\right]=
\frac{1}{\sqrt{N}}\sum_{k=0}^{N-1}\left[v_\mu^{(k)}\,,\,v_\beta^{(k)}\right]=
\sum_{\alpha=1}^{d^2}\,J^\alpha_{\mu\beta}\,F^{(N)}_\alpha(t)\,+\,
\sqrt{N}\,\omega_{\mu\beta}^{(N)}(t)\ ,
\ee
whence
\be
\label{aidd6}
B^{(N)}_{12}=\sum_{\gamma,\mu,\nu=1}^{d^2}\widetilde{B}_{\mu\nu}\,J^\gamma_{\mu\beta}\,\omega^{(N)}_\nu(t)\,F^{(N)}_\beta(t)\,F^{(N)}_\gamma(t)\,+\,\sqrt{N}
\sum_{\mu,\nu=1}^{d^2}\widetilde{B}_{\mu\nu}\,\omega^{(N)}_\nu(t)\,\omega_{\mu\beta}^{(N)}(t)\,F^{(N)}_\beta(t)\ .
\ee
Using the bounds in~\eqref{aidd2}, Lemma~\ref{tool1} and the fact that
and the fact that 
\be
\label{lastaid}
\omega\left(W^{(N)}(\vec{r})\,\gamma^{(N)}_t\left[\left(F^{(N)}_\nu(t)\right)^2\right]\,
(W^{(N)}(\vec{r}))^\dag\right)\leq G^{(N)}(\vec{r},t)\ ,
\ee 
the contribution to~\eqref{derbound} of the first term in~\eqref{aidd6} can be estimated form above by
\beann
&&
\hskip-2cm
\sum_{\alpha,\mu,\nu=1}^{d^2}|\widetilde{B}_{\mu\nu}|\,\left|J^\alpha_{\mu\beta}\right|\,|\omega^{(N)}_\nu(t)|\,
\left|\omega\Bigg(W^{(N)}(\vec{r})\,\gamma^{(N)}_t\left[F^{(N)}_\alpha(t)\,F^{(N)}_\beta(t)\right]\,(W^{(N)}(\vec{r}))^\dag\Bigg)\right|\leq\\ 
&&\hskip 2cm
\leq\ 2\,c\,v^3\,d^6\,G^{(N)}(\vec{r},t)\ ;
\eeann
Indeed, $\Big|J^\alpha_{\mu\beta}\Big|=\Big|{\rm Tr}\Big(\Big[v_\mu\,,\,v_\beta\Big]\,v_\alpha\Big)\Big|\leq2\,v^2$.

Concerning the contribution in\eqref{aidd3}, observe that
$\displaystyle
\left[V^{(N)}_\mu\,,\,F^{(N)}_\beta(t)\right]=\frac{1}{N}\sum_{k=0}^{N-1}\Big[v_\mu^{(k)}\,,\,v^{(k)}_\beta\Big]$ are the entries $T^{(N)}_{\mu\beta}$ of the operator-valued matrix $T_N=[T^{(N)}_{\mu\nu}]$ in \eqref{tcomm}.
Then, \eqref{aidd3} can be rewritten as
\beann
B^{(N)}_{11}&=&\frac{1}{2}\sum_{\mu,\nu=1}^{d^2}\widetilde{B}_{\mu\nu}\,\left(F^{(N)}_\beta(t)\,T^{(N)}_{\mu\beta}\,F^{(N)}_\nu(t)\,+\,F^{(N)}_\nu(t)\,T^{(N)}_{\mu\beta}\,F^{(N)}_\beta(t)\right)\,+\,C^{(11)}_N\\
C^{(N)}_{11}&=&\frac{1}{2}\sum_{\mu,\nu=1}^{d^2}\widetilde{B}_{\mu\nu}\,F^{(N)}_\nu(t)\,\left[F^{(N)}_\beta(t)\,,\,T^{(N)}_{\mu\beta}\right]\ ,
\quad
\left\|C^{(11)}_N\right\|\leq b\ ,\ b=4\,c\,v^4\,d^4\ ,
\eeann
the last estimate following as for the analogous uniform bound in \eqref{aidd2}.\\
Let then consider terms of the form
$$
\Delta^{(N)}(t):=\omega\left(W^{(N)}(\vec{r})\,\gamma^{(N)}_t\left[F^{(N)}_\beta(t)\,T^{(N)}_{\mu\beta}\,
F^{(N)}_\nu(t)\right]\,(W^{(N)}(\vec{r}))^\dag\right)\ .
$$
From Lemma \ref{tool1} it follows that $\left|\Delta^{(N)}(t)\right|^2$ is upper bounded by
\beann
&&
\omega\left(W^{(N)}(\vec{r})\,\gamma^{(N)}_t\left[F^{(N)}_\beta(t)\,T^{(N)}_{\beta\mu}T^{(N)}_{\mu\beta}\,F^{(N)}_\beta(t)\right]\,(W^{(N)}(\vec{r}))^\dag\right)\,\times\\
&&\hskip 4cm
\times\,\omega\left(W^{(N)}(\vec{r})\,\gamma^{(N)}_t\left[\left(F^{(N)}_\nu(t)\right)^2\right]\,
(W^{(N)}(\vec{r}))^\dag\right)\ .
\eeann
From $0\leq T^{(N)}_{\beta\mu}T^{(N)}_{\mu\beta}\le 4\,v^4$, \eqref{aidd0} and~\eqref{lastaid},
one then concludes
\be
\left|\Delta^{(N)}(t)\right|\,\leq \,2\,v^2\,G^{(N)}(\vec{r},t)\, .
\label{aidd7}
\ee
Similar considerations as before can be made for $B^{(N)}_2$ in~\eqref{aidd2d}: $B^{(N)}_2=B^{(N)}_{21}\,+\,B^{(N)}_{22}$, where
\bea
\label{lastaid2}
&&\hskip-1.5cm
B^{(N)}_{21}=\frac{1}{2}\sum_{\mu,\nu=1}^{d^2}\widetilde{B}_{\mu\nu}\,\left\{\left[V^{(N)}_\mu\,,\,F^{(N)}_\beta(t)\right]\,F^{(N)}_\beta(t)\,,\,F^{(N)}_\nu(t)\right\}\\
\label{lastaid3}
&&\hskip-1.5cm
B^{(N)}_{22}=\sum_{\gamma,\mu,\nu=1}^{d^2}\widetilde{B}_{\mu\nu}\,J^\gamma_{\mu\beta}\,\omega^{(N)}_\nu(t)\,F^{(N)}_\gamma(t)\,F^{(N)}_\beta(t)\,+\,\sqrt{N}
\sum_{\mu,\nu=1}^{d^2}\widetilde{B}_{\mu\nu}\,\omega^{(N)}_\nu(t)\,\omega_{\mu\beta}^{(N)}(t)\,F^{(N)}_\beta(t)\ .
\eea 
The final term to consider is the hamiltonian one contributed by the action of $\HH^{N)}$ that, by similar arguments as before, can be recast as
\be
\HH^{(N)}\left[\left(F^{(N)}_\beta(t)\right)^2\right]=\sum_{\gamma,\mu=1}^{d^2}\epsilon_\mu\,J^\gamma_{\mu\beta}\,\left\{F^{(N)}_\gamma(t),F^{(N)}_\beta(t)\right\}+
2\sqrt{N}\sum_{\gamma,\mu=1}^{d^2}J^\gamma_{\mu\beta}\,\omega^{(N)}_\gamma(t)\,F^{(N)}_\beta(t)\ .
\label{lastaid1}
\ee
As before, one can derive from the first term an upper bound to the derivative~\eqref{derbound} of the form $4\epsilon\,v^2\,G^{(N)}(\vec{r},t)$.

From $G^{(N)}(\vec{r},t)\ge1$, it also follows that, given any uniform upper bound $k$ to the time-derivative \eqref{derbound}, one can replace it by $k\,G^{(N)}(\vec{r},t)$, whence all upper bounds collected so far can be grouped together in an upper bound of the form $K\, G^{(N)}(\vec{r},t)$.
The only terms which escape this rule are the ones increasing with $\sqrt{N}$ in \eqref{aidd6}, \eqref{lastaid3} and~\eqref{lastaid1}. Therefore, recalling \eqref{derbound}, 
one is left with studying the large $N$-limit of
\begin{eqnarray}
\nonumber
I^{(N)}(\vec{r},t)&:=&\,2\,\sqrt{N}\sum_{\beta=1}^{d^2}\omega\Bigg(W^{(N)}(\vec{r})\,\gamma^{(N)}_t\Big[\,F^{(N)}_\beta(t)\Big]\,(W^{(N)}(\vec{r}))^\dag\Bigg)\times\\
&\times&\Bigg(\sum_{\mu,\nu=1}^{d^2}\Big(\widetilde{B}_{\mu\nu}\,\omega_{\mu\beta}^{(N)}(t)\,+\,\,J^\nu_{\mu\beta}\Big)\,\omega^{(N)}_\nu(t)
-\,\frac{d}{dt}\omega^{(N)}_\beta(t)\Bigg)\ .
\label{auxeq5}
\end{eqnarray}
First we consider the case $\vec{r}=0$; then, since $\omega\left(\gamma_t^{(N)}\left[F^{(N)}_\beta(t)\right]\right)=0$, we get
$I^{(N)}(0,t)=0$.
Therefore, for all $N$,
$$
\frac{d}{dt}G^{(N)}(0,t)<K_0\, G^{(N)}(0,t)\qquad\hbox{implies}\qquad 
G^{(N)}(0,t)<\,{\rm e}^{t\, K_0} G^{(N)}(0)<\infty\ .
$$
When $r\neq0$, we estimate
\beann
&&\hskip-2cm
\Big|I^{(N)}(\vec{r},t)\Big|\le\,2\,\sum_{\beta=1}^{d^2}\left|\omega\left(W^{(N)}(\vec{r})\gamma_t^{(N)}\left[F^{(N)}_\beta(t)\right]\left(W^{(N)}\right)^\dagger(\vec{r})\right)\right|\,\times\\
&&\hskip2cm
\times\,\sqrt{N}\left|\sum_{\mu,\nu=1}^{d^2}\Big(\widetilde{B}_{\mu\nu}\,\omega_{\mu\beta}^{(N)}(t)\,+\,J^\nu_{\mu\beta}\Big)\,\omega^{(N)}_\nu(t)\,-\,\frac{d}{dt}\omega_\beta^{(N)}(t)\right|\ .
\eeann
Considering the time-derivative, one has:
\begin{eqnarray*}
\frac{d}{dt}\omega_\beta^{(N)}(t)&=&\omega\left(\gamma_t^{(N)}\left[\HH^{(N)}\left[\frac{1}{N}\sum_{k=0}^{N-1}v_\beta^{(k)}\right]\right]\right)\,+\,\omega\left(\gamma_t^{(N)}\left[\widetilde{\AA}^{(N)}\left[\frac{1}{N}\sum_{k=0}^{N-1}v_\beta^{(k)}\right]\right]\right)\\
&+&\frac{1}{2}\sum_{\mu,\nu=1}^{d^2}\widetilde{B}_{\mu\nu}\,\omega\left(\gamma_t^{(N)}\left[\left\{\frac{1}{N}\sum_{k=0}^{N-1}\left[v_\mu^{(k)},v_{\beta}^{(k)}\right],\frac{1}{N}\sum_{h=0}^{N-1}v_\nu^{(h)}\right\}\right]\right)\ .
\end{eqnarray*}
The $\widetilde{\AA}^{(N)}$ contribution scales as $1/N$; together with the factor $\sqrt{N}$ and the divergence as $\sqrt{N}$ of the upper bound to the norm of the fluctuation $F^{(N)}_\beta(t)$, it contributes with an upper bound to $\Big|I^{(N)}(\vec{r},t)\Big|$ of the form $4d^6v^4c\leq\,4d^6v^4 c\,G^{(N)}(\vec{r},t)$, using~\eqref{auxq1}. 

Taking into account that 
$$
\omega\left(\gamma_t^{(N)}\left[\HH^{(N)}\left[\frac{1}{N}\sum_{k=0}^{N-1}v_\beta^{(k)}\right]\right]\right)=\sum_{\mu\,\nu=1}^{d^2}\epsilon_\mu\,J^\nu_{\mu\beta}\,\omega^{(N)}_\nu(t)\ ,
$$
and writing
$$
\sqrt{N}\,\frac{1}{N}\sum_{h=0}^{N-1}v_\nu^{(h)}=F_\nu^{(N)}(t)\,+\,\sqrt{N}\,\omega^{(N)}_\nu(t)\ ,
$$ 
since the latter term is a scalar multiple of the identity, one gets
\begin{eqnarray*}
&&\hskip-1cm
\Big|I^{(N)}(\vec{r},t)\Big|\,<\,4d^6\,v^4c \,G^{(N)}(\vec{r},t)\,+\\
&&\hskip-.5cm
+\sum_{\beta=1}^{d^2}\left|\omega\left(W^{(N)}(\vec{r})\gamma_t^{(N)}\left[F^{(N)}_\beta(t)\right]\left(W^{(N)}(\vec{r})\right)^\dagger\right)\right|\times\\
&&\times\left|\sum_{\mu,\nu=1}^{d^2}\widetilde{B}_{\mu\nu}\,\omega\left(\gamma_t^{(N)}\left[\left\{\frac{1}{N}\sum_{k=0}^{N-1}\left[v_\mu^{(k)},v_\beta^{(k)}\right],F_\nu^{(N)}(t)\right\}\right]\right)\right|\ .
\end{eqnarray*}
Finally, by means of Lemma \ref{tool1} and the Cauchy-Schwarz inequality, one gets 
$$
\Big|I^{(N)}(\vec{r},t)\Big|\le K_1 G^{(N)}(\vec{r},t)\,+\,K_2 G^{(N)}(0,t)\, G^{(N)}(\vec{r},t)\ ,
$$
with $K_1=4d^6v^4c$ and $K_2=2v^2d^6c$, implying
$$
\frac{d}{dt}G^{(N)}(\vec{r},t)<\Big(K_1+K_2 G^{(N)}(0,t)\Big)\,G^{(N)}(\vec{r},t)\ ,
$$
from which the boundedness of $G_\mu(\vec{r},t):=\lim_{N\to\infty} G^{(N)}(\vec{r},t)$ follows.
\qed
\end{proof}
\bigskip

\begin{lemma}
\label{tool3}
Let $M_t$ be a time-dependent hermitean matrix and $\displaystyle N_t={\rm e}^{iM_t}$. 
Then,
\begin{equation}
\label{appb1}
\dot{N_t}:=\frac{{\rm d}N_t}{{\rm d}t}=O_t\,N_t\ ,\quad O_t:=\sum_{k=1}^\infty\frac{i^k}{k!}\mathbb{K}_{M_t}^{k-1}[\dot{M}_t]\ ,
\end{equation}
where 
$\mathbb{K}^n_{M_t}[\dot{M}_t]=\Big[M_t\,,\,\mathbb{K}_{M_t}^{n-1}[\dot{M}_t]\Big]$ and $\mathbb{K}^0_{M_t}[\dot{M}_t]=\dot{M}_t$.
\end{lemma}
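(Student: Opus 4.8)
The plan is to reduce the claim to the standard integral representation of the derivative of a matrix exponential and then to expand the resulting adjoint action by means of the commutator series already introduced in~\eqref{aux1}. Since $N_t={\rm e}^{iM_t}$ is invertible with $N_t^{-1}={\rm e}^{-iM_t}$, the asserted identity $\dot{N}_t=O_t\,N_t$ is equivalent to $O_t=\dot{N}_t\,N_t^{-1}$, so it suffices to compute $\dot{N}_t\,{\rm e}^{-iM_t}$ and to match it term by term with the stated series.

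First I would establish the integral formula
$$
\dot{N}_t=\int_0^1{\rm d}s\,{\rm e}^{isM_t}\,(i\dot{M}_t)\,{\rm e}^{i(1-s)M_t}\ .
$$
This follows by differentiating $G(s):={\rm e}^{sA(t+h)}\,{\rm e}^{(1-s)A(t)}$ with $A(t)=iM_t$: since $A(t+h)$ commutes with ${\rm e}^{sA(t+h)}$ and $A(t)$ with ${\rm e}^{(1-s)A(t)}$, one has $G'(s)={\rm e}^{sA(t+h)}\,(A(t+h)-A(t))\,{\rm e}^{(1-s)A(t)}$, and integration over $s\in[0,1]$ gives ${\rm e}^{A(t+h)}-{\rm e}^{A(t)}=\int_0^1{\rm d}s\,{\rm e}^{sA(t+h)}(A(t+h)-A(t)){\rm e}^{(1-s)A(t)}$. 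Dividing by $h$, using $A(t+h)-A(t)=h\dot{A}(t)+o(h)$ together with the norm-continuity of $s\mapsto{\rm e}^{sA(t+h)}$ uniformly on $[0,1]$, and letting $h\to0$ yields the formula.

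Multiplying the integral formula on the right by $N_t^{-1}={\rm e}^{-iM_t}$ collapses the factor ${\rm e}^{i(1-s)M_t}{\rm e}^{-iM_t}={\rm e}^{-isM_t}$, so that
$$
O_t=\dot{N}_t\,N_t^{-1}=i\int_0^1{\rm d}s\,{\rm e}^{isM_t}\,\dot{M}_t\,{\rm e}^{-isM_t}\ .
$$
Applying the adjoint-action expansion~\eqref{aux1} with $x=isM_t$ gives ${\rm e}^{isM_t}\,\dot{M}_t\,{\rm e}^{-isM_t}=\sum_{n=0}^\infty\frac{(is)^n}{n!}\,\mathbb{K}^n_{M_t}[\dot{M}_t]$, a series convergent in operator norm uniformly for $s\in[0,1]$ because $\|\mathbb{K}^n_{M_t}[\dot{M}_t]\|\leq(2\|M_t\|)^n\,\|\dot{M}_t\|$. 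This uniform bound licenses term-by-term integration, and with $\int_0^1{\rm d}s\,s^n=1/(n+1)$ one obtains $O_t=\sum_{n=0}^\infty\frac{i^{n+1}}{(n+1)!}\,\mathbb{K}^n_{M_t}[\dot{M}_t]$; the substitution $k=n+1$ reproduces exactly $O_t=\sum_{k=1}^\infty\frac{i^k}{k!}\,\mathbb{K}^{k-1}_{M_t}[\dot{M}_t]$, as claimed.

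The computation is entirely routine once the integral representation is in hand; the only point requiring care is the interchange of the infinite commutator sum with the integral over $s$, which I expect to be the main (though minor) obstacle and which is settled by the uniform norm bound above. As an alternative to the integral formula, one could instead note that $R(s,t):=(\partial_t{\rm e}^{isM_t})\,{\rm e}^{-isM_t}$ satisfies the linear ODE $\partial_s R=i\dot{M}_t+i\,\mathbb{K}_{M_t}[R]$ with $R(0,t)=0$, whose power-series solution, evaluated at $s=1$, gives $O_t$ directly; this route avoids invoking the integral representation but leads to the identical reindexing step.
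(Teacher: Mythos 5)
Your proof is correct, but it follows a genuinely different route from the paper's. You start from the Duhamel integral representation $\dot{N}_t=\int_0^1{\rm d}s\,{\rm e}^{isM_t}(i\dot{M}_t)\,{\rm e}^{i(1-s)M_t}$, collapse the right factor against $N_t^{-1}$, expand the adjoint action via \eqref{aux1}, and integrate the commutator series term by term (justified by the bound $\|\mathbb{K}^n_{M_t}[\dot{M}_t]\|\leq(2\|M_t\|)^n\|\dot{M}_t\|$), after which the reindexing $k=n+1$ gives exactly $O_t$. The paper instead differentiates the invariance relation $N_t\,M_t\,N_t^\dag=M_t$ to obtain $[M_t,O_t]=[M_t,\dot{N}_t]\,N_t^\dag$, which only pins down $O_t N_t-\dot N_t$ up to terms commuting with $M_t$, and then closes the gap by computing matrix elements of $O_t N_t$ and $\dot{N}_t$ in the instantaneous eigenbasis of $M_t$, treating coinciding and distinct eigenvalues separately. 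Your argument buys two things: it never invokes the spectral decomposition, so it applies verbatim to bounded self-adjoint operators in infinite dimension, and it sidesteps the implicit assumption in the paper's proof that the eigenvectors $\vert m_a(t)\rangle$ can be chosen differentiable in $t$ (a point that is delicate at eigenvalue crossings). The paper's approach, in exchange, is purely algebraic plus finite-dimensional spectral theory and needs no interchange of sum and integral. Either way the identity holds; your version is arguably the more robust one.
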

\medskip

Using \eqref{aux1}, from $[N_t\,,\,M_t]=0$ and $N_tM_tN^\dag_t=M_t$ one derives
$$
N_t\,\dot{M}_t\,N^\dag_t=\dot{M}_t\,-\,\dot{N}_t\,M_t\,N^\dag_t\,-\,N_t\,M_t\,\dot{N}^\dag_t\ .
$$
Since $\displaystyle \dot{N}_t\,N^\dag_t=-N_t\,\dot{N}^\dag_t$ and, for $n\geq 1$,
$\displaystyle \mathbb{K}_A^n[B]=\Big[A\,,\,\mathbb{K}^{n-1}_A[B]\Big]$, it follows that
$$
\hskip -.5cm
N_t\,\dot{M}_t\,N^\dag_t-\dot{M}_t=\sum_{n=1}^\infty\frac{i^n}{n!}\mathbb{K}^n_{M_t}[\dot{M}_t]=\Big[M_t\,,\,O_t\Big]
=\Big[M_t\,,\,\dot{N}_t\Big]\,N_t^\dag\ .
$$
Let $\vert m_a(t)\rangle$ denote the orthogonal eigenvectors of $M_t$; then 
$$
0=\frac{{\rm d}}{{\rm d}t}\Big(\langle m_a(t)\vert m_b(t)\rangle\Big)=\langle \dot{m}_a(t)\vert m_b(t)\rangle\,+\,
\langle m_a(t)\vert\dot{m}_b(t)\rangle\ .
$$
If $\vert m_a(t)\rangle$ and $\vert m_b(t)\rangle$ correspond to different eigenvalues, 
\begin{eqnarray*}
\langle m_a(t)\vert \dot{M}_t\vert m_b(t)\rangle&=&\Big(m_a(t)-m_b(t)\Big)\,\langle\dot{m}_b(t)\vert m_a(t)\rangle\\
\langle m_a(t)\vert O_t\,N_t\vert m_b(t)\rangle&=&\sum_{k=1}^\infty \frac{i^k}{k!}(m_a(t)-m_b(t))^{k-1}\,\langle m_a(t)\vert \dot{M}_t\vert m_b(t)\rangle\,{\rm e}^{im_b(t)}\\
&=&{\rm e}^{im_a(t)}-{\rm e}^{im_b(t)}=\langle m_a(t)\vert \dot{N}_t\vert m_b(t)\rangle\ .
\end{eqnarray*}
On the other hand if $\vert m_a(t)\rangle$ and $\vert m_b(t)$ correspond to a same (real) eigenvalue $m(t)$, whence
\begin{eqnarray*}
\langle m_a(t)\vert O_t\,N_t\vert m_b(t)\rangle&=&i\,\langle m_a(t)\vert \dot{M}_t\vert m_b(t)\rangle\, {\rm e}^{im(t)}\, 
\delta_{ab}
=i\dot{m}(t)\,{\rm e}^{im(t)}\,\delta_{ab}\\ 
&=&\langle m_a(t)\vert\dot{N}_t\vert m_b(t)\rangle\ .
\end{eqnarray*}
\hfill\qed


\begin{thebibliography}{50}


\bibitem{BOOK} A. Verbeure, \emph{Many-Body Boson Systems}, Theoretical and Mathematical Physics, (Springer-Verlag, London, 2011)

\bibitem{GVVNCCL}  D. Goderis, A. Verbeure and P. Vets, Prob. Th. Rel. Fields
{\bf82}, 527–544 (1989)

\bibitem{GVVDOF} 
D. Goderis, A. Verbeure and P. Vets,  Commun. Math. Phys. {\bf128}, 533–549 (1990)

\bibitem{HT} H. Narnhofer and W. Thirring, Phys. Rev. A \textbf{66}, 052304 (2002)

\bibitem{NARN} H. Narnhofer, Found. Phys. Lett., Vol. 17, Iss. 3, pp 235-253, (2004)

\bibitem{VerDis} D. Goderis, A. Verbeure, and P.  Vets,  J. Stat. Phys.  \textbf{56}: 721 (1989)

\bibitem{BCF} F. Benatti, F. Carollo, and R. Floreanini, Ann. Phys. \textbf{527}, 639 (2015)

\bibitem{BCF2} F. Benatti, F. Carollo, and R. Floreanini, J. Math. Phys. \textbf{57}, 062208 (2016)

\bibitem{Ciccotti}
R. Kapral, G. Ciccotti, J. Chem. Phys. \textbf{110}, 8919 (1999)

\bibitem{Grunwald}
R. Grunwald, R. Kapral, J. Chem. Phys. \textbf{126}, 114109 (2007)

\bibitem{Diosi}
L. Di\'osi, Phys. Scr. \textbf{T163}, 014004 (2014)

\bibitem{Fratino}
L. Fratino, A. Lampo and H.-T. Elze, Phys. Scr. \textbf{T163}, 014005 (2014)

\bibitem{Buric}
N. Buric, I. Mendas, D.B. Popovic, M. Radonjic, S. Prvanovic, Phys. Rev. A \textbf{86}, 034104 (2012)

\bibitem{Bratteli}
O. Bratteli, D.W. Robinson, \emph{Operator Algebras and Quantum Statistical Mechanics II}, (Springer-Verlag, New York, 1997)

\bibitem{LiBingRen}
Li Bing Ren, \emph{Intoduction to Operator Algebras}, (World Scientific, Singapore, 1992)

\bibitem{BOOK4} O. Bratteli and D.W. Robinson, \emph{Operator Algebras and uantum Statistical Mechanics I}, (Springer, Berlin, 1987)

\bibitem{TW}
W. Thirring, A. Wehrl, Commun. Math. Phys. \textbf{4}, 303 (1967)

\bibitem{BOOK6} R. Alicki and K. Lendi, \emph{Quantum Dynamical Semigroups and Applications}, Lect. Notes Phys. {\bf286}, (Springer-Verlag, Berlin, 1987)

\bibitem{Lindblad}
G. Lindblad, Commun. Math. Phys. \textbf{48}, 119 (1976)

\bibitem {HL} K. Hepp and E.H. Lieb, Helv. Phys. Acta {\bf 46}, 573 (1973)

\bibitem{sew} G. Alli, G.L. Sewell, J. Math. Phys. {\bf 36}, 5598 (1995)

\bibitem{sew2} F. Bagarello, G.L. Sewell, J. Math. Phys. {\bf 39}, 2730 (1998)

\bibitem{Breuer} H.-P. Breuer and F. Petruccione, \emph{The theory of open quantum systems}, Oxford University Press, (2002)

\bibitem{ChrKos} D. Chruscinski, A. Kossakowski, Phys. Rev. Lett. {\bf 104}, 070406 (2010)

\bibitem{Rivas}
\'A. Rivas, S.F. Huelga, and M.B. Plenio, Phys. Rev. Lett. {\bf 10}, 050403 (2010)

\bibitem{BCFN} F. Benatti, F. Carollo, R. Floreanini, and H. Narnhofer, Phys. Lett. A, \textbf{380} 381(2016)

\bibitem{Krauter} H. Krauter, D. Salart, C. A. Muschik, J. M. Petersen, Heng Shen, T. Fernholz, and E. S. Polzik, Nature Physics \textbf{9} 400 (2013)

\bibitem{Julsgaard}
B. Julsgaard, A. Kozhekin and E.S. Polzik, {\it Nature} {\bf 413} (2001) 400

\bibitem{Hammerer} K. Hammerer, A.S. S\o{}rensen, and E.S. Polzik, Rev. Mod. Phys. \textbf{82}, 1041 (2010)

\bibitem{Normal Families} J.L. Schiff, \emph{Normal Families}, Universitext (Springer, 1993)

\bibitem{Olivares}
A. Ferraro, S. Olivares and M. G. A. Paris \emph{Gaussian States in Quantum Information}, 
(Bibliopolis, Napoli, 2005)


\end{thebibliography}
\end{document}